\providecommand{\tabularnewline}{\\}
\providecommand{\algorithmname}{Algorithm}
\begin{document}
\theoremstyle{plain}\newtheorem{lemma}{\textbf{Lemma}}\newtheorem{theorem}{\textbf{Theorem}}\newtheorem{corollary}{\textbf{Corollary}}\newtheorem{assumption}{\textbf{Assumption}}\newtheorem{example}{\textbf{Example}}\newtheorem{definition}{\textbf{Definition}}

\theoremstyle{definition}

\theoremstyle{remark}\newtheorem{remark}{\textbf{Remark}}

\title{Robust Spectral Compressed Sensing via \\
 Structured Matrix Completion}

\author{Yuxin Chen, \emph{Student Member, IEEE}, and Yuejie Chi, \emph{Member,
IEEE}%
\thanks{Y. Chen is with the Department of Electrical Engineering, Stanford
University, Stanford, CA 94305, USA (email: yxchen@stanford.edu).}
\thanks{Y. Chi is with Department of Electrical and Computer Engineering and
Department of Biomedical Informatics, The Ohio State University, Columbus,
OH 43210, USA (email: chi.97@osu.edu).}
\thanks{Preliminary results of this
work have been presented at the 2013 International Conference on Machine
Learning (ICML) and the 2013 Signal Processing with Adaptive Sparse
Structured Representations Workshop (SPARS). Manuscript date: \today. %
} }
\maketitle
\begin{abstract}
The paper explores the problem of \emph{spectral compressed sensing},
which aims to recover a spectrally sparse signal from a small random
subset of its $n$ time domain samples. The signal of interest is
assumed to be a superposition of $r$ multi-dimensional complex sinusoids,
while the underlying frequencies can assume any \emph{continuous}
values in the normalized frequency domain. Conventional compressed
sensing paradigms suffer from the basis mismatch issue when imposing
a discrete dictionary on the Fourier representation. To address this
issue, we develop a novel algorithm, called \emph{Enhanced Matrix
Completion (EMaC)}, based on structured matrix completion that does
not require prior knowledge of the model order. The algorithm starts
by arranging the data into a low-rank enhanced form exhibiting multi-fold
Hankel structure, and then attempts recovery via nuclear norm minimization.
Under mild incoherence conditions, EMaC allows perfect recovery as
soon as the number of samples exceeds the order of $r\log^{4}n$,
and is stable against bounded noise. Even if a constant portion of
samples are corrupted with arbitrary magnitude, EMaC still allows
exact recovery, provided that the sample complexity exceeds the order
of $r^{2}\log^{3}n$. Along the way, our results demonstrate the power
of convex relaxation in completing a low-rank multi-fold Hankel or
Toeplitz matrix from minimal observed entries. The performance of
our algorithm and its applicability to super resolution are further
validated by numerical experiments. \end{abstract}
\begin{IEEEkeywords}
spectral compressed sensing, matrix completion, Hankel matrices, Toeplitz
matrices, basis mismatch, off-grid compressed sensing, incoherence,
super-resolution 
\end{IEEEkeywords}

\section{Introduction}

\subsection{Motivation and Contributions}

A large class of practical applications features high-dimensional
signals that can be modeled or approximated by a superposition of
spikes in the spectral (resp. time) domain, and involves estimation
of the signal from its time (resp. frequency) domain samples. Examples
include acceleration of medical imaging \cite{Lustig2007sparse},
target localization in radar and sonar systems \cite{potter2010sparsity},
inverse scattering in seismic imaging \cite{Borcea2002imaging}, fluorescence
microscopy \cite{schermelleh2010guide}, channel estimation in wireless
communications \cite{chi2013compressive}, analog-to-digital conversion
\cite{tropp2010beyond}, etc. The data acquisition devices, however,
are often limited by hardware and physical constraints, precluding
sampling with the desired resolution. It is thus of paramount interest
to reduce sensing complexity while retaining recovery accuracy.

In this paper, we investigate the \emph{spectral compressed sensing}
problem, which aims to recover a spectrally sparse signal from a small
number of randomly observed time domain samples. The signal of interest
$x\left(\boldsymbol{t}\right)$ with ambient dimension $n$ is assumed
to be a weighted sum of multi-dimensional complex sinusoids at $r$
distinct frequencies $\{\boldsymbol{f}_{i}\in[0,1)^{K}:1\leq i\leq r\}$,
where the underlying frequencies can assume any continuous values
on the unit interval.

Spectral compressed sensing is closely related to the problem of \emph{harmonic
retrieval}, which seeks to extract the underlying frequencies of a
signal from a collection of its time domain samples. 
Conventional methods for harmonic retrieval include Prony's method
\cite{prony1795essai}, ESPRIT \cite{RoyKailathESPIRIT1989}, the
matrix pencil method \cite{HuaSarkar1990}, the Tufts and Kumaresan
approach \cite{TuftsKumaresan1982}, the finite rate of innovation
approach \cite{vetterli2002sampling,gedalyahu2011multichannel}, etc.
These methods routinely exploit the \textit{shift invariance} of the
harmonic structure, namely, a consecutive segment of time domain samples
lies in the same subspace irrespective of the starting point of the
segment. However, one weakness of these techniques if that they require
prior knowledge of the model order, that is, the number of underlying
frequency spikes of the signal or at least an estimate of it. Besides,
these techniques heavily rely on the knowledge of the noise spectra,
and are often sensitive against noise and outliers \cite{DragVetterliBlu2007}.

Another line of work is concerned with Compressed Sensing (CS) \cite{CandRomTao06,Don2006}
over a discrete domain, which suggests that it is possible to recover
a signal even when the number of samples is far below its ambient
dimension, provided that the signal enjoys a sparse representation
in the transform domain. In particular, tractable algorithms based
on convex surrogates become popular due to their computational efficiency
and robustness against noise and outliers \cite{CandesRombergTao2006Stable,li2011compressed}.
Furthermore, they do not require prior information on the model order.
Nevertheless, the success of CS relies on sparse representation or
approximation of the signal of interest in a finite discrete dictionary,
while the true parameters in many applications are actually specified
in a \emph{continuous} dictionary. The \emph{basis mismatch} between
the true frequencies and the discretized grid \cite{Chi2011sensitivity}
results in loss of sparsity due to spectral leakage along the Dirichlet
kernel, and hence degeneration in the performance of conventional
CS paradigms. 

In this paper, we develop an algorithm, called \emph{Enhanced Matrix
Completion (EMaC)}, that simultaneously exploits the shift invariance
property of harmonic structures and the spectral sparsity of signals.
Inspired by the conventional matrix pencil form \cite{Hua1992}, EMaC
starts by arranging the data samples into an enhanced matrix exhibiting
$K$-fold Hankel structures, whose rank is bounded above by the spectral
sparsity $r$. This way we convert the spectral sparsity into the
low-rank structure without imposing any pre-determined grid. EMaC
then invokes a nuclear norm minimization program to complete the enhanced
matrix from partially observed samples. When a small constant proportion
of the observed samples are corrupted with arbitrary magnitudes, EMaC
solves a weighted nuclear norm minimization and $\ell_{1}$ norm minimization
to recover the signal as well as the sparse corruption component.

The performance of EMaC depends on an incoherence condition that depends
only on the frequency locations regardless of the amplitudes of their
respective coefficients. The incoherence measure is characterized
by the reciprocal of the smallest singular value of some Gram matrix,
which is defined by sampling the {\em Dirichlet kernel} at the
wrap-around differences of all frequency pairs. The signal of interest
is said to obey the incoherence condition if the Gram matrix is well
conditioned, which arises over a broad class of spectrally sparse
signals including but not restricted to signals with well-separated
frequencies. We demonstrate that, under this incoherence condition,
EMaC enables exact recovery from $\mathcal{O}(r\log^{4}n)$ random
samples%
\footnote{The standard notation $f(n)=\mathcal{O}\left(g(n)\right)$ means that
there exists a constant $c>0$ such that $f(n)\leq cg(n)$; $f(n)=\Theta\left(g(n)\right)$
indicates that there are numerical constants $c_{1},c_{2}>0$ such
that $c_{1}g(n)\leq f(n)\leq c_{2}g(n)$.%
}, and is stable against bounded noise. Moreover, EMaC admits perfect
signal recovery from $\mathcal{O}(r^{2}\log^{3}n)$ random samples
even when a constant proportion of the samples are corrupted with
arbitrary magnitudes. Finally, numerical experiments validate our
theoretical findings, and demonstrate the applicability of EMaC in
super resolution.

Along the way, we provide theoretical guarantees for low-rank matrix
completion of Hankel matrices and Toeplitz matrices, which is of great
importance in control, natural language processing, and computer vision.
To the best of our knowledge, our results provide the first theoretical
guarantees for Hankel matrix completion that are close to the information
theoretic limit.

\subsection{Connection and Comparison to Prior Work}


The $K$-fold Hankel structure, which plays a central role in the
EMaC algorithm, roots from the traditional spectral estimation technique
named Matrix Enhancement Matrix Pencil (MEMP) \cite{Hua1992} for
multi-dimensional harmonic retrieval. The conventional MEMP algorithm
assumes fully observed equi-spaced time domain samples for estimation,
and require prior knowledge on the model order. Cadzow's denoising
method \cite{cadzow1982spectral} also exploits the low-rank structure
of the matrix pencil form for denoising line spectrum, but the method
is non-convex and lacks performance guarantees.

When the frequencies of the signal indeed fall on a grid, CS algorithms
based on $\ell_{1}$ minimization \cite{CandRomTao06,Don2006} assert
that it is possible to recover the spectrally sparse signal from $\mathcal{O}(r\log n)$
random time domain samples. These algorithms admit faithful recovery
even when the samples are contaminated by bounded noise \cite{CandesRombergTao2006Stable,CandesPlan2011RIPless}
or arbitrary sparse outliers \cite{li2011compressed}. When the inevitable
\textit{basis mismatch} issue \cite{Chi2011sensitivity} is present,
several remedies of CS algorithms have been proposed to mitigate the
effect \cite{duarte2012spectral,fannjiang2012coherence} under random
linear projection measurements, although theoretical guarantees are
in general lacking.

More recently, Cand�s and Fernandez-Granda \cite{CandesFernandez2012SR}
proposed a total-variation norm minimization algorithm to super-resolve
a sparse signal from frequency samples at the \emph{low end} of the
spectrum. This algorithm allows accurate super-resolution when the
point sources are sufficiently separated, and is stable against noise
\cite{CandesFernandez2012SRNoisy}. Inspired by this approach, Tang
et. al. \cite{TangBhaskarShahRecht2012} then developed an atomic
norm minimization algorithm for line spectral estimation from $\mathcal{O}(r\log r\log n)$
random time domain samples, which enables exact recovery when the
frequencies are separated by at least $4/n$ with random amplitude
phases. Similar performance guarantees are later established in \cite{chi2013asilomar}
for multi-dimensional frequencies. However, these results are established
under a random signal model, i.e. the complex signs of the frequency
spikes are assumed to be i.i.d. drawn from a uniform distribution.
The robustness of the method against noise and outliers is not established
either. In contrast, our approach yields deterministic conditions
for multi-dimensional frequency models that guarantee perfect recovery
with noiseless samples and are provably robust against noise and sparse
corruptions. We will provide detailed comparison with the approach
of Tang et. al. after we formally present our results. Numerical comparison
will also be provided in Section~\ref{sec:num:comparison} for the
line spectrum model.

Our algorithm is inspired by recent advances of Matrix Completion
(MC) \cite{ExactMC09,keshavan2010matrix}, which aims at recovering
a low-rank matrix from partial entries. It has been shown \cite{CanTao10,Gross2011recovering,chen2013incoherence}
that exact recovery is possible via nuclear norm minimization, as
soon as the number of observed entries exceeds the order of the information
theoretic limit. This line of algorithms is also robust against noise
and outliers \cite{CanLiMaWri09,Negahban2012restricted}, and allows
exact recovery even in the presence of a constant portion of adversarially
corrupted entries \cite{chandrasekaran2011rank,chen2011robust,chen2011low},
which have found numerous applications in collaborative filtering
\cite{wu2007collaborative}, medical imaging \cite{zhang2014accelerating,zhang2013fast},
etc. Nevertheless, the theoretical guarantees of these algorithms
do not apply to the more structured observation models associated
with the proposed multi-fold Hankel structure. Consequently, direct
application of existing MC results delivers pessimistic sample complexity,
which far exceeds the degrees of freedom underlying the signal.

Preliminary results of this work have been presented in \cite{chen2013spectral},
where an additional strong incoherence condition was introduced that
bore a similar role as the traditional strong incoherence parameter
in MC \cite{CanTao10} but lacked physical interpretations. This paper
removes this condition and further improves the sample complexity.

\subsection{Organization}

The rest of the paper is organized as follows. The signal and sampling
models are described in Section \ref{sec:Model-and-Algorithm}. By
restricting our attention to two-dimensional (2-D) frequency models,
we present the enhanced matrix form and the associated structured
matrix completion algorithms. The extension to multi-dimensional frequency
models is discussed in Section \ref{sub:Extension-to-Higher-Dimension}.
The main theoretical guarantees are summarized in Section \ref{sec:Main-Results},
based on the incoherence condition introduced in Section \ref{sub:IncoherenceMeasures}.
We then discuss the extension to low-rank Hankel and Toeplitz matrix
completion in Section \ref{sub:Extension-to-Hankel}. Section \ref{sec:Numerical-Experiments}
presents the numerical validation of our algorithms. The proofs of
Theorems \ref{theorem-EMaC-noiseless} and \ref{theorem-EMaC-Robust}
are based on duality analysis followed by a golfing scheme, which
are supplied in Section \ref{sec:Main-Proof-Exact} and Section \ref{sec:Main-Proof-Robust},
respectively. Section \ref{sec:Conclusions-and-Future} concludes
the paper with a short summary of our findings as well as a discussion
of potential extensions and improvements. Finally, the proofs of auxiliary
lemmas supporting our results are deferred to the appendices.

\section{Model and Algorithm\label{sec:Model-and-Algorithm}}

Assume that the signal of interest $x\left(\boldsymbol{t}\right)$
can be modeled as a weighted sum of $K$-dimensional complex sinusoids
at $r$ distinct frequencies $\boldsymbol{f}_{i}\in[0,1)^{K}$, $1\leq i\leq r$,
i.e. 
\begin{equation}
x(\boldsymbol{t})=\sum_{i=1}^{r}d_{i}e^{j2\pi\left\langle \boldsymbol{t},\boldsymbol{f}_{i}\right\rangle },\;\boldsymbol{t}\in\mathbb{Z}^{K}.\label{eq:xtform}
\end{equation}
It is assumed throughout that the frequencies $\boldsymbol{f}_{i}$'s
are normalized with respect to the Nyquist frequency of $x(\boldsymbol{t})$
and the time domain measurements are sampled at integer values. We
denote by $d_{i}$'s the complex amplitudes of the associated coefficients,
and $\left\langle \cdot,\cdot\right\rangle $ represents the inner
product. For concreteness, our discussion is mainly devoted to a 2-D
frequency model when $K=2$. This subsumes line spectral estimation
as a special case, and indicates how to address multi-dimensional
models. The algorithms for higher dimensional scenarios closely parallel
the 2-D case, which will be briefly discussed in Section \ref{sub:Extension-to-Higher-Dimension}.

\subsection{2-D Frequency Model\label{sub:2-D-Frequency-Model}}

Consider a data matrix $\boldsymbol{X}=[X_{k,l}]_{0\leq k<n_{1},0\leq l<n_{2}}$
of ambient dimension $n:=n_{1}n_{2}$, which is obtained by sampling
the signal \eqref{eq:xtform} on a uniform grid. From \eqref{eq:xtform}
each entry $X_{k,l}$ can be expressed as 
\begin{equation}
X_{k,l}=x(k,l)=\sum_{i=1}^{r}d_{i}y_{i}^{k}z_{i}^{l},\label{eq:Xform}
\end{equation}
where for any $i$ ($1\leq i\leq r$) we define 
\[
y_{i}:=\exp\left(j2\pi f_{1i}\right)\quad\text{and}\quad z_{i}:=\exp\left(j2\pi f_{2i}\right)
\]
for some frequency pairs $\left\{ \boldsymbol{f}_{i}=\left(f_{1i},f_{2i}\right)\mid1\leq i\leq r\right\} $.
We can then express $\boldsymbol{X}$ in a matrix form as follows
\begin{equation}
\boldsymbol{X}=\boldsymbol{Y}\boldsymbol{D}\boldsymbol{Z}^{\top},\label{eq:X_MatrixForm}
\end{equation}
where the above matrices are defined as 
\begin{equation}
\boldsymbol{Y}:=\left[\begin{array}{cccc}
1 & 1 & \cdots & 1\\
y_{1} & y_{2} & \cdots & y_{r}\\
\vdots & \vdots & \vdots & \vdots\\
y_{1}^{n_{1}-1} & y_{2}^{n_{1}-1} & \cdots & y_{r}^{n_{1}-1}
\end{array}\right],\label{eq:Y_matrix_form}
\end{equation}
\begin{equation}
\boldsymbol{Z}:=\left[\begin{array}{cccc}
1 & 1 & \cdots & 1\\
z_{1} & z_{2} & \cdots & z_{r}\\
\vdots & \vdots & \vdots & \vdots\\
z_{1}^{n_{2}-1} & z_{2}^{n_{2}-1} & \cdots & z_{r}^{n_{2}-1}
\end{array}\right],\label{eq:Z_matrix_form}
\end{equation}
and 
\begin{equation}
\boldsymbol{D}:=\text{diag}\left[d_{1},d_{2},\cdots,d_{r}\right].\label{eq:D_matrix_form}
\end{equation}
The above form (\ref{eq:X_MatrixForm}) is sometimes referred to as
the Vandemonde decomposition of $\boldsymbol{X}$.

Suppose that there exists a location set $\Omega$ of size $m$ such
that the $X_{k,l}$ is observed if and only if $\left(k,l\right)\in\Omega$.
It is assumed that $\Omega$ is sampled uniformly at random. Define
$\mathcal{P}_{\Omega}(\boldsymbol{X})$ as the orthogonal projection
of $\boldsymbol{X}$ onto the subspace of matrices that vanish outside
$\Omega$. We aim at recovering $\boldsymbol{X}$ from $\mathcal{P}_{\Omega}(\boldsymbol{X})$.

\subsection{Matrix Enhancement\label{sub:Matrix-Enhancement}}

One might naturally attempt recovery by applying the low-rank MC algorithms
\cite{ExactMC09}, arguing that when $r$ is small, perfect recovery
of $\boldsymbol{X}$ is possible from partial measurements since $\boldsymbol{X}$
is low rank if $r\ll\min\{n_{1},n_{2}\}$. Specifically, this corresponds
to the following algorithm: 
\begin{align}
\underset{\boldsymbol{M}\in\mathbb{C}^{n_{1}\times n_{2}}}{\text{minimize}}\quad & \|\boldsymbol{M}\|_{*}\\
\text{subject to}\quad & \mathcal{P}_{\Omega}\left(\boldsymbol{M}\right)=\mathcal{P}_{\Omega}\left(\boldsymbol{X}\right),\nonumber 
\end{align}
where $\left\Vert \boldsymbol{M}\right\Vert _{*}$ denotes the nuclear
norm (or sum of all singular values) of a matrix $\boldsymbol{M}=[M_{k,l}]$.
This is a convex relaxation paradigm with respect to rank minimization.
However, naive MC algorithms \cite{Gross2011recovering} require at
least the order of $r\max\left(n_{1},n_{2}\right)\log\left(n_{1}n_{2}\right)$
samples in order to allow perfect recovery, which far exceeds the
degrees of freedom (which is $\Theta\left(r\right)$) in our problem.
What is worse, since the number $r$ of spectral spikes can be as
large as $n_{1}n_{2}$, $\boldsymbol{X}$ might become full-rank once
$r>\min\left(n_{1},n_{2}\right)$. This motivates us to seek other
forms that better capture the harmonic structure.

In this paper, we adopt one effective enhanced form of $\boldsymbol{X}$
based on the following two-fold Hankel structure. The enhanced matrix
$\boldsymbol{X}_{\text{e}}$ with respect to $\boldsymbol{X}$ is
defined as a $k_{1}\times\left(n_{1}-k_{1}+1\right)$ block Hankel
matrix 
\begin{equation}
\boldsymbol{X}_{\text{e}}:=\left[\begin{array}{cccc}
\boldsymbol{X}_{0} & \boldsymbol{X}_{1} & \cdots & \boldsymbol{X}_{n_{1}-k_{1}}\\
\boldsymbol{X}_{1} & \boldsymbol{X}_{2} & \cdots & \boldsymbol{X}_{n_{1}-k_{1}+1}\\
\vdots & \vdots & \vdots & \vdots\\
\boldsymbol{X}_{k_{1}-1} & \boldsymbol{X}_{k_{1}} & \cdots & \boldsymbol{X}_{n_{1}-1}
\end{array}\right],\label{eq:XeEnhancedForm}
\end{equation}
where $k_{1}$ $(1\leq k_{1}\leq n_{1})$ is called a pencil parameter.
Each block is a $k_{2}\times\left(n_{2}-k_{2}+1\right)$ Hankel matrix
defined such that for every $\ell$ ($0\leq\ell<n_{1}$): 
\begin{equation}
\boldsymbol{X}_{\ell}:=\left[\begin{array}{cccc}
X_{\ell,0} & X_{\ell,1} & \cdots & X_{\ell,n_{2}-k_{2}}\\
X_{\ell,1} & X_{\ell,2} & \cdots & X_{\ell,n_{2}-k_{2}+1}\\
\vdots & \vdots & \vdots & \vdots\\
X_{\ell,k_{2}-1} & X_{\ell,k_{2}} & \cdots & X_{\ell,n_{2}-1}
\end{array}\right],
\end{equation}
where $1\leq k_{2}\leq n_{2}$ is another pencil parameter. This enhanced
form allows us to express each block as%
\footnote{Note that the $l$th ($0\leq l<n_{1}$) row $\boldsymbol{X}_{l*}$
of $\boldsymbol{X}$ can be expressed as 
\[
\boldsymbol{X}_{l*}=\left[y_{1}^{l},\cdots,y_{r}^{l}\right]\boldsymbol{D}\boldsymbol{Z}^{\top}=\left[y_{1}^{l}d_{1},\cdots,y_{r}^{l}d_{r}\right]\boldsymbol{Z}^{\top},
\]
and hence we only need to find the Vandemonde decomposition for $\boldsymbol{X}_{0}$
and then replace $d_{i}$ by $y_{i}^{l}d_{i}$.%
} 
\begin{equation}
\boldsymbol{X}_{\ell}=\boldsymbol{Z}_{\text{L}}\boldsymbol{Y}_{\text{d}}^{\ell}\boldsymbol{D}\boldsymbol{Z}{}_{\text{R}},\label{eq:XEachBlock_YZ}
\end{equation}
where $\boldsymbol{Z}_{\text{L}}$, $\boldsymbol{Z}_{\text{R}}$ and
$\boldsymbol{Y}_{\text{d}}$ are defined respectively as 
\[
\boldsymbol{Z}_{\text{L}}:=\left[\begin{array}{cccc}
1 & 1 & \cdots & 1\\
z_{1} & z_{2} & \cdots & z_{r}\\
\vdots & \vdots & \vdots & \vdots\\
z_{1}^{k_{2}-1} & z_{2}^{k_{2}-1} & \cdots & z_{r}^{k_{2}-1}
\end{array}\right],
\]
\[
\boldsymbol{Z}_{\text{R}}:=\left[\begin{array}{cccc}
1 & z_{1} & \cdots & z_{1}^{n_{2}-k_{2}}\\
1 & z_{2} & \cdots & z_{2}^{n_{2}-k_{2}}\\
\vdots & \vdots & \vdots & \vdots\\
1 & z_{r} & \cdots & z_{r}^{n_{2}-k_{2}}
\end{array}\right],
\]
and 
\[
\boldsymbol{Y}_{\text{d}}:=\text{diag}\left[y_{1},y_{2},\cdots,y_{r}\right].
\]
Substituting (\ref{eq:XEachBlock_YZ}) into (\ref{eq:XeEnhancedForm})
yields the following: 
\begin{equation}
\boldsymbol{X}_{\text{e}}=\underset{\sqrt{k_{1}k_{2}}\boldsymbol{E}_{\text{L}}}{\underbrace{\begin{bmatrix}\boldsymbol{Z}_{\text{L}}\\
\boldsymbol{Z}_{\text{L}}\boldsymbol{Y}_{\text{d}}\\
\vdots\\
\boldsymbol{Z}_{\text{L}}\boldsymbol{Y}_{\text{d}}^{k_{1}-1}
\end{bmatrix}}}\boldsymbol{D}\underset{\sqrt{\left(n_{1}-k_{1}+1\right)\left(n_{2}-k_{2}+1\right)}\boldsymbol{E}_{\text{R}}}{\underbrace{\begin{bmatrix}\boldsymbol{Z}_{\text{R}}, & \boldsymbol{Y}_{\text{d}}\boldsymbol{Z}_{\text{R}}, & \cdots, & \boldsymbol{Y}_{\text{d}}^{n_{1}-k_{1}}\boldsymbol{Z}_{\text{R}}\end{bmatrix}}},\label{eq:Xe_as_ELER}
\end{equation}
where $\boldsymbol{E}_{\text{L}}$ and $\boldsymbol{E}_{\text{R}}$
span the column and row space of $\boldsymbol{X}_{\text{e}}$, respectively.
This immediately implies that $\boldsymbol{X}_{\text{e}}$ is \textit{low-rank},
i.e. 
\begin{equation}
\text{rank}\left(\boldsymbol{X}_{\text{e}}\right)\leq r.
\end{equation}
This form is inspired by the traditional matrix pencil approach proposed
in \cite{HuaSarkar1990,Hua1992} to estimate harmonic frequencies
if \emph{all} entries of $\boldsymbol{X}$ are available. Thus, one
can extract all underlying frequencies of $\boldsymbol{X}$ using
methods proposed in \cite{Hua1992}, as long as $\boldsymbol{X}$
can be faithfully recovered.

\subsection{The EMaC Algorithm in the Absence of Noise \label{sub:Algorithm}}

We then attempt recovery through the following \emph{Enhancement Matrix
Completion (EMaC)} algorithm: 
\begin{align}
\text{(EMaC)}\quad\underset{\boldsymbol{M}\in\mathbb{C}^{n_{1}\times n_{2}}}{\text{minimize}}\quad & \left\Vert \boldsymbol{M}_{\text{e}}\right\Vert _{*}\label{eq:EMaC}\\
\text{subject to}\quad & \mathcal{P}_{\Omega}\left(\boldsymbol{M}\right)=\mathcal{P}_{\Omega}\left(\boldsymbol{X}\right),\nonumber 
\end{align}
where $\boldsymbol{M}_{\text{e}}$ denotes the enhanced form of $\boldsymbol{M}$.
In other words, EMaC minimizes the nuclear norm of the enhanced form
over all matrices compatible with the samples. This convex program
can be rewritten into a semidefinite program (SDP) \cite{RecFazPar07}
\begin{align*}
\underset{\boldsymbol{M}\in\mathbb{C}^{n_{1}\times n_{2}}}{\text{minimize}}\quad & \frac{1}{2}\text{Tr}\left(\boldsymbol{Q}_{1}\right)+\frac{1}{2}\text{Tr}\left(\boldsymbol{Q}_{2}\right)\\
\text{subject to}\quad & \mathcal{P}_{\Omega}\left(\boldsymbol{M}\right)=\mathcal{P}_{\Omega}\left(\boldsymbol{X}\right),\\
 & \left[\begin{array}{cc}
\boldsymbol{Q}_{1} & \boldsymbol{M}_{\text{e}}^{*}\\
\boldsymbol{M}_{\text{e}} & \boldsymbol{Q}_{2}
\end{array}\right]\succeq0,
\end{align*}
which can be solved using off-the-shelf solvers in a tractable manner
(see, e.g., \cite{RecFazPar07}). It is worth mentioning that EMaC
has a similar computational complexity as the atomic norm minimization
method \cite{TangBhaskarShahRecht2012} when restricted to the 1-D
frequency model.

Careful readers will remark that the performance of EMaC must depend
on the choices of the pencil parameters $k_{1}$ and $k_{2}$. In
fact, if we define a quantity 
\begin{equation}
c_{\text{s}}:=\max\left\{ \frac{n_{1}n_{2}}{k_{1}k_{2}},\frac{n_{1}n_{2}}{\left(n_{1}-k_{1}+1\right)\left(n_{2}-k_{2}+1\right)}\right\} \label{eq:DefnCsq}
\end{equation}
that measures how close $\boldsymbol{X}_{\mathrm{e}}$ is to a square
matrix, then it will be shown later that the required sample complexity
for faithful recovery is an increasing function of $c_{\text{s}}$.
In fact, both our theory and empirical experiments are in favor of
a small $c_{\text{s}}$, corresponding to the choices $k_{1}=\Theta\left(n_{1}\right)$,
$n_{1}-k_{1}+1=\Theta\left(n_{1}\right)$, $k_{2}=\Theta\left(n_{2}\right)$,
and $n_{2}-k_{2}+1=\Theta\left(n_{2}\right)$.

\subsection{The Noisy-EMaC Algorithm with Bounded Noise}

In practice, measurements are often contaminated by a certain amount
of noise. To make our model and algorithm more practically applicable,
we replace our measurements by $\boldsymbol{X}^{\text{o}}=[{X}_{k,l}^{\text{o}}]_{0\leq k<n_{1},0\leq l<n_{2}}$
through the following noisy model 
\begin{equation}
{X}_{k,l}^{\text{o}}={X}_{k,l}+{N}_{k,l},\quad\forall(k,l)\in\Omega,\label{eq:NoisyDataModel}
\end{equation}
where ${X}_{k,l}^{\text{o}}$ is the observed $(k,l)$-th entry, and
$\boldsymbol{N}=[N_{k,l}]_{0\leq k<n_{1},0\leq l<n_{2}}$ denotes
some unknown noise. We assume that the noise magnitude is bounded
by a known amount $\left\Vert \mathcal{P}_{\Omega}\left(\boldsymbol{N}\right)\right\Vert _{\text{F}}\leq\delta$,
where $\left\Vert \cdot\right\Vert _{\text{F}}$ denotes the Frobenius
norm. In order to adapt our algorithm to such noisy measurements,
one wishes that small perturbation in the measurements should result
in small variation in the estimate. Our algorithm is then modified
as follows 
\begin{align}
\text{(Noisy-EMaC)}:\quad\underset{\boldsymbol{M}\in\mathbb{C}^{n_{1}\times n_{2}}}{\text{minimize}}\quad & \left\Vert \boldsymbol{M}_{\text{e}}\right\Vert _{*}\label{eq:EMaCNoisy}\\
\text{subject to}\quad & \left\Vert \mathcal{\mathcal{P}}_{\Omega}\left(\boldsymbol{M}-\boldsymbol{X}^{\text{o}}\right)\right\Vert _{\text{F}}\leq\delta.\nonumber 
\end{align}
That said, the algorithm searches for a candidate with minimum nuclear
norm among all signals close to the measurements.

\subsection{The Robust-EMaC Algorithm with Sparse Outliers}

An outlier is a data sample that can deviate arbitrarily from the
true data point. Practical data samples one collects may contain a
certain portion of outliers due to abnormal behavior of data acquisition
devices such as amplifier saturation, sensor failures, and malicious
attacks. A desired recovery algorithm should be able to automatically
prune all outliers even when they corrupt up to a constant portion
of all data samples.

Specifically, suppose that our measurements $\boldsymbol{X}^{\text{o}}$
are given by 
\begin{equation}
{X}_{k,l}^{\text{o}}={X}_{k,l}+{S}_{k,l},\quad\forall(k,l)\in\Omega,\label{eq:SparseOutlierModel}
\end{equation}
where ${X}_{k,l}^{\text{o}}$ is the observed $(k,l)$-th entry, and
$\boldsymbol{S}=[S_{k,l}]_{0\leq k<n_{1},0\leq l<n_{2}}$ denotes
the outliers, which is assumed to be a sparse matrix supported on
some location set $\Omega^{\text{dirty}}\subseteq\Omega$. The sampling
model is formally described as follows. 
\begin{enumerate}
\item Suppose that $\Omega$ is obtained by sampling $m$ entries uniformly
at random, and define $\rho:=\frac{m}{n_{1}n_{2}}$. 
\item Conditioning on $(k,l)\in\Omega$, the events $\left\{ (k,l)\in\Omega^{\text{dirty}}\right\} $
are independent with conditional probability 
\[
\mathbb{P}\left\{ \left(k,l\right)\in\Omega^{\text{dirty}}\mid\left(k,l\right)\in\Omega\right\} =\tau
\]
for some small constant corruption fraction $0<\tau<1$. 
\item Define $\Omega^{\text{clean}}:=\Omega\backslash\Omega^{\text{dirty}}$
as the location set of \emph{uncorrupted} measurements. 
\end{enumerate}
EMaC is then modified as follows to accommodate sparse outliers: 
\begin{align}
\text{(Robust-EMaC)}\underset{\boldsymbol{M},\hat{\boldsymbol{S}}\in\mathbb{C}^{n_{1}\times n_{2}}}{\text{minimize}}\quad & \left\Vert \boldsymbol{M}_{\mathrm{e}}\right\Vert _{*}+\lambda\|\hat{\boldsymbol{S}}_{\mathrm{e}}\|_{1}\label{eq:RobustEMaC}\\
\text{subject to}\quad & \mathcal{P}_{\Omega}\left(\boldsymbol{M}+\hat{\boldsymbol{S}}\right)=\mathcal{P}_{\Omega}\left(\boldsymbol{X}+\boldsymbol{S}\right),\nonumber 
\end{align}
where $\lambda>0$ is a regularization parameter that will be specified
later. As will be shown later, $\lambda$ can be selected in a parameter-free
fashion. We denote by $\boldsymbol{M}_{\text{e}}$ and $\hat{\boldsymbol{S}}_{\text{e}}$
the enhanced form of $\boldsymbol{M}$ and $\hat{\boldsymbol{S}}$,
respectively. Here, $\Vert\hat{\boldsymbol{S}}_{\text{e}}\Vert_{1}:=\|\mbox{vec}(\hat{\boldsymbol{S}}_{\text{e}})\|_{1}$
represents the elementwise $\ell_{1}$-norm of $\hat{\boldsymbol{S}}_{\text{e}}$.
Robust-EMaC promotes the low-rank structure of the enhanced data matrix
as well as the sparsity of the outliers via convex relaxation with
respective structures.

\subsection{Notations}

Before continuing, we introduce a few notations that will be used
throughout. Let the singular value decomposition (SVD) of $\boldsymbol{X}_{\text{e}}$
be $\boldsymbol{X}_{\text{e}}=\boldsymbol{U}\boldsymbol{\Lambda}\boldsymbol{V}^{*}$.
Denote by 
\begin{align}
T: & =\left\{ \boldsymbol{U}\boldsymbol{M}^{*}+\tilde{\boldsymbol{M}}\boldsymbol{V}^{*}:\boldsymbol{M}\in\mathbb{C}^{\left(n_{1}-k_{1}+1\right)\left(n_{2}-k_{1}+1\right)\times r},\right.\nonumber \\
 & \quad\quad\quad\quad\quad\left.\tilde{\boldsymbol{M}}\in\mathbb{C}^{k_{1}k_{2}\times r}\right\} \label{eq:TangentSpace}
\end{align}
the tangent space with respect to $\boldsymbol{X}_{\text{e}}$, and
$T^{\perp}$ the orthogonal complement of $T$. Denote by $\mathcal{P}_{U}$
(resp. $\mathcal{P}_{V}$, $\mathcal{P}_{T}$) the orthogonal projection
onto the column (resp. row, tangent) space of $\boldsymbol{X}_{\text{e}}$,
i.e. for any $\boldsymbol{M}$, 
\[
\mathcal{P}_{U}\left(\boldsymbol{M}\right)=\boldsymbol{U}\boldsymbol{U}^{*}\boldsymbol{M},\quad\mathcal{P}_{V}\left(\boldsymbol{M}\right)=\boldsymbol{M}\boldsymbol{V}\boldsymbol{V}^{*},
\]
\[
\text{and}\quad\mathcal{P}_{T}=\mathcal{P}_{U}+\mathcal{P}_{V}-\mathcal{P}_{U}\mathcal{P}_{V}.
\]
We let $\mathcal{P}_{T^{\perp}}=\mathcal{I}-\mathcal{P}_{T}$ be the
orthogonal complement of $\mathcal{P}_{T}$, where $\mathcal{I}$
denotes the identity operator.

Denote by $\left\Vert \boldsymbol{M}\right\Vert $, $\left\Vert \boldsymbol{M}\right\Vert _{\mathrm{F}}$
and $\left\Vert \boldsymbol{M}\right\Vert _{*}$ the spectral norm
(operator norm), Frobenius norm, and nuclear norm of $\boldsymbol{M}$,
respectively. Also, $\left\Vert \boldsymbol{M}\right\Vert _{1}$ and
$\left\Vert \boldsymbol{M}\right\Vert _{\infty}$ are defined to be
the \emph{elementwise} $\ell_{1}$ and $\ell_{\infty}$ norm of $\boldsymbol{M}$.
Denote by $\boldsymbol{e}_{i}$ the $i^{\text{th}}$ standard basis
vector. Additionally, we use $\text{sgn}\left(\boldsymbol{M}\right)$
to denote the elementwise complex sign of $\boldsymbol{M}$.

On the other hand, we denote by $\Omega_{\text{e}}(k,l)$ the set
of locations of the enhanced matrix $\boldsymbol{X}_{\text{e}}$ containing
copies of $X_{k,l}$. Due to the Hankel or multi-fold Hankel structures,
one can easily verify the following: each location set $\Omega_{\text{e}}(k,l)$
contains at most one index in any given row of the enhanced form,
and at most one index in any given column. For each $\left(k,l\right)\in[n_{1}]\times[n_{2}]$,
we use $\boldsymbol{A}_{(k,l)}$ to denote a basis matrix that extracts
the average of all entries in $\Omega_{\text{e}}\left(k,l\right)$.
Specifically, 
\begin{equation}
\left(\boldsymbol{A}_{(k,l)}\right)_{\alpha,\beta}:=\begin{cases}
\frac{1}{\sqrt{\left|\Omega_{\text{e}}\left(k,l\right)\right|}},\quad & \text{if }\left(\alpha,\beta\right)\in\Omega_{\text{e}}\left(k,l\right),\\
0, & \text{else}.
\end{cases}\label{eq:DefnMeanBasis}
\end{equation}
We will use 
\begin{equation}
\omega_{k,l}:=\left|\Omega_{\text{e}}\left(k,l\right)\right|\label{eq:littleOmega}
\end{equation}
throughout as a short-hand notation.

\section{Main Results\label{sec:Main-Results}}

This section delivers the following encouraging news: under mild incoherence
conditions, EMaC enables faithful signal recovery from a minimal number
of time-domain samples, even when the samples are contaminated by
bounded noise or a constant portion of arbitrary outliers.

\subsection{Incoherence Measure\label{sub:IncoherenceMeasures}}

In general, matrix completion from a few entries is hopeless unless
the underlying structure is sufficiently uncorrelated with the observation
basis. This inspires us to introduce certain incoherence measures.
To this end, we define the 2-D Dirichlet kernel as 
\begin{equation}
\mathcal{D}(k_{1},k_{2},\boldsymbol{f}):=\frac{1}{k_{1}k_{2}}\left(\frac{1-e^{-j2\pi k_{1}f_{1}}}{1-e^{-j2\pi f_{1}}}\right)\left(\frac{1-e^{-j2\pi k_{2}f_{2}}}{1-e^{-j2\pi f_{2}}}\right),\label{dirichlet}
\end{equation}
where $\boldsymbol{f}=(f_{1},f_{2})\in[0,1)^{2}$. Fig.~\ref{fig:dirichlet}
(a) illustrates the amplitude of $\mathcal{D}(k_{1},k_{2},\boldsymbol{f})$
when $k_{1}=k_{2}=6$. The value of $|\mathcal{D}(k_{1},k_{2},\boldsymbol{f})|$
decays inverse proportionally with respect to the frequency $\boldsymbol{f}$.
Set $\boldsymbol{G}_{\text{L}}$ and $\boldsymbol{G}_{\text{R}}$
to be two $r\times r$ Gram matrices such that their entries are specified
respectively by 
\begin{align*}
(\boldsymbol{G}_{\text{L}})_{i,l} & =\mathcal{D}(k_{1},k_{2},\boldsymbol{f}_{i}-\boldsymbol{f}_{l}),\\
(\boldsymbol{G}_{\text{R}})_{i,l} & =\mathcal{D}(n_{1}-k_{1}+1,n_{2}-k_{2}+1,\boldsymbol{f}_{i}-\boldsymbol{f}_{l}),
\end{align*}
where the difference $\boldsymbol{f}_{i}-\boldsymbol{f}_{l}$ is understood
as the wrap-around distance in the interval $[-1/2,1/2)^{2}$. Simple
manipulation reveals that 
\[
\boldsymbol{G}_{\mathrm{L}}=\boldsymbol{E}_{\mathrm{L}}^{*}\boldsymbol{E}_{\mathrm{L}},\quad\boldsymbol{G}_{\mathrm{R}}=\left(\boldsymbol{E}_{\mathrm{R}}\boldsymbol{E}_{\mathrm{R}}^{*}\right)^{\top},
\]
where $\boldsymbol{E}_{\mathrm{L}}$ and $\boldsymbol{E}_{\mathrm{R}}$
are defined in \eqref{eq:Xe_as_ELER}. 

Our incoherence measure is then defined as follows. \begin{definition}[\textbf{Incoherence}]A
matrix $\boldsymbol{X}$ is said to obey the incoherence property
with parameter $\mu_{1}$ if 
\begin{equation}
\sigma_{\min}\left(\boldsymbol{G}_{\mathrm{L}}\right)\geq\frac{1}{\mu_{1}}\quad\text{and}\quad\sigma_{\min}\left(\boldsymbol{G}_{\mathrm{R}}\right)\geq\frac{1}{\mu_{1}}.\label{eq:LeastSV_G}
\end{equation}
where $\sigma_{\min}\left(\boldsymbol{G}_{\mathrm{L}}\right)$ and
$\sigma_{\min}\left(\boldsymbol{G}_{\mathrm{R}}\right)$ represent
the least singular values of $\boldsymbol{G}_{\mathrm{L}}$ and $\boldsymbol{G}_{\mathrm{R}}$,
respectively. \end{definition}


The incoherence measure $\mu_{1}$ only depends on the locations of
the frequency spikes, irrespective of the amplitudes of their respective
coefficients. The signal is said to satisfy the incoherence condition
if $\mu_{1}$ scales as a small constant, which occurs when $\boldsymbol{G}_{\text{L}}$
and $\boldsymbol{G}_{\text{R}}$ are both well-conditioned. Our incoherence
condition naturally requires certain separation among all frequency
pairs, as when two frequency spikes are closely located, $\mu_{1}$
gets undesirably large. As shown in \cite[Theorem 2]{liao2014music},
a separation of about $2/n$ for line spectrum is sufficient to guarantee
the incoherence condition to hold. However, it is worth emphasizing
that such strict separation is not necessary as required in \cite{TangBhaskarShahRecht2012},
and thereby our incoherence condition is applicable to a broader class
of spectrally sparse signals.


\begin{figure}[htp]
\centering%
\begin{tabular}{cc}
\hspace{-0.1in}\includegraphics[width=0.25\textwidth]{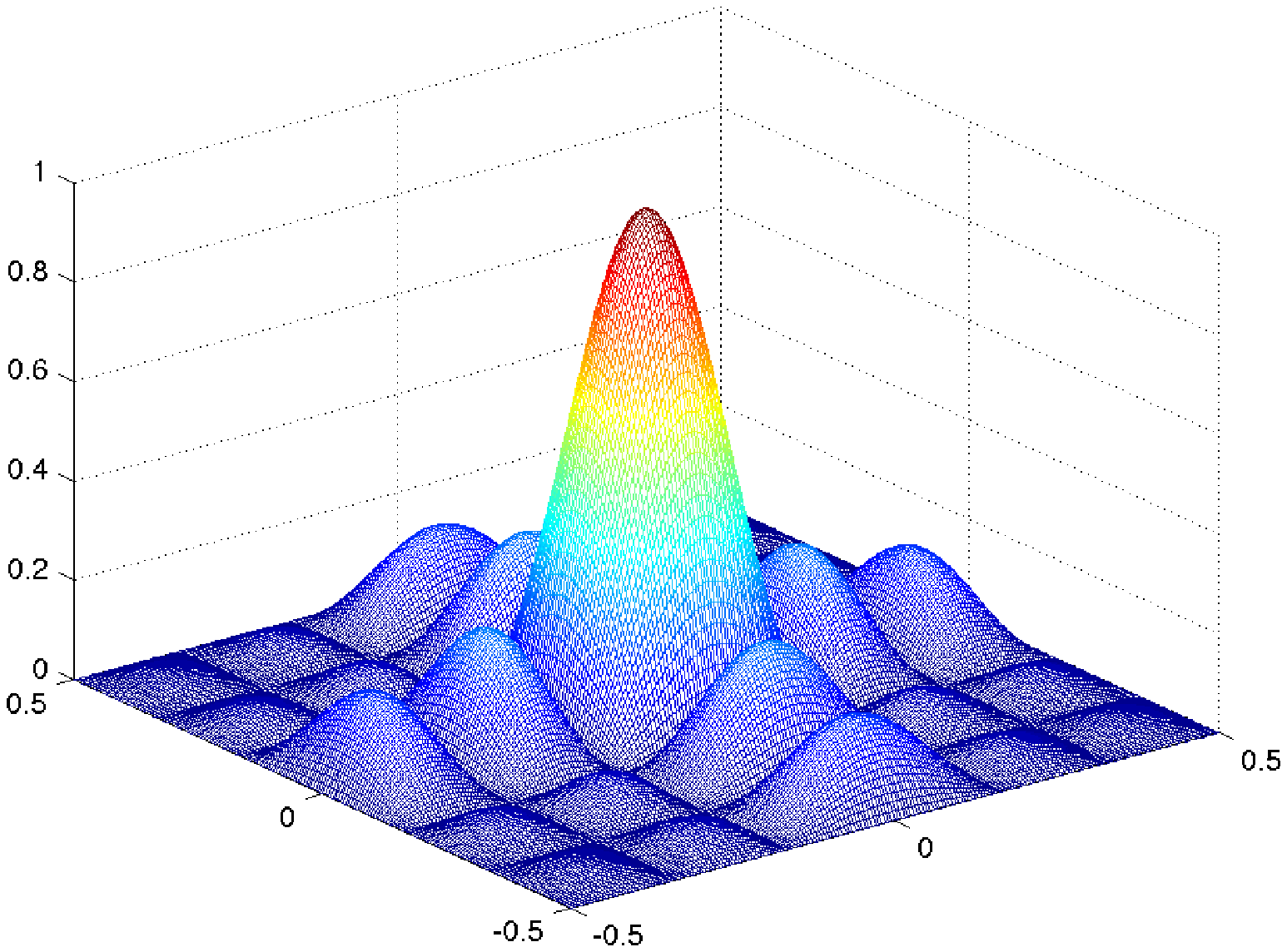}  & \hspace{-0.1in}\includegraphics[width=0.24\textwidth]{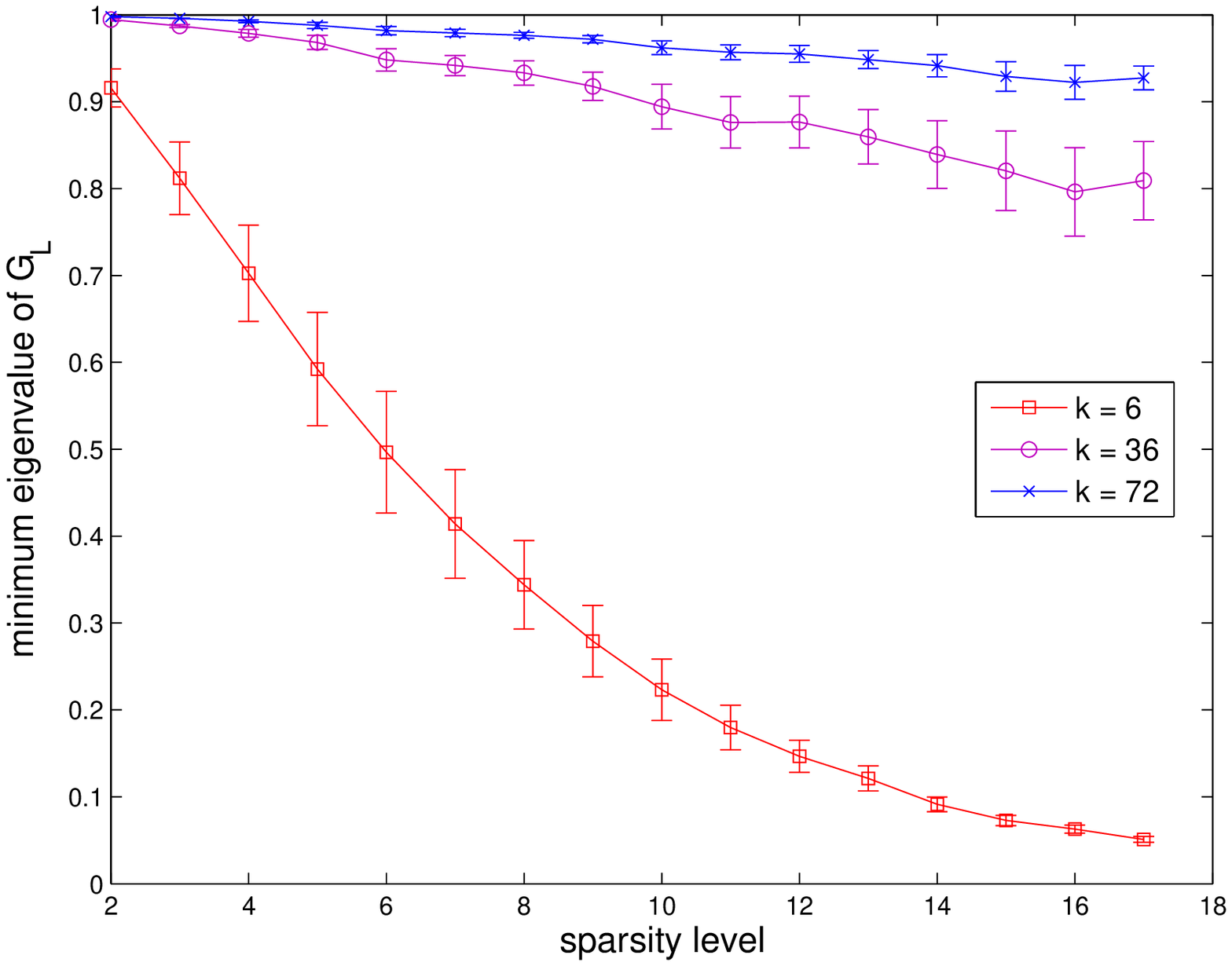}\tabularnewline
(a)  & (b)\tabularnewline
\end{tabular}\caption{\label{fig:dirichlet} (a) The 2-D Dirichlet kernel when $k=k_{1}=k_{2}=6$;
(b) The empirical distribution of the minimum eigenvalue $\sigma_{\min}(\boldsymbol{G}_{\text{L}})$
for various choices of $k$ with respect to the sparsity level. }
\end{figure}

To give the reader a flavor of the incoherence condition, we list
two examples below. For ease of presentation, we assume below 2-D
frequency models with $n_{1}=n_{2}$. Note, however, that the asymmetric
cases and general $K$-dimensional frequency models can be analyzed
in the same manner. 
\begin{itemize}
\item \emph{Random frequency locations}: suppose that the $r$ frequencies
are generated uniformly at random, then the minimum pairwise separation
can be crudely bounded by $\Theta\left(\frac{1}{r^{2}\log n_{1}}\right)$.
If $n_{1}\gg r^{2.5}\log n_{1}$, then a crude bound reveals that
$\forall i_{1}\neq i_{2},$ 
\[
\max\left\{ \frac{1}{k_{1}}\frac{1-\left(y_{i_{1}}^{*}y_{i_{2}}\right)^{k_{1}}}{1-y_{i_{1}}^{*}y_{i_{2}}},\frac{1}{k_{2}}\frac{1-\left(z_{i_{1}}^{*}z_{i_{2}}\right)^{k_{2}}}{1-z_{i_{1}}^{*}z_{i_{2}}}\right\} \ll\frac{1}{\sqrt{r}}
\]
holds with high probability, indicating that the off-diagonal entries
of $\boldsymbol{G}_{\text{L}}$ and $\boldsymbol{G}_{\text{R}}$ are
much smaller than $1/r$ in magnitude. Simple manipulation then allows
us to conclude that $\sigma_{\min}\left(\boldsymbol{G}_{\text{L}}\right)$
and $\sigma_{\min}\left(\boldsymbol{G}_{\text{R}}\right)$ are bounded
below by positive constants. Fig. \ref{fig:dirichlet} (b) shows the
minimum eigenvalue of $\boldsymbol{G}_{\text{L}}$ for different $k=k_{1}=k_{2}=6,36,72$
when the spikes are randomly generated and the number of spikes is
given as the sparsity level. The minimum eigenvalue of $\boldsymbol{G}_{\text{L}}$
gets closer to one as $k$ grows, confirming our argument. 
\item \emph{Small perturbation off the grid}: suppose that all frequencies
are within a distance at most $\frac{1}{n_{1}r^{1/4}}$ from some
grid points $\left(\frac{l_{1}}{k_{1}},\frac{l_{2}}{k_{2}}\right)$
$\left(0\leq l_{1}<k_{1},0\leq l_{2}<k_{2}\right)$. One can verify
that $\forall i_{1}\neq i_{2}$, 
\[
\max\left\{ \frac{1}{k_{1}}\frac{1-\left(y_{i_{1}}^{*}y_{i_{2}}\right)^{k_{1}}}{1-y_{i_{1}}^{*}y_{i_{2}}},\frac{1}{k_{2}}\frac{1-\left(z_{i_{1}}^{*}z_{i_{2}}\right)^{k_{2}}}{1-z_{i_{1}}^{*}z_{i_{2}}}\right\} <\frac{1}{2\sqrt{r}},
\]
and hence the magnitude of all off-diagonal entries of $\boldsymbol{G}_{\text{L}}$
and $\boldsymbol{G}_{\text{R}}$ are no larger than $1/(4r)$. This
immediately suggests that $\sigma_{\min}\left(\boldsymbol{G}_{\text{L}}\right)$
and $\sigma_{\min}\left(\boldsymbol{G}_{\text{R}}\right)$ are lower
bounded by $3/4$.
\end{itemize}
Note, however, that the class of incoherent signals are far beyond
the ones discussed above.

\subsection{Theoretical Guarantees\label{sub:Theoretical-Guarantee}}

With the above incoherence measure, the main theoretical guarantees
are provided in the following three theorems each accounting for a
distinct data model: 1) noiseless measurements, 2) measurements contaminated
by bounded noise, and 3) measurements corrupted by a constant proportion
of arbitrary outliers.

\subsubsection{Exact Recovery from Noiseless Measurements}

Exact recovery is possible from a minimal number of noise-free samples,
as asserted in the following theorem. \begin{theorem}\label{theorem-EMaC-noiseless}Let
$\boldsymbol{X}$ be a data matrix of form (\ref{eq:X_MatrixForm}),
and $\Omega$ the random location set of size $m$. Suppose that the
incoherence property (\ref{eq:LeastSV_G}) holds and that all measurements
are noiseless. Then there exists a universal constant $c_{1}>0$ such
that $\boldsymbol{X}$ is the unique solution to EMaC with probability
exceeding $1-\left(n_{1}n_{2}\right)^{-2}$, provided that 
\begin{equation}
m>c_{1}\mu_{1}c_{\mathrm{s}}r\log^{4}(n_{1}n_{2}).\label{eq:ExactRecovery_Weak}
\end{equation}
\end{theorem}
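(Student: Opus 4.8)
The plan is to follow the now-standard dual-certificate route for nuclear-norm minimization, carried out entirely in the enhanced domain and adapted to the structured observation operator induced by the multi-fold Hankel embedding. First I would record the sampling operator as the projection $\mathcal{P}_{\Omega}$ onto $\mathrm{span}\{\boldsymbol{A}_{(k,l)}:(k,l)\in\Omega\}$, where the $\boldsymbol{A}_{(k,l)}$ of \eqref{eq:DefnMeanBasis} form an \emph{orthonormal} family: each has unit Frobenius norm, and because every $\Omega_{\mathrm e}(k,l)$ meets any row and any column of $\boldsymbol{X}_{\mathrm e}$ at most once, distinct supports are disjoint. Standard subgradient/convex-duality analysis then reduces the theorem to two ingredients: (i) injectivity of $\mathcal{P}_{\Omega}$ on the tangent space $T$, i.e. $\mathcal{P}_{T}\mathcal{P}_{\Omega}\mathcal{P}_{T}$ stays bounded away from $0$ on $T$; and (ii) existence of an inexact dual certificate $\boldsymbol{W}$ in the range of $\mathcal{P}_{\Omega}$ with $\|\mathcal{P}_{T}(\boldsymbol{W})-\boldsymbol{U}\boldsymbol{V}^{*}\|_{\mathrm F}$ polynomially small and $\|\mathcal{P}_{T^{\perp}}(\boldsymbol{W})\|<1/2$.

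The second step is to translate the physical incoherence $\mu_{1}$ into operator-level estimates. Using the factorization \eqref{eq:Xe_as_ELER}, the projectors onto the column and row spaces are $\boldsymbol{U}\boldsymbol{U}^{*}=\boldsymbol{E}_{\mathrm L}\boldsymbol{G}_{\mathrm L}^{-1}\boldsymbol{E}_{\mathrm L}^{*}$ and $\boldsymbol{V}\boldsymbol{V}^{*}=\boldsymbol{E}_{\mathrm R}^{*}(\boldsymbol{E}_{\mathrm R}\boldsymbol{E}_{\mathrm R}^{*})^{-1}\boldsymbol{E}_{\mathrm R}$, so the bounds $\sigma_{\min}(\boldsymbol{G}_{\mathrm L}),\sigma_{\min}(\boldsymbol{G}_{\mathrm R})\geq 1/\mu_{1}$ directly control $\|\boldsymbol{G}_{\mathrm L}^{-1}\|$ and $\|\boldsymbol{G}_{\mathrm R}^{-1}\|$. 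The decisive intermediate lemma is a uniform incoherence estimate of the form $\|\mathcal{P}_{T}\boldsymbol{A}_{(k,l)}\|_{\mathrm F}^{2}\lesssim \mu_{1}c_{\mathrm s}r/(n_{1}n_{2})$, proved by expanding $\mathcal{P}_{U}$ and $\mathcal{P}_{V}$ against the Vandermonde columns of $\boldsymbol{E}_{\mathrm L},\boldsymbol{E}_{\mathrm R}$ and invoking the Dirichlet-kernel decay that defines $\boldsymbol{G}_{\mathrm L},\boldsymbol{G}_{\mathrm R}$; the factor $c_{\mathrm s}$ enters through the row/column dimensions $k_{1}k_{2}$ and $(n_{1}-k_{1}+1)(n_{2}-k_{2}+1)$.

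With this estimate in hand, for (i) I would pass to the Bernoulli (with-replacement) model, set the rescaled operator $\mathcal{R}_{\Omega}:=\tfrac{n_{1}n_{2}}{m}\mathcal{P}_{\Omega}$ so that $\mathbb{E}[\mathcal{P}_{T}\mathcal{R}_{\Omega}\mathcal{P}_{T}]=\mathcal{P}_{T}$, and apply the noncommutative (operator) Bernstein inequality to the sum of the independent self-adjoint summands $\boldsymbol{M}\mapsto\mathcal{P}_{T}\boldsymbol{A}_{(k,l)}\langle\boldsymbol{A}_{(k,l)},\mathcal{P}_{T}\boldsymbol{M}\rangle$. Both the uniform bound on each summand and the variance proxy are governed by the incoherence estimate above, so a union bound yields $\|\mathcal{P}_{T}\mathcal{R}_{\Omega}\mathcal{P}_{T}-\mathcal{P}_{T}\|\leq 1/2$ as soon as $m\gtrsim \mu_{1}c_{\mathrm s}r\log(n_{1}n_{2})$, establishing injectivity on $T$.

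For (ii) I would construct $\boldsymbol{W}$ by the golfing scheme: split $\Omega$ into $\ell=\Theta(\log(n_{1}n_{2}))$ independent batches, initialize the residual $\boldsymbol{F}_{0}=\boldsymbol{U}\boldsymbol{V}^{*}$, and iterate $\boldsymbol{W}_{j}=\boldsymbol{W}_{j-1}+\mathcal{R}_{\Omega_{j}}\boldsymbol{F}_{j-1}$ with $\boldsymbol{F}_{j}=\boldsymbol{F}_{j-1}-\mathcal{P}_{T}\mathcal{R}_{\Omega_{j}}\boldsymbol{F}_{j-1}$. The Bernstein bound from step three forces $\|\boldsymbol{F}_{j}\|_{\mathrm F}$ to contract geometrically, so after $\Theta(\log)$ rounds the near-feasibility part of (ii) holds. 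Controlling $\|\mathcal{P}_{T^{\perp}}(\boldsymbol{W})\|\leq\sum_{j}\|\mathcal{R}_{\Omega_{j}}\boldsymbol{F}_{j-1}-\boldsymbol{F}_{j-1}\|$ is where the extra logarithmic powers originate: each spectral-norm Bernstein estimate is driven by the elementwise $\|\boldsymbol{F}_{j-1}\|_{\infty}$ rather than by $\|\boldsymbol{F}_{j-1}\|_{\mathrm F}$, so I must simultaneously track and contract the $\ell_{\infty}$-norm of the residual. I expect \textbf{this dual-norm bookkeeping — Frobenius for feasibility, spectral via $\ell_\infty$ for the certificate — to be the main obstacle}, and it is precisely what inflates the sample complexity to $r\log^{4}(n_{1}n_{2})$. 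The single hardest estimate is the $\ell_{\infty}$-contraction in the Hankel geometry, where the non-uniform multiplicities $\omega_{k,l}$ of \eqref{eq:littleOmega} rule out a direct appeal to off-the-shelf matrix-completion lemmas; here I would use the Dirichlet-kernel incoherence encoded in $\mu_{1}$ and $c_{\mathrm s}$ to bound $\max_{k,l}|\langle\boldsymbol{A}_{(k,l)},\mathcal{P}_{T}(\cdot)\rangle|$ uniformly. Finally I would transfer the guarantee from the Bernoulli model back to the uniform-$m$ sampling model by the usual monotonicity argument.
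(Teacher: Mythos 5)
Your overall architecture matches the paper's: reduction to an inexact dual certificate plus injectivity of the sampling operator on $T$, translation of $\mu_{1}$ into $\Vert\mathcal{P}_{T}\boldsymbol{A}_{(k,l)}\Vert_{\mathrm{F}}^{2}\lesssim\mu_{1}c_{\mathrm{s}}r/(n_{1}n_{2})$ via the Gram-matrix factorizations of $\boldsymbol{U}\boldsymbol{U}^{*}$ and $\boldsymbol{V}\boldsymbol{V}^{*}$, operator Bernstein for $\mathcal{P}_{T}\mathcal{A}_{\Omega}\mathcal{P}_{T}$, and a golfing construction of $\boldsymbol{W}$. You also correctly locate the hard step in bounding $\Vert\mathcal{P}_{T^{\perp}}(\boldsymbol{W})\Vert$ in the presence of the non-uniform multiplicities $\omega_{k,l}$.

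There is, however, a genuine gap in how you propose to close that step. You plan to drive the spectral-norm Bernstein estimates by the single quantity $\max_{k,l}|\langle\boldsymbol{A}_{(k,l)},\boldsymbol{F}_{j-1}\rangle|$ and to contract it along the golfing iterations. Do the arithmetic: the variance proxy for $\bigl(\tfrac{n_{1}n_{2}}{m}\mathcal{A}_{\Omega}-\mathcal{A}\bigr)(\boldsymbol{F}_{0})$ is $\tfrac{n_{1}n_{2}}{m}\sum_{k,l}\omega_{k,l}^{-1}|\langle\boldsymbol{A}_{(k,l)},\boldsymbol{F}_{0}\rangle|^{2}$, and if you majorize this sum by $n_{1}n_{2}$ times the squared $\ell_{\infty}$-type maximum, the starting bound $\max_{k,l}\omega_{k,l}^{-1/2}|\langle\boldsymbol{A}_{(k,l)},\boldsymbol{U}\boldsymbol{V}^{*}\rangle|\leq\mu_{1}c_{\mathrm{s}}r/(n_{1}n_{2})$ leaves you with a variance of order $\mu_{1}^{2}c_{\mathrm{s}}^{2}r^{2}/m$, hence a sample complexity \emph{quadratic} in $r$ (this is exactly what happens in the paper's robust analysis, and what the earlier ICML version obtained). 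To reach $m\gtrsim r\,\mathrm{polylog}$ you need a second, genuinely different norm: the weighted $\ell_{2}$ aggregate over skew-diagonals, $\Vert\boldsymbol{M}\Vert_{\mathcal{A},2}$ of \eqref{eq:DefnA2}, together with the estimate $\Vert\boldsymbol{U}\boldsymbol{V}^{*}\Vert_{\mathcal{A},2}^{2}\lesssim\mu_{1}c_{\mathrm{s}}r\log^{2}(n_{1}n_{2})/(n_{1}n_{2})$ of \eqref{eq:BoundA2_mu1}. This bound does \emph{not} follow from the entrywise maximum or from the Dirichlet-kernel decay alone; the paper derives it from the per-row energy bound $\Vert\boldsymbol{e}_{i}^{\top}\boldsymbol{U}\boldsymbol{V}^{*}\Vert_{\mathrm{F}}^{2}\leq\mu_{1}c_{\mathrm{s}}r/(n_{1}n_{2})$ combined with a dyadic partition of the skew-diagonals according to their multiplicities $\omega_{k,l}$, which is where the extra $\log^{2}$ arises. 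The golfing iteration must then contract the \emph{pair} $\bigl(\Vert\cdot\Vert_{\mathcal{A},2},\ \Vert\cdot\Vert_{\mathcal{A},\infty}\bigr)$ jointly (Lemmas \ref{lemma:NormA2_bound} and \ref{lemma:NormAinf_bound}), since each norm's one-step decrease is contaminated by the other. Without introducing this second norm and its non-obvious initial bound, your argument as written proves Theorem \ref{theorem-EMaC-Robust}'s $r^{2}$ scaling rather than the claimed $r\log^{4}(n_{1}n_{2})$.
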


Theorem~\ref{theorem-EMaC-noiseless} asserts that under some mild\emph{
deterministic} incoherence condition such that $\mu_{1}$ scales as
a small constant, EMaC admits prefect recovery as soon as the number
of measurements exceeds $\mathcal{O}(r\log^{4}\left(n_{1}n_{2}\right))$.
Since there are $\Theta(r)$ degrees of freedom in total, the lower
bound should be no smaller than $\Theta(r)$. This demonstrates the
orderwise optimality of EMaC except for a logarithmic gap. We note,
however, that the polylog factor might be further refined via finer
tuning of concentration of measure inequalities.

It is worth emphasizing that while we assume random observation models,
the data model is assumed deterministic. This differs significantly
from \cite{TangBhaskarShahRecht2012}, which relies on randomness
in both the observation model and the data model. In particular, our
theoretical performance guarantees rely solely on the frequency locations
irrespective of the associated amplitudes. In contrast, the results
in \cite{TangBhaskarShahRecht2012} require the phases of all frequency
spikes to be i.i.d. drawn in a uniform manner in addition to a separation
condition.

\begin{remark}Theorem \ref{theorem-EMaC-noiseless} significantly
strengthens our prior results reported in \cite{chen2013spectral}
by improving the required sample complexity from $\mathcal{O}\left(\mu_{1}^{2}c_{\mathrm{s}}^{2}r^{2}\text{poly}\log(n_{1}n_{2})\right)$
to $\mathcal{O}\left(\mu_{1}c_{\mathrm{s}}r\text{poly}\log(n_{1}n_{2})\right)$.
\end{remark}

\subsubsection{Stable Recovery in the Presence of Bounded Noise\label{sub:Stable-Recovery}}

Our method enables stable recovery even when the time domain samples
are noisy copies of the true data. Here, we say the recovery is stable
if the solution of Noisy-EMaC is close to the ground truth in proportion
to the noise level. To this end, we provide the following theorem,
which is a counterpart of Theorem \ref{theorem-EMaC-noiseless} in
the noisy setting, whose proof is inspired by \cite{CanPla10}.

\begin{theorem}\label{theorem-EMaC-Noisy}Suppose $\boldsymbol{X}^{\mathrm{o}}$
is a noisy copy of $\boldsymbol{X}$ that satisfies $\|\mathcal{P}_{\Omega}(\boldsymbol{X}-\boldsymbol{X}^{\mathrm{o}})\|_{\mathrm{F}}\leq\delta$.
Under the conditions of Theorem \ref{theorem-EMaC-noiseless}, the
solution to Noisy-EMaC in (\ref{eq:EMaCNoisy}) satisfies 
\begin{equation}
\|\hat{\boldsymbol{X}}_{\mathrm{e}}-\boldsymbol{X}_{\mathrm{e}}\|_{\mathrm{F}}\leq5n_{1}^{3}n_{2}^{3}\delta\label{eq:AccuracyNoisyEMaC}
\end{equation}
with probability exceeding $1-(n_{1}n_{2})^{-2}$. \end{theorem}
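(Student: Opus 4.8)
The plan is to adapt the stability argument for noisy nuclear-norm minimization of \cite{CanPla10} to the enhanced, structured setting, reusing wholesale the two analytic ingredients already produced by the proof of Theorem~\ref{theorem-EMaC-noiseless}: (i) an inexact dual certificate $\boldsymbol{Y}$ lying in the range of the sampling operator $\mathcal{A}_{\Omega}:=\sum_{(k,l)\in\Omega}\langle\boldsymbol{A}_{(k,l)},\cdot\rangle\,\boldsymbol{A}_{(k,l)}$, satisfying $\|\mathcal{P}_{T}(\boldsymbol{Y})-\boldsymbol{U}\boldsymbol{V}^{*}\|_{\mathrm{F}}$ small and $\|\mathcal{P}_{T^{\perp}}(\boldsymbol{Y})\|<\tfrac12$, and (ii) the near-isometry of $\mathcal{A}_{\Omega}$ on $T$, namely that $\mathcal{P}_{T}\mathcal{A}_{\Omega}\mathcal{P}_{T}$ is close to $\tfrac{m}{n_{1}n_{2}}\mathcal{P}_{T}$. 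Both events hold on the same high-probability event $1-(n_{1}n_{2})^{-2}$ used in Theorem~\ref{theorem-EMaC-noiseless}, which is exactly why the probability statement carries over unchanged. Throughout let $\boldsymbol{H}:=\hat{\boldsymbol{X}}-\boldsymbol{X}$ with enhanced form $\boldsymbol{H}_{\mathrm{e}}$.

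First I would record the feasibility consequence. Since the genuine $\boldsymbol{X}$ is feasible for Noisy-EMaC and $\hat{\boldsymbol{X}}$ is optimal, the triangle inequality gives $\|\mathcal{P}_{\Omega}(\boldsymbol{H})\|_{\mathrm{F}}\le2\delta$; translating to the enhanced domain by counting the $\omega_{k,l}\le k_{1}k_{2}$ repetitions of each sampled entry yields $\sum_{(k,l)\in\Omega}\omega_{k,l}|H_{k,l}|^{2}\le k_{1}k_{2}(2\delta)^{2}$, i.e. control of the sampled energy of $\boldsymbol{H}_{\mathrm{e}}$. Optimality moreover gives $\|\hat{\boldsymbol{X}}_{\mathrm{e}}\|_{*}\le\|\boldsymbol{X}_{\mathrm{e}}\|_{*}$.

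Next I would split $\boldsymbol{H}_{\mathrm{e}}=\mathcal{P}_{T}(\boldsymbol{H}_{\mathrm{e}})+\mathcal{P}_{T^{\perp}}(\boldsymbol{H}_{\mathrm{e}})$ and bound each piece. For the orthogonal part, combine the subgradient inequality $\|\boldsymbol{X}_{\mathrm{e}}+\boldsymbol{H}_{\mathrm{e}}\|_{*}\ge\|\boldsymbol{X}_{\mathrm{e}}\|_{*}+\langle\boldsymbol{U}\boldsymbol{V}^{*},\boldsymbol{H}_{\mathrm{e}}\rangle+\|\mathcal{P}_{T^{\perp}}(\boldsymbol{H}_{\mathrm{e}})\|_{*}$ with $\|\hat{\boldsymbol{X}}_{\mathrm{e}}\|_{*}\le\|\boldsymbol{X}_{\mathrm{e}}\|_{*}$, and substitute $\boldsymbol{U}\boldsymbol{V}^{*}=\mathcal{P}_{T}(\boldsymbol{Y})-(\mathcal{P}_{T}(\boldsymbol{Y})-\boldsymbol{U}\boldsymbol{V}^{*})$. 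Because $\boldsymbol{Y}$ lies in the range of $\mathcal{A}_{\Omega}$, the term $\langle\boldsymbol{Y},\boldsymbol{H}_{\mathrm{e}}\rangle$ is controlled by the sampled energy from Step~1, while $\langle\mathcal{P}_{T^{\perp}}(\boldsymbol{Y}),\mathcal{P}_{T^{\perp}}(\boldsymbol{H}_{\mathrm{e}})\rangle\le\tfrac12\|\mathcal{P}_{T^{\perp}}(\boldsymbol{H}_{\mathrm{e}})\|_{*}$; rearranging produces $\|\mathcal{P}_{T^{\perp}}(\boldsymbol{H}_{\mathrm{e}})\|_{*}\lesssim\sqrt{k_{1}k_{2}}\,\delta+\|\mathcal{P}_{T}(\boldsymbol{Y})-\boldsymbol{U}\boldsymbol{V}^{*}\|_{\mathrm{F}}\|\mathcal{P}_{T}(\boldsymbol{H}_{\mathrm{e}})\|_{\mathrm{F}}$. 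For the tangent part, the near-isometry (ii) gives the lower bound $\|\mathcal{A}_{\Omega}\mathcal{P}_{T}(\boldsymbol{H}_{\mathrm{e}})\|\gtrsim\sqrt{m/(n_{1}n_{2})}\,\|\mathcal{P}_{T}(\boldsymbol{H}_{\mathrm{e}})\|_{\mathrm{F}}$, whereas $\mathcal{A}_{\Omega}\mathcal{P}_{T}(\boldsymbol{H}_{\mathrm{e}})=\mathcal{A}_{\Omega}(\boldsymbol{H}_{\mathrm{e}})-\mathcal{A}_{\Omega}\mathcal{P}_{T^{\perp}}(\boldsymbol{H}_{\mathrm{e}})$ is small by Step~1 and the just-derived bound on $\mathcal{P}_{T^{\perp}}(\boldsymbol{H}_{\mathrm{e}})$. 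Solving the resulting coupled inequalities decouples the two pieces and bounds each by a polynomial in $n_{1},n_{2}$ times $\delta$.

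Finally I would assemble $\|\boldsymbol{H}_{\mathrm{e}}\|_{\mathrm{F}}^{2}=\|\mathcal{P}_{T}(\boldsymbol{H}_{\mathrm{e}})\|_{\mathrm{F}}^{2}+\|\mathcal{P}_{T^{\perp}}(\boldsymbol{H}_{\mathrm{e}})\|_{\mathrm{F}}^{2}\le\|\mathcal{P}_{T}(\boldsymbol{H}_{\mathrm{e}})\|_{\mathrm{F}}^{2}+\|\mathcal{P}_{T^{\perp}}(\boldsymbol{H}_{\mathrm{e}})\|_{*}^{2}$ and insert the two bounds. The hard part is not the architecture, which is standard, but the bookkeeping of the several polynomial factors that enter: the repetition counts $\omega_{k,l}$ relating $\|\mathcal{P}_{\Omega}(\boldsymbol{H})\|_{\mathrm{F}}$ to enhanced-domain energy, the conditioning factor $n_{1}n_{2}/m$ from the restricted near-isometry on $T$, and the loss incurred when passing from a nuclear-norm bound on $\mathcal{P}_{T^{\perp}}(\boldsymbol{H}_{\mathrm{e}})$ back to the Frobenius norm, together with the conversion between the plain projection onto sampled enhanced entries and the normalized operator $\mathcal{A}_{\Omega}$ from which the certificate is built. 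Chaining these crude conversions is precisely what inflates the constant to the stated $5n_{1}^{3}n_{2}^{3}\delta$; sharpening any one of them would improve the polynomial prefactor, but this is unnecessary for a qualitative stability guarantee.
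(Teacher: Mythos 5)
Your architecture is sound and would deliver the stated polynomial stability bound, but it is organized differently from the paper's proof. You follow the Cand\`es--Plan template literally: split the error by tangent space, $\boldsymbol{H}_{\mathrm{e}}=\mathcal{P}_{T}(\boldsymbol{H}_{\mathrm{e}})+\mathcal{P}_{T^{\perp}}(\boldsymbol{H}_{\mathrm{e}})$, bound the $T^{\perp}$ piece through the subgradient inequality and the inexact certificate, bound the $T$ piece through the restricted near-isometry, and solve the coupled system. The paper (Appendix \ref{sec:Proof-of-theorem-EMaC-Noisy}) instead splits the error by \emph{support}: $\boldsymbol{H}_{\mathrm{e}}=\mathcal{A}_{\Omega}(\boldsymbol{H}_{\mathrm{e}})+\mathcal{A}_{\Omega^{\perp}}(\boldsymbol{H}_{\mathrm{e}})$, kills the observed part directly by feasibility ($\|\mathcal{A}_{\Omega}(\boldsymbol{H}_{\mathrm{e}})\|_{\mathrm{F}}\leq2\sqrt{n_{1}n_{2}}\delta$), and then observes that the unobserved part vanishes on $\Omega$ and is therefore \emph{exactly} a feasible perturbation in the noiseless sense, so the entire two-case analysis of Lemma \ref{lemma-Dual-Certificate} applies to it verbatim: in the first case the cone inequality gives $\|\mathcal{P}_{T^{\perp}}\mathcal{A}_{\Omega^{\perp}}(\boldsymbol{H}_{\mathrm{e}})\|_{\mathrm{F}}\leq8n_{1}n_{2}\delta$ and the case hypothesis gives the $4n_{1}^{3}n_{2}^{3}\delta$ factor, while in the second case the injectivity argument forces $\mathcal{A}_{\Omega^{\perp}}(\boldsymbol{H}_{\mathrm{e}})=0$. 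The paper's route buys modularity and trivial bookkeeping, since no new estimate beyond the noiseless lemma is needed and the constant $5n_{1}^{3}n_{2}^{3}$ falls out by adding three terms. Your route requires two extra points of care that you gloss over: the certificate pairing $\langle\boldsymbol{W},\boldsymbol{H}_{\mathrm{e}}\rangle$ no longer vanishes and must be bounded via a crude $\|\boldsymbol{W}\|_{\mathrm{F}}=\mathcal{O}(\sqrt{n_{1}n_{2}})$ estimate against the sampled energy, and the near-isometry must be applied in the form $\left(\frac{n_{1}n_{2}}{m}\mathcal{A}_{\Omega}+\mathcal{A}^{\perp}\right)$ acting on $\mathcal{P}_{T}(\boldsymbol{H}_{\mathrm{e}})$, which is not itself a Hankel-structured matrix, so the $\mathcal{A}^{\perp}$ component must be carried along exactly as in Appendix \ref{sec:Proof-of-Lemma-Dual-Certificate}. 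Both are fixable, and either route establishes the theorem with the same probability guarantee; only the polynomial prefactor bookkeeping differs.
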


Theorem \ref{theorem-EMaC-Noisy} reveals that the recovered enhanced
matrix (which contains $\Theta(n_{1}^{2}n_{2}^{2})$ entries) is close
to the true enhanced matrix at high SNR. In particular, the average
entry inaccuracy of the enhanced matrix is bounded above by $\mathcal{O}(n_{1}^{3}n_{2}^{3}\delta)$,
amplified by the subsampling factor. In practice, one is interested
in an estimate of $\boldsymbol{X}$, which can be obtained naively
by randomly selecting an entry in $\Omega_{\text{e}}(k,l)$ as $\hat{X}_{k,l}$,
then we have 
\[
\|\hat{\boldsymbol{X}}-\boldsymbol{X}\|_{\textrm{F}}\leq\|\hat{\boldsymbol{X}}_{\text{e}}-\boldsymbol{X}_{\text{e}}\|_{\textrm{F}}.
\]
This yields that the per-entry noise of $\hat{\boldsymbol{X}}$ is
about $\mathcal{O}(n_{1}^{2.5}n_{2}^{2.5}\delta)$, which is further
amplified due to enhancement by a factor of $n_{1}n_{2}$. However,
this factor arises from an analysis artifact due to our simple strategy
to deduce $\hat{\boldsymbol{X}}$ from $\hat{\boldsymbol{X}_{\text{e}}}$,
and may be elevated. We note that in numerical experiments, Noisy-EMaC
usually generates much better estimates, usually by a polynomial factor.
The practical applicability will be illustrated in Section \ref{sec:Numerical-Experiments}.

It is worth mentioning that to the best of our knowledge, our result
is the first stability result with partially observed data for spectral
compressed sensing off the grid. While the atomic norm approach is
near-minimax with full data \cite{tang2013minimax}, it is not clear
how it performs with partially observed data.

\subsubsection{Robust Recovery in the Presence of Sparse Outliers\label{sub:Robust-Recovery}}

Interestingly, Robust-EMaC can provably tolerate a constant portion
of arbitrary outliers. The theoretical performance is formally summarized
in the following theorem.

\begin{theorem}\label{theorem-EMaC-Robust}Let $\boldsymbol{X}$
be a data matrix with matrix form (\ref{eq:X_MatrixForm}), and $\Omega$
a random location set of size $m$. Set $\lambda=\frac{1}{\sqrt{m\log\left(n_{1}n_{2}\right)}}$,
and assume $\tau\leq0.1$ is some small positive constant. Then there
exist a numerical constant $c_{1}>0$ depending only on $\tau$ such
that if (\ref{eq:LeastSV_G}) holds and 
\begin{equation}
m>c_{1}\mu_{1}^{2}c_{\mathrm{s}}^{2}r^{2}\log^{3}(n_{1}n_{2}),
\end{equation}
then Robust-EMaC is exact, i.e. the minimizer $(\hat{\boldsymbol{M}},\hat{\boldsymbol{S}})$
satisfies $\hat{\boldsymbol{M}}=\boldsymbol{X}$, with probability
exceeding $1-(n_{1}n_{2})^{-2}$. \end{theorem}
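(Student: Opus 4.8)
The plan is to certify optimality of the pair $(\boldsymbol{X}, \boldsymbol{S})$ for the convex program Robust-EMaC by constructing an appropriate dual certificate, exactly paralleling the robust-PCA analysis of \cite{chen2011robust,chandrasekaran2011rank} but adapted to the multi-fold Hankel observation model. The decision variables live in the enhanced domain, so I would first set up the relevant subspaces: the tangent space $T$ with respect to $\boldsymbol{X}_{\mathrm{e}}$ defined in \eqref{eq:TangentSpace}, and the support space of the corruption. Because only entries indexed by $\Omega^{\mathrm{dirty}}$ may be nonzero in $\boldsymbol{S}$, and because of the Hankel replication structure, the relevant sign pattern lives on $\Omega^{\mathrm{dirty}}_{\mathrm{e}}$ (the lifted corruption support). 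The first-order optimality conditions for the combined nuclear-plus-$\ell_1$ objective require producing a single matrix $\boldsymbol{W}$ (a subgradient of $\|\cdot\|_*$ at $\boldsymbol{X}_{\mathrm{e}}$ that simultaneously behaves like $\lambda\,\mathrm{sgn}(\boldsymbol{S}_{\mathrm{e}})$ on the corruption support), supported in the range of the sampling operator, together with strict dual feasibility conditions: $\mathcal{P}_T(\boldsymbol{W}) = \boldsymbol{U}\boldsymbol{V}^*$, $\|\mathcal{P}_{T^\perp}(\boldsymbol{W})\| < 1$, and on the sparse side the corresponding quantity must be strictly dominated by $\lambda$ off the support. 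The key structural advantage is that $c_{\mathrm{s}}$ near a constant forces $\boldsymbol{X}_{\mathrm{e}}$ to be nearly square, and the incoherence hypothesis \eqref{eq:LeastSV_G} via $\mu_1$ controls the coherence of $\boldsymbol{U},\boldsymbol{V}$ with the basis matrices $\boldsymbol{A}_{(k,l)}$ from \eqref{eq:DefnMeanBasis}.

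Next I would establish the restricted injectivity / local isometry estimates needed before certificate construction. Specifically I expect to need that, restricted to $T$, the sampling operator $\mathcal{P}_{\Omega_{\mathrm{e}}}$ (expressed through the weighted basis $\{\boldsymbol{A}_{(k,l)}\}$) is close to a scaled identity, i.e. $\|\mathcal{P}_T \mathcal{R}_\Omega \mathcal{P}_T - \tfrac{m}{n_1 n_2}\mathcal{P}_T\|$ is small with high probability. This follows from a matrix Bernstein argument once the per-entry incoherence bounds are in place; the sample complexity $m \gtrsim \mu_1 c_{\mathrm{s}} r \operatorname{poly}\log(n_1 n_2)$ from Theorem~\ref{theorem-EMaC-noiseless} is what makes this operator concentrate. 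The additional difficulty here, relative to the noiseless theorem, is that the corruption support $\Omega^{\mathrm{dirty}}$ is a random subset of $\Omega$ with conditional probability $\tau$, so I would need a decoupling step to treat the low-rank certificate and the sparse certificate jointly, controlling cross terms such as $\mathcal{P}_T \mathcal{R}_{\Omega^{\mathrm{clean}}} \mathcal{P}_{\Omega^{\mathrm{dirty}}_{\mathrm{e}}}$. The requirement $\tau \le 0.1$ is precisely what keeps the clean portion of the samples dominant, so that the operators restricted to clean indices remain well-conditioned.

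For the certificate itself I would use the golfing scheme, partitioning $\Omega^{\mathrm{clean}}$ into $\Theta(\log(n_1 n_2))$ independent batches and iteratively refining an approximate dual certificate so that the residual $\mathcal{P}_T$-component shrinks geometrically, while simultaneously correcting the sparse component so that $\mathcal{P}_{\Omega^{\mathrm{dirty}}_{\mathrm{e}}}(\boldsymbol{W})$ matches $\lambda\,\mathrm{sgn}$ and the off-support sup-norm stays below $\lambda$. The choice $\lambda = 1/\sqrt{m\log(n_1 n_2)}$ is calibrated so that the spectral norm of the sparse correction term (bounded via a Bernstein-type estimate on the enhanced sign matrix) and the sup-norm of the low-rank correction term balance against one another. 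The main obstacle, as I see it, is handling the interaction between the Hankel-structured averaging operator and the two certificates at once: unlike standard robust PCA where the sampling and corruption bases are plain entrywise, here each $\boldsymbol{A}_{(k,l)}$ spreads weight across a skew-diagonal of $\boldsymbol{X}_{\mathrm{e}}$, so bounds on $\|\mathcal{P}_{T^\perp}(\boldsymbol{W})\|$ and the off-support $\ell_\infty$ bounds must both be derived through the $\omega_{k,l}$ weights of \eqref{eq:littleOmega}. This is also where the quadratic dependence $r^2$ rather than linear $r$ enters: controlling the cross-coherence between the low-rank and sparse corrections through the Hankel structure appears to cost an extra factor of $\mu_1 c_{\mathrm{s}} r$ relative to Theorem~\ref{theorem-EMaC-noiseless}, which I would try to pinpoint carefully and then invoke a union bound over all batches to conclude exactness with probability $1 - (n_1 n_2)^{-2}$.
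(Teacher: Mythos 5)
Your overall architecture — dual certification with an approximate certificate built by golfing over the clean samples, the restricted-isometry estimate $\|\mathcal{P}_T\mathcal{A}\mathcal{P}_T-\tfrac{n_1n_2}{m}\mathcal{P}_T\mathcal{A}_{\Omega}\mathcal{P}_T\|\le\tfrac12$, the $\omega_{k,l}$-weighted norms to handle the skew-diagonal averaging, and the observation that the $r^2$ cost comes from the interaction between the sparse and low-rank corrections — matches the paper's proof. But there is one genuine gap: you propose to bound the spectral norm of the sparse correction ``via a Bernstein-type estimate on the enhanced sign matrix,'' and similarly your off-support $\ell_\infty$ control implicitly concentrates quantities like $\langle \boldsymbol{A}_{(k,l)},\mathcal{P}_T(\mathrm{sgn}(\boldsymbol{S}_{\mathrm{e}}))\rangle$. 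The theorem allows \emph{arbitrary} outliers, so $\mathrm{sgn}(\boldsymbol{S})$ is an adversarial, deterministic sign pattern on $\Omega^{\mathrm{dirty}}$; Bernstein cannot be applied to it directly, and for a worst-case sign pattern $\|\mathrm{sgn}(\boldsymbol{S}_{\mathrm{e}})\|$ can be as large as $\Theta(\sqrt{\tau m})$ without the $1/\mathrm{polylog}$ savings you need, which would break the bound $\|\mathcal{P}_{T^\perp}(\lambda\,\mathrm{sgn}(\boldsymbol{S}_{\mathrm{e}}))\|\le\tfrac18$ under the stated choice of $\lambda$.

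The paper closes this gap by first proving the result under an auxiliary \emph{random-sign} model (Theorem~\ref{theorem-EMaC-Robust-V2}, where the nonzero entries of $\boldsymbol{S}$ have independent zero-mean signs, so that Lemmas~\ref{lemma-Psparse} and~\ref{lemma-PTperp_sparse} can legitimately invoke Bernstein to get $\|\mathrm{sgn}(\boldsymbol{S}_{\mathrm{e}})\|\lesssim\sqrt{\rho\tau n_1n_2\log^{1/2}(n_1n_2)}$ and $\|\mathcal{P}_T(\mathrm{sgn}(\boldsymbol{S}_{\mathrm{e}}))\|_{\mathcal{A},\infty}\lesssim\tfrac{\mu_1 c_{\mathrm{s}}r}{n_1n_2}\sqrt{m\tau\log(n_1n_2)}$), and then invoking the derandomization argument of \cite[Section 2.2]{CanLiMaWri09}: exact recovery under a fixed sign pattern with corruption fraction $\tau$ is implied by exact recovery under random signs with fraction $2\tau$. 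You need to add this reduction (or an equivalent elimination/derandomization argument) before any concentration on the sign matrix is legitimate. A secondary, more cosmetic point: you state the exact dual condition $\mathcal{P}_T(\boldsymbol{W})=\boldsymbol{U}\boldsymbol{V}^*$, which golfing cannot deliver; the certificate must be inexact, with $\|\mathcal{P}_T(\boldsymbol{W}+\lambda\,\mathrm{sgn}(\boldsymbol{S}_{\mathrm{e}})-\boldsymbol{U}\boldsymbol{V}^*)\|_{\mathrm{F}}$ small enough (of order $\lambda/(n_1^2n_2^2)$) to be absorbed by the strict slack in the other conditions, which in turn requires the companion injectivity and oversampling bounds of Lemma~\ref{lemma-Dual-Robust} that you should state explicitly.
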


\begin{remark}Note that $\tau\leq0.1$ is not a critical threshold.
In fact, one can prove the same theorem for a larger $\tau$ (e.g.
$\tau\leq0.25$) with a larger absolute constant $c_{1}$. However,
to allow even larger $\tau$ (e.g. in the regime where $\tau\geq50\%$),
we need the sparse components exhibit random sign patterns. \end{remark}

Theorem \ref{theorem-EMaC-Robust} specifies a candidate choice of
the regularization parameter $\lambda$ that allows recovery from
a few samples, which only depends on the size of $\Omega$ but is
otherwise parameter-free. In practice, however, $\lambda$ may better
be selected via cross validation. Furthermore, Theorem \ref{theorem-EMaC-Robust}
demonstrates the possibility of robust recovery under a constant proportion
of sparse corruptions. Under the same mild incoherence condition as
for Theorem~\ref{theorem-EMaC-noiseless}, robust recovery is possible
from $\mathcal{O}\left(r^{2}\log^{3}\left(n_{1}n_{2}\right)\right)$
samples, even when a constant proportion of the samples are arbitrarily
corrupted. As far as we know, this provides the first theoretical
guarantees for separating sparse measurement corruptions in the off-grid
compressed sensing setting.

\subsection{Extension to Higher-Dimensional and Damping Frequency Models \label{sub:Extension-to-Higher-Dimension}}

By letting $n_{2}=1$ the above 2-D frequency model reverts to the
line spectrum model. The EMaC algorithm and the main results immediately
extend to higher dimensional frequency models without difficulty.
In fact, for $K$-dimensional frequency models, one can arrange the
original data into a $K$-fold Hankel matrix of rank at most $r$.
For instance, consider a 3-D model such that 
\[
X_{l_{1},l_{2},l_{3}}=\sum_{i=1}^{r}d_{i}y_{i}^{l_{1}}z_{i}^{l_{2}}w_{i}^{l_{3}},\quad\forall\left(l_{1},l_{2},l_{3}\right)\in[n_{1}]\times[n_{2}]\times[n_{3}].
\]
An enhanced form can be defined as a 3-fold Hankel matrix such that
\[
\boldsymbol{X}_{\text{e}}:=\left[\begin{array}{cccc}
\boldsymbol{X}_{0,\text{e}} & \boldsymbol{X}_{1,\text{e}} & \cdots & \boldsymbol{X}_{n_{3}-k_{3},\text{e}}\\
\boldsymbol{X}_{1,\text{e}} & \boldsymbol{X}_{2,\text{e}} & \cdots & \boldsymbol{X}_{n_{3}-k_{3}+1,\text{e}}\\
\vdots & \vdots & \vdots & \vdots\\
\boldsymbol{X}_{k_{3}-1,\text{e}} & \boldsymbol{X}_{k_{1},\text{e}} & \cdots & \boldsymbol{X}_{n_{3}-1,\text{e}}
\end{array}\right],
\]
where $\boldsymbol{X}_{i,\text{e}}$ denotes the 2-D enhanced form
of the matrix consisting of all entries $X_{l_{1},l_{2},l_{3}}$ obeying
$l_{3}=i$. One can verify that $\boldsymbol{X}_{\text{e}}$ is of
rank at most $r$, and can thereby apply EMaC on the 3-D enhanced
form. To summarize, for $K$-dimensional frequency models, EMaC (resp.
Noisy-EMaC, Robust-EMaC) searches over all $K$-fold Hankel matrices
that are consistent with the measurements. The theoretical performance
guarantees can be similarly extended by defining the respective Dirichlet
kernel in 3-D and the coherence measure. In fact, all our analyses
can be extended to handle \textit{damping} modes, when the frequencies
are not of time-invariant amplitudes. We omit the details for conciseness.

\section{Structured Matrix Completion\label{sub:Extension-to-Hankel}}

One problem closely related to our method is completion of multi-fold
Hankel matrices from a small number of entries. While each spectrally
sparse signal can be mapped to a low-rank multi-fold Hankel matrix,
it is not clear whether all multi-fold Hankel matrices of rank $r$
can be written as the enhanced form of a signal with spectral sparsity
$r$. Therefore, one can think of recovery of multi-fold Hankel matrices
as a more general problem than the spectral compressed sensing problem.
Indeed, Hankel matrix completion has found numerous applications in
system identification \cite{Fazel2011hankel,Markovsky2008structured},
natural language processing \cite{Balle2012spectral}, computer vision
\cite{Sankaranarayanan2010compressive}, magnetic resonance imaging
\cite{LustigHankel2013}, etc.

There has been several work concerning algorithms and numerical experiments
for Hankel matrix completions \cite{Fazel2003Hankel,Fazel2011hankel,Markovsky2008structured}.
However, to the best of our knowledge, there has been little theoretical
guarantee that addresses directly Hankel matrix completion. Our analysis
framework can be straightforwardly adapted to the general $K$-fold
Hankel matrix completions. Below we present the performance guarantee
for the two-fold Hankel matrix completion without loss of generality.
Notice that we need to modify the definition of $\mu_{1}$ as stated
in the following theorem. 

\begin{theorem}\label{theorem-EMaC-Hankel}Consider a two-fold Hankel
matrix $\boldsymbol{X}_{\mathrm{e}}$ of rank $r$. The bounds in
Theorems \ref{theorem-EMaC-noiseless}, \ref{theorem-EMaC-Noisy}
and \ref{theorem-EMaC-Robust} continue to hold, if the incoherence
$\mu_{1}$ is defined as the smallest number that satisfies 
\begin{equation}
\max_{\left(k,l\right)\in[n_{1}]\times[n_{2}]}\left\{ \Vert\boldsymbol{U}^{*}\boldsymbol{A}_{(k,l)}\Vert_{\mathrm{F}}^{2},\Vert\boldsymbol{A}_{(k,l)}\boldsymbol{V}\Vert_{\mathrm{F}}^{2}\right\} \leq\frac{\mu_{1}c_{\mathrm{s}}r}{n_{1}n_{2}}.\label{eq:IncohrenceUU_Hankel}
\end{equation}
\end{theorem}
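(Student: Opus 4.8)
The plan is to show that the entire proof architecture behind Theorems \ref{theorem-EMaC-noiseless}, \ref{theorem-EMaC-Noisy} and \ref{theorem-EMaC-Robust} depends on the underlying object \emph{only} through the two incoherence quantities $\|\boldsymbol{U}^{*}\boldsymbol{A}_{(k,l)}\|_{\mathrm{F}}^{2}$ and $\|\boldsymbol{A}_{(k,l)}\boldsymbol{V}\|_{\mathrm{F}}^{2}$, so that once these are controlled---by fiat through (\ref{eq:IncohrenceUU_Hankel})---the arguments carry over verbatim to an arbitrary rank-$r$ two-fold Hankel matrix. First I would revisit the dual-certificate construction and golfing scheme of Sections \ref{sec:Main-Proof-Exact} and \ref{sec:Main-Proof-Robust} and track exactly where the signal model enters. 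The two-fold Hankel support pattern---encoded in the basis matrices $\boldsymbol{A}_{(k,l)}$ of (\ref{eq:DefnMeanBasis}) and the multiplicities $\omega_{k,l}$ of (\ref{eq:littleOmega})---together with the tangent space $T$ of (\ref{eq:TangentSpace}) are defined purely from the Hankel structure and from the SVD $\boldsymbol{X}_{\mathrm{e}}=\boldsymbol{U}\boldsymbol{\Lambda}\boldsymbol{V}^{*}$; none of this requires the Vandermonde form. The concentration estimates driving the golfing scheme (the near-isometry of $\mathcal{P}_{T}$ composed with the random sampling operator, and the $\ell_{\infty}$-type decay of the residual across iterations) are, I claim, bounded solely in terms of $\max_{k,l}\|\boldsymbol{U}^{*}\boldsymbol{A}_{(k,l)}\|_{\mathrm{F}}^{2}$, $\max_{k,l}\|\boldsymbol{A}_{(k,l)}\boldsymbol{V}\|_{\mathrm{F}}^{2}$ and $\omega_{k,l}$.

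Second, I would verify that the Gram-matrix incoherence (\ref{eq:LeastSV_G}) used in the spectral setting is a special case of (\ref{eq:IncohrenceUU_Hankel}), which certifies that the new definition genuinely generalizes the old one. Since $\boldsymbol{E}_{\mathrm{L}}$ and $\boldsymbol{E}_{\mathrm{R}}$ from (\ref{eq:Xe_as_ELER}) span the column and row spaces of $\boldsymbol{X}_{\mathrm{e}}$, the column projector satisfies $\boldsymbol{U}\boldsymbol{U}^{*}=\boldsymbol{E}_{\mathrm{L}}\boldsymbol{G}_{\mathrm{L}}^{-1}\boldsymbol{E}_{\mathrm{L}}^{*}$, whence $\|\boldsymbol{U}^{*}\boldsymbol{A}_{(k,l)}\|_{\mathrm{F}}^{2}=\|\boldsymbol{G}_{\mathrm{L}}^{-1/2}\boldsymbol{E}_{\mathrm{L}}^{*}\boldsymbol{A}_{(k,l)}\|_{\mathrm{F}}^{2}\leq\sigma_{\min}(\boldsymbol{G}_{\mathrm{L}})^{-1}\,\|\boldsymbol{E}_{\mathrm{L}}^{*}\boldsymbol{A}_{(k,l)}\|_{\mathrm{F}}^{2}=\mu_{1}\|\boldsymbol{E}_{\mathrm{L}}^{*}\boldsymbol{A}_{(k,l)}\|_{\mathrm{F}}^{2}$. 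A short computation using that each entry of the normalized $\boldsymbol{E}_{\mathrm{L}}$ has magnitude $1/\sqrt{k_{1}k_{2}}$ and that $\boldsymbol{A}_{(k,l)}$ selects $\omega_{k,l}$ entries of weight $1/\sqrt{\omega_{k,l}}$ gives $\|\boldsymbol{E}_{\mathrm{L}}^{*}\boldsymbol{A}_{(k,l)}\|_{\mathrm{F}}^{2}=r/(k_{1}k_{2})\leq c_{\mathrm{s}}r/(n_{1}n_{2})$; the analogous computation with $\boldsymbol{G}_{\mathrm{R}}$ controls $\|\boldsymbol{A}_{(k,l)}\boldsymbol{V}\|_{\mathrm{F}}^{2}$. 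Hence (\ref{eq:LeastSV_G}) implies (\ref{eq:IncohrenceUU_Hankel}) with $\mu_{1}$ of the same order.

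Third, for a general two-fold Hankel matrix $\boldsymbol{X}_{\mathrm{e}}$ of rank $r$, I would simply adopt (\ref{eq:IncohrenceUU_Hankel}) as the defining incoherence condition and re-run the proofs of the three theorems, citing (\ref{eq:IncohrenceUU_Hankel}) at every point where (\ref{eq:LeastSV_G}) was previously invoked. Because the stability guarantee (Theorem \ref{theorem-EMaC-Noisy}) is built on the same inexact dual certificate as the exact guarantee (Theorem \ref{theorem-EMaC-noiseless}), and the robust guarantee (Theorem \ref{theorem-EMaC-Robust}) merely augments that certificate with an $\ell_{1}$ component controlled by the same incoherence quantities, all three conclusions follow with identical sample complexities.

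The main obstacle I expect lies in the bookkeeping of the first step: one must certify that \emph{every} intermediate bound in the golfing scheme---in particular the operator-norm concentration of the sampling operator restricted to $T$ and the contraction of the $\ell_{\infty}$ residual---factors through $\|\boldsymbol{U}^{*}\boldsymbol{A}_{(k,l)}\|_{\mathrm{F}}$ and $\|\boldsymbol{A}_{(k,l)}\boldsymbol{V}\|_{\mathrm{F}}$ and never secretly exploits the Vandermonde factorization (\ref{eq:Xe_as_ELER}) or the explicit product structure $\boldsymbol{X}_{\mathrm{e}}=\sqrt{k_{1}k_{2}}\,\boldsymbol{E}_{\mathrm{L}}\boldsymbol{D}\cdots$. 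Once this factorization-independence is confirmed, the extension to general Hankel matrices is essentially automatic.
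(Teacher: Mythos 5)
Your proposal matches the paper's own argument: the paper proves Theorem \ref{theorem-EMaC-Hankel} precisely by (i) showing in Lemma \ref{lemma-IncoherenceT_W} that the Gram-matrix condition (\ref{eq:LeastSV_G}) implies (\ref{eq:IncoherenceUV_W}), which is identical to (\ref{eq:IncohrenceUU_Hankel}) --- via the same computation you give, namely $\boldsymbol{U}\boldsymbol{U}^{*}=\boldsymbol{E}_{\mathrm{L}}\left(\boldsymbol{E}_{\mathrm{L}}^{*}\boldsymbol{E}_{\mathrm{L}}\right)^{-1}\boldsymbol{E}_{\mathrm{L}}^{*}$ together with $\Vert\boldsymbol{E}_{\mathrm{L}}^{*}\boldsymbol{A}_{(k,l)}\Vert_{\mathrm{F}}^{2}=r/(k_{1}k_{2})\leq c_{\mathrm{s}}r/(n_{1}n_{2})$ --- and (ii) asserting that the ensuing dual-certificate and golfing analysis depends on the signal only through this projection-based condition, so it applies verbatim to an arbitrary rank-$r$ two-fold Hankel matrix. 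The bookkeeping concern you flag at the end is legitimate (a few auxiliary estimates, such as the cross-term bound in Lemma \ref{lemma-IncoherenceT_W} and the row-energy bounds in Lemma \ref{lemma:mu6}, are derived in the appendices from the Vandermonde factorization rather than from (\ref{eq:IncohrenceUU_Hankel}) directly), but the paper passes over this point in exactly the same way, so your proposal is at parity with the published proof.
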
 
Condition \eqref{eq:IncohrenceUU_Hankel} requires that the left and
right singular vectors are sufficiently uncorrelated with the observation
basis. In fact, condition \eqref{eq:IncohrenceUU_Hankel} is a weaker
assumption than \eqref{eq:LeastSV_G}.

It is worth mentioning that a low-rank Hankel matrix can often be
converted to its low-rank Toeplitz counterpart, by reversely ordering
all rows of the Hankel matrix. Both Hankel and Toeplitz matrices are
effective forms that capture the underlying harmonic structures. Our
results and analysis framework extend to low-rank Toeplitz matrix
completion problem without difficulty.

\section{Numerical Experiments \label{sec:Numerical-Experiments}}

In this section, we present numerical examples to evaluate the performance
of EMaC and its variants under different scenarios. We further examine
the application of EMaC in image super resolution. Finally, we propose
an extension of singular value thresholding (SVT) developed by Cai
et. al. \cite{cai2010singular} that exploits the multi-fold Hankel
structure to handle larger scale data sets.

\subsection{Phase Transition in the Noiseless Setting}

To evaluate the practical ability of the EMaC algorithm, we conducted
a series of numerical experiments to examine the phase transition
for exact recovery. Let $n_{1}=n_{2}$, and we take $k_{1}=k_{2}=\lceil(n_{1}+1)/2\rceil$
which corresponds to the smallest $c_{\text{s}}$. For each $(r,m)$
pair, 100 Monte Carlo trials were conducted. We generated a spectrally
sparse data matrix $\boldsymbol{X}$ by randomly generating $r$ frequency
spikes in $[0,1)\times[0,1)$, and sampled a subset $\Omega$ of size
$m$ entries uniformly at random. The EMaC algorithm was conducted
using the convex programming modeling software CVX with the interior-point
solver SDPT3 \cite{grant2008cvx}. Each trial is declared successful
if the normalized mean squared error (NMSE) satisfies $\|\hat{\boldsymbol{X}}-\boldsymbol{X}\|_{\text{F}}/\|\boldsymbol{X}\|_{\text{F}}\leq10^{-3}$,
where $\hat{\boldsymbol{X}}$ denotes the estimate returned by EMaC.
The empirical success rate is calculated by averaging over 100 Monte
Carlo trials.

Fig. \ref{fig:Phase-transition-plots-Hankel2D} illustrates the results
of these Monte Carlo experiments when the dimensions%
\footnote{We choose the dimension of $\boldsymbol{X}$ to be odd simply to yield
a squared matrix $\boldsymbol{X}_{\mathrm{e}}$. In fact, our results
do not rely on $n_{1}$ or $n_{2}$ being either odd or prime. We
note that when $n_{1}$ and $n_{2}$ are known to be prime numbers,
there might exist computationally cheaper methods to enable perfect
recovery (e.g. \cite{alexeev2012full})%
} of $\boldsymbol{X}$ are $11\times11$ and $15\times15$. The horizontal
axis corresponds to the number $m$ of samples revealed to the algorithm,
while the vertical axis corresponds to the spectral sparsity level
$r$. The empirical success rate is reflected by the color of each
cell. It can be seen from the plot that the number of samples $m$
grows approximately linearly with respect to the spectral sparsity
$r$, and that the slopes of the phase transition lines for two cases
are approximately the same. These observations are in line with our
theoretical guarantee in Theorem \ref{theorem-EMaC-noiseless}. This
phase transition diagrams justify the practical applicability of our
algorithm in the noiseless setting.

\begin{figure}[htp]
\centering%
\begin{tabular}{cc}
\hspace{-0.2in}\includegraphics[width=0.25\textwidth]{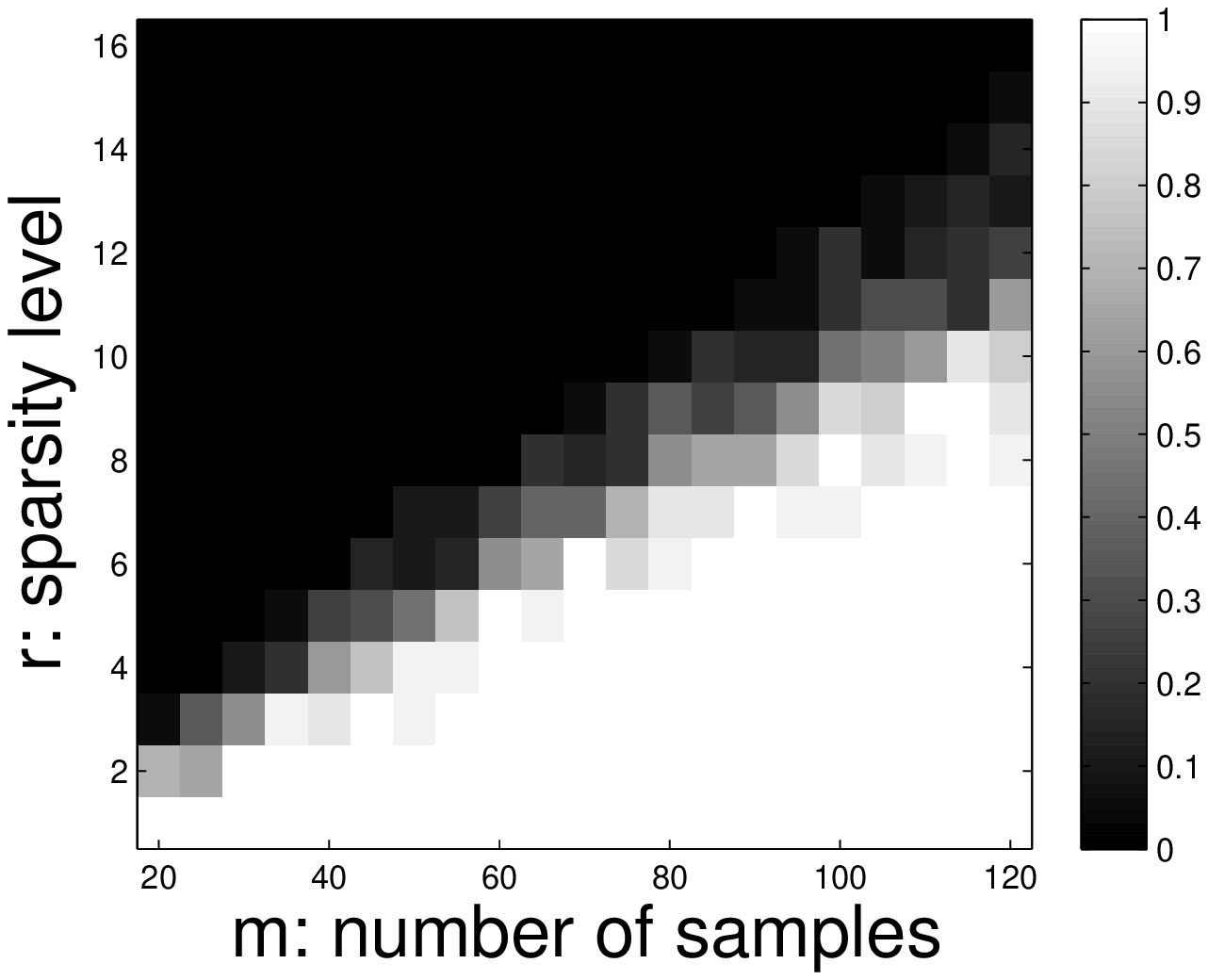}  & \hspace{-0.2in} \includegraphics[width=0.25\textwidth]{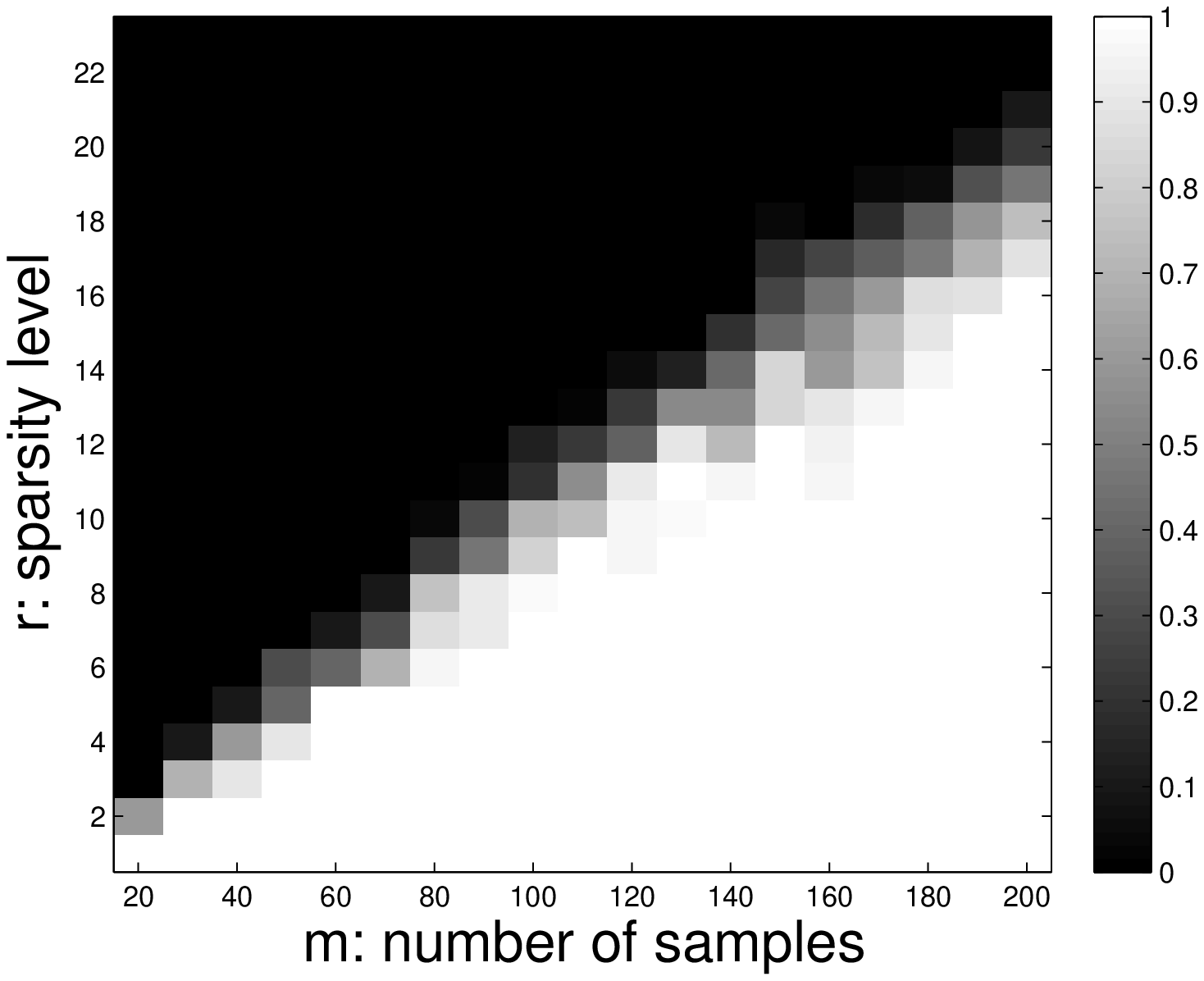}\tabularnewline
(a)  & (b)\tabularnewline
\end{tabular}\caption{\label{fig:Phase-transition-plots-Hankel2D}Phase transition plots
where frequency locations are randomly generated. The plot (a) concerns
the case where $n_{1}=n_{2}=11$, whereas the plot (b) corresponds
to the situation where $n_{1}=n_{2}=15$. The empirical success rate
is calculated by averaging over 100 Monte Carlo trials. }
\end{figure}

\subsection{Stable Recovery from Noisy Data}

Fig.~\ref{stability_plot} further examines the stability of the
proposed algorithm by performing Noisy-EMaC with respect to different
parameter $\delta$ on a noise-free dataset of $r=4$ complex sinusoids
with $n_{1}=n_{2}=11$. The number of random samples is $m=50$. The
reconstructed NMSE grows approximately linear with respect to $\delta$,
validating the stability of the proposed algorithm.

\begin{figure}[h]
\centering\includegraphics[width=0.3\textwidth]{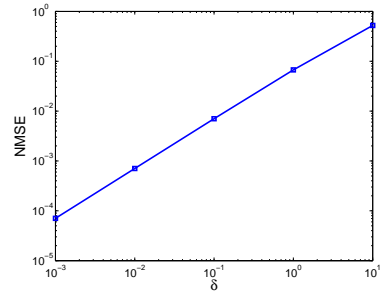} \caption{\label{stability_plot}The reconstruction NMSE with respect to $\delta$
for a dataset with $n_{1}=n_{2}=11$, $r=4$ and $m=50$.}
\end{figure}

\subsection{Comparison with Existing Approaches for Line Spectrum Estimation}

\label{sec:num:comparison}

Suppose that we randomly observe $64$ entries of an $n$-dimensional
vector ($n=127$) composed of $r=4$ modes. For such 1-D signals,
we compare EMaC with the atomic norm approach \cite{TangBhaskarShahRecht2012}
as well as basis pursuit \cite{CheDonSau01} assuming a grid of size
$2^{12}$. For the atomic norm and the EMaC algorithm, the modes are
recovered via linear prediction using the recovered data \cite{scharf1991statistical}.
Fig.~\ref{fig:mode_recovery} demonstrates the recovery of mode locations
for three cases, namely when (a) all the modes are on the DFT grid
along the unit circle; (b) all the modes are on the unit circle except
two closely located modes that are off the presumed grid; (c) all
the modes are on the unit circle except that one of the two closely
located modes is a damping mode with amplitude $0.99$. In all cases,
the EMaC algorithm successfully recovers the underlying modes, while
the atomic norm approach fails to recover damping modes, and basis
pursuit fails with both off-the-grid modes and damping modes. 
\begin{figure*}
\centering%
\begin{tabular}{ccc}
\hspace{-0.2in}\includegraphics[width=0.33\textwidth]{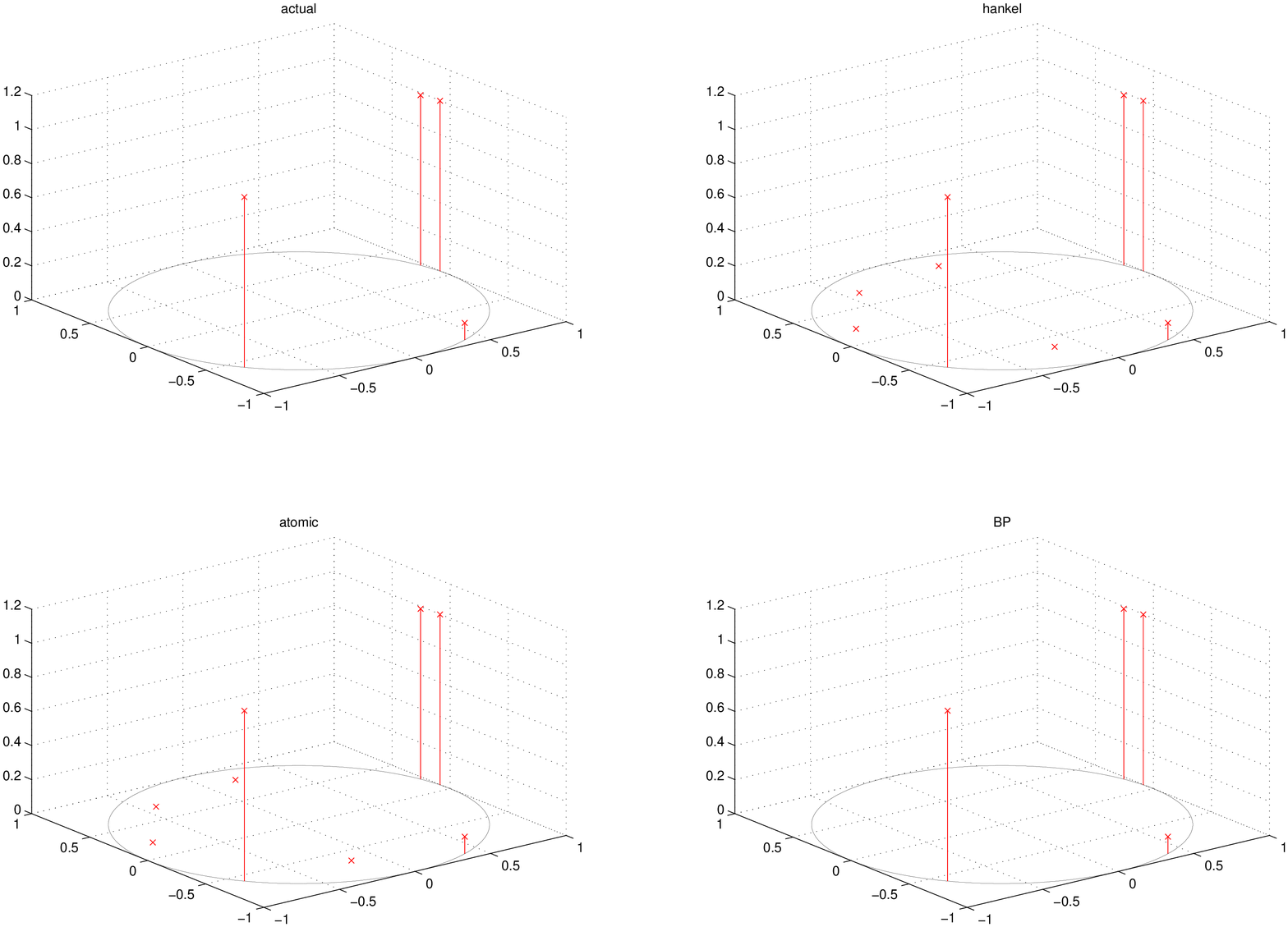}  & \hspace{-0.2in}\includegraphics[width=0.33\textwidth]{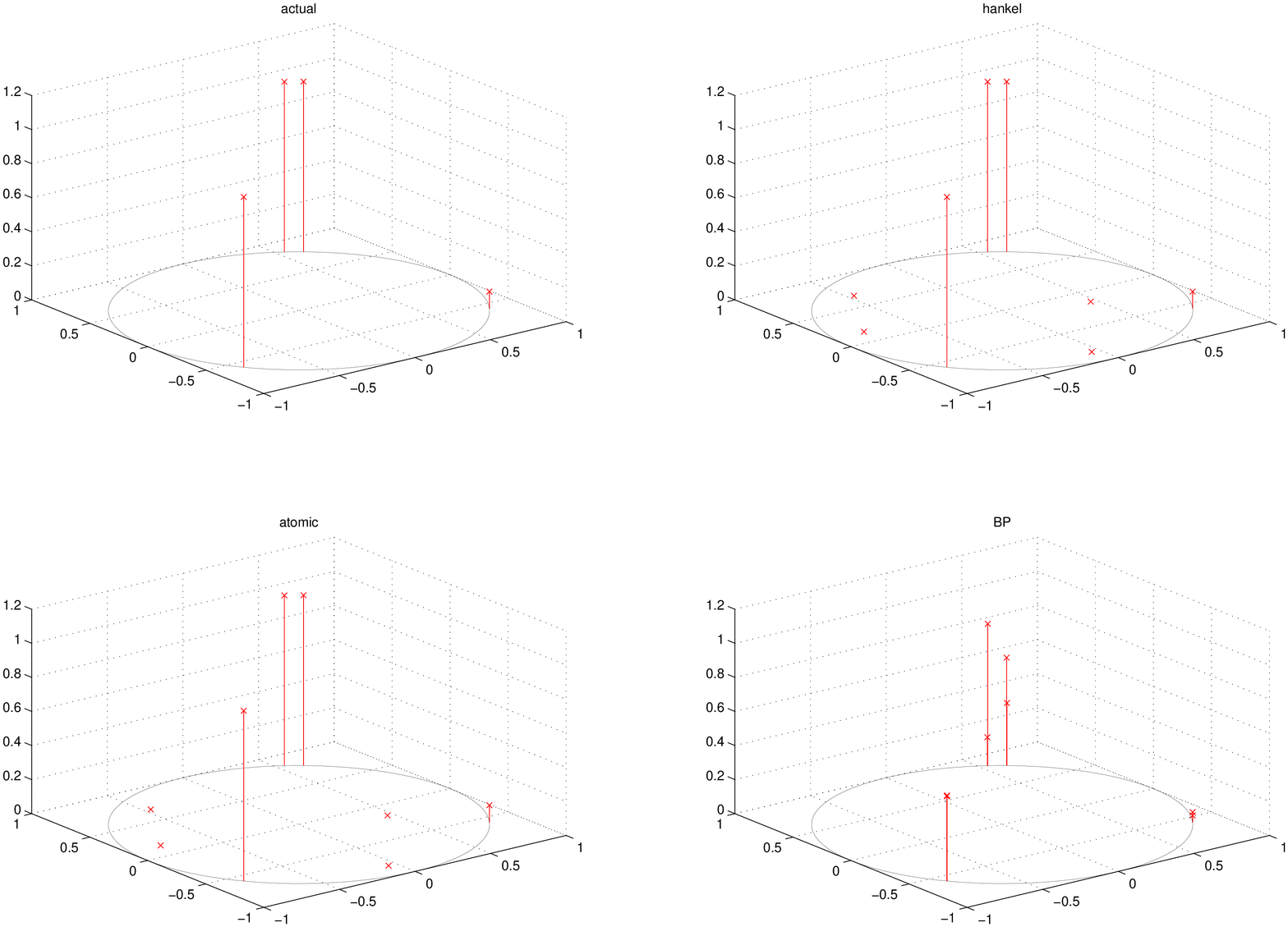} & \hspace{-0.2in} \includegraphics[width=0.33\textwidth]{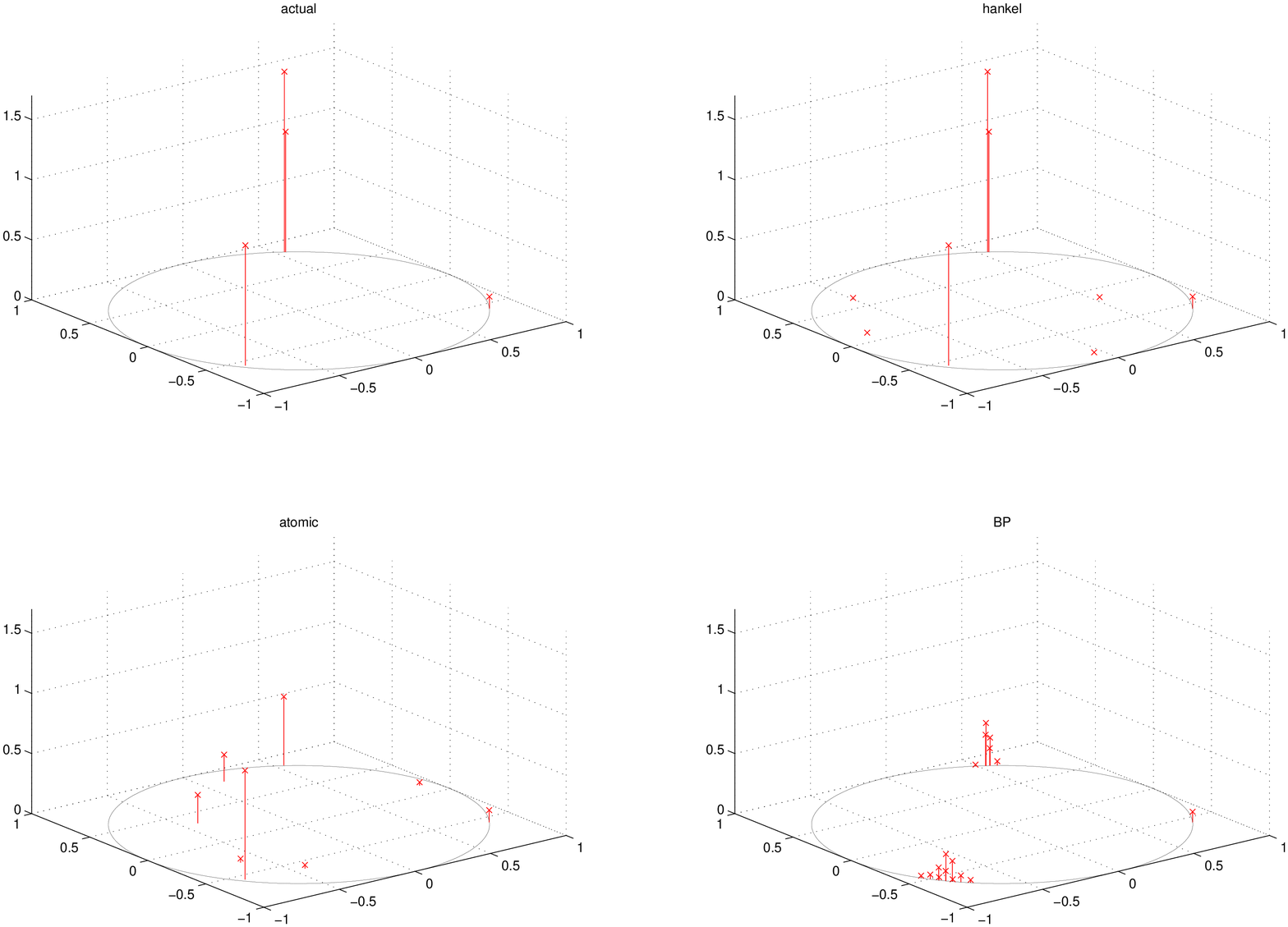} \tabularnewline
(a) On-the-grid  & (b) Frequency mismatch  & (c) Frequency and Damping mismatch \tabularnewline
\end{tabular}\caption{\label{fig:mode_recovery}Recovery of mode locations when (a) all
the modes are on the DFT grid along the unit circle; (b) all the modes
are on the unit circle except two closely located modes that are off
the DFT grid; (c) all the modes are on the unit circle except that
one of the two closely located modes is a damping mode. The panels
from the upper left, clockwise, are the ground truth, the EMaC algorithm,
the atomic norm approach \cite{TangBhaskarShahRecht2012}, and basis
pursuit \cite{CheDonSau01} assuming a grid of size $2^{12}$. }
\end{figure*}

We further compare the phase transition of the EMaC algorithm and
the atomic norm approach in \cite{TangBhaskarShahRecht2012} for line
spectrum estimation. We assume a 1-D signal of length $n=n_{1}=127$
and the pencil parameter $k_{1}$ of EMaC is chosen to be $64$. The
phase transition experiments are conducted in the same manner as Fig.~
\ref{fig:Phase-transition-plots-Hankel2D}. In the first case, the
spikes are generated randomly as Fig.~ \ref{fig:Phase-transition-plots-Hankel2D}
on a unit circle; in the second case, the spikes are generated until
a separation condition is satisfied $\Delta:=\min_{i_{1}\neq i_{2}}|f_{i_{1}}-f_{i_{2}}|\geq1.5/n$.
Fig.~\ref{fig:Phase-transition-plots-Hankel1D} (a) and (b) illustrate
the phase transition of EMaC and the atomic norm approach when the
frequencies are randomly generated without imposing the separation
condition. The performance of the atomic norm approach degenerates
severely when the separation condition is not met; on the other hand,
the EMaC gives a sharp phase transition similar to the 2D case. When
the separation condition is imposed, the phase transition of the atomic
norm approach greatly improves as shown in Fig.~\ref{fig:Phase-transition-plots-Hankel1D}
(c), while the phase transition of EMaC still gives similar performance
as in Fig.~\ref{fig:Phase-transition-plots-Hankel1D} (a) (We omit
the actual phase transition in this case.) However, it is worth mentioning
that when the sparsity level is relatively high, the required separation
condition is in general difficult to be satisfied in practice. In
comparison, EMaC is less sensitive to the separation requirement.

\begin{figure*}[htp]
\centering%
\begin{tabular}{ccc}
\hspace{-0.2in}\includegraphics[width=0.33\textwidth]{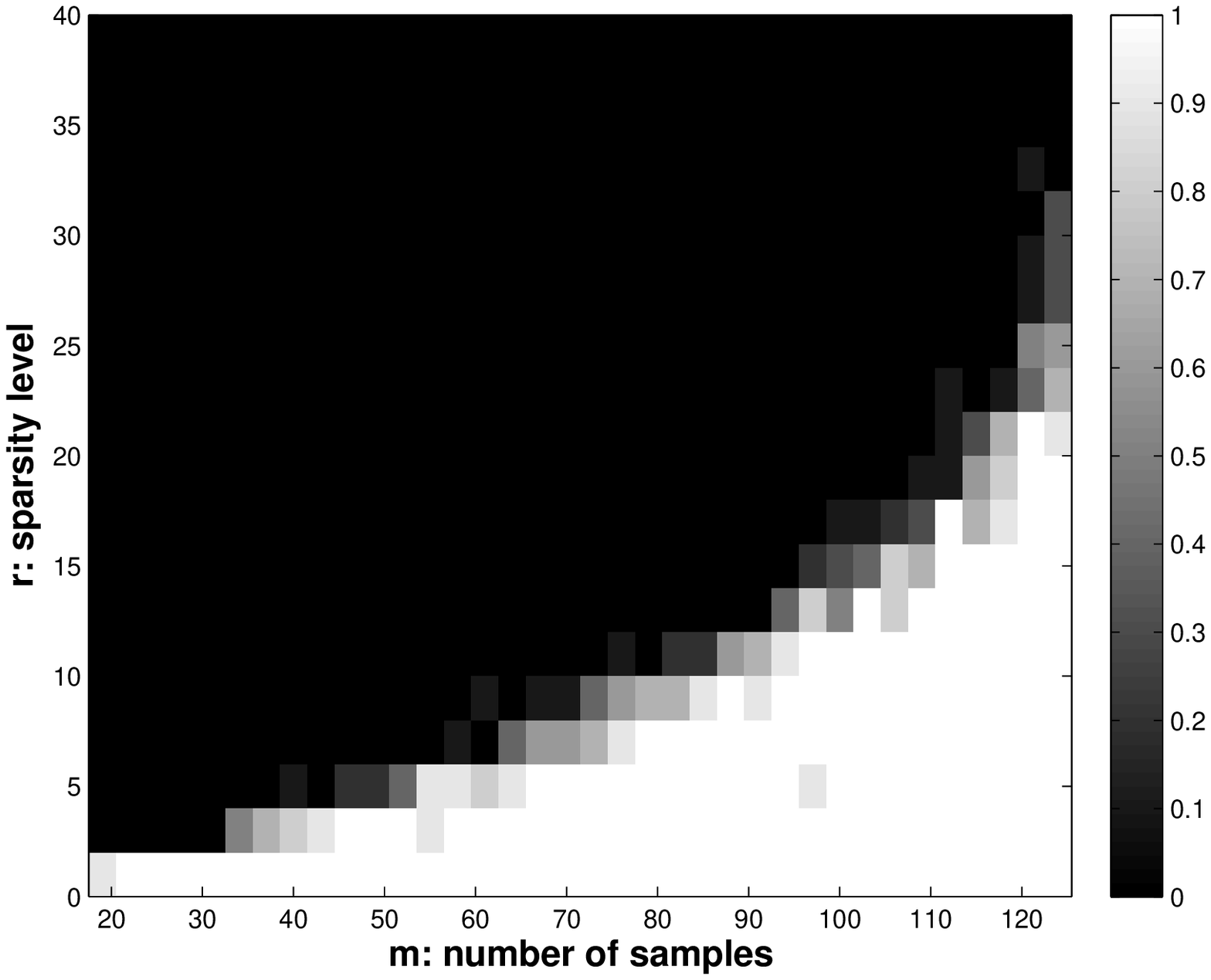}  & \hspace{-0.2in}\includegraphics[width=0.33\textwidth]{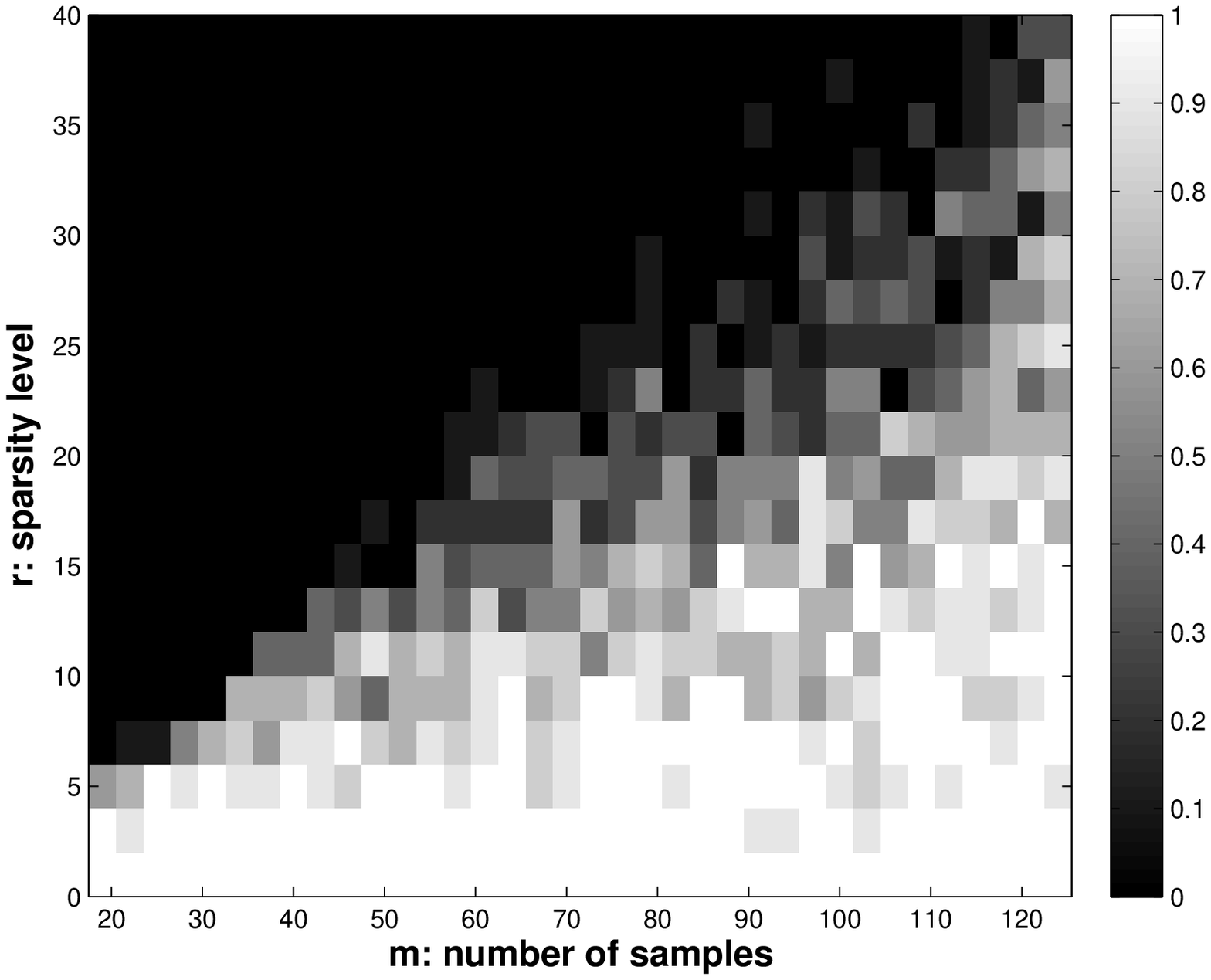}  & \hspace{-0.2in} \includegraphics[width=0.33\textwidth]{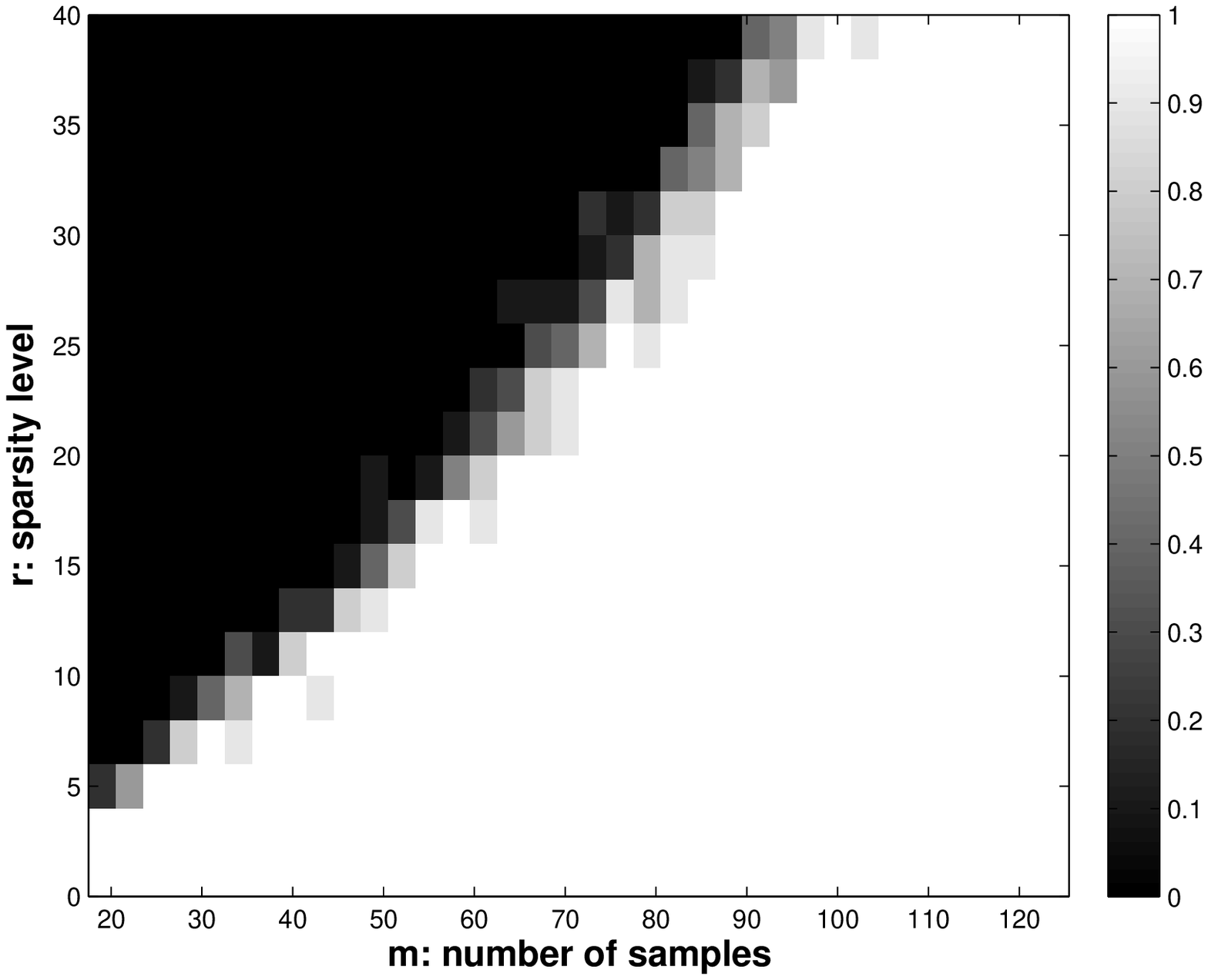} \tabularnewline
\hspace{-0.2in} (a) EMaC without separation  & \hspace{-0.2in} (b) Atomic Norm without separation  & \hspace{-0.2in} (c) Atomic Norm with separation \tabularnewline
\end{tabular}\caption{\label{fig:Phase-transition-plots-Hankel1D}Phase transition for line
spectrum estimation of EMaC and the atomic norm approach \cite{TangBhaskarShahRecht2012}.
(a) EMaC without imposing separation; (b) atomic norm approach without
imposing separation; (c) atomic norm approach with separation. }
\end{figure*}

\subsection{Robust Line Spectrum Estimation}

Consider the problem of line spectrum estimation, where the time domain
measurements are contaminated by a constant portion of outliers. We
conducted a series of Monte Carlo trials to illustrate the phase transition
for perfect recovery of the ground truth. The true data $\boldsymbol{X}$
is assumed to be a $125$-dimensional vector, where the locations
of the underlying frequencies are randomly generated. The simulations
were carried out again using CVX with SDPT3.

Fig. \ref{fig:robust}(a) illustrates the phase transition for robust
line spectrum estimation when $10\%$ of the entries are corrupted,
which showcases the tradeoff between the number $m$ of measurements
and the recoverable spectral sparsity level $r$. One can see from
the plot that $m$ is approximately linear in $r$ on the phase transition
curve even when 10\% of the measurements are corrupted, which validates
our finding in Theorem \ref{theorem-EMaC-Robust}. Fig. \ref{fig:robust}(b)
illustrates the success rate of exact recovery when we obtain samples
for all entry locations. This plot illustrates the tradeoff between
the spectral sparsity level and the number of outliers when all entries
of the corrupted $\boldsymbol{X}^{o}$ are observed. It can be seen
that there is a large region where exact recovery can be guaranteed,
demonstrating the power of our algorithms in the presence of sparse
outliers. 
\begin{figure}[h]
\centering %
\begin{tabular}{cc}
\hspace{-0.1in}\includegraphics[width=0.25\textwidth]{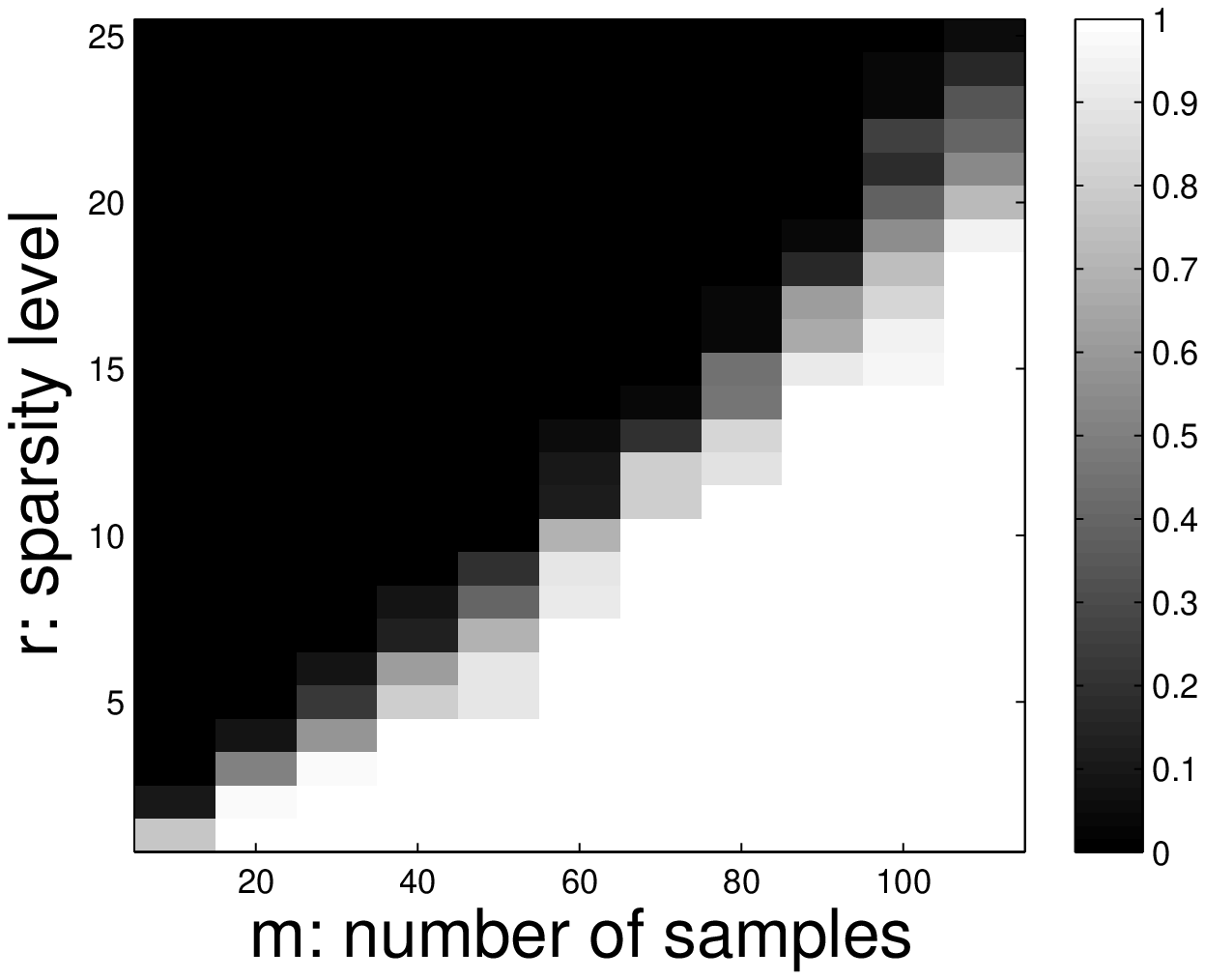}  & \hspace{-0.1in} \includegraphics[width=0.25\textwidth]{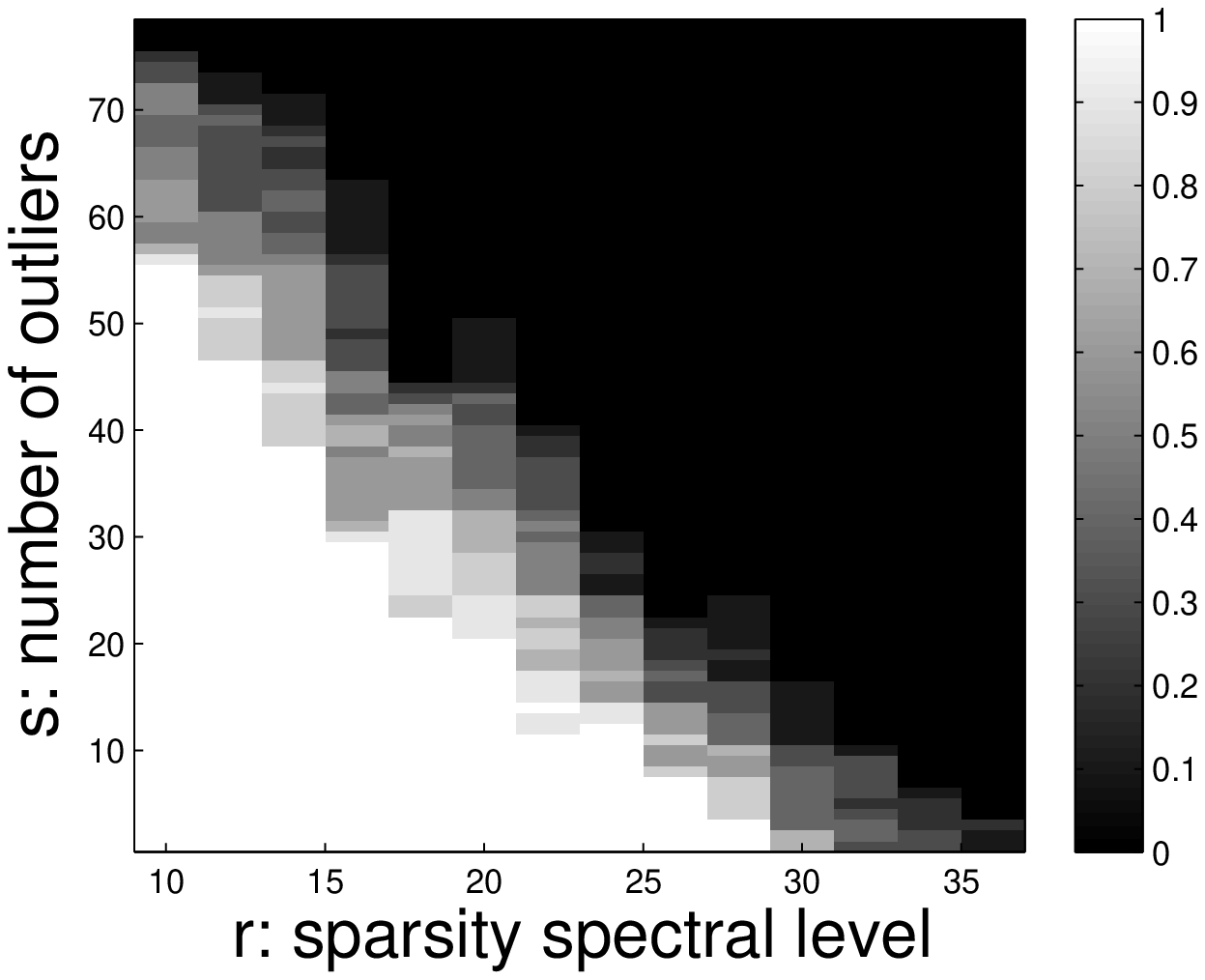} \tabularnewline
(a)  & (b) \tabularnewline
\end{tabular}\caption{\label{fig:robust}Robust line spectrum estimation where mode locations
are randomly generated: (a) Phase transition plots when $n=125$,
and $10\%$ of the entries are corrupted; the empirical success rate
is calculated by averaging over 100 Monte Carlo trials. (b) Phase
transition plots when $n=125$, and all the entries are observed;
the empirical success rate is calculated by averaging over 20 Monte
Carlo trials. }
\end{figure}

\subsection{Synthetic Super Resolution}

The proposed EMaC algorithm works beyond the random observation model
in Theorem~\ref{theorem-EMaC-noiseless}. Fig.~\ref{fig:super_resolution}
considers a synthetic super resolution example motivated by \cite{CandesFernandez2012SR},
where the ground truth in Fig.~\ref{fig:super_resolution}(a) contains
$6$ point sources with constant amplitude. The low-resolution observation
in Fig.~\ref{fig:super_resolution}(b) is obtained by measuring low-frequency
components $[-f_{\mathrm{lo}},f_{\mathrm{lo}}]$ of the ground truth.
Due to the large width of the associated point-spread function, both
the locations and amplitudes of the point sources are distorted in
the low-resolution image.

We apply EMaC to extrapolate high-frequency components up to $[-f_{\mathrm{hi}},f_{\mathrm{hi}}]$,
where $f_{\mathrm{hi}}/f_{\mathrm{lo}}=2$. The reconstruction in
Fig.~\ref{fig:super_resolution}(c) is obtained via applying directly
inverse Fourier transform of the spectrum to avoid parameter estimation
such as the number of modes. The resolution is greatly enhanced from
Fig.~\ref{fig:super_resolution}(b), suggesting that EMaC is a promising
approach for super resolution tasks. The theoretical performance is
left for future work. 
\begin{figure*}
\centering%
\begin{tabular}{ccc}
\hspace{-0.3in} \includegraphics[width=0.3\textwidth,height=2in]{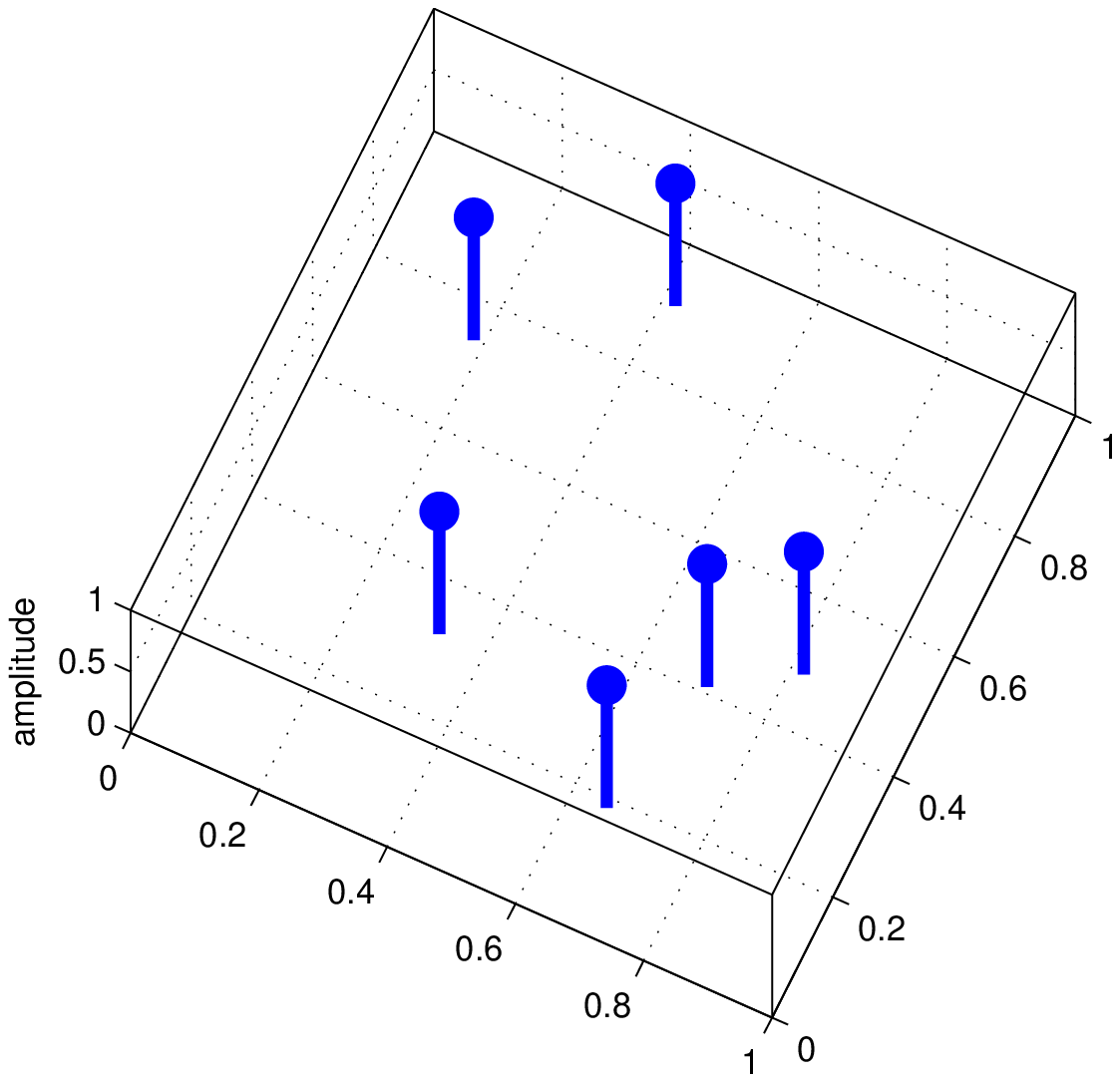}  & \hspace{-0.25in} \includegraphics[width=0.3\textwidth,height=1.8in]{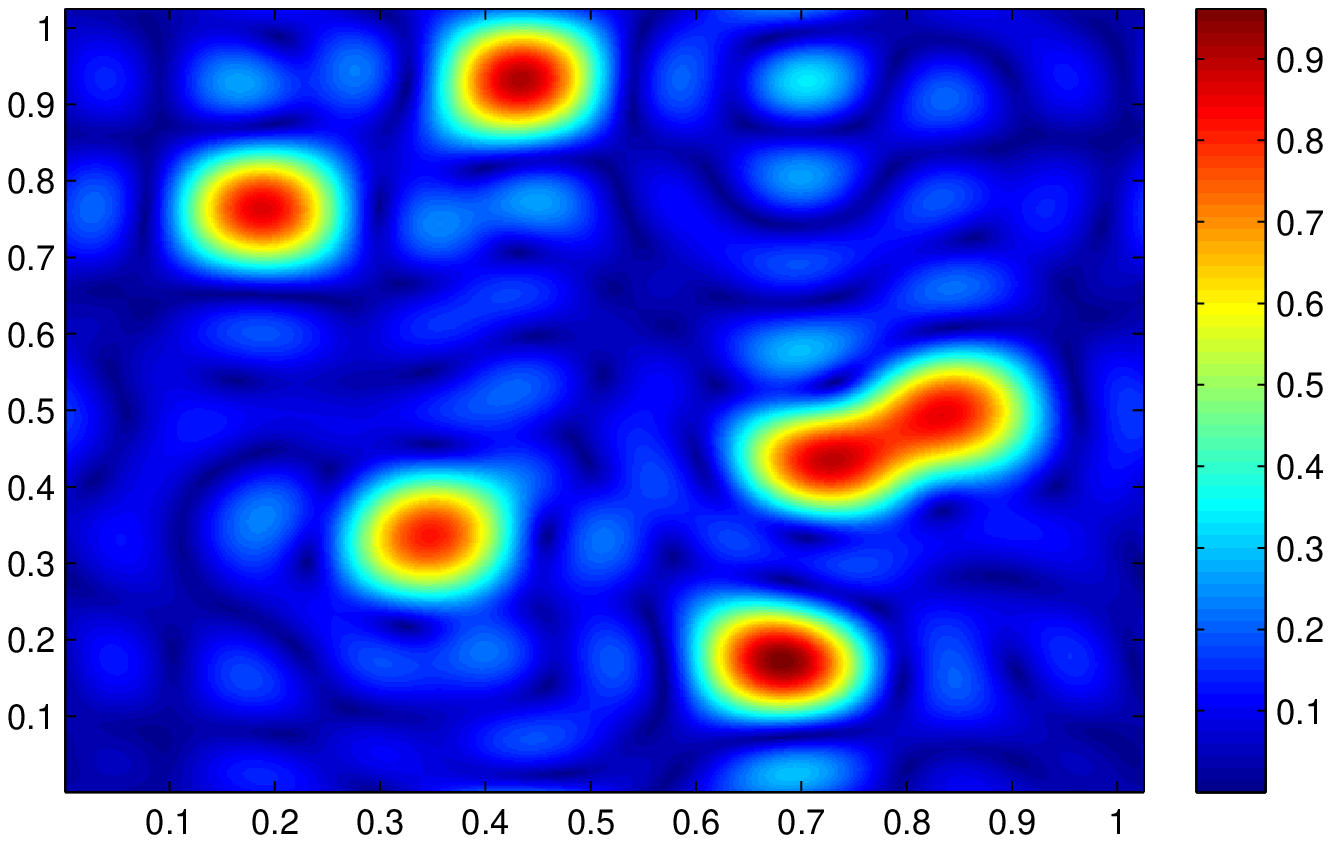}  & \hspace{-0.2in} \includegraphics[width=0.3\textwidth,height=1.8in]{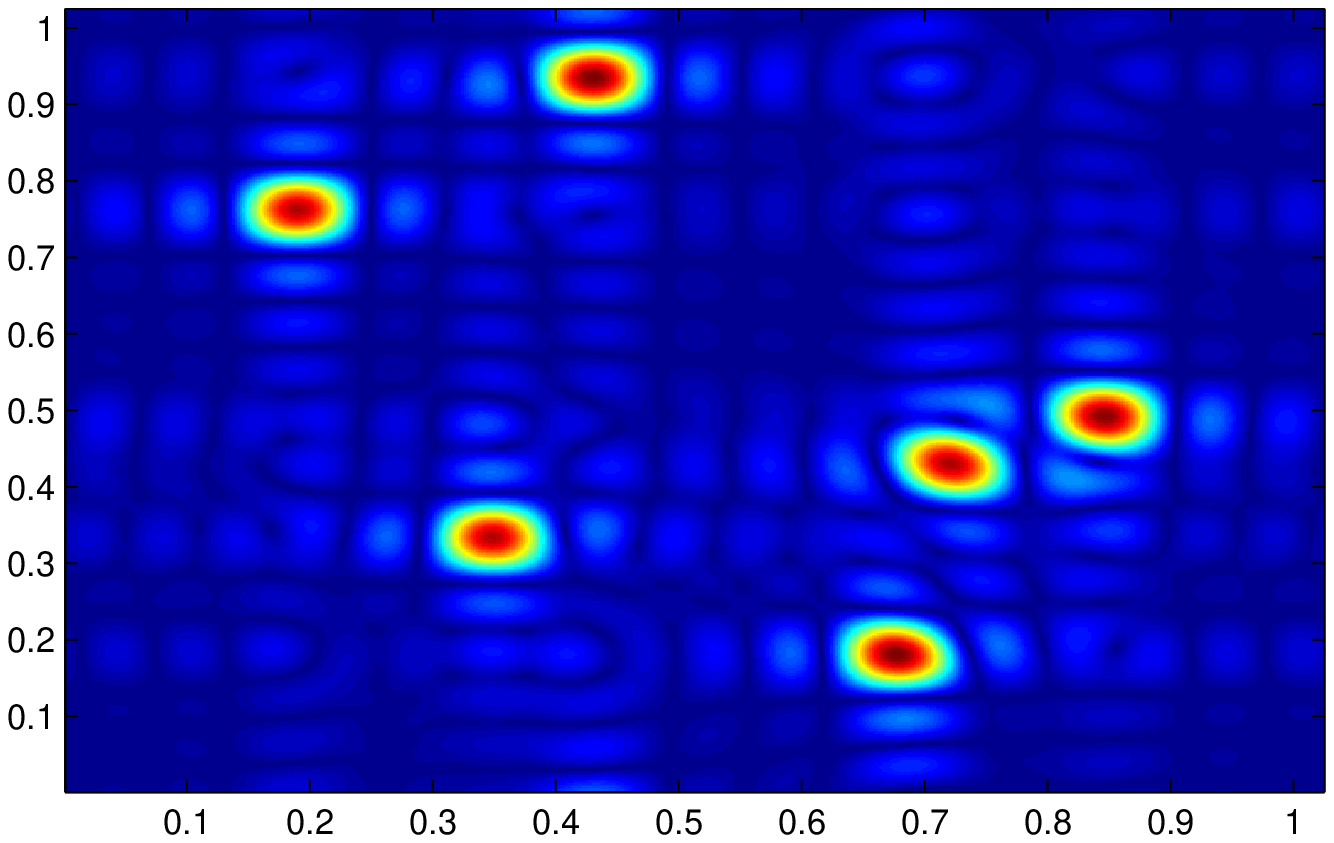} \tabularnewline
\hspace{-0.4in}(a) Ground truth  & \hspace{-0.25in} (b) Low-resolution observation  & (c) High-resolution reconstruction \tabularnewline
\end{tabular}\caption{\label{fig:super_resolution}A synthetic super resolution example,
where the observation (b) is taken from the low-frequency components
of the ground truth in (a), and the reconstruction (c) is done via
inverse Fourier transform of the extrapolated high-frequency components.}
\end{figure*}

\subsection{Singular Value Thresholding for EMaC}

The above Monte Carlo experiments were conducted using the advanced
SDP solver SDPT3. This solver and many other popular ones (e.g. SeDuMi)
are based on interior point methods, which are typically inapplicable
to large-scale data. In fact, SDPT3 fails to handle an $n\times n$
data matrix when $n$ exceeds 19, which corresponds to a $100\times100$
enhanced matrix.

One alternative for large-scale data is the first-order algorithms
tailored to matrix completion problems, e.g. the singular value thresholding
(SVT) algorithm \cite{cai2010singular}. We propose a modified SVT
algorithm in Algorithm \ref{alg:Singular-Value-Thresholding} to exploit
the Hankel structure.

\vspace{10pt}

\begin{algorithm}[h]
\begin{tabular}{l}
\textbf{Input}: The observed data matrix $\boldsymbol{X}^{\text{o}}$
on the location set $\Omega$.\tabularnewline
\textbf{initialize}: let $\boldsymbol{X}_{\text{e}}^{\text{o}}$ denote
the enhanced form of $\mathcal{P}_{\Omega}\left(\boldsymbol{X}^{\text{o}}\right)$; \tabularnewline
\quad{}set $\boldsymbol{M}_{0}=\boldsymbol{X}_{\text{e}}^{\text{0}}$
and $t=0$.\tabularnewline
\textbf{repeat}\tabularnewline
$\quad$1) \textbf{$\boldsymbol{Q}_{t}\leftarrow\mathcal{D}_{\tau_{t}}\left(\boldsymbol{M}_{t}\right)$}\tabularnewline
$\quad$2) $\boldsymbol{M}_{t}\leftarrow\mathcal{H}_{\boldsymbol{X}^{\text{0}}}\left(\boldsymbol{Q}_{t}\right)$\tabularnewline
$\quad$3) $t\leftarrow t+1$\tabularnewline
\textbf{until} convergence\tabularnewline
\textbf{output} $\hat{\boldsymbol{X}}$ as the data matrix with enhanced
form $\boldsymbol{M}_{t}$.\tabularnewline
\end{tabular}

\caption{Singular Value Thresholding for EMaC.\label{alg:Singular-Value-Thresholding}}
\end{algorithm}

\vspace{10pt}
 In particular, two operators are defined as follows: 
\begin{itemize}
\item $\mathcal{D}_{\tau_{t}}(\cdot)$ in Algorithm \ref{alg:Singular-Value-Thresholding}
denotes the singular value shrinkage operator. Specifically, if the
SVD of $\boldsymbol{X}$ is given by $\boldsymbol{X}=\boldsymbol{U}\boldsymbol{\Sigma}\boldsymbol{V}^{*}$
with $\boldsymbol{\Sigma}=\text{diag}\left(\left\{ \sigma_{i}\right\} \right)$,
then 
\[
\mathcal{D}_{\tau_{t}}\left(\boldsymbol{X}\right):=\boldsymbol{U}\text{diag}\left(\left\{ \left(\sigma_{i}-\tau_{t}\right)_{+}\right\} \right)\boldsymbol{V}^{*},
\]
where $\tau_{t}>0$ is the soft-thresholding level. 
\item In the $K$-dimensional frequency model, $\mathcal{H}_{\boldsymbol{X}^{\text{o}}}(\boldsymbol{Q}_{t})$
denotes the projection of $\boldsymbol{Q}_{t}$ onto the subspace
of enhanced matrices (i.e. $K$-fold Hankel matrices) that are consistent
with the observed entries. 
\end{itemize}
Consequently, at each iteration, a pair $\left(\boldsymbol{Q}_{t},\boldsymbol{M}_{t}\right)$
is produced by first performing singular value shrinkage and then
projecting the outcome onto the space of $K$-fold Hankel matrices
that are consistent with observed entries.

The key parameter that one needs to tune is the threshold $\tau_{t}$.
Unfortunately, there is no universal consensus regarding how to tweak
the threshold for SVT type of algorithms. One suggested choice is
$\tau_{t}=0.1\sigma_{\max}\left(\boldsymbol{M}_{t}\right)/\left\lceil \frac{t}{10}\right\rceil $,
which works well based on our empirical experiments.

Fig. \ref{fig:SVTNoisy} illustrates the performance of Algorithm
\ref{alg:Singular-Value-Thresholding}. We generated a true $101\times101$
data matrix $\boldsymbol{X}$ through a superposition of $30$ random
complex sinusoids, and revealed 5.8\% of the total entries (i.e. $m=600$)
uniformly at random. The noise was i.i.d. Gaussian giving a signal-to-noise
amplitude ratio of $10$. The reconstructed vectorized signal is superimposed
on the ground truth in Fig. \ref{fig:SVTNoisy}. The normalized reconstruction
error was $\|\hat{\boldsymbol{X}}-\boldsymbol{X}\|_{\text{F}}/\left\Vert \boldsymbol{X}\right\Vert _{\text{F}}=0.1098$,
validating the stability of our algorithm in the presence of noise.

\begin{figure}[h]
\includegraphics[width=0.52\textwidth]{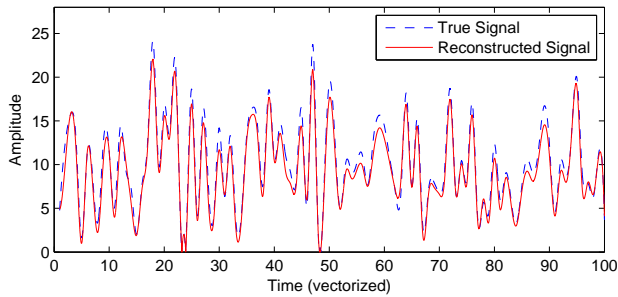}
\caption{\label{fig:SVTNoisy}The performance of SVT for Noisy-EMaC for a $101\times101$
data matrix that contains 30 random frequency spikes. 5.8\% of all
entries ($m=600$) are observed with signal-to-noise amplitude ratio
10. Here, $\tau_{t}=0.1\sigma_{\max}\left(\boldsymbol{M}_{t}\right)/\left\lceil \frac{t}{10}\right\rceil $
empirically. For concreteness, the reconstructed data against the
true data for the first 100 time instances (after vectorization) are
plotted.}
\end{figure}

\section{Proof of Theorems~\ref{theorem-EMaC-noiseless} and \ref{theorem-EMaC-Hankel}\label{sec:Main-Proof-Exact}}

EMaC has similar spirit as the well-known matrix completion algorithms
\cite{ExactMC09,Gross2011recovering}, except that we impose Hankel
and multi-fold Hankel structures on the matrices. While \cite{Gross2011recovering}
has presented a general sufficient condition for exact recovery (see
\cite[Theorem 3]{Gross2011recovering}), the basis in our case does
not exhibit desired coherence properties as required in \cite{Gross2011recovering},
and hence these results cannot deliver informative estimates when
applied to our problem. Nevertheless, the beautiful golfing scheme
introduced in \cite{Gross2011recovering} lays the foundation of our
analysis in the sequel. We also note that the analyses adopted in
\cite{ExactMC09,Gross2011recovering} rely on a desired joint incoherence
property on $\boldsymbol{U}\boldsymbol{V}^{*}$, which has been shown
to be unnecessary \cite{chen2013incoherence}.

For concreteness, the analyses in this paper focus on recovering harmonically
sparse signals as stated in Theorem \ref{theorem-EMaC-noiseless},
since proving Theorem \ref{theorem-EMaC-noiseless} is slightly more
involved than proving Theorem \ref{theorem-EMaC-Hankel}. We note,
however, that our analysis already entails all reasoning required
for establishing Theorem \ref{theorem-EMaC-Hankel}.

\subsection{Dual Certification\label{sub:Duality-Noiseless}}

Denote by $\mathcal{A}_{\left(k,l\right)}\left(\boldsymbol{M}\right)$
the projection of $\boldsymbol{M}$ onto the subspace spanned by $\boldsymbol{A}_{(k,l)}$,
and define the projection operator onto the space spanned by all $\boldsymbol{A}_{(k,l)}$
and its orthogonal complement as 
\begin{equation}
\mathcal{A}:=\sum_{\left(k,l\right)\in[n_{1}]\times[n_{2}]}\mathcal{A}_{(k,l)},\quad\text{and}\quad\mathcal{A}^{\perp}=\mathcal{I}-\mathcal{A}.
\end{equation}

There are two common ways to describe the randomness of $\Omega$:
one corresponds to sampling \emph{without} replacement, and another
concerns sampling \emph{with} replacement (i.e. $\Omega$ contains
$m$ indices $\left\{ \boldsymbol{a}_{i}\in[n_{1}]\times[n_{2}]:1\leq i\leq m\right\} $
that are i.i.d. generated). As discussed in \cite[Section II.A]{Gross2011recovering},
while both situations result in the same order-wide bounds, the latter
situation admits simpler analysis due to independence. Therefore,
we will assume that $\Omega$ is a multi-set (possibly with repeated
elements) and $a_{i}$'s are independently and uniformly distributed
throughout the proofs of this paper, and define the associated operators
as 
\begin{equation}
\mathcal{A}_{\Omega}:=\sum_{i=1}^{m}\mathcal{A}_{\boldsymbol{a}_{i}}.\label{eq:DefinitionAOperator}
\end{equation}
We also define another projection operator $\mathcal{A}'_{\Omega}$
similar to (\ref{eq:DefinitionAOperator}), but with the sum extending
only over \emph{distinct} samples. Its complement operator is defined
as $\mathcal{A}'_{\Omega^{\perp}}:=\mathcal{A}-\mathcal{A}'_{\Omega}$.
Note that $\mathcal{A}_{\Omega}\left(\boldsymbol{M}\right)=0$ is
equivalent to $\mathcal{A}'_{\Omega}(\boldsymbol{M})=0$. With these
definitions, EMaC can be rewritten as the following general matrix
completion problem: 
\begin{align}
\underset{\boldsymbol{M}}{\text{minimize}}\quad & \left\Vert \boldsymbol{M}\right\Vert _{*}\label{eq:EMaCRewrite}\\
\text{subject to}\quad & \mathcal{A}'_{\Omega}\left(\boldsymbol{M}\right)=\mathcal{A}'_{\Omega}\left(\boldsymbol{X}_{\text{e}}\right),\nonumber \\
 & \mathcal{A}^{\perp}\left(\boldsymbol{M}\right)=\mathcal{A}^{\perp}\left(\boldsymbol{X}_{\text{e}}\right)=0.\nonumber 
\end{align}

To prove exact recovery of convex optimization, it suffices to produce
an appropriate dual certificate, as stated in the following lemma.

\begin{lemma}\label{lemma-Dual-Certificate}Consider a multi-set
$\Omega$ that contains $m$ random indices. Suppose that the sampling
operator $\mathcal{A}_{\Omega}$ obeys 
\begin{equation}
\left\Vert \mathcal{P}_{T}\mathcal{A}\mathcal{P}_{T}-\frac{n_{1}n_{2}}{m}\mathcal{P}_{T}\mathcal{A}_{\Omega}\mathcal{P}_{T}\right\Vert \leq\frac{1}{2}.\label{eq:WellConditionPtAomegaPt}
\end{equation}
If there exists a matrix $\boldsymbol{W}$ satisfying 
\begin{equation}
\mathcal{A}'_{\Omega^{\perp}}\left(\boldsymbol{W}\right)=0,\label{eq:UV_W_Contained_in_AOmega}
\end{equation}
\begin{equation}
\left\Vert \mathcal{P}_{T}\left(\boldsymbol{W}-\boldsymbol{U}\boldsymbol{V}^{*}\right)\right\Vert _{\mathrm{F}}\leq\frac{1}{2n_{1}^{2}n_{2}^{2}},\label{eq:W_component_T}
\end{equation}
and 
\begin{equation}
\left\Vert \mathcal{P}_{T^{\perp}}\left(\boldsymbol{W}\right)\right\Vert \leq\frac{1}{2},\label{eq:NormWTPerp}
\end{equation}
then $\boldsymbol{X}_{\mathrm{e}}$ is the unique solution to (\ref{eq:EMaCRewrite})
or, equivalently, $\boldsymbol{X}$ is the unique minimizer of EMaC.\end{lemma}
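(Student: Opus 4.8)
The plan is to establish optimality and uniqueness for the convex program (\ref{eq:EMaCRewrite}) by the standard approximate-dual-certificate argument, showing the nuclear norm strictly increases along every nonzero feasible perturbation. Let $\boldsymbol{M}=\boldsymbol{X}_{\mathrm{e}}+\boldsymbol{H}$ be feasible; subtracting the constraints satisfied by $\boldsymbol{X}_{\mathrm{e}}$ yields $\mathcal{A}'_\Omega(\boldsymbol{H})=0$ and $\mathcal{A}^\perp(\boldsymbol{H})=0$, so that $\boldsymbol{H}=\mathcal{A}(\boldsymbol{H})=\mathcal{A}'_{\Omega^\perp}(\boldsymbol{H})$ and $\mathcal{A}_\Omega(\boldsymbol{H})=0$. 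First I would invoke the subgradient characterization of the nuclear norm: choosing $\boldsymbol{F}\in T^\perp$ with $\|\boldsymbol{F}\|\le 1$ and $\langle \boldsymbol{F},\mathcal{P}_{T^\perp}(\boldsymbol{H})\rangle=\|\mathcal{P}_{T^\perp}(\boldsymbol{H})\|_*$, one gets $\|\boldsymbol{X}_{\mathrm{e}}+\boldsymbol{H}\|_*\ge \|\boldsymbol{X}_{\mathrm{e}}\|_*+\mathrm{Re}\langle \boldsymbol{U}\boldsymbol{V}^*+\boldsymbol{F},\boldsymbol{H}\rangle=\|\boldsymbol{X}_{\mathrm{e}}\|_*+\mathrm{Re}\langle \boldsymbol{U}\boldsymbol{V}^*,\boldsymbol{H}\rangle+\|\mathcal{P}_{T^\perp}(\boldsymbol{H})\|_*$.

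The next step uses the certificate to dispose of the linear term. Because $\boldsymbol{H}=\mathcal{A}'_{\Omega^\perp}(\boldsymbol{H})$ and $\mathcal{A}'_{\Omega^\perp}$ is an orthogonal (self-adjoint) projection, the annihilation property (\ref{eq:UV_W_Contained_in_AOmega}) gives $\langle \boldsymbol{W},\boldsymbol{H}\rangle=\langle \mathcal{A}'_{\Omega^\perp}(\boldsymbol{W}),\boldsymbol{H}\rangle=0$, so I may replace $\boldsymbol{U}\boldsymbol{V}^*$ by $\boldsymbol{U}\boldsymbol{V}^*-\boldsymbol{W}$ and split the inner product across $T$ and $T^\perp$. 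Using $\mathcal{P}_T(\boldsymbol{U}\boldsymbol{V}^*)=\boldsymbol{U}\boldsymbol{V}^*$ with (\ref{eq:W_component_T}), the $T$-component is bounded by $\frac{1}{2n_1^2n_2^2}\|\mathcal{P}_T(\boldsymbol{H})\|_{\mathrm{F}}$; using $\mathcal{P}_{T^\perp}(\boldsymbol{U}\boldsymbol{V}^*)=0$ with (\ref{eq:NormWTPerp}), the $T^\perp$-component is bounded by $\frac12\|\mathcal{P}_{T^\perp}(\boldsymbol{H})\|_*$. This yields the master inequality $\|\boldsymbol{X}_{\mathrm{e}}+\boldsymbol{H}\|_*-\|\boldsymbol{X}_{\mathrm{e}}\|_*\ge \frac12\|\mathcal{P}_{T^\perp}(\boldsymbol{H})\|_*-\frac{1}{2n_1^2n_2^2}\|\mathcal{P}_T(\boldsymbol{H})\|_{\mathrm{F}}$.

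It remains to show the right-hand side is strictly positive for $\boldsymbol{H}\neq 0$, which is where hypothesis (\ref{eq:WellConditionPtAomegaPt}) enters through a restricted-injectivity estimate. Since $\mathcal{A}_\Omega(\boldsymbol{H})=0$, I would write $\mathcal{P}_T\mathcal{A}_\Omega\mathcal{P}_T(\boldsymbol{H})=-\mathcal{P}_T\mathcal{A}_\Omega\mathcal{P}_{T^\perp}(\boldsymbol{H})$, use (\ref{eq:WellConditionPtAomegaPt}) to invert the operator $\tfrac{n_1n_2}{m}\mathcal{P}_T\mathcal{A}_\Omega\mathcal{P}_T$ on $T$ (it lies within $\frac12$ of the well-conditioned $\mathcal{P}_T\mathcal{A}\mathcal{P}_T$), and then bound the right side crudely by $\|\mathcal{A}_\Omega\|\,\|\mathcal{P}_{T^\perp}(\boldsymbol{H})\|_{\mathrm{F}}$. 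This produces an estimate of the form $\|\mathcal{P}_T(\boldsymbol{H})\|_{\mathrm{F}}\le \mathrm{poly}(n_1,n_2)\,\|\mathcal{P}_{T^\perp}(\boldsymbol{H})\|_{\mathrm{F}}$. Substituting into the master inequality and using $\|\mathcal{P}_{T^\perp}(\boldsymbol{H})\|_*\ge \|\mathcal{P}_{T^\perp}(\boldsymbol{H})\|_{\mathrm{F}}$, the tiny tolerance $\frac{1}{2n_1^2n_2^2}$ is designed precisely to dominate the polynomial blow-up, leaving a strictly positive multiple of $\|\mathcal{P}_{T^\perp}(\boldsymbol{H})\|_{\mathrm{F}}$. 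Hence the objective strictly increases unless $\mathcal{P}_{T^\perp}(\boldsymbol{H})=0$; but then the same injectivity bound forces $\mathcal{P}_T(\boldsymbol{H})=0$, so $\boldsymbol{H}=0$, proving $\boldsymbol{X}_{\mathrm{e}}$ is the unique minimizer.

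The main obstacle is the restricted-injectivity step: extracting a quantitative lower bound on $\tfrac{n_1n_2}{m}\mathcal{P}_T\mathcal{A}_\Omega\mathcal{P}_T$ restricted to $T$ from (\ref{eq:WellConditionPtAomegaPt}), which compares against $\mathcal{P}_T\mathcal{A}\mathcal{P}_T$ rather than $\mathcal{P}_T$ itself. Since $T$ (spanned by the Vandermonde-type factors in $\boldsymbol{U},\boldsymbol{V}$) is not contained in the Hankel range of $\mathcal{A}$, I would need an auxiliary bound showing $\mathcal{P}_T\mathcal{A}\mathcal{P}_T$ is nondegenerate on $T$ -- a consequence of the incoherence condition (\ref{eq:LeastSV_G}) -- together with a crude control of $\|\mathcal{A}_\Omega\|$ to pin down the polynomial factor. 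The delicate bookkeeping is then to confirm that this dimension-dependent constant is still beaten by the $\frac{1}{2n_1^2n_2^2}$ slack in (\ref{eq:W_component_T}); it is this matching of exponents, rather than any single inequality, that forms the crux of the argument.
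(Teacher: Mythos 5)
Your setup and the derivation of the master inequality
\[
\left\Vert \boldsymbol{X}_{\mathrm{e}}+\boldsymbol{H}\right\Vert _{*}-\left\Vert \boldsymbol{X}_{\mathrm{e}}\right\Vert _{*}\geq\frac{1}{2}\left\Vert \mathcal{P}_{T^{\perp}}\left(\boldsymbol{H}\right)\right\Vert _{*}-\frac{1}{2n_{1}^{2}n_{2}^{2}}\left\Vert \mathcal{P}_{T}\left(\boldsymbol{H}\right)\right\Vert _{\mathrm{F}}
\]
match the paper's argument exactly, including the use of $\mathcal{A}'_{\Omega^{\perp}}(\boldsymbol{W})=0$ to kill $\langle\boldsymbol{W},\boldsymbol{H}\rangle$. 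The gap is in the restricted-injectivity step, and it is exactly the obstacle you flag at the end: condition (\ref{eq:WellConditionPtAomegaPt}) compares $\frac{n_{1}n_{2}}{m}\mathcal{P}_{T}\mathcal{A}_{\Omega}\mathcal{P}_{T}$ to $\mathcal{P}_{T}\mathcal{A}\mathcal{P}_{T}$, not to $\mathcal{P}_{T}$, so inverting $\frac{n_{1}n_{2}}{m}\mathcal{P}_{T}\mathcal{A}_{\Omega}\mathcal{P}_{T}$ on $T$ from $\mathcal{A}_{\Omega}(\boldsymbol{H})=0$ alone requires a quantitative lower bound on $\mathcal{P}_{T}\mathcal{A}\mathcal{P}_{T}$ restricted to $T$. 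Your proposed fix --- deriving such a nondegeneracy bound from the incoherence condition (\ref{eq:LeastSV_G}) --- does not work: the incoherence condition only gives \emph{upper} bounds on the correlation of $T$ with the basis matrices $\boldsymbol{A}_{(k,l)}$, it says nothing that prevents an element of $T$ from being nearly orthogonal to the range of $\mathcal{A}$, and no such bound appears among the lemma's hypotheses. (The lemma must also serve Theorem \ref{theorem-EMaC-Hankel}, where (\ref{eq:LeastSV_G}) is replaced by a different condition, so the proof cannot lean on it.)

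The correct fix needs no new estimate: you already noted that a feasible perturbation satisfies \emph{both} $\mathcal{A}_{\Omega}(\boldsymbol{H})=0$ and $\mathcal{A}^{\perp}(\boldsymbol{H})=0$, so work with the combined operator $\frac{n_{1}n_{2}}{m}\mathcal{A}_{\Omega}+\mathcal{A}^{\perp}$, which annihilates $\boldsymbol{H}$. Since $\mathcal{A}+\mathcal{A}^{\perp}=\mathcal{I}$, its $T$-compression is
\[
\mathcal{P}_{T}\left(\tfrac{n_{1}n_{2}}{m}\mathcal{A}_{\Omega}+\mathcal{A}^{\perp}\right)\mathcal{P}_{T}=\mathcal{P}_{T}-\left(\mathcal{P}_{T}\mathcal{A}\mathcal{P}_{T}-\tfrac{n_{1}n_{2}}{m}\mathcal{P}_{T}\mathcal{A}_{\Omega}\mathcal{P}_{T}\right),
\]
so (\ref{eq:WellConditionPtAomegaPt}) directly yields $\Vert(\frac{n_{1}n_{2}}{m}\mathcal{A}_{\Omega}+\mathcal{A}^{\perp})\mathcal{P}_{T}(\boldsymbol{H})\Vert_{\mathrm{F}}^{2}\geq\frac{1}{2}\Vert\mathcal{P}_{T}(\boldsymbol{H})\Vert_{\mathrm{F}}^{2}$, while the crude bound $\Vert\frac{n_{1}n_{2}}{m}\mathcal{A}_{\Omega}+\mathcal{A}^{\perp}\Vert\leq n_{1}n_{2}$ controls the $T^{\perp}$ contribution. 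Together with $0=\Vert(\frac{n_{1}n_{2}}{m}\mathcal{A}_{\Omega}+\mathcal{A}^{\perp})(\boldsymbol{H})\Vert_{\mathrm{F}}$ this forces either $\boldsymbol{H}=0$ or $\Vert\mathcal{P}_{T}(\boldsymbol{H})\Vert_{\mathrm{F}}<\frac{n_{1}^{2}n_{2}^{2}}{2}\Vert\mathcal{P}_{T^{\perp}}(\boldsymbol{H})\Vert_{\mathrm{F}}$, which is exactly the dichotomy your master inequality needs. With that replacement your argument closes; without it, the "auxiliary bound" you invoke is unsupported.
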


\begin{proof} See Appendix \ref{sec:Proof-of-Lemma-Dual-Certificate}.
\end{proof}

Condition (\ref{eq:WellConditionPtAomegaPt}) will be analyzed in
Section \ref{sub:Injectivity}, while a dual certificate $\boldsymbol{W}$
will be constructed in Section \ref{sub:Dual-Certificate-Noiseless}.
The validity of $\boldsymbol{W}$ as a dual certificate will be established
in Sections \ref{sub:Dual-Certificate-Noiseless} - \ref{sub:BoundW}.
These are the focus of the remaining section.

\subsection{Deviation of $\left\Vert \mathcal{P}_{T}\mathcal{A}\mathcal{P}_{T}-\frac{n_{1}n_{2}}{m}\mathcal{P}_{T}\mathcal{A}_{\Omega}\mathcal{P}_{T}\right\Vert $\label{sub:Injectivity}}

Lemma \ref{lemma-Dual-Certificate} requires that $\mathcal{A}_{\Omega}$
be sufficiently incoherent with respect to the tangent space $T$.
The following lemma quantifies the projection of each $\boldsymbol{A}_{(k,l)}$
onto the subspace $T$.

\begin{lemma}\label{lemma-IncoherenceT_W}Under the hypothesis \eqref{eq:LeastSV_G},
one has 
\begin{equation}
\left\Vert \boldsymbol{U}\boldsymbol{U}^{*}\boldsymbol{A}_{\left(k,l\right)}\right\Vert _{\mathrm{F}}^{2}\leq\frac{\mu_{1}c_{\mathrm{s}}r}{n_{1}n_{2}},\quad\left\Vert \boldsymbol{A}_{\left(k,l\right)}\boldsymbol{V}\boldsymbol{V}^{*}\right\Vert _{\mathrm{F}}^{2}\leq\frac{\mu_{1}c_{\mathrm{s}}r}{n_{1}n_{2}},\label{eq:IncoherenceUV_W}
\end{equation}
for all $\left(k,l\right)\in[n_{1}]\times[n_{2}]$. For any $\boldsymbol{a},\boldsymbol{b}\in[n_{1}]\times[n_{2}]$,
one has 
\begin{equation}
\left|\left\langle \boldsymbol{A}_{\boldsymbol{b}},\mathcal{P}_{T}\boldsymbol{A}_{\boldsymbol{a}}\right\rangle \right|\leq\sqrt{\frac{\omega_{\boldsymbol{b}}}{\omega_{\boldsymbol{a}}}}\frac{3\mu_{1}c_{\mathrm{s}}r}{n_{1}n_{2}}.\label{eq:UBAbPtAa}
\end{equation}
\end{lemma}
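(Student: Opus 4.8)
The plan is to reduce everything to the explicit factorization \eqref{eq:Xe_as_ELER}. Since $\sqrt{k_1 k_2}\,\boldsymbol{E}_{\mathrm{L}}$ and $\sqrt{(n_1-k_1+1)(n_2-k_2+1)}\,\boldsymbol{E}_{\mathrm{R}}$ span the column and row spaces of $\boldsymbol{X}_{\mathrm{e}}$, and the incoherence hypothesis \eqref{eq:LeastSV_G} guarantees that $\boldsymbol{G}_{\mathrm{L}}=\boldsymbol{E}_{\mathrm{L}}^*\boldsymbol{E}_{\mathrm{L}}$ and $\boldsymbol{E}_{\mathrm{R}}\boldsymbol{E}_{\mathrm{R}}^*$ are invertible, I would write the orthogonal projections in closed form, namely $\boldsymbol{U}\boldsymbol{U}^*=\boldsymbol{E}_{\mathrm{L}}\boldsymbol{G}_{\mathrm{L}}^{-1}\boldsymbol{E}_{\mathrm{L}}^*$ and $\boldsymbol{V}\boldsymbol{V}^*=\boldsymbol{E}_{\mathrm{R}}^*(\boldsymbol{E}_{\mathrm{R}}\boldsymbol{E}_{\mathrm{R}}^*)^{-1}\boldsymbol{E}_{\mathrm{R}}$. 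The first key estimate is an entrywise/diagonal bound on these projections. For any index $\alpha$, the corresponding row $\boldsymbol{e}_\alpha^*\boldsymbol{E}_{\mathrm{L}}$ has the form $\frac{1}{\sqrt{k_1k_2}}[y_i^{a}z_i^{b}]_i$, hence squared norm exactly $r/(k_1k_2)$ because $|y_i|=|z_i|=1$; a single Cauchy--Schwarz step against $\boldsymbol{G}_{\mathrm{L}}^{-1}$, together with $\|\boldsymbol{G}_{\mathrm{L}}^{-1}\|=1/\sigma_{\min}(\boldsymbol{G}_{\mathrm{L}})\le\mu_1$, yields $|(\boldsymbol{U}\boldsymbol{U}^*)_{\alpha,\gamma}|\le \mu_1 r/(k_1k_2)\le \mu_1 c_{\mathrm{s}}r/(n_1n_2)$ for every $\gamma$, and in particular the diagonal bound $(\boldsymbol{U}\boldsymbol{U}^*)_{\alpha,\alpha}=\|\boldsymbol{U}\boldsymbol{U}^*\boldsymbol{e}_\alpha\|_2^2\le \mu_1 c_{\mathrm{s}}r/(n_1n_2)$. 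The same reasoning applied to $\boldsymbol{E}_{\mathrm{R}}$ and $\boldsymbol{G}_{\mathrm{R}}$ gives the analogous bounds for $\boldsymbol{V}\boldsymbol{V}^*$, now using $\frac{1}{(n_1-k_1+1)(n_2-k_2+1)}\le\frac{c_{\mathrm{s}}}{n_1n_2}$.

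For \eqref{eq:IncoherenceUV_W} I would exploit the defining structure of $\boldsymbol{A}_{(k,l)}$: because of the (multi-fold) Hankel pattern, $\Omega_{\mathrm{e}}(k,l)$ meets each row and each column at most once, so $\boldsymbol{A}_{(k,l)}$ has exactly $\omega_{k,l}$ nonzero columns, each a scaled standard basis vector $\frac{1}{\sqrt{\omega_{k,l}}}\boldsymbol{e}_\alpha$ sitting in a distinct column. Consequently $\|\boldsymbol{U}\boldsymbol{U}^*\boldsymbol{A}_{(k,l)}\|_{\mathrm{F}}^2=\frac{1}{\omega_{k,l}}\sum_{(\alpha,\beta)\in\Omega_{\mathrm{e}}(k,l)}\|\boldsymbol{U}\boldsymbol{U}^*\boldsymbol{e}_\alpha\|_2^2$, and the $\omega_{k,l}$ factors cancel after invoking the diagonal bound, leaving the claimed $\mu_1 c_{\mathrm{s}}r/(n_1n_2)$; the bound on $\|\boldsymbol{A}_{(k,l)}\boldsymbol{V}\boldsymbol{V}^*\|_{\mathrm{F}}^2$ is symmetric, using rows in place of columns.

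For \eqref{eq:UBAbPtAa} I would expand $\mathcal{P}_T=\mathcal{P}_U+\mathcal{P}_V-\mathcal{P}_U\mathcal{P}_V$ and bound the three inner products, each by $\sqrt{\omega_{\boldsymbol{b}}/\omega_{\boldsymbol{a}}}\,\mu_1 c_{\mathrm{s}}r/(n_1n_2)$. Expanding $\boldsymbol{A}_{\boldsymbol{a}},\boldsymbol{A}_{\boldsymbol{b}}$ in the standard basis, the term $\langle\boldsymbol{A}_{\boldsymbol{b}},\boldsymbol{U}\boldsymbol{U}^*\boldsymbol{A}_{\boldsymbol{a}}\rangle$ collapses, since $\boldsymbol{e}_\delta^*\boldsymbol{e}_\beta$ vanishes unless $\delta=\beta$, to a sum of at most $\min(\omega_{\boldsymbol{a}},\omega_{\boldsymbol{b}})$ entries $(\boldsymbol{U}\boldsymbol{U}^*)_{\alpha,\gamma}$ scaled by $1/\sqrt{\omega_{\boldsymbol{a}}\omega_{\boldsymbol{b}}}$; combining the entrywise bound with the elementary inequality $\min(\omega_{\boldsymbol{a}},\omega_{\boldsymbol{b}})/\sqrt{\omega_{\boldsymbol{a}}\omega_{\boldsymbol{b}}}\le\sqrt{\omega_{\boldsymbol{b}}/\omega_{\boldsymbol{a}}}$ finishes it, and $\langle\boldsymbol{A}_{\boldsymbol{b}},\boldsymbol{A}_{\boldsymbol{a}}\boldsymbol{V}\boldsymbol{V}^*\rangle$ is identical via row matching. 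The main obstacle is the cross term $\langle\boldsymbol{A}_{\boldsymbol{b}},\boldsymbol{U}\boldsymbol{U}^*\boldsymbol{A}_{\boldsymbol{a}}\boldsymbol{V}\boldsymbol{V}^*\rangle=\frac{1}{\sqrt{\omega_{\boldsymbol{a}}\omega_{\boldsymbol{b}}}}\sum_{(\alpha,\beta)\in\Omega_{\mathrm{e}}(\boldsymbol{b})}\sum_{(\gamma,\delta)\in\Omega_{\mathrm{e}}(\boldsymbol{a})}(\boldsymbol{U}\boldsymbol{U}^*)_{\alpha,\gamma}(\boldsymbol{V}\boldsymbol{V}^*)_{\delta,\beta}$, where no index collapse occurs: bounding each of the $\omega_{\boldsymbol{a}}\omega_{\boldsymbol{b}}$ summands by its entrywise maximum loses a factor and fails. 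The fix is to apply Cauchy--Schwarz to the inner sum over $\Omega_{\mathrm{e}}(\boldsymbol{a})$ and then use that $\Omega_{\mathrm{e}}(\boldsymbol{a})$ hits each row/column once: this bounds $\sum_{(\gamma,\delta)\in\Omega_{\mathrm{e}}(\boldsymbol{a})}|(\boldsymbol{U}\boldsymbol{U}^*)_{\alpha,\gamma}|^2$ by the full row-sum $\|\boldsymbol{e}_\alpha^*\boldsymbol{U}\boldsymbol{U}^*\|_2^2=(\boldsymbol{U}\boldsymbol{U}^*)_{\alpha,\alpha}$ and the other factor by a diagonal entry of $\boldsymbol{V}\boldsymbol{V}^*$, each at most $\mu_1 c_{\mathrm{s}}r/(n_1n_2)$. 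The inner sum is then $\le \mu_1 c_{\mathrm{s}}r/(n_1n_2)$ uniformly in $(\alpha,\beta)$, and summing the $\omega_{\boldsymbol{b}}$ outer terms and dividing by $\sqrt{\omega_{\boldsymbol{a}}\omega_{\boldsymbol{b}}}$ produces exactly $\sqrt{\omega_{\boldsymbol{b}}/\omega_{\boldsymbol{a}}}$. Adding the three contributions gives the constant $3$, establishing \eqref{eq:UBAbPtAa}.
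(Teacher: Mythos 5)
Your proposal is correct and follows essentially the same route as the paper's proof: closed-form expressions $\boldsymbol{U}\boldsymbol{U}^{*}=\boldsymbol{E}_{\mathrm{L}}\boldsymbol{G}_{\mathrm{L}}^{-1}\boldsymbol{E}_{\mathrm{L}}^{*}$ and $\boldsymbol{V}\boldsymbol{V}^{*}=\boldsymbol{E}_{\mathrm{R}}^{*}(\boldsymbol{E}_{\mathrm{R}}\boldsymbol{E}_{\mathrm{R}}^{*})^{-1}\boldsymbol{E}_{\mathrm{R}}$, the entrywise and row-norm bounds $|(\boldsymbol{U}\boldsymbol{U}^{*})_{\alpha,\gamma}|\leq\mu_{1}c_{\mathrm{s}}r/(n_{1}n_{2})$ and $\|\boldsymbol{e}_{\alpha}^{\top}\boldsymbol{U}\boldsymbol{U}^{*}\|_{\mathrm{F}}^{2}\leq\mu_{1}c_{\mathrm{s}}r/(n_{1}n_{2})$ derived from $\sigma_{\min}(\boldsymbol{G}_{\mathrm{L}})\geq1/\mu_{1}$, and the three-term expansion of $\mathcal{P}_{T}$ exploiting that $\Omega_{\mathrm{e}}(\cdot)$ meets each row and column at most once. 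The only cosmetic differences are that your first-term bound via $\min(\omega_{\boldsymbol{a}},\omega_{\boldsymbol{b}})/\sqrt{\omega_{\boldsymbol{a}}\omega_{\boldsymbol{b}}}$ avoids the paper's WLOG reduction to $\omega_{\boldsymbol{b}}<\omega_{\boldsymbol{a}}$, and your cross-term Cauchy--Schwarz on the index sums is the unrolled version of the paper's bound $|(\boldsymbol{U}\boldsymbol{U}^{*}\boldsymbol{A}_{\boldsymbol{a}}\boldsymbol{V}\boldsymbol{V}^{*})_{k,l}|\leq\|\boldsymbol{A}_{\boldsymbol{a}}\|\,\|\boldsymbol{e}_{k}^{\top}\boldsymbol{U}\boldsymbol{U}^{*}\|_{\mathrm{F}}\,\|\boldsymbol{V}\boldsymbol{V}^{*}\boldsymbol{e}_{l}\|_{\mathrm{F}}$.
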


\begin{proof}See Appendix \ref{sec:Proof-of-Lemma-lemma-IncoherenceT_W}.\end{proof}

Recognizing that \eqref{eq:IncoherenceUV_W} is the same as \eqref{eq:IncohrenceUU_Hankel},
the following proof also establishes Theorem~\ref{theorem-EMaC-Hankel}.
Note that Lemma~\ref{lemma-IncoherenceT_W} immediately leads to
\begin{align}
\left\Vert \mathcal{P}_{T}\left(\boldsymbol{A}_{(k,l)}\right)\right\Vert _{\text{F}}^{2} & \leq\left\Vert \mathcal{P}_{U}\left(\boldsymbol{A}_{(k,l)}\right)\right\Vert _{\text{F}}^{2}+\left\Vert \mathcal{P}_{V}\left(\boldsymbol{A}_{(k,l)}\right)\right\Vert _{\text{F}}^{2}\nonumber \\
 & \leq\frac{2\mu_{1}c_{\mathrm{s}}r}{n_{1}n_{2}}.\label{eq:IncoherenceT_W}
\end{align}


As long as (\ref{eq:IncoherenceT_W}) holds, the fluctuation of $\mathcal{P}_{T}\mathcal{A}_{\Omega}\mathcal{P}_{T}$
can be controlled reasonably well, as stated in the following lemma.
This justifies Condition (\ref{eq:WellConditionPtAomegaPt}) as required
by Lemma \ref{lemma-Dual-Certificate}.

\begin{lemma}\label{lemma-Invertibility-PtWPt}Suppose that \eqref{eq:IncoherenceT_W}
holds. 
Then for any small constant $0<\epsilon\leq\frac{1}{2}$, one has
\begin{equation}
\left\Vert \frac{n_{1}n_{2}}{m}\mathcal{P}_{T}\mathcal{A}_{\Omega}\mathcal{P}_{T}-\mathcal{P}_{T}\mathcal{A}\mathcal{P}_{T}\right\Vert \leq\epsilon\label{eq:Invertibility-PtWPt}
\end{equation}
with probability exceeding $1-\left(n_{1}n_{2}\right)^{-4}$, provided
that $m>c_{1}\mu_{1}c_{\mathrm{s}}r\log\left(n_{1}n_{2}\right)$ for
some universal constant $c_{1}>0$. \end{lemma}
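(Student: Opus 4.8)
The plan is to interpret the left-hand side of (\ref{eq:Invertibility-PtWPt}) as a sum of $m$ independent, zero-mean, self-adjoint random operators acting on the Hilbert space of enhanced matrices endowed with the Frobenius inner product, and then to invoke the operator Bernstein inequality. Since each $\boldsymbol{A}_{(k,l)}$ has unit Frobenius norm, the elementary projector acts as $\mathcal{A}_{(k,l)}(\boldsymbol{M})=\langle\boldsymbol{A}_{(k,l)},\boldsymbol{M}\rangle\boldsymbol{A}_{(k,l)}$, so that $\mathcal{P}_{T}\mathcal{A}_{(k,l)}\mathcal{P}_{T}$ is the rank-one positive semidefinite operator $\boldsymbol{M}\mapsto\langle\mathcal{P}_{T}\boldsymbol{A}_{(k,l)},\boldsymbol{M}\rangle\,\mathcal{P}_{T}\boldsymbol{A}_{(k,l)}$, whose operator norm equals $\|\mathcal{P}_{T}\boldsymbol{A}_{(k,l)}\|_{\mathrm{F}}^{2}$. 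Because each index $\boldsymbol{a}_{i}$ is drawn uniformly from $[n_{1}]\times[n_{2}]$, one has $\mathbb{E}[\mathcal{P}_{T}\mathcal{A}_{\boldsymbol{a}_{i}}\mathcal{P}_{T}]=\frac{1}{n_{1}n_{2}}\mathcal{P}_{T}\mathcal{A}\mathcal{P}_{T}$; hence, setting $\mathcal{Z}_{i}:=\frac{n_{1}n_{2}}{m}\mathcal{P}_{T}\mathcal{A}_{\boldsymbol{a}_{i}}\mathcal{P}_{T}-\frac{1}{m}\mathcal{P}_{T}\mathcal{A}\mathcal{P}_{T}$ produces i.i.d.\ zero-mean summands whose total $\sum_{i=1}^{m}\mathcal{Z}_{i}$ is exactly the operator appearing in (\ref{eq:Invertibility-PtWPt}).

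Next I would extract the two ingredients required by the Bernstein bound. For the uniform norm bound, the incoherence hypothesis (\ref{eq:IncoherenceT_W}) gives $\|\frac{n_{1}n_{2}}{m}\mathcal{P}_{T}\mathcal{A}_{\boldsymbol{a}_{i}}\mathcal{P}_{T}\|\leq\frac{n_{1}n_{2}}{m}\cdot\frac{2\mu_{1}c_{\mathrm{s}}r}{n_{1}n_{2}}=\frac{2\mu_{1}c_{\mathrm{s}}r}{m}$, and since $\|\mathcal{P}_{T}\mathcal{A}\mathcal{P}_{T}\|\leq 1$ one obtains $\|\mathcal{Z}_{i}\|\lesssim\frac{\mu_{1}c_{\mathrm{s}}r}{m}=:L$. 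For the variance, I would use $\mathbb{E}[\mathcal{Z}_{i}^{2}]\preceq\mathbb{E}[(\frac{n_{1}n_{2}}{m}\mathcal{P}_{T}\mathcal{A}_{\boldsymbol{a}_{i}}\mathcal{P}_{T})^{2}]$ together with the rank-one identity $(\mathcal{P}_{T}\mathcal{A}_{(k,l)}\mathcal{P}_{T})^{2}=\|\mathcal{P}_{T}\boldsymbol{A}_{(k,l)}\|_{\mathrm{F}}^{2}\,\mathcal{P}_{T}\mathcal{A}_{(k,l)}\mathcal{P}_{T}\preceq\frac{2\mu_{1}c_{\mathrm{s}}r}{n_{1}n_{2}}\mathcal{P}_{T}\mathcal{A}_{(k,l)}\mathcal{P}_{T}$. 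Averaging over $(k,l)$ and summing over $i$ collapses $\sum_{(k,l)}\mathcal{P}_{T}\mathcal{A}_{(k,l)}\mathcal{P}_{T}=\mathcal{P}_{T}\mathcal{A}\mathcal{P}_{T}$, yielding $\|\sum_{i=1}^{m}\mathbb{E}[\mathcal{Z}_{i}^{2}]\|\leq\frac{2\mu_{1}c_{\mathrm{s}}r}{m}=:\sigma^{2}$.

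Finally, applying the operator Bernstein inequality with ambient dimension $d\leq(n_{1}n_{2})^{2}$ gives $\mathbb{P}\{\|\sum_{i}\mathcal{Z}_{i}\|\geq\epsilon\}\leq 2d\exp(-\frac{\epsilon^{2}/2}{\sigma^{2}+L\epsilon/3})$. Because both $\sigma^{2}$ and $L\epsilon/3$ are $\mathcal{O}(\mu_{1}c_{\mathrm{s}}r/m)$ for $\epsilon\leq\frac12$, the exponent is at least of order $\epsilon^{2}m/(\mu_{1}c_{\mathrm{s}}r)$; choosing $m>c_{1}\mu_{1}c_{\mathrm{s}}r\log(n_{1}n_{2})$ with $c_{1}$ large enough (and depending on $\epsilon^{-2}$) drives the failure probability below $(n_{1}n_{2})^{-4}$, as claimed. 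I expect the crux to be the variance estimate: it is precisely the rank-one positive-semidefinite structure of $\mathcal{P}_{T}\mathcal{A}_{(k,l)}\mathcal{P}_{T}$, combined with the incoherence bound (\ref{eq:IncoherenceT_W}), that keeps $\sigma^{2}$ linear in $r$, which in turn produces the desired $\mathcal{O}(r\log(n_{1}n_{2}))$ sample complexity rather than a quadratic dependence. A minor technical care is needed in identifying the correct dimension factor $d$ entering the Bernstein tail, since the summands act on the matrix space rather than on vectors.
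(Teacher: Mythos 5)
Your proposal is correct and follows essentially the same route as the paper: the paper also writes the deviation as a sum of i.i.d.\ zero-mean operators $\mathcal{Z}_{\boldsymbol{a}_{i}}=\frac{n_{1}n_{2}}{m}\mathcal{P}_{T}\mathcal{A}_{\boldsymbol{a}_{i}}\mathcal{P}_{T}-\frac{1}{m}\mathcal{P}_{T}\mathcal{A}\mathcal{P}_{T}$, exploits the rank-one identity $(\mathcal{P}_{T}\mathcal{A}_{(k,l)}\mathcal{P}_{T})^{2}=\Vert\mathcal{P}_{T}(\boldsymbol{A}_{(k,l)})\Vert_{\mathrm{F}}^{2}\,\mathcal{P}_{T}\mathcal{A}_{(k,l)}\mathcal{P}_{T}$ together with \eqref{eq:IncoherenceT_W} to get the $\mathcal{O}(\mu_{1}c_{\mathrm{s}}r/m)$ bounds on both $\Vert\mathcal{Z}_{\boldsymbol{a}_{i}}\Vert$ and the variance, and then applies the operator Bernstein inequality. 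The only cosmetic difference is that the paper bounds the centering term by the maximum of the summands rather than by $\frac{1}{m}\Vert\mathcal{P}_{T}\mathcal{A}\mathcal{P}_{T}\Vert\leq\frac{1}{m}$, which changes nothing of substance.
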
 

\begin{proof}See Appendix \ref{sec:Proof-of-Lemma-lemma-Invertibility-PtWPt}.\end{proof}

\subsection{Construction of Dual Certificates\label{sub:Dual-Certificate-Noiseless}}

Now we are in a position to construct the dual certificate, for which
we will employ the golfing scheme introduced in \cite{Gross2011recovering}.
Suppose that we generate $j_{0}$ independent random location multi-sets
$\Omega_{i}$ ($1\leq i\leq j_{0}$), each containing $\frac{m}{j_{0}}$
i.i.d. samples. This way the distribution of $\Omega$ is the same
as $\Omega_{1}\cup\Omega_{2}\cup\cdots\cup\Omega_{j_{0}}$ . Note
that $\Omega_{i}$'s correspond to sampling \emph{with} replacement.
Let 
\begin{equation}
\rho:=\frac{m}{n_{1}n_{2}}\quad\text{and}\quad q:=\frac{\rho}{j_{0}}
\end{equation}
represent the undersampling factors of $\Omega$ and $\Omega_{i}$,
respectively.

Consider a small constant $\epsilon<\frac{1}{e}$, and pick $j_{0}:=3\log_{\frac{1}{\epsilon}}n_{1}n_{2}$.
The construction of the dual matrix $\boldsymbol{W}$ then proceeds
as follows:

\vspace{10pt}
\begin{tabular}{>{\raggedright}p{0.45\textwidth}}
\hline 
\textbf{Construction of a dual certificate $\boldsymbol{W}$ via the
golfing scheme.}\tabularnewline
\hline 
\noalign{\vskip\doublerulesep} $\quad\quad$1. Set $\boldsymbol{F}_{0}=\boldsymbol{U}\boldsymbol{V}^{*}$,
and $j_{0}:=5\log_{\frac{1}{\epsilon}}(n_{1}n_{2})$.\tabularnewline
$\quad\quad$2. For all $i$ ($1\leq i\leq j_{0}$), let $\boldsymbol{F}_{i}=\mathcal{P}_{T}\left(\mathcal{A}-\frac{1}{q}\mathcal{A}_{\Omega_{i}}\right)\mathcal{P}_{T}\left(\boldsymbol{F}_{i-1}\right).$\tabularnewline
$\quad\quad$3. Set $\boldsymbol{W}:=\sum_{j=1}^{j_{0}}\left(\frac{1}{q}\mathcal{A}_{\Omega_{i}}+\mathcal{A}^{\perp}\right)\left(\boldsymbol{F}_{i-1}\right)$.\tabularnewline
\hline 
\end{tabular}

\vspace{10pt}

We will establish that $\boldsymbol{W}$ is a valid dual certificate
by showing that $\boldsymbol{W}$ satisfies the conditions stated
in Lemma \ref{lemma-Dual-Certificate}, which we now proceed step
by step.

First, by construction, all summands 
\[
\left(\frac{1}{q}\mathcal{A}_{\Omega_{i}}+\mathcal{A}^{\perp}\right)\left(\boldsymbol{F}_{i-1}\right)
\]
lie within the subspace of matrices supported on $\Omega$ or the
subspace $\mathcal{A}^{\perp}$. This validates that $\mathcal{A}'_{\Omega^{\perp}}\left(\boldsymbol{W}\right)=0$,
as required in (\ref{eq:UV_W_Contained_in_AOmega}).

Secondly, the recursive construction procedure of $\boldsymbol{F}_{i}$
allows us to write 
\begin{align}
 & -\mathcal{P}_{T}\left(\boldsymbol{W}-\boldsymbol{F}_{0}\right)=\mathcal{P}_{T}\left(\boldsymbol{F}_{0}\right)-\sum_{j=1}^{j_{0}}\mathcal{P}_{T}\left(\frac{1}{q}\mathcal{A}_{\Omega_{i}}+\mathcal{A}^{\perp}\right)\left(\boldsymbol{F}_{i-1}\right)\nonumber \\
 & \quad=\mathcal{P}_{T}\left(\boldsymbol{F}_{0}\right)-\mathcal{P}_{T}\left(\frac{1}{q}\mathcal{A}_{\Omega_{i}}+\mathcal{A}^{\perp}\right)\mathcal{P}_{T}\left(\boldsymbol{F}_{0}\right)\nonumber \\
 & \quad\quad\quad\quad\quad\quad\quad-\sum_{j=2}^{j_{0}}\mathcal{P}_{T}\left(\frac{1}{q}\mathcal{A}_{\Omega_{i}}+\mathcal{A}^{\perp}\right)\left(\boldsymbol{F}_{i-1}\right)\nonumber \\
 & \quad=\mathcal{P}_{T}\left(\mathcal{A}-\frac{1}{q}\mathcal{A}_{\Omega_{i}}\right)\mathcal{P}_{T}\left(\boldsymbol{F}_{0}\right)-\sum_{j=2}^{j_{0}}\mathcal{P}_{T}\left(\frac{1}{q}\mathcal{A}_{\Omega_{i}}+\mathcal{A}^{\perp}\right)\boldsymbol{F}_{i-1}\nonumber \\
 & \quad=\mathcal{P}_{T}\left(\boldsymbol{F}_{1}\right)-\sum_{j=2}^{j_{0}}\mathcal{P}_{T}\left(\frac{1}{q}\mathcal{A}_{\Omega_{i}}+\mathcal{A}^{\perp}\right)\left(\boldsymbol{F}_{i-1}\right)\nonumber \\
 & \quad=\cdots=\mathcal{P}_{T}\left(\boldsymbol{F}_{j_{0}}\right).\label{eq:PtWUV_bound}
\end{align}
Lemma \ref{lemma-Invertibility-PtWPt} asserts the following: if $qn_{1}n_{2}\geq c_{1}\mu_{1}c_{\text{s}}r\log\left(n_{1}n_{2}\right)$
or, equivalently, $m\geq\tilde{c}_{1}\mu_{1}c_{\text{s}}r\log^{2}(n_{1}n_{2})$
for some constant $\tilde{c}_{1}>0$, then with overwhelming probability
one has 
\begin{align*}
\left\Vert \mathcal{P}_{T}-\mathcal{P}_{T}\left(\frac{1}{q}\mathcal{A}_{\Omega_{i}}+\mathcal{A}^{\perp}\right)\mathcal{P}_{T}\right\Vert  & =\left\Vert \mathcal{P}_{T}\mathcal{A}\mathcal{P}_{T}-\frac{1}{q}\mathcal{P}_{T}\mathcal{A}_{\Omega_{i}}\mathcal{P}_{T}\right\Vert \\
 & \leq\epsilon<\frac{1}{2}.
\end{align*}
This allows us to bound $\left\Vert \mathcal{P}_{T}\left(\boldsymbol{F}_{i}\right)\right\Vert _{\text{F}}$
as 
\[
\left\Vert \mathcal{P}_{T}\left(\boldsymbol{F}_{i}\right)\right\Vert _{\text{F}}\leq\epsilon^{i}\left\Vert \mathcal{P}_{T}\left(\boldsymbol{F}_{0}\right)\right\Vert _{\text{F}}\leq\epsilon^{i}\left\Vert \boldsymbol{U}\boldsymbol{V}^{*}\right\Vert _{\text{F}}=\epsilon^{i}\sqrt{r},
\]
which together with (\ref{eq:PtWUV_bound}) gives 
\begin{align}
\left\Vert \mathcal{P}_{T}\left(\boldsymbol{W}-\boldsymbol{U}\boldsymbol{V}^{*}\right)\right\Vert _{\text{F}} & =\left\Vert \mathcal{P}_{T}\left(\boldsymbol{W}-\boldsymbol{F}_{0}\right)\right\Vert _{\text{F}}=\left\Vert \mathcal{P}_{T}\left(\boldsymbol{F}_{j_{0}}\right)\right\Vert _{\text{F}}\nonumber \\
 & \leq\epsilon^{j_{0}}\sqrt{r}<\frac{1}{2n_{1}^{2}n_{2}^{2}}\label{eq:PtW_UB}
\end{align}
as required in Condition (\ref{eq:W_component_T}).

Finally, it remains to be shown that $\left\Vert \mathcal{P}_{T^{\perp}}\left(\boldsymbol{W}\right)\right\Vert \leq\frac{1}{2}$,
which we will establish in the next two subsections. In particular,
we first introduce two key metrics and characterize their relationships
in Section \ref{sub:Key-Lemmas}. These metrics are crucial in bounding
$\left\Vert \mathcal{P}_{T^{\perp}}\left(\boldsymbol{W}\right)\right\Vert $,
which will be the focus of Section \ref{sub:BoundW}.

\subsection{Two Metrics and Key Lemmas\label{sub:Key-Lemmas}}

In this subsection, we introduce the following two norms

\begin{equation}
\left\Vert \boldsymbol{M}\right\Vert _{\mathcal{A},\infty}:=\max_{(k,l)\in\left[n_{1}\right]\times\left[n_{2}\right]}\left|\frac{\left\langle \boldsymbol{A}_{(k,l)},\boldsymbol{M}\right\rangle }{\sqrt{\omega_{k,l}}}\right|,\label{eq:DefnAinf}
\end{equation}
\begin{equation}
\left\Vert \boldsymbol{M}\right\Vert _{\mathcal{A},2}:=\sqrt{\sum_{\left(k,l\right)\in[n_{1}]\times[n_{2}]}\frac{\left|\left\langle \boldsymbol{A}_{(k,l)},\boldsymbol{M}\right\rangle \right|^{2}}{\omega_{k,l}}}.\label{eq:DefnA2}
\end{equation}
Based on these two metrics, we can derive several technical lemmas
which, taken collectively, allow us to control $\left\Vert \mathcal{P}_{T^{\perp}}\left(\boldsymbol{W}\right)\right\Vert $.
Specifically, these lemmas characterize the mutual dependence of three
norms $\left\Vert \cdot\right\Vert $, $\left\Vert \cdot\right\Vert _{\mathcal{A},2}$
and $\left\Vert \cdot\right\Vert _{\mathcal{A},\infty}$.

\begin{lemma}\label{lemma:OpNorm_Anorm}For any given matrix $\boldsymbol{M}$,
there exists some numerical constant $c_{2}>0$ such that 
\begin{align}
 & \left\Vert \left(\frac{n_{1}n_{2}}{m}\mathcal{A}_{\Omega}-\mathcal{A}\right)\left(\boldsymbol{M}\right)\right\Vert \leq c_{2}\sqrt{\frac{n_{1}n_{2}\log\left(n_{1}n_{2}\right)}{m}}\left\Vert \boldsymbol{M}\right\Vert _{\mathcal{A},2}\nonumber \\
 & \quad\quad\quad\quad\quad\quad\quad\quad+c_{2}\frac{n_{1}n_{2}\log\left(n_{1}n_{2}\right)}{m}\left\Vert \boldsymbol{M}\right\Vert _{\mathcal{A},\infty}\label{eq:NormOpt_bound}
\end{align}
with probability at least $1-\left(n_{1}n_{2}\right)^{-10}$.\end{lemma}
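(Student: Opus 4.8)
The plan is to view the left-hand side as a sum of $m$ independent, zero-mean random matrices and apply the rectangular matrix Bernstein inequality. Writing $\boldsymbol{a}_1,\dots,\boldsymbol{a}_m$ for the i.i.d.\ uniform indices generating $\Omega$ and using the identities $\mathcal{A}_{(k,l)}(\boldsymbol{M})=\langle\boldsymbol{A}_{(k,l)},\boldsymbol{M}\rangle\,\boldsymbol{A}_{(k,l)}$ and $\mathbb{E}\,\mathcal{A}_{\boldsymbol{a}_i}=\tfrac{1}{n_1n_2}\mathcal{A}$, I would decompose
\begin{equation}
\left(\tfrac{n_1n_2}{m}\mathcal{A}_{\Omega}-\mathcal{A}\right)(\boldsymbol{M})=\sum_{i=1}^{m}\boldsymbol{Z}_i,\qquad \boldsymbol{Z}_i:=\tfrac{n_1n_2}{m}\langle\boldsymbol{A}_{\boldsymbol{a}_i},\boldsymbol{M}\rangle\,\boldsymbol{A}_{\boldsymbol{a}_i}-\tfrac{1}{m}\mathcal{A}(\boldsymbol{M}),
\end{equation}
so that $\mathbb{E}\,\boldsymbol{Z}_i=0$. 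It then suffices to produce a uniform bound $L$ on $\|\boldsymbol{Z}_i\|$ and a variance bound $\sigma^2$ controlling $\bigl\|\sum_i\mathbb{E}\,\boldsymbol{Z}_i\boldsymbol{Z}_i^*\bigr\|$ and $\bigl\|\sum_i\mathbb{E}\,\boldsymbol{Z}_i^*\boldsymbol{Z}_i\bigr\|$; matrix Bernstein then yields a tail of the form $\sqrt{\sigma^2\log(n_1n_2)}+L\log(n_1n_2)$, matching \eqref{eq:NormOpt_bound} once the absolute constant is chosen appropriately.

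For the uniform bound I would exploit the structural fact recorded in the notation section that each location set $\Omega_{\text{e}}(k,l)$ meets every row and every column of the enhanced form at most once. Consequently $\boldsymbol{A}_{(k,l)}$ is a scaled partial-permutation matrix with $\|\boldsymbol{A}_{(k,l)}\|=1/\sqrt{\omega_{k,l}}$, whence the first term of $\boldsymbol{Z}_i$ has spectral norm $\tfrac{n_1n_2}{m}|\langle\boldsymbol{A}_{\boldsymbol{a}_i},\boldsymbol{M}\rangle|/\sqrt{\omega_{\boldsymbol{a}_i}}\le\tfrac{n_1n_2}{m}\|\boldsymbol{M}\|_{\mathcal{A},\infty}$. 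Since $\mathcal{A}(\boldsymbol{M})$ has disjointly-supported entries equal to the skew-diagonal averages, each of magnitude at most $\|\boldsymbol{M}\|_{\mathcal{A},\infty}$, on a matrix of size at most $n_1n_2\times n_1n_2$, one has $\|\mathcal{A}(\boldsymbol{M})\|\le n_1n_2\|\boldsymbol{M}\|_{\mathcal{A},\infty}$, giving $L\le 2\tfrac{n_1n_2}{m}\|\boldsymbol{M}\|_{\mathcal{A},\infty}$.

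For the variance term, discarding the negative-semidefinite correction $-\tfrac{1}{m^2}\mathcal{A}(\boldsymbol{M})\mathcal{A}(\boldsymbol{M})^*$, independence and identical distribution give
\begin{equation}
\sum_{i=1}^m\mathbb{E}\,\boldsymbol{Z}_i\boldsymbol{Z}_i^*\preceq\frac{n_1n_2}{m}\sum_{(k,l)\in[n_1]\times[n_2]}\bigl|\langle\boldsymbol{A}_{(k,l)},\boldsymbol{M}\rangle\bigr|^2\,\boldsymbol{A}_{(k,l)}\boldsymbol{A}_{(k,l)}^*.
\end{equation}
Each $\boldsymbol{A}_{(k,l)}\boldsymbol{A}_{(k,l)}^*$ is diagonal, carrying the value $1/\omega_{k,l}$ on exactly those rows met by $\Omega_{\text{e}}(k,l)$; because any fixed row is met at most once by each $\Omega_{\text{e}}(k,l)$, summing over $(k,l)$ produces a diagonal matrix each of whose entries is at most $\sum_{(k,l)}|\langle\boldsymbol{A}_{(k,l)},\boldsymbol{M}\rangle|^2/\omega_{k,l}=\|\boldsymbol{M}\|_{\mathcal{A},2}^2$. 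Hence $\bigl\|\sum_i\mathbb{E}\,\boldsymbol{Z}_i\boldsymbol{Z}_i^*\bigr\|\le\tfrac{n_1n_2}{m}\|\boldsymbol{M}\|_{\mathcal{A},2}^2$, and an identical argument using columns bounds $\bigl\|\sum_i\mathbb{E}\,\boldsymbol{Z}_i^*\boldsymbol{Z}_i\bigr\|$, so I would take $\sigma^2=\tfrac{n_1n_2}{m}\|\boldsymbol{M}\|_{\mathcal{A},2}^2$. Substituting $L$ and $\sigma^2$ into matrix Bernstein and enlarging the constant to force the failure probability below $(n_1n_2)^{-10}$ closes the argument. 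I expect the main obstacle to be precisely this variance computation: one must verify that the per-row and per-column intersection property of the multi-fold Hankel basis keeps the diagonal variance matrix controlled by $\|\boldsymbol{M}\|_{\mathcal{A},2}^2$ rather than a naive and far larger quantity, as it is exactly this structure, absent in generic matrix completion, that ultimately delivers the near-optimal sample complexity.
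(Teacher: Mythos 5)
Your proposal is correct and follows essentially the same route as the paper: the same zero-mean decomposition into i.i.d. summands, the same variance bound $\tfrac{n_1n_2}{m}\Vert\boldsymbol{M}\Vert_{\mathcal{A},2}^2$ obtained from the diagonal structure of $\boldsymbol{A}_{(k,l)}\boldsymbol{A}_{(k,l)}^{*}$ with entries at most $1/\omega_{k,l}$, the same uniform bound $\tfrac{2n_1n_2}{m}\Vert\boldsymbol{M}\Vert_{\mathcal{A},\infty}$ via $\Vert\boldsymbol{A}_{(k,l)}\Vert=1/\sqrt{\omega_{k,l}}$, and matrix Bernstein. The only cosmetic difference is that you bound $\Vert\mathcal{A}(\boldsymbol{M})\Vert$ directly through entrywise magnitudes, whereas the paper observes it is the expectation of the sampled term; both give the same constant.
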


\begin{proof}See Appendix \ref{sec:proof-lemma:OpNorm_Anorm}.\end{proof}

\begin{lemma}\label{lemma:NormA2_bound}Assume that there exists
a quantity $\mu_{5}$ such that 
\begin{equation}
\omega_{\alpha,\beta}\left\Vert \mathcal{P}_{T}\left(\boldsymbol{A}_{(\alpha,\beta)}\right)\right\Vert _{\mathcal{A},2}^{2}\leq\frac{\mu_{5}r}{n_{1}n_{2}},\text{ }(\alpha,\beta)\in\left[n_{1}\right]\times\left[n_{2}\right].\label{eq:HypothesisMu5}
\end{equation}
For any given matrix $\boldsymbol{M}$, with probability exceeding
$1-\left(n_{1}n_{2}\right)^{-10}$, 
\begin{align}
 & \left\Vert \left(\frac{n_{1}n_{2}}{m}\mathcal{P}_{T}\mathcal{A}_{\Omega}-\mathcal{P}_{T}\mathcal{A}\right)\left(\boldsymbol{M}\right)\right\Vert _{\mathcal{A},2}\leq c_{3}\sqrt{\frac{\mu_{5}r\log\left(n_{1}n_{2}\right)}{m}}\cdot\nonumber \\
 & \quad\quad\quad\left(\left\Vert \boldsymbol{M}\right\Vert _{\mathcal{A},2}+\sqrt{\frac{n_{1}n_{2}\log\left(n_{1}n_{2}\right)}{m}}\left\Vert \boldsymbol{M}\right\Vert _{\mathcal{A},\infty}\right)\label{eq:NormA2_bound}
\end{align}
for some absolute constant $c_{3}>0$.\end{lemma}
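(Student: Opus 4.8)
The plan is to read the operator $\frac{n_{1}n_{2}}{m}\mathcal{P}_{T}\mathcal{A}_{\Omega}-\mathcal{P}_{T}\mathcal{A}$ as a \emph{normalized sum of i.i.d.\ zero-mean random operators} and to control the resulting sum through the Hilbert-space structure underlying $\|\cdot\|_{\mathcal{A},2}$. Since $\mathcal{A}_{\Omega}=\sum_{i=1}^{m}\mathcal{A}_{\boldsymbol{a}_{i}}$ with the $\boldsymbol{a}_{i}$ i.i.d.\ uniform and $\mathbb{E}[\mathcal{A}_{\boldsymbol{a}_{i}}]=\frac{1}{n_{1}n_{2}}\mathcal{A}$, I would write $\frac{n_{1}n_{2}}{m}\mathcal{A}_{\Omega}-\mathcal{A}=\frac{1}{m}\sum_{i=1}^{m}\mathcal{Z}_{i}$, where $\mathcal{Z}_{i}:=n_{1}n_{2}\mathcal{A}_{\boldsymbol{a}_{i}}-\mathcal{A}$ obeys $\mathbb{E}[\mathcal{Z}_{i}]=0$. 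Setting $\boldsymbol{v}_{i}:=\mathcal{P}_{T}\mathcal{Z}_{i}(\boldsymbol{M})$ reduces the task to bounding $\frac{1}{m}\|\sum_{i}\boldsymbol{v}_{i}\|_{\mathcal{A},2}$. The key structural observation is that $\|\cdot\|_{\mathcal{A},2}$ in \eqref{eq:DefnA2} is exactly the $\ell_{2}$ norm of the coefficient vector $\big(\langle\boldsymbol{A}_{(k,l)},\,\cdot\,\rangle/\sqrt{\omega_{k,l}}\big)_{(k,l)}$; hence the $\boldsymbol{v}_{i}$ are independent, zero-mean vectors in a genuine Hilbert space and the \emph{vector Bernstein inequality} (as used in the golfing literature) applies verbatim, requiring only a uniform per-term bound $B$ and a variance proxy $\sigma^{2}$.

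Next I would compute these two ingredients using \eqref{eq:HypothesisMu5}. For $B$: writing $\mathcal{P}_{T}\mathcal{A}_{\boldsymbol{a}_{i}}(\boldsymbol{M})=\langle\boldsymbol{A}_{\boldsymbol{a}_{i}},\boldsymbol{M}\rangle\,\mathcal{P}_{T}\boldsymbol{A}_{\boldsymbol{a}_{i}}$, I bound $|\langle\boldsymbol{A}_{\boldsymbol{a}_{i}},\boldsymbol{M}\rangle|\le\sqrt{\omega_{\boldsymbol{a}_{i}}}\,\|\boldsymbol{M}\|_{\mathcal{A},\infty}$ directly from \eqref{eq:DefnAinf}, and $\|\mathcal{P}_{T}\boldsymbol{A}_{\boldsymbol{a}_{i}}\|_{\mathcal{A},2}\le\sqrt{\mu_{5}r/(n_{1}n_{2}\omega_{\boldsymbol{a}_{i}})}$ from the hypothesis. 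The $\omega_{\boldsymbol{a}_{i}}$ factors cancel, and after multiplying by $n_{1}n_{2}$ this yields $B=O\!\big(\sqrt{n_{1}n_{2}\mu_{5}r}\,\|\boldsymbol{M}\|_{\mathcal{A},\infty}\big)$ (the deterministic $\mathcal{P}_{T}\mathcal{A}(\boldsymbol{M})$ piece is the mean and only tightens the bound). For the variance proxy I use $\mathrm{Var}\le$ second moment and average over the uniform location, giving $\mathbb{E}\|\boldsymbol{v}_{i}\|_{\mathcal{A},2}^{2}\le n_{1}n_{2}\sum_{(k,l)}|\langle\boldsymbol{A}_{(k,l)},\boldsymbol{M}\rangle|^{2}\,\|\mathcal{P}_{T}\boldsymbol{A}_{(k,l)}\|_{\mathcal{A},2}^{2}$; invoking \eqref{eq:HypothesisMu5} term by term collapses this to $\mu_{5}r\sum_{(k,l)}|\langle\boldsymbol{A}_{(k,l)},\boldsymbol{M}\rangle|^{2}/\omega_{k,l}=\mu_{5}r\,\|\boldsymbol{M}\|_{\mathcal{A},2}^{2}$, so $\sigma^{2}:=\sum_{i}\mathbb{E}\|\boldsymbol{v}_{i}\|_{\mathcal{A},2}^{2}\le m\mu_{5}r\,\|\boldsymbol{M}\|_{\mathcal{A},2}^{2}$.

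Finally, the vector Bernstein inequality delivers, with probability at least $1-(n_{1}n_{2})^{-10}$, a bound of the form $\|\sum_{i}\boldsymbol{v}_{i}\|_{\mathcal{A},2}\lesssim\sqrt{\sigma^{2}\log(n_{1}n_{2})}+B\log(n_{1}n_{2})$. Substituting the two quantities and dividing by $m$ produces the terms $\sqrt{\mu_{5}r\log(n_{1}n_{2})/m}\,\|\boldsymbol{M}\|_{\mathcal{A},2}$ and $\frac{\sqrt{n_{1}n_{2}\mu_{5}r}\,\log(n_{1}n_{2})}{m}\|\boldsymbol{M}\|_{\mathcal{A},\infty}$; since the second factors exactly as $\sqrt{\mu_{5}r\log(n_{1}n_{2})/m}\cdot\sqrt{n_{1}n_{2}\log(n_{1}n_{2})/m}\,\|\boldsymbol{M}\|_{\mathcal{A},\infty}$, this is precisely \eqref{eq:NormA2_bound} after absorbing constants into $c_{3}$.

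I expect the \emph{variance computation} to be the main obstacle: it is the only step where the geometry of $\|\cdot\|_{\mathcal{A},2}$ and the incoherence parameter $\mu_{5}$ must interact precisely, and one must apply \eqref{eq:HypothesisMu5} coordinatewise so that the weighted sum reassembles into $\|\boldsymbol{M}\|_{\mathcal{A},2}^{2}$ rather than degrading into the cruder $\|\boldsymbol{M}\|_{\mathcal{A},\infty}^{2}$ estimate, which would spoil the leading term. A secondary point worth verifying is that the overlap pattern of the multi-fold Hankel basis (each $\Omega_{\mathrm{e}}(k,l)$ meeting any row or column at most once) does not corrupt the independent Hilbert-space framework; this is automatic here, since all randomness resides in the i.i.d.\ sample locations $\boldsymbol{a}_{i}$ while the basis matrices $\boldsymbol{A}_{(k,l)}$ are fixed.
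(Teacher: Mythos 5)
Your proposal is correct and follows essentially the same route as the paper's proof: both identify $\|\cdot\|_{\mathcal{A},2}$ with the $\ell_{2}$ norm of the coefficient vector $\bigl(\langle\boldsymbol{A}_{(k,l)},\cdot\rangle/\sqrt{\omega_{k,l}}\bigr)_{(k,l)}$, reduce the claim to a sum of i.i.d.\ zero-mean random vectors indexed by the samples $\boldsymbol{a}_{i}$, derive the per-term bound $\lesssim\frac{\sqrt{n_{1}n_{2}\mu_{5}r}}{m}\|\boldsymbol{M}\|_{\mathcal{A},\infty}$ and the variance proxy $\lesssim\frac{\mu_{5}r}{m}\|\boldsymbol{M}\|_{\mathcal{A},2}^{2}$ from \eqref{eq:HypothesisMu5}, and conclude via the (vector form of the) Bernstein inequality. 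The variance computation you flag as the delicate step is carried out in the paper exactly as you describe, applying \eqref{eq:HypothesisMu5} coordinatewise so the weighted sum reassembles into $\|\boldsymbol{M}\|_{\mathcal{A},2}^{2}$.
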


\begin{proof}See Appendix \ref{sec:Proof-of-Lemma:NormA2_bound}.\end{proof}

\begin{lemma}\label{lemma:NormAinf_bound}For any given matrix $\boldsymbol{M}\in T$,
there is some absolute constant $c_{4}>0$ such that 
\begin{align}
 & \left\Vert \left(\frac{n_{1}n_{2}}{m}\mathcal{P}_{T}\mathcal{A}_{\Omega}-\mathcal{P}_{T}\mathcal{A}\right)\left(\boldsymbol{M}\right)\right\Vert _{\mathcal{A},\infty}\nonumber \\
 & \quad\leq c_{4}\sqrt{\frac{\mu_{1}c_{\mathrm{s}}r\log\left(n_{1}n_{2}\right)}{m}}\cdot\sqrt{\frac{\mu_{1}c_{\mathrm{s}}r}{n_{1}n_{2}}}\left\Vert \boldsymbol{M}\right\Vert _{\mathcal{A},2}\nonumber \\
 & \quad\quad+c_{4}\frac{\mu_{1}c_{\mathrm{s}}r\log\left(n_{1}n_{2}\right)}{m}\left\Vert \boldsymbol{M}\right\Vert _{\mathcal{A},\infty}\label{eq:NormAinf_bound}
\end{align}
with probability exceeding $1-\left(n_{1}n_{2}\right)^{-10}$.\end{lemma}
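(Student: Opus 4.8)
The plan is to prove the estimate one coordinate at a time in the $\boldsymbol{A}$-basis and then union bound over all $n_1n_2$ coordinates. Fix $(k,l)\in[n_1]\times[n_2]$ and consider the corresponding entry of $\|\cdot\|_{\mathcal{A},\infty}$, namely $\frac{1}{\sqrt{\omega_{k,l}}}\langle \boldsymbol{A}_{(k,l)},(\frac{n_1n_2}{m}\mathcal{P}_T\mathcal{A}_\Omega-\mathcal{P}_T\mathcal{A})(\boldsymbol{M})\rangle$. Expanding $\mathcal{A}_\Omega=\sum_{i=1}^m\mathcal{A}_{\boldsymbol{a}_i}$, using $\mathcal{A}_{\boldsymbol{a}_i}(\boldsymbol{M})=\langle\boldsymbol{A}_{\boldsymbol{a}_i},\boldsymbol{M}\rangle\boldsymbol{A}_{\boldsymbol{a}_i}$, and moving $\mathcal{P}_T$ onto $\boldsymbol{A}_{(k,l)}$ by self-adjointness, this entry equals $\sum_{i=1}^m(\xi_i-\mathbb{E}\xi_i)$ with $\xi_i:=\frac{n_1n_2}{m\sqrt{\omega_{k,l}}}\langle\boldsymbol{A}_{\boldsymbol{a}_i},\boldsymbol{M}\rangle\langle\boldsymbol{A}_{(k,l)},\mathcal{P}_T\boldsymbol{A}_{\boldsymbol{a}_i}\rangle$. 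Because the $\boldsymbol{a}_i$ are i.i.d. uniform, $\mathbb{E}[\frac{n_1n_2}{m}\mathcal{A}_\Omega]=\mathcal{A}$, so the $\xi_i$ are i.i.d. and the sum is centered; I would then apply a scalar Bernstein inequality (splitting into real and imaginary parts to handle the complex summands). The whole argument thus reduces to an almost-sure per-term bound $L$ with $|\xi_i|\le L$ and a control of the aggregate variance $\sigma^2:=\sum_{i}\mathrm{Var}(\xi_i)$.

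For the magnitude I bound $|\langle\boldsymbol{A}_{\boldsymbol{a}_i},\boldsymbol{M}\rangle|\le\sqrt{\omega_{\boldsymbol{a}_i}}\,\|\boldsymbol{M}\|_{\mathcal{A},\infty}$ straight from \eqref{eq:DefnAinf}, and I invoke Lemma~\ref{lemma-IncoherenceT_W} in the orientation $\boldsymbol{b}=(k,l)$, $\boldsymbol{a}=\boldsymbol{a}_i$, i.e. \eqref{eq:UBAbPtAa} in the form $|\langle\boldsymbol{A}_{(k,l)},\mathcal{P}_T\boldsymbol{A}_{\boldsymbol{a}_i}\rangle|\le\sqrt{\omega_{k,l}/\omega_{\boldsymbol{a}_i}}\,\frac{3\mu_1c_{\mathrm s}r}{n_1n_2}$. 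The factors $\sqrt{\omega_{\boldsymbol{a}_i}}$ and $1/\sqrt{\omega_{k,l}}$ then cancel exactly, yielding $|\xi_i|\le\frac{3\mu_1c_{\mathrm s}r}{m}\|\boldsymbol{M}\|_{\mathcal{A},\infty}=:L$. This cancellation is the decisive point and the step I expect to be the main obstacle to get right: had I used the opposite orientation of \eqref{eq:UBAbPtAa}, a spurious ratio $\omega_{\boldsymbol{a}_i}/\omega_{k,l}$ (which can be as large as $n_1n_2/c_{\mathrm s}$) would survive and destroy the claimed coefficient $\frac{\mu_1c_{\mathrm s}r\log(n_1n_2)}{m}$. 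Exploiting the self-adjointness of $\mathcal{P}_T$ so that the $\omega$-weights balance is precisely what keeps $L$ proportional to $c_{\mathrm s}$ rather than to $n_1n_2$.

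For the variance I estimate $\mathrm{Var}(\xi_i)\le\mathbb{E}|\xi_i|^2=\frac{(n_1n_2)^2}{m^2\omega_{k,l}}\cdot\frac{1}{n_1n_2}\sum_{\boldsymbol{a}}|\langle\boldsymbol{A}_{\boldsymbol{a}},\boldsymbol{M}\rangle|^2\,|\langle\boldsymbol{A}_{(k,l)},\mathcal{P}_T\boldsymbol{A}_{\boldsymbol{a}}\rangle|^2$ and rewrite each summand as $\frac{|\langle\boldsymbol{A}_{\boldsymbol{a}},\boldsymbol{M}\rangle|^2}{\omega_{\boldsymbol{a}}}\cdot\bigl(\omega_{\boldsymbol{a}}|\langle\boldsymbol{A}_{(k,l)},\mathcal{P}_T\boldsymbol{A}_{\boldsymbol{a}}\rangle|^2\bigr)$. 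Applying \eqref{eq:UBAbPtAa} in the form $\omega_{\boldsymbol{a}}|\langle\boldsymbol{A}_{(k,l)},\mathcal{P}_T\boldsymbol{A}_{\boldsymbol{a}}\rangle|^2\le\omega_{k,l}\bigl(\frac{3\mu_1c_{\mathrm s}r}{n_1n_2}\bigr)^2$ pulls out the $(k,l)$-dependence (the $\omega_{k,l}$ cancels the $1/\omega_{k,l}$ prefactor), and the residual sum $\sum_{\boldsymbol{a}}\frac{|\langle\boldsymbol{A}_{\boldsymbol{a}},\boldsymbol{M}\rangle|^2}{\omega_{\boldsymbol{a}}}$ is precisely $\|\boldsymbol{M}\|_{\mathcal{A},2}^2$ by \eqref{eq:DefnA2}. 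Summing over $i$ then gives $\sigma^2\le\frac{9\mu_1^2c_{\mathrm s}^2r^2}{m\,n_1n_2}\|\boldsymbol{M}\|_{\mathcal{A},2}^2$.

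Finally I apply Bernstein at deviation level $\theta=\Theta(\log(n_1n_2))$, which bounds the fixed-$(k,l)$ entry by a multiple of $\sqrt{\sigma^2\theta}+L\theta$ with probability at least $1-(n_1n_2)^{-11}$. Substituting the two displays reproduces exactly the variance term $\sqrt{\frac{\mu_1c_{\mathrm s}r\log(n_1n_2)}{m}}\sqrt{\frac{\mu_1c_{\mathrm s}r}{n_1n_2}}\|\boldsymbol{M}\|_{\mathcal{A},2}$ and the magnitude term $\frac{\mu_1c_{\mathrm s}r\log(n_1n_2)}{m}\|\boldsymbol{M}\|_{\mathcal{A},\infty}$, up to an absolute constant $c_4$. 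A union bound over the $n_1n_2$ coordinates $(k,l)$ then upgrades this pointwise control to the stated bound on $\|\cdot\|_{\mathcal{A},\infty}$, holding with probability exceeding $1-(n_1n_2)^{-10}$. Besides the orientation issue in the magnitude bound, the only additional care needed is the complex-scalar Bernstein, which I would dispatch by the standard real/imaginary decomposition.
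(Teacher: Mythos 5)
Your proposal is correct and follows essentially the same route as the paper's proof: fix a coordinate $(k,l)$, write the entry as a sum of i.i.d.\ centered scalars, bound the magnitude and variance via \eqref{eq:UBAbPtAa} in the orientation that makes the $\omega$-weights cancel, and finish with Bernstein plus a union bound over the $n_1n_2$ coordinates. The cancellation you flag as the decisive point is exactly the step the paper exploits, and your variance bound $\sigma^2\leq\frac{9\mu_1^2c_{\mathrm s}^2r^2}{m\,n_1n_2}\|\boldsymbol{M}\|_{\mathcal{A},2}^2$ matches the paper's up to the constant absorbed by centering.
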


\begin{proof}See Appendix \ref{sec:Proof-of-Lemma:NormAinf_bound}.\end{proof}

Lemma \ref{lemma:NormA2_bound} combined with Lemma \ref{lemma:NormAinf_bound}
gives rise to the following inequality. Consider any given matrix
$\boldsymbol{M}\in T$. Applying the bounds (\ref{eq:NormA2_bound})
and (\ref{eq:NormAinf_bound}), one can derive 
\begin{align}
 & \left\Vert \left(\frac{n_{1}n_{2}}{m}\mathcal{P}_{T}\mathcal{A}_{\Omega}-\mathcal{P}_{T}\mathcal{A}\right)\left(\boldsymbol{M}\right)\right\Vert _{\mathcal{A},2}+\nonumber \\
 & \quad\quad\quad\sqrt{\frac{n_{1}n_{2}\log\left(n_{1}n_{2}\right)}{m}}\left\Vert \left(\frac{n_{1}n_{2}}{m}\mathcal{P}_{T}\mathcal{A}_{\Omega}-\mathcal{P}_{T}\mathcal{A}\right)\left(\boldsymbol{M}\right)\right\Vert _{\mathcal{A},\infty}\\
 & \leq\footnotesize c_{3}\sqrt{\frac{\mu_{5}r\log\left(n_{1}n_{2}\right)}{m}}\left(\left\Vert \boldsymbol{M}\right\Vert _{\mathcal{A},2}+\sqrt{\frac{n_{1}n_{2}\log\left(n_{1}n_{2}\right)}{m}}\left\Vert \boldsymbol{M}\right\Vert _{\mathcal{A},\infty}\right)\nonumber \\
 & \quad\quad\footnotesize+c_{4}\sqrt{\frac{n_{1}n_{2}\log\left(n_{1}n_{2}\right)}{m}}\left(\sqrt{\frac{\mu_{1}c_{\mathrm{s}}r\log\left(n_{1}n_{2}\right)}{m}}\cdot\sqrt{\frac{\mu_{1}c_{\mathrm{s}}r}{n_{1}n_{2}}}\left\Vert \boldsymbol{M}\right\Vert _{\mathcal{A},2}\right.\nonumber \\
 & \quad\quad\quad\quad\left.+\frac{\mu_{1}c_{\mathrm{s}}r\log\left(n_{1}n_{2}\right)}{m}\left\Vert \boldsymbol{M}\right\Vert _{\mathcal{A},\infty}\right)\nonumber \\
 & \leq c_{5}\left(\sqrt{\frac{\mu_{5}r\log\left(n_{1}n_{2}\right)}{m}}+\frac{\mu_{1}c_{\mathrm{s}}r\log\left(n_{1}n_{2}\right)}{m}\right)\cdot\nonumber \\
 & \quad\quad\quad\left\{ \left\Vert \boldsymbol{M}\right\Vert _{\mathcal{A},2}+\sqrt{\frac{n_{1}n_{2}\log\left(n_{1}n_{2}\right)}{m}}\left\Vert \boldsymbol{M}\right\Vert _{\mathcal{A},\infty}\right\} ,\label{eq:NormA2Ainf_bound}
\end{align}
with probability exceeding $1-\left(n_{1}n_{2}\right)^{-10}$, where
$c_{5}=\max\left\{ c_{3},c_{4}\right\} $. This holds under the hypothesis
(\ref{eq:HypothesisMu5}). 


\subsection{An Upper Bound on $\left\Vert \mathcal{P}_{T^{\perp}}\left(\boldsymbol{W}\right)\right\Vert $
\label{sub:BoundW}}

Now we are ready to show how we may combine the above lemmas to develop
an upper bound on $\left\Vert \mathcal{P}_{T^{\perp}}\left(\boldsymbol{W}\right)\right\Vert $.
By construction, one has 
\[
\left\Vert \mathcal{P}_{T^{\perp}}\left(\boldsymbol{W}\right)\right\Vert \leq\sum_{l=1}^{j_{0}}\left\Vert \mathcal{P}_{T^{\perp}}\left(\frac{1}{q}\mathcal{A}_{\Omega_{l}}+\mathcal{A}^{\perp}\right)\mathcal{P}_{T}\left(\boldsymbol{F}_{l-1}\right)\right\Vert .
\]
Each summand can be bounded above as follows 
\begin{align}
 & \left\Vert \mathcal{P}_{T^{\perp}}\left(\frac{1}{q}\mathcal{A}_{\Omega_{l}}+\mathcal{A}^{\perp}\right)\mathcal{P}_{T}\left(\boldsymbol{F}_{l-1}\right)\right\Vert \nonumber \\
 & \quad=\left\Vert \mathcal{P}_{T^{\perp}}\left(\frac{1}{q}\mathcal{A}_{\Omega_{l}}-\mathcal{A}\right)\mathcal{P}_{T}\left(\boldsymbol{F}_{l-1}\right)\right\Vert \nonumber \\
 & \text{ }\text{ }\leq\left\Vert \left(\frac{1}{q}\mathcal{A}_{\Omega_{l}}-\mathcal{A}\right)\left(\boldsymbol{F}_{l-1}\right)\right\Vert \nonumber \\
 & \text{ }\text{ }\leq c_{2}\left(\sqrt{\frac{\log\left(n_{1}n_{2}\right)}{q}}\left\Vert \boldsymbol{F}_{l-1}\right\Vert _{\mathcal{A},2}+\frac{\log\left(n_{1}n_{2}\right)}{q}\left\Vert \boldsymbol{F}_{l-1}\right\Vert _{\mathcal{A},\infty}\right)\label{eq:2norm_Residual}\\
 & \text{ }\text{ }\leq\text{ }c_{2}c_{5}\left(\sqrt{\frac{\mu_{5}r\log\left(n_{1}n_{2}\right)}{qn_{1}n_{2}}}+\frac{\mu_{1}c_{\mathrm{s}}r\log\left(n_{1}n_{2}\right)}{qn_{1}n_{2}}\right)\nonumber \\
 & \quad\quad\quad\small\cdot\left\{ \sqrt{\frac{\log\left(n_{1}n_{2}\right)}{q}}\left\Vert \boldsymbol{F}_{l-2}\right\Vert _{\mathcal{A},2}+\frac{\log\left(n_{1}n_{2}\right)}{q}\left\Vert \boldsymbol{F}_{l-2}\right\Vert _{\mathcal{A},\infty}\right\} \label{eq:A2Ainf_residual}\\
 & \text{ }\text{ }\leq\small\left(\frac{1}{2}\right)^{l-1}\left(\sqrt{\frac{\log\left(n_{1}n_{2}\right)}{q}}\cdot\left\Vert \boldsymbol{F}_{0}\right\Vert _{\mathcal{A},2}+\frac{\log\left(n_{1}n_{2}\right)}{q}\left\Vert \boldsymbol{F}_{0}\right\Vert _{\mathcal{A},\infty}\right),\label{eq:PFi_bound}
\end{align}
where (\ref{eq:2norm_Residual}) follows from Lemma \ref{lemma:OpNorm_Anorm}
together with the fact that $\boldsymbol{F}_{i}\in T$, and (\ref{eq:A2Ainf_residual})
is a consequence of \eqref{eq:NormA2Ainf_bound}. The last inequality
holds under the hypothesis that $qn_{1}n_{2}\gg\max\left\{ \mu_{1}c_{\mathrm{s}},\mu_{5}\right\} r\log\left(n_{1}n_{2}\right)$
or, equivalently, $m\gg\max\left\{ \mu_{1}c_{\mathrm{s}},\mu_{5}\right\} r\log^{2}\left(n_{1}n_{2}\right)$.

Since $\boldsymbol{F}_{0}=\boldsymbol{U}\boldsymbol{V}^{*}$, it remains
to control $\left\Vert \boldsymbol{U}\boldsymbol{V}^{*}\right\Vert _{\mathcal{A},\infty}$
and $\left\Vert \boldsymbol{U}\boldsymbol{V}^{*}\right\Vert _{\mathcal{A},2}$.
We have the following lemma. 


\begin{lemma}\label{lemma:mu6}With the incoherence measure $\mu_{1}$,
one can bound 
\begin{align}
\left\Vert \boldsymbol{U}\boldsymbol{V}^{*}\right\Vert _{\mathcal{A},\infty} & \leq\frac{\mu_{1}c_{\mathrm{s}}r}{n_{1}n_{2}},\label{eq:BoundMuInf_mu1}
\end{align}
\begin{equation}
\left\Vert \boldsymbol{U}\boldsymbol{V}^{*}\right\Vert _{\mathcal{A},2}^{2}\leq\frac{\mu_{1}c_{\mathrm{s}}r\log^{2}\left(n_{1}n_{2}\right)}{n_{1}n_{2}},\label{eq:BoundA2_mu1}
\end{equation}
and for any $(\alpha,\beta)\in\left[n_{1}\right]\times\left[n_{2}\right]$,
\begin{equation}
\left\Vert \mathcal{P}_{T}\left(\sqrt{\omega_{\alpha,\beta}}\boldsymbol{A}_{(\alpha,\beta)}\right)\right\Vert _{\mathcal{A},2}^{2}\leq\frac{c_{6}\mu_{1}c_{\mathrm{s}}\log^{2}\left(n_{1}n_{2}\right)r}{n_{1}n_{2}}\label{eq:BoundMu5Mu1}
\end{equation}
for some numerical constant $c_{6}>0$.\end{lemma}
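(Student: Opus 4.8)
The plan is to reduce all three inequalities to estimates on \emph{skew-diagonal averages}. For any matrix $\boldsymbol{M}$ and index $(k,l)$, the quantity $\langle\boldsymbol{A}_{(k,l)},\boldsymbol{M}\rangle/\sqrt{\omega_{k,l}}$ is exactly the arithmetic mean of the entries of $\boldsymbol{M}$ over the skew-diagonal $\Omega_{\mathrm{e}}(k,l)$; write it as $\mathrm{avg}_{(k,l)}(\boldsymbol{M})$. Then by \eqref{eq:DefnAinf}--\eqref{eq:DefnA2} we have $\|\boldsymbol{M}\|_{\mathcal{A},\infty}=\max_{(k,l)}|\mathrm{avg}_{(k,l)}(\boldsymbol{M})|$ and $\|\boldsymbol{M}\|_{\mathcal{A},2}^2=\sum_{(k,l)}|\mathrm{avg}_{(k,l)}(\boldsymbol{M})|^2$. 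Two combinatorial facts about the two-fold Hankel support will be used repeatedly: each $\Omega_{\mathrm{e}}(k,l)$ meets every row and every column of $\boldsymbol{X}_{\mathrm{e}}$ at most once, so it is a partial permutation; and the multiplicities factor as $\omega_{k,l}=\omega_k^{(1)}\omega_l^{(2)}$, whence $\sum_{(k,l)}\omega_{k,l}^{-1}=\big(\sum_k(\omega_k^{(1)})^{-1}\big)\big(\sum_l(\omega_l^{(2)})^{-1}\big)=O(\log^2(n_1n_2))$ as a product of two harmonic sums. This last identity is the source of the $\log^2$ factor in \eqref{eq:BoundA2_mu1} and \eqref{eq:BoundMu5Mu1}.

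For \eqref{eq:BoundMuInf_mu1} I would expand $\mathrm{avg}_{(k,l)}(\boldsymbol{U}\boldsymbol{V}^*)=\omega_{k,l}^{-1}\sum_{(\alpha,\beta)\in\Omega_{\mathrm{e}}(k,l)}\langle\boldsymbol{U}^*\boldsymbol{e}_\alpha,\boldsymbol{V}^*\boldsymbol{e}_\beta\rangle$ and apply Cauchy--Schwarz over the $\omega_{k,l}$ summands of the partial permutation. Since each row (resp. column) is hit once, $\sum_{(\alpha,\beta)}\|\boldsymbol{U}^*\boldsymbol{e}_\alpha\|_2^2=\omega_{k,l}\|\boldsymbol{U}^*\boldsymbol{A}_{(k,l)}\|_{\mathrm{F}}^2$ and $\sum_{(\alpha,\beta)}\|\boldsymbol{V}^*\boldsymbol{e}_\beta\|_2^2=\omega_{k,l}\|\boldsymbol{A}_{(k,l)}\boldsymbol{V}\|_{\mathrm{F}}^2$, so the average is dominated by $\|\boldsymbol{U}^*\boldsymbol{A}_{(k,l)}\|_{\mathrm{F}}\|\boldsymbol{A}_{(k,l)}\boldsymbol{V}\|_{\mathrm{F}}$. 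Invoking the incoherence estimate \eqref{eq:IncoherenceUV_W} of Lemma~\ref{lemma-IncoherenceT_W} for each factor yields the claimed $\mu_1 c_{\mathrm{s}} r/(n_1n_2)$; this is the cleanest of the three.

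For \eqref{eq:BoundA2_mu1} the target is \emph{linear} in $r$ up to a polylogarithmic factor, and the naive route of bounding every $|\mathrm{avg}_{(k,l)}|$ by the uniform constant from \eqref{eq:BoundMuInf_mu1} and summing over the $n_1n_2$ diagonals loses a factor of $r$. The plan is instead to establish a per-diagonal bound that \emph{decays} with the multiplicity, namely $|\mathrm{avg}_{(k,l)}(\boldsymbol{U}\boldsymbol{V}^*)|^2\lesssim \mu_1 c_{\mathrm{s}} r/(n_1n_2\,\omega_{k,l})$, after which summing against $\sum_{(k,l)}\omega_{k,l}^{-1}=O(\log^2(n_1n_2))$ gives \eqref{eq:BoundA2_mu1} directly. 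To obtain the $\omega_{k,l}^{-1}$ decay I would pass to the Vandermonde factorization, writing $\boldsymbol{U}\boldsymbol{U}^*=\boldsymbol{E}_{\mathrm{L}}\boldsymbol{G}_{\mathrm{L}}^{-1}\boldsymbol{E}_{\mathrm{L}}^*$ and $\boldsymbol{V}\boldsymbol{V}^*=\boldsymbol{E}_{\mathrm{R}}^*(\boldsymbol{G}_{\mathrm{R}}^\top)^{-1}\boldsymbol{E}_{\mathrm{R}}$, so that the skew-diagonal average of $\boldsymbol{U}\boldsymbol{V}^*$ collapses into a combination of partial geometric sums, i.e. Dirichlet kernels evaluated at the pairwise frequency separations; the incoherence hypothesis $\sigma_{\min}(\boldsymbol{G}_{\mathrm{L}}),\sigma_{\min}(\boldsymbol{G}_{\mathrm{R}})\ge\mu_1^{-1}$ combined with the $1/\omega$-type decay of these kernels then supplies the cancellation that a norm-only Cauchy--Schwarz cannot detect.

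Estimate \eqref{eq:BoundMu5Mu1}, which is exactly the bound on the parameter $\mu_5$ in \eqref{eq:HypothesisMu5} that feeds the $\log^4$ sample complexity, is handled by the same machinery after expanding $\mathcal{P}_T=\mathcal{P}_U+\mathcal{P}_V-\mathcal{P}_U\mathcal{P}_V$ and applying the triangle inequality, reducing it to controlling $\|\boldsymbol{U}\boldsymbol{U}^*\tilde{\boldsymbol{A}}\|_{\mathcal{A},2}$, $\|\tilde{\boldsymbol{A}}\boldsymbol{V}\boldsymbol{V}^*\|_{\mathcal{A},2}$ and $\|\boldsymbol{U}\boldsymbol{U}^*\tilde{\boldsymbol{A}}\boldsymbol{V}\boldsymbol{V}^*\|_{\mathcal{A},2}$, where $\tilde{\boldsymbol{A}}=\sqrt{\omega_{\alpha,\beta}}\boldsymbol{A}_{(\alpha,\beta)}$ is the indicator of a single skew-diagonal. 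I expect the main obstacle to lie precisely in \eqref{eq:BoundA2_mu1} and \eqref{eq:BoundMu5Mu1}: one must extract genuine cancellation from the Dirichlet-kernel structure to keep the dependence on $r$ linear, since a purely norm-based argument --- Cauchy--Schwarz together with the energy identity $\sum_{(k,l)}\omega_{k,l}|\mathrm{avg}_{(k,l)}(\boldsymbol{M})|^2=\|\mathcal{A}(\boldsymbol{M})\|_{\mathrm{F}}^2\le\|\boldsymbol{M}\|_{\mathrm{F}}^2$ --- only yields a bound larger by a factor of order $\sqrt{r}$, which would inflate the final sample complexity beyond the advertised $r\log^4(n_1n_2)$. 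The compounded project-then-average nature of \eqref{eq:BoundMu5Mu1}, with three such averages per diagonal, is additional bookkeeping but introduces no difficulty beyond that of \eqref{eq:BoundA2_mu1}.
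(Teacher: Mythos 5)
Your argument for \eqref{eq:BoundMuInf_mu1} is correct (Cauchy--Schwarz along the partial-permutation support plus \eqref{eq:IncoherenceUV_W}; the paper instead bounds each entry of $\boldsymbol{U}\boldsymbol{V}^{*}$ pointwise by $\mu_{1}c_{\mathrm{s}}r/(n_{1}n_{2})$ via the factorization $\boldsymbol{U}\boldsymbol{V}^{*}=\boldsymbol{E}_{\mathrm{L}}(\boldsymbol{E}_{\mathrm{L}}^{*}\boldsymbol{E}_{\mathrm{L}})^{-1/2}\boldsymbol{B}(\boldsymbol{E}_{\mathrm{R}}\boldsymbol{E}_{\mathrm{R}}^{*})^{-1/2}\boldsymbol{E}_{\mathrm{R}}$ -- both routes work). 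The gap is in your plan for \eqref{eq:BoundA2_mu1} and \eqref{eq:BoundMu5Mu1}: the per-diagonal bound $|\mathrm{avg}_{(k,l)}(\boldsymbol{U}\boldsymbol{V}^{*})|^{2}\lesssim\mu_{1}c_{\mathrm{s}}r/(n_{1}n_{2}\,\omega_{k,l})$ that you want to sum against $\sum_{(k,l)}\omega_{k,l}^{-1}=O(\log^{2}(n_{1}n_{2}))$ is false. Take the 1-D case $n_{2}=1$, $k_{1}\approx n_{1}/2$, frequencies $f_{i}=(i-1)/(k_{1}-1)$ for $1\le i\le r$ and unit amplitudes. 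The separation is about $2/n_{1}$, the Gram matrices are well conditioned so $\mu_{1}=O(1)$, and $\boldsymbol{U}\boldsymbol{V}^{*}\approx\boldsymbol{E}_{\mathrm{L}}\boldsymbol{E}_{\mathrm{R}}$, whose $(\alpha,\beta)$ entry is $\big(k_{1}(n_{1}-k_{1}+1)\big)^{-1/2}\sum_{i}y_{i}^{\alpha+\beta}$. On the central skew-diagonal $\alpha+\beta=k_{1}-1$ every $y_{i}^{k_{1}-1}=1$, so all $\omega_{k,l}\approx n_{1}/2$ entries equal $\approx 2r/n_{1}$ with \emph{no cancellation whatsoever}; the average is $\Theta(r/n_{1})$, whereas your claimed bound would force it to be $O(\sqrt{\mu_{1}r}/n_{1})$. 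The Dirichlet-kernel cancellation you are counting on simply is not there for the diagonal ($i=j$) terms of the double sum, and those terms alone already saturate the entrywise bound.

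What rescues \eqref{eq:BoundA2_mu1} is not decay of individual averages but a combinatorial fact about \emph{where} the short diagonals live, combined with a per-row energy bound. The paper shows $\Vert\boldsymbol{e}_{i}^{\top}\boldsymbol{U}\boldsymbol{V}^{*}\Vert_{\mathrm{F}}^{2}\le\mu_{1}c_{\mathrm{s}}r/(n_{1}n_{2})$ (and the analogous $9\mu_{1}c_{\mathrm{s}}r/(n_{1}n_{2})$ for each of the three pieces of $\mathcal{P}_{T}(\sqrt{\omega_{\alpha,\beta}}\boldsymbol{A}_{(\alpha,\beta)})$, matching your reduction via $\mathcal{P}_{T}=\mathcal{P}_{U}+\mathcal{P}_{V}-\mathcal{P}_{U}\mathcal{P}_{V}$), then splits the matrix into four triangular pieces and groups the skew-diagonals dyadically into sets $\mathcal{W}_{i,j}$ on which $\omega_{k,l}\asymp 2^{i+j}$. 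Within each diagonal the RMS--AM inequality gives $\omega_{k,l}^{-2}|\sum_{(\alpha,\beta)}M_{\alpha,\beta}|^{2}\le\omega_{k,l}^{-1}\sum_{(\alpha,\beta)}|M_{\alpha,\beta}|^{2}$, and -- this is the key point your proposal is missing -- the diagonals in $\mathcal{W}_{i,j}$ are confined to at most $2^{i}\cdot2^{j}$ rows, so the total squared Frobenius mass they can capture is at most $2^{i+j}\cdot 9\mu_{1}c_{\mathrm{s}}r/(n_{1}n_{2})$, cancelling the $2^{-(i+j)}$ gain exactly. Each dyadic block then contributes $O(\mu_{1}c_{\mathrm{s}}r/(n_{1}n_{2}))$ and the $\log^{2}(n_{1}n_{2})$ factor is the number of blocks. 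No harmonic-analytic cancellation is needed anywhere; replacing your per-diagonal step by this row-energy/dyadic-pigeonhole argument repairs both \eqref{eq:BoundA2_mu1} and \eqref{eq:BoundMu5Mu1}.
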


\begin{proof}See Appendix \ref{sec:Proof-of-Lemma:mu6}.\end{proof}

In particular, the bound (\ref{eq:BoundMu5Mu1}) translates into 
\[
\mu_{5}\leq c_{6}\mu_{1}c_{\mathrm{s}}\log^{2}\left(n_{1}n_{2}\right).
\]
Substituting (\ref{eq:BoundMuInf_mu1}) and (\ref{eq:BoundA2_mu1})
into (\ref{eq:PFi_bound}) gives 
\begin{align*}
 & \left\Vert \mathcal{P}_{T^{\perp}}\left(\frac{n_{1}n_{2}}{m}\mathcal{A}_{\Omega_{l}}+\mathcal{A}^{\perp}\right)\mathcal{P}_{T}\left(\boldsymbol{F}_{l-1}\right)\right\Vert \\
 & \quad\leq\left(\frac{1}{2}\right)^{l-1}\left(\sqrt{\frac{\mu_{1}c_{\mathrm{s}}r\log^{2}\left(n_{1}n_{2}\right)}{qn_{1}n_{2}}}+\frac{\mu_{1}c_{\mathrm{s}}r\log\left(n_{1}n_{2}\right)}{qn_{1}n_{2}}\right)\\
 & \quad\ll\frac{1}{2}\cdot\left(\frac{1}{2}\right)^{l},
\end{align*}
as soon as $m>c_{7}\max\left\{ \mu_{1}c_{\mathrm{s}}\log^{2}\left(n_{1}n_{2}\right),\mu_{5}\log^{2}\left(n_{1}n_{2}\right)\right\} r$
or $m>\tilde{c}_{7}\mu_{1}c_{\mathrm{s}}\log^{4}\left(n_{1}n_{2}\right)$
for some sufficiently large constants $c_{7},\tilde{c}_{7}>0$, indicating
that 
\begin{align}
\left\Vert \mathcal{P}_{T^{\perp}}\left(\boldsymbol{W}\right)\right\Vert  & \leq\sum_{l=1}^{j_{0}}\left\Vert \mathcal{P}_{T^{\perp}}\left(\frac{1}{q}\mathcal{A}_{\Omega_{l}}+\mathcal{A}^{\perp}\right)\mathcal{P}_{T}\left(\boldsymbol{F}_{l-1}\right)\right\Vert \nonumber \\
 & \leq\frac{1}{2}\cdot\sum_{l=1}^{\infty}\left(\frac{1}{2}\right)^{l}\leq\frac{1}{2}
\end{align}
as required. So far, we have successfully verified that with high
probability, $\boldsymbol{W}$ is a valid dual certificate, and hence
by Lemma \ref{lemma-Dual-Certificate} the solution to EMaC is exact
and unique.

\section{Proof of Theorem~\ref{theorem-EMaC-Robust}\label{sec:Main-Proof-Robust}}

The algorithm Robust-EMaC is inspired by the well-known robust principal
component analysis \cite{CanLiMaWri09,li2011compressed} that seeks
a decomposition of low-rank plus sparse matrices, except that we impose
multi-fold Hankel structures on both the low-rank and sparse matrices.
Following similar spirit as to the proof of Theorem \ref{theorem-EMaC-noiseless},
the proof here is based on duality analysis, and relies on the golfing
scheme \cite{Gross2011recovering} to construct a valid dual certificate.

In this section, we prove the results for a slightly different sampling
model as follows. 
\begin{itemize}
\item The location multi-set $\Omega^{\text{clean}}$ of observed uncorrupted
entries is generated by sampling $\left(1-\tau\right)\rho n_{1}n_{2}$
i.i.d. entries uniformly at random. 
\item The location multi-set $\Omega$ of observed entries is generated
by sampling $\rho n_{1}n_{2}$ i.i.d. entries uniformly at random,
with the first $\left(1-\tau\right)\rho n_{1}n_{2}$ entries coming
from $\Omega^{\text{clean}}$. 
\item The location set $\Omega^{\text{dirty}}$ of observed corrupted entries
is given by $\Omega'\backslash\Omega^{\text{clean}'}$, where $\Omega'$
and $\Omega^{\text{clean}'}$ denote the sets of distinct entry locations
in $\Omega$ and $\Omega^{\text{clean}}$, respectively. 
\end{itemize}
As mentioned in the proof of Theorem \ref{theorem-EMaC-noiseless},
this slightly different sampling model, while resulting in the same
order-wise bounds, significantly simplifies the analysis due to the
independence assumptions.

We will prove Theorem~\ref{theorem-EMaC-Robust} under an additional
random sign condition, that is, the signs of all non-zero entries
of $\boldsymbol{S}$ are \emph{independent} \emph{zero-mean} random
variables. Specifically, we will prove the following theorem.

\begin{theorem}[Random Sign]\label{theorem-EMaC-Robust-V2}Suppose
that $\boldsymbol{X}$ obeys the incoherence condition with parameter
$\mu_{1}$, and let $\lambda=\frac{1}{\sqrt{m\log\left(n_{1}n_{2}\right)}}$.
Assume that $\tau\leq0.2$ is some small positive constant, and that
the signs of nonzero entries of $\boldsymbol{S}$ are independently
generated with zero mean. If 
\[
m>c_{0}\mu_{1}^{2}c_{\mathrm{s}}^{2}r^{2}\log^{3}\left(n_{1}n_{2}\right),
\]
then Robust-EMaC succeeds in recovering $\boldsymbol{X}$ with probability
exceeding $1-(n_{1}n_{2})^{-2}$. \end{theorem}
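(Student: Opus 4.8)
The plan is to mirror the duality-plus-golfing strategy of the noiseless case (Theorem~\ref{theorem-EMaC-noiseless}), but now to certify optimality of the pair $(\boldsymbol{X},\boldsymbol{S})$ simultaneously for the low-rank and the sparse components. First I would record the optimality conditions. Writing $\Gamma$ for the support of $\boldsymbol{S}_{\mathrm{e}}$ in the enhanced domain (the copies of $\Omega^{\mathrm{dirty}}$ under the multi-fold Hankel map), the subdifferential of $\|\cdot\|_*$ at $\boldsymbol{X}_{\mathrm{e}}$ consists of $\boldsymbol{U}\boldsymbol{V}^*$ plus a term in $T^\perp$ of spectral norm at most one, while the subdifferential of $\lambda\|\cdot\|_1$ at $\boldsymbol{S}_{\mathrm{e}}$ consists of $\lambda\,\mathrm{sgn}(\boldsymbol{S}_{\mathrm{e}})$ plus a term supported off $\Gamma$ of $\ell_\infty$ norm at most $\lambda$. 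Since both components share the single equality constraint $\mathcal{P}_\Omega(\boldsymbol{M}+\hat{\boldsymbol{S}})=\mathcal{P}_\Omega(\boldsymbol{X}+\boldsymbol{S})$, a single dual matrix $\boldsymbol{W}$, supported on the observed enhanced locations, must certify both. I would state a robust analogue of Lemma~\ref{lemma-Dual-Certificate} asserting that $(\boldsymbol{X},\boldsymbol{S})$ is the unique optimum provided (a) the sampling operator restricted to the \emph{clean} set is well conditioned on $T$, (b) $T$ and the sparse support are sufficiently transversal, i.e.\ $\|\mathcal{P}_T\mathcal{A}_{\mathrm{dirty}}\mathcal{P}_T\|$ is small, and (c) there exists such a $\boldsymbol{W}$ with $\|\mathcal{P}_T(\boldsymbol{W})-\boldsymbol{U}\boldsymbol{V}^*\|_{\mathrm F}$ tiny, $\|\mathcal{P}_{T^\perp}(\boldsymbol{W})\|<\tfrac12$, $\boldsymbol{W}$ matching $\lambda\,\mathrm{sgn}(\boldsymbol{S}_{\mathrm{e}})$ on $\Gamma$, and $\|\mathcal{A}_{\mathrm{clean}}(\boldsymbol{W})\|_{\mathcal{A},\infty}<\lambda$ off $\Gamma$.

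The second step is to verify the conditioning hypotheses. Condition (a) is exactly Lemma~\ref{lemma-Invertibility-PtWPt} applied to $\Omega^{\mathrm{clean}}$, which carries $(1-\tau)\rho n_1 n_2$ i.i.d.\ samples and hence satisfies the same injectivity bound once $m\gtrsim\mu_1 c_{\mathrm s}r\log(n_1 n_2)$. For the transversality condition (b) I would combine the incoherence bound $\|\mathcal{P}_T(\boldsymbol{A}_{(k,l)})\|_{\mathrm F}^2\le 2\mu_1 c_{\mathrm s}r/(n_1 n_2)$ of Lemma~\ref{lemma-IncoherenceT_W} with a matrix Bernstein argument over the dirty locations; since $\Gamma$ corresponds to only a $\tau$-fraction of the observed entries and $\tau\le 0.2$, the operator norm $\|\mathcal{P}_T\mathcal{A}_{\mathrm{dirty}}\mathcal{P}_T\|$ is provably bounded by a small constant, which makes the low-rank and sparse directions identifiable.

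The heart of the proof is the construction of $\boldsymbol{W}=\boldsymbol{W}^{\mathrm L}+\boldsymbol{W}^{\mathrm S}$. The low-rank piece $\boldsymbol{W}^{\mathrm L}$ is built by running the golfing scheme of Section~\ref{sub:Dual-Certificate-Noiseless} on the \emph{clean} subsamples only, so that $\boldsymbol{W}^{\mathrm L}$ is supported off $\Gamma$, $\mathcal{P}_T(\boldsymbol{W}^{\mathrm L})\approx\boldsymbol{U}\boldsymbol{V}^*$, and $\|\mathcal{P}_{T^\perp}(\boldsymbol{W}^{\mathrm L})\|$ is small; the bounds of Lemmas~\ref{lemma:OpNorm_Anorm}--\ref{lemma:mu6} apply essentially verbatim with $q$ replaced by the clean undersampling factor. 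The sparse piece $\boldsymbol{W}^{\mathrm S}$ is supported on $\Gamma$ and chosen to realize the value $\lambda\,\mathrm{sgn}(\boldsymbol{S}_{\mathrm{e}})$ there up to a small tangent-space correction; using transversality (b) it can be obtained by a convergent Neumann-series (least-squares) construction supported on $\Gamma$ that keeps both $\|\mathcal{P}_T(\boldsymbol{W}^{\mathrm S})\|_{\mathrm F}$ and $\|\mathcal{P}_{T^\perp}(\boldsymbol{W}^{\mathrm S})\|$ under control. I would then verify all four certificate conditions for the sum, the only genuinely coupled quantities being $\|\mathcal{P}_{T^\perp}(\boldsymbol{W}^{\mathrm S})\|$ and the $\ell_\infty$ value that $\boldsymbol{W}^{\mathrm L}$ leaves on $\Gamma$.

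The main obstacle is this coupling together with the role of the random signs. Bounding $\|\mathcal{P}_{T^\perp}(\boldsymbol{W}^{\mathrm S})\|$ and $\|\mathcal{P}_T(\mathrm{sgn}(\boldsymbol{S}_{\mathrm{e}}))\|$ requires a matrix Bernstein estimate on the structured random matrix $\mathrm{sgn}(\boldsymbol{S}_{\mathrm{e}})$, and it is precisely the independent zero-mean sign assumption of Theorem~\ref{theorem-EMaC-Robust-V2} that makes these quantities concentrate. Tracking these interactions through the $\|\cdot\|_{\mathcal{A},2}$ and $\|\cdot\|_{\mathcal{A},\infty}$ norms introduces an extra factor of $\mu_1 c_{\mathrm s}r$ relative to the noiseless analysis, which is the origin of the quadratic dependence $r^2$ and the $\log^3(n_1 n_2)$ sample complexity. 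I expect the bookkeeping for the multi-fold Hankel basis, where the weights $\omega_{k,l}$ vary and each $\boldsymbol{A}_{(k,l)}$ overlaps its neighbours, to be the most delicate part, though all the needed ingredients are already supplied by Lemmas~\ref{lemma-IncoherenceT_W}--\ref{lemma:mu6}. Once Theorem~\ref{theorem-EMaC-Robust-V2} is established, the arbitrary-sign statement Theorem~\ref{theorem-EMaC-Robust} follows by a standard de-randomization argument, which trades an arbitrary sign pattern on a $\tau$-fraction for random signs on a $2\tau$-fraction, accounting for the drop from $\tau\le 0.2$ to $\tau\le 0.1$.
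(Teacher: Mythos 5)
Your overall strategy---dual certification, golfing on the clean subsamples, matrix Bernstein estimates that exploit the random signs of $\boldsymbol{S}$, and a final derandomization to pass to arbitrary signs---matches the paper, and you correctly identify the quantities that drive the $r^{2}\log^{3}$ complexity (namely $\left\Vert \mathcal{P}_{T}\left(\mathrm{sgn}\left(\boldsymbol{S}_{\mathrm{e}}\right)\right)\right\Vert _{\mathcal{A},\infty}$ and $\left\Vert \mathrm{sgn}\left(\boldsymbol{S}_{\mathrm{e}}\right)\right\Vert $, handled in the paper by Lemmas \ref{lemma-Psparse} and \ref{lemma-PTperp_sparse}). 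The gap is in your construction of the sparse piece $\boldsymbol{W}^{\mathrm{S}}$. You want $\boldsymbol{W}^{\mathrm{S}}$ to be supported on $\Gamma$, to realize $\lambda\,\mathrm{sgn}(\boldsymbol{S}_{\mathrm{e}})$ there, and to have controlled tangent-space component via a Neumann-series/least-squares correction. These three requirements are in tension. The $\ell_{1}$ subgradient condition demands that the total certificate equal $\lambda\,\mathrm{sgn}(\boldsymbol{S}_{\mathrm{e}})$ \emph{exactly} on the support of $\boldsymbol{S}_{\mathrm{e}}$; since $\boldsymbol{W}^{\mathrm{L}}$ vanishes on $\Gamma$, this forces $\mathcal{A}'_{\Omega^{\mathrm{dirty}}}(\boldsymbol{W}^{\mathrm{S}})=\lambda\,\mathrm{sgn}(\boldsymbol{S}_{\mathrm{e}})$ with no room for a correction supported on $\Gamma$. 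The standard RPCA escape---taking $\boldsymbol{W}^{\mathrm{S}}=\lambda(\mathcal{I}-\mathcal{P}_{T})(\mathcal{A}'_{\Gamma}-\mathcal{A}'_{\Gamma}\mathcal{P}_{T}\mathcal{A}'_{\Gamma})^{-1}\mathrm{sgn}(\boldsymbol{S}_{\mathrm{e}})$ so that $\mathcal{P}_{T}(\boldsymbol{W}^{\mathrm{S}})=0$---makes $\boldsymbol{W}^{\mathrm{S}}$ leak off $\Gamma$ (indeed off $\Omega$ entirely), which is fatal here because the dual variable of the constraint $\mathcal{P}_{\Omega}(\boldsymbol{M}+\hat{\boldsymbol{S}})=\mathcal{P}_{\Omega}(\boldsymbol{X}+\boldsymbol{S})$ must live in the span of $\{\boldsymbol{A}_{(k,l)}:(k,l)\in\Omega\}$ (plus $\mathcal{A}^{\perp}$). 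So either the Neumann series collapses to nothing, or it violates the support/exactness requirements, and your transversality condition (b) on $\left\Vert \mathcal{P}_{T}\mathcal{A}_{\mathrm{dirty}}\mathcal{P}_{T}\right\Vert $ ends up doing no work.

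The paper sidesteps this entirely: in Lemma \ref{lemma-Dual-Robust} the certified object is $\boldsymbol{W}+\lambda\,\mathrm{sgn}(\boldsymbol{S}_{\mathrm{e}})$ with $\boldsymbol{W}$ supported on $\Omega^{\mathrm{clean}}$, so the sparse piece is taken to be $\lambda\,\mathrm{sgn}(\boldsymbol{S}_{\mathrm{e}})$ \emph{verbatim} (exactness on $\Gamma$ is automatic), and its tangent-space leakage is cancelled by seeding the golfing recursion with $\boldsymbol{F}_{0}=\mathcal{P}_{T}\left(\boldsymbol{U}\boldsymbol{V}^{*}-\lambda\,\mathrm{sgn}\left(\boldsymbol{S}_{\mathrm{e}}\right)\right)$ rather than $\boldsymbol{U}\boldsymbol{V}^{*}$. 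All that is then needed beyond the noiseless machinery is that $\lambda\left\Vert \mathcal{P}_{T}\left(\mathrm{sgn}\left(\boldsymbol{S}_{\mathrm{e}}\right)\right)\right\Vert _{\mathcal{A},\infty}$ is of the same order as $\left\Vert \boldsymbol{U}\boldsymbol{V}^{*}\right\Vert _{\mathcal{A},\infty}$ (Lemma \ref{lemma-Psparse}, which is where the random signs and the extra factor of $\mu_{1}c_{\mathrm{s}}r$ enter) and that $\lambda\left\Vert \mathrm{sgn}\left(\boldsymbol{S}_{\mathrm{e}}\right)\right\Vert \leq 1/8$ (Lemma \ref{lemma-PTperp_sparse}). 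If you replace your two-piece Neumann-series construction with this modified initialization, the rest of your outline goes through as you describe.
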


In fact, a simple derandomization argument introduced in \cite[Section 2.2]{CanLiMaWri09}
immediately suggests that the performance of Robust-EMaC under the
fixed-sign pattern is no worse than that under the random-sign pattern
with sparsity parameter $2\tau$, i.e. the condition on the signs
pattern of $\boldsymbol{S}$ is unnecessary and Theorem \ref{theorem-EMaC-Robust}
follows after we establish Theorem~\ref{theorem-EMaC-Robust-V2}.
As a result, the section will focus on Theorem \ref{theorem-EMaC-Robust}
with random sign patterns, which are much easier to analyze.

\subsection{Dual Certification}

We adopt similar notations as in Section \ref{sub:Duality-Noiseless}.
That said, if we generate $\rho n_{1}n_{2}$ i.i.d. entry locations
$\boldsymbol{a}_{i}$'s uniformly at random, and let the multi-sets
$\Omega$ and $\Omega^{\text{clean}}$ contain respectively $\{\boldsymbol{a}_{i}|1\leq i\leq\rho n_{1}n_{2}\}$
and $\{\boldsymbol{a}_{i}|1\leq i\leq\rho(1-\tau)n_{1}n_{2}\}$),
then 
\[
\mathcal{A}_{\Omega}:=\sum_{i=1}^{\rho n_{1}n_{2}}\mathcal{A}_{\boldsymbol{a}_{i}},\quad\text{and}\quad\mathcal{A}_{\Omega^{\text{clean}}}:=\sum_{i=1}^{\rho\left(1-\tau\right)n_{1}n_{2}}\mathcal{A}_{\boldsymbol{a}_{i}},
\]
corresponding to sampling with replacement. Besides, $\mathcal{A}'_{\Omega}$
(resp. $\mathcal{A}'_{\Omega^{\text{clean}}}$) is defined similar
to $\mathcal{A}_{\Omega}$ (resp. $\mathcal{A}_{\Omega^{\text{clean}}}$),
but with the sum extending only over \emph{distinct} samples.

We will establish that exact recovery can be guaranteed, if we can
produce a valid dual certificate as follows.

\begin{lemma}\label{lemma-Dual-Robust}Suppose that $\tau$ is some
small positive constant. Suppose that the associated sampling operator
$\mathcal{A}_{\Omega^{\mathrm{clean}}}$ obeys 
\begin{equation}
\left\Vert \mathcal{P}_{T}\mathcal{A}\mathcal{P}_{T}-\frac{1}{\rho\left(1-\tau\right)}\mathcal{P}_{T}\mathcal{A}_{\Omega^{\mathrm{clean}}}\mathcal{P}_{T}\right\Vert \leq\frac{1}{2},\label{eq:InvertibilityPtAcleanPt}
\end{equation}
and 
\begin{equation}
\left\Vert \mathcal{A}_{\Omega^{\mathrm{clean}}}\left(\boldsymbol{M}\right)\right\Vert _{\mathrm{F}}\leq10\log\left(n_{1}n_{2}\right)\left\Vert \mathcal{A}'_{\Omega^{\mathrm{clean}}}\left(\boldsymbol{M}\right)\right\Vert _{\mathrm{F}},\label{eq:ConnectionSamplingReplacement}
\end{equation}
for any matrix $\boldsymbol{M}$. If there exist a regularization
parameter $\lambda$ $\left(0<\lambda<1\right)$ and a matrix $\boldsymbol{W}$
obeying 
\begin{equation}
\begin{cases}
\left\Vert \mathcal{P}_{T}\left(\boldsymbol{W}+\lambda\mathrm{sgn}\left(\boldsymbol{S}_{\mathrm{e}}\right)-\boldsymbol{U}\boldsymbol{V}^{*}\right)\right\Vert _{\mathrm{F}}\leq\frac{\lambda}{n_{1}^{2}n_{2}^{2}},\\
\left\Vert \mathcal{P}_{T^{\perp}}\left(\boldsymbol{W}+\lambda\mathrm{sgn}\left(\boldsymbol{S}_{\mathrm{e}}\right)\right)\right\Vert \leq\frac{1}{4},\\
\mathcal{A}'_{\left(\Omega^{\mathrm{clean}}\right)^{\perp}}\left(\boldsymbol{W}\right)=0,\\
\left\Vert \mathcal{A}'_{\Omega^{\mathrm{clean}}}\left(\boldsymbol{W}\right)\right\Vert _{\infty}\leq\frac{\lambda}{4},
\end{cases}\label{eq:DualProperties-Robust}
\end{equation}
then Robust-EMaC is exact, i.e. the minimizer $\left(\hat{\boldsymbol{M}},\hat{\boldsymbol{S}}\right)$
satisfies $\hat{\boldsymbol{M}}=\boldsymbol{X}$.\end{lemma}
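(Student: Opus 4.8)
The plan is to establish sufficiency of the stated dual certificate through a perturbation argument in the spirit of robust PCA \cite{CanLiMaWri09}, adapted to the multi-fold Hankel setting. First I would take an arbitrary feasible pair $(\hat{\boldsymbol{M}},\hat{\boldsymbol{S}})$ (reducing at once to the case where $\hat{\boldsymbol{S}}$ vanishes off $\Omega$, since this only lowers the objective) and record the two induced perturbations in the enhanced domain, $\boldsymbol{G}_{\mathrm{e}}:=\hat{\boldsymbol{M}}_{\mathrm{e}}-\boldsymbol{X}_{\mathrm{e}}$ and $\boldsymbol{R}_{\mathrm{e}}:=\hat{\boldsymbol{S}}_{\mathrm{e}}-\boldsymbol{S}_{\mathrm{e}}$, both of which inherit the Hankel structure and are therefore fixed by $\mathcal{A}$. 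Feasibility reads $\mathcal{A}'_{\Omega}(\boldsymbol{G}_{\mathrm{e}}+\boldsymbol{R}_{\mathrm{e}})=0$; since $\boldsymbol{G}_{\mathrm{e}}+\boldsymbol{R}_{\mathrm{e}}$ is Hankel, hence constant along each orbit, a vanishing $\mathcal{A}'_{\Omega}$-average forces it to vanish \emph{entrywise} on every enhanced copy of $\Omega$. Because $\boldsymbol{W}$ is supported on (the enhanced footprint of) $\Omega^{\mathrm{clean}}\subseteq\Omega$ by the support condition $\mathcal{A}'_{(\Omega^{\mathrm{clean}})^{\perp}}(\boldsymbol{W})=0$, and $\mathrm{sgn}(\boldsymbol{S}_{\mathrm{e}})$ is supported on $\Omega^{\mathrm{dirty}}\subseteq\Omega$, this yields the two orthogonality relations $\langle\boldsymbol{W},\boldsymbol{G}_{\mathrm{e}}+\boldsymbol{R}_{\mathrm{e}}\rangle=0$ and $\langle\mathrm{sgn}(\boldsymbol{S}_{\mathrm{e}}),\boldsymbol{G}_{\mathrm{e}}+\boldsymbol{R}_{\mathrm{e}}\rangle=0$, which are the engine of the whole argument.

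Next I would apply the subgradient inequalities for $\|\cdot\|_{*}$ at $\boldsymbol{X}_{\mathrm{e}}$ and for $\lambda\|\cdot\|_{1}$ at $\boldsymbol{S}_{\mathrm{e}}$, selecting the extremal subgradient elements so as to extract the two firm terms $\|\mathcal{P}_{T^{\perp}}\boldsymbol{G}_{\mathrm{e}}\|_{*}$ and $\lambda\|\mathcal{P}_{S^{\perp}}\boldsymbol{R}_{\mathrm{e}}\|_{1}$, where $\mathcal{P}_{S^{\perp}}$ denotes the coordinate projection onto the complement of $\mathrm{supp}(\boldsymbol{S}_{\mathrm{e}})$. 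The objective increase is then bounded below by these two terms plus the linear part $\langle\boldsymbol{U}\boldsymbol{V}^{*},\boldsymbol{G}_{\mathrm{e}}\rangle+\lambda\langle\mathrm{sgn}(\boldsymbol{S}_{\mathrm{e}}),\boldsymbol{R}_{\mathrm{e}}\rangle$. Using the two orthogonality relations I rewrite this linear part as $\langle\boldsymbol{U}\boldsymbol{V}^{*}-\boldsymbol{W}-\lambda\mathrm{sgn}(\boldsymbol{S}_{\mathrm{e}}),\boldsymbol{G}_{\mathrm{e}}\rangle-\langle\boldsymbol{W},\boldsymbol{R}_{\mathrm{e}}\rangle$, and split the first inner product across $T$ and $T^{\perp}$. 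Its $T$-part is controlled by $\tfrac{\lambda}{n_{1}^{2}n_{2}^{2}}\|\mathcal{P}_{T}\boldsymbol{G}_{\mathrm{e}}\|_{\mathrm{F}}$ (first line of \eqref{eq:DualProperties-Robust}), its $T^{\perp}$-part by $\tfrac14\|\mathcal{P}_{T^{\perp}}\boldsymbol{G}_{\mathrm{e}}\|_{*}$ (second line), while $|\langle\boldsymbol{W},\boldsymbol{R}_{\mathrm{e}}\rangle|\le\tfrac{\lambda}{4}\|\mathcal{P}_{S^{\perp}}\boldsymbol{R}_{\mathrm{e}}\|_{1}$ follows from the $\ell_{\infty}$ bound $\|\mathcal{A}'_{\Omega^{\mathrm{clean}}}(\boldsymbol{W})\|_{\infty}\le\lambda/4$ together with the fact that $\boldsymbol{W}$ lives on $\Omega^{\mathrm{clean}}$, which is disjoint from $\mathrm{supp}(\boldsymbol{S}_{\mathrm{e}})=\Omega^{\mathrm{dirty}}$. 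Collecting terms gives
\[
\text{(objective increase)}\ \geq\ \tfrac{3}{4}\,\|\mathcal{P}_{T^{\perp}}\boldsymbol{G}_{\mathrm{e}}\|_{*}+\tfrac{3\lambda}{4}\,\|\mathcal{P}_{S^{\perp}}\boldsymbol{R}_{\mathrm{e}}\|_{1}-\tfrac{\lambda}{n_{1}^{2}n_{2}^{2}}\,\|\mathcal{P}_{T}\boldsymbol{G}_{\mathrm{e}}\|_{\mathrm{F}}.
\]

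The decisive step is to show the single negative term is negligible, i.e. to bound $\|\mathcal{P}_{T}\boldsymbol{G}_{\mathrm{e}}\|_{\mathrm{F}}$ by the two positive terms. Here I would exploit feasibility restricted to the clean samples, $\mathcal{A}'_{\Omega^{\mathrm{clean}}}\mathcal{P}_{T}\boldsymbol{G}_{\mathrm{e}}=-\mathcal{A}'_{\Omega^{\mathrm{clean}}}\mathcal{P}_{T^{\perp}}\boldsymbol{G}_{\mathrm{e}}-\mathcal{A}'_{\Omega^{\mathrm{clean}}}\boldsymbol{R}_{\mathrm{e}}$, noting that on $\Omega^{\mathrm{clean}}$ one has $\boldsymbol{S}_{\mathrm{e}}=0$, so $\mathcal{A}'_{\Omega^{\mathrm{clean}}}\boldsymbol{R}_{\mathrm{e}}$ is part of $\mathcal{P}_{S^{\perp}}\boldsymbol{R}_{\mathrm{e}}$. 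The injectivity hypothesis \eqref{eq:InvertibilityPtAcleanPt} forces $\tfrac{1}{\rho(1-\tau)}\mathcal{P}_{T}\mathcal{A}_{\Omega^{\mathrm{clean}}}\mathcal{P}_{T}\succeq\mathcal{P}_{T}\mathcal{A}\mathcal{P}_{T}-\tfrac12\mathcal{I}_{T}$ on $T$, which lower-bounds $\|\mathcal{A}_{\Omega^{\mathrm{clean}}}\mathcal{P}_{T}\boldsymbol{G}_{\mathrm{e}}\|_{\mathrm{F}}$ in terms of $\|\mathcal{P}_{T}\boldsymbol{G}_{\mathrm{e}}\|_{\mathrm{F}}$, and the sampling-comparison hypothesis \eqref{eq:ConnectionSamplingReplacement} converts the with-replacement operator $\mathcal{A}_{\Omega^{\mathrm{clean}}}$ into the distinct-sample operator $\mathcal{A}'_{\Omega^{\mathrm{clean}}}$ at the cost of a $\log(n_{1}n_{2})$ factor. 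Combining these, $\|\mathcal{P}_{T}\boldsymbol{G}_{\mathrm{e}}\|_{\mathrm{F}}$ is at most a fixed polynomial in $(n_{1},n_{2})$ times $\|\mathcal{P}_{T^{\perp}}\boldsymbol{G}_{\mathrm{e}}\|_{\mathrm{F}}+\|\mathcal{P}_{S^{\perp}}\boldsymbol{R}_{\mathrm{e}}\|_{\mathrm{F}}$; since the offending coefficient $\lambda/(n_{1}^{2}n_{2}^{2})$ with $\lambda=1/\sqrt{m\log(n_{1}n_{2})}$ is far smaller than any such blow-up, the negative term is absorbed, the increase is strictly positive unless $\mathcal{P}_{T^{\perp}}\boldsymbol{G}_{\mathrm{e}}=0$ and $\mathcal{P}_{S^{\perp}}\boldsymbol{R}_{\mathrm{e}}=0$, and feeding these back into the injectivity bound forces $\mathcal{P}_{T}\boldsymbol{G}_{\mathrm{e}}=0$, hence $\boldsymbol{G}_{\mathrm{e}}=0$ and $\hat{\boldsymbol{M}}=\boldsymbol{X}$.

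I expect the main obstacle to be exactly the \emph{separation} of the low-rank perturbation $\boldsymbol{G}_{\mathrm{e}}$ from the sparse perturbation $\boldsymbol{R}_{\mathrm{e}}$: since they are only constrained to cancel on the observed support, one must rule out that a nonzero tangent-space component of $\boldsymbol{G}_{\mathrm{e}}$ hides behind the outliers on $\Omega^{\mathrm{dirty}}$, which is precisely why the certificate must be anchored on the \emph{clean} locations and why the clean-set injectivity \eqref{eq:InvertibilityPtAcleanPt} together with the with/without-replacement equivalence \eqref{eq:ConnectionSamplingReplacement} are indispensable. A secondary technical nuisance is that $\mathcal{P}_{T}\boldsymbol{G}_{\mathrm{e}}$ need not be Hankel, so care is required when moving between the enhanced-matrix inner product and the averaging operators $\boldsymbol{A}_{(k,l)}$ when invoking injectivity.
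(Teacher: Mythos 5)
Your proposal is correct and follows essentially the same route as the paper's proof: subgradient inequalities for the nuclear and $\ell_{1}$ norms, the four certificate properties to absorb the cross terms into $\tfrac{3}{4}\|\mathcal{P}_{T^{\perp}}(\cdot)\|_{*}+\tfrac{3\lambda}{4}\|\cdot\|_{1}-\tfrac{\lambda}{n_{1}^{2}n_{2}^{2}}\|\mathcal{P}_{T}(\cdot)\|_{\mathrm{F}}$, and then the clean-sample injectivity \eqref{eq:InvertibilityPtAcleanPt} together with \eqref{eq:ConnectionSamplingReplacement} to dominate the tangent-space term. The only differences are notational (you track $\mathcal{P}_{S^{\perp}}\boldsymbol{R}_{\mathrm{e}}$ where the paper tracks $\mathcal{A}'_{\Omega^{\mathrm{clean}}}(\boldsymbol{H}_{\mathrm{e}})$, the two being equal up to sign on the clean locations) and your final step, which feeds $\mathcal{P}_{T^{\perp}}\boldsymbol{G}_{\mathrm{e}}=0$ and the vanishing clean-sample restriction directly back into the injectivity bound rather than running the paper's concluding two-case analysis — a slight streamlining of the same argument.
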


\begin{proof} See Appendix \ref{sec:Proof-of-Lemma-Dual-Robust}.\end{proof}

We note that a reasonably tight bound on $\left\Vert \mathcal{P}_{T}\mathcal{A}\mathcal{P}_{T}-\frac{1}{\rho\left(1-\tau\right)}\mathcal{P}_{T}\mathcal{A}_{\Omega^{\text{clean}}}\mathcal{P}_{T}\right\Vert $
has been developed by Lemma \ref{lemma-Invertibility-PtWPt}. Specifically,
there exists some constant $c_{1}>0$ such that if $\rho\left(1-\tau\right)n_{1}n_{2}>c_{1}\mu_{1}c_{\text{s}}r\log\left(n_{1}n_{2}\right)$,
then one has 
\[
\left\Vert \mathcal{P}_{T}\mathcal{A}\mathcal{P}_{T}-\frac{1}{\rho\left(1-\tau\right)}\mathcal{P}_{T}\mathcal{A}_{\Omega^{\text{clean}}}\mathcal{P}_{T}\right\Vert \leq\frac{1}{2}
\]
with probability exceeding $1-\left(n_{1}n_{2}\right)^{-4}$. Besides,
Chernoff bound \cite{Alon2008} indicates that with probability exceeding
$1-\left(n_{1}n_{2}\right)^{-3}$, none of the entries is sampled
more than $10\log\left(n_{1}n_{2}\right)$ times. Equivalently, 
\begin{align*}
 & \mathbb{P}\left(\forall\boldsymbol{M}:\left\Vert \mathcal{A}_{\Omega^{\text{clean}}}\left(\boldsymbol{M}\right)\right\Vert _{\mathrm{F}}\leq10\log\left(n_{1}n_{2}\right)\left\Vert \mathcal{A}'_{\Omega^{\text{clean}}}\left(\boldsymbol{M}\right)\right\Vert _{\mathrm{F}}\right)\\
 & \quad\geq1-(n_{1}n_{2})^{-3}.
\end{align*}
Our objective in the remainder of this section is to produce a dual
matrix $\boldsymbol{W}$ satisfying Condition (\ref{eq:DualProperties-Robust}).

\subsection{Construction of Dual Certificate}

Suppose that we generate $j_{0}$ independent random location multi-sets
$\Omega_{j}^{\text{clean}}$, where $\Omega_{j}^{\text{clean}}$ contains
$qn_{1}n_{2}$ i.i.d. samples uniformly at random. Here, we set $q:=\frac{\left(1-\tau\right)\rho}{j_{0}}$
and $\epsilon<\frac{1}{e}$. This way the distribution of the multi-set
$\Omega$ is the same as $\Omega_{1}^{\text{clean}}\cup\Omega_{2}^{\text{clean}}\cup\cdots\cup\Omega_{j_{0}}^{\text{clean}}$.

We now propose constructing a dual certificate $\boldsymbol{W}$ as
follows:

\vspace{10pt}

\begin{tabular}{>{\raggedright}p{0.45\textwidth}}
\hline 
\textbf{Construction of a dual certificate $\boldsymbol{W}$ via the
golfing scheme.}\tabularnewline
\hline 
\noalign{\vskip\doublerulesep} $\quad\quad$1. Set $\boldsymbol{F}_{0}=\mathcal{P}_{T}\left(\boldsymbol{U}\boldsymbol{V}^{*}-\lambda\mbox{sgn}\left(\boldsymbol{S}_{\text{e}}\right)\right)$,
and $j_{0}:=5\log_{\frac{1}{\epsilon}}n_{1}n_{2}$.\tabularnewline
$\quad\quad$2. For every $i$ ($1\leq i\leq j_{0}$), let $\boldsymbol{F}_{i}:=\mathcal{P}_{T}\left(\mathcal{A}-\frac{1}{q}\mathcal{A}_{\Omega_{i}^{\text{clean}}}\right)\mathcal{P}_{T}\left(\boldsymbol{F}_{i-1}\right).$\tabularnewline
$\quad\quad$3. Set $\boldsymbol{W}:=\sum_{j=1}^{j_{0}}\left(\frac{1}{q}\mathcal{A}_{\Omega_{j}^{\text{clean}}}+\mathcal{A}^{\perp}\right)\left(\boldsymbol{F}_{j-1}\right)$.\tabularnewline
\hline 
\end{tabular}

\vspace{10pt}

Take $\lambda=\frac{1}{\sqrt{m\log\left(n_{1}n_{2}\right)}}$. Note
that the construction of $\boldsymbol{W}$ proceeds with a similar
procedure as in Section \ref{sub:Dual-Certificate-Noiseless}, except
that $\boldsymbol{F}_{0}$ and $\Omega_{i}$ are replaced by $\mathcal{P}_{T}\left(\boldsymbol{U}\boldsymbol{V}^{*}-\lambda\mbox{sgn}\left(\boldsymbol{S}_{\text{e}}\right)\right)$
and $\Omega_{i}^{\text{clean}}$, respectively.

We will justify that $\boldsymbol{W}$ is a valid dual certificate,
by examining the conditions in \eqref{eq:DualProperties-Robust} step
by step.

(1) The first condition requires the term $\left\Vert \mathcal{P}_{T}\left(\boldsymbol{W}+\lambda\mbox{sgn}\left(\boldsymbol{S}_{\text{e}}\right)-\boldsymbol{U}\boldsymbol{V}^{*}\right)\right\Vert _{\text{F}}=\left\Vert \mathcal{P}_{T}\left(\boldsymbol{W}-\boldsymbol{F}_{0}\right)\right\Vert _{\text{F}}$
to be reasonably small. Lemma \ref{lemma-Invertibility-PtWPt} asserts
that there exist some constants $c_{1},\tilde{c}_{1}>0$ such that
if $m=\rho n_{1}n_{2}>c_{1}\mu_{1}c_{\text{s}}r\log^{2}\left(n_{1}n_{2}\right)$
or, equivalently, $q_{i}n_{1}n_{2}>\tilde{c}_{1}\mu_{1}c_{\text{s}}r\log^{2}\left(n_{1}n_{2}\right)$,
then 
\begin{align}
\left\Vert \mathcal{P}_{T}\left(\boldsymbol{F}_{j_{0}}\right)\right\Vert _{\text{F}} & \leq\epsilon\left\Vert \mathcal{P}_{T}\left(\boldsymbol{F}_{j_{0}-1}\right)\right\Vert _{\text{F}}\leq\cdots\leq\epsilon^{j_{0}}\left\Vert \mathcal{P}_{T}\left(\boldsymbol{F}_{0}\right)\right\Vert _{\text{F}}\nonumber \\
 & \leq\frac{1}{n_{1}^{5}n_{2}^{5}}\left(\left\Vert \boldsymbol{U}\boldsymbol{V}^{*}\right\Vert _{\text{F}}+\lambda\left\Vert \mbox{sgn}\left(\boldsymbol{S}_{\text{e}}\right)\right\Vert _{\text{F}}\right)\nonumber \\
 & \leq\frac{1}{n_{1}^{5}n_{2}^{5}}\left(\sqrt{r}+\lambda n_{1}n_{2}\right)<\frac{1}{n_{1}^{5}n_{2}^{5}}\left(n_{1}n_{2}+\lambda n_{1}n_{2}\right)\nonumber \\
 & \leq\frac{\lambda}{n_{1}^{2}n_{2}^{2}}\label{eq:PtFbound-Robust}
\end{align}
with probability exceeding $1-(n_{1}n_{2})^{-3}$. Apply the same
argument as for (\ref{eq:PtWUV_bound}) to derive 
\begin{align*}
-\mathcal{P}_{T}\left(\boldsymbol{W}-\boldsymbol{F}_{0}\right) & =\mathcal{P}_{T}\left(\boldsymbol{F}_{j_{0}}\right).
\end{align*}
Plugging this into (\ref{eq:PtFbound-Robust}) establishes that 
\begin{align}
\left\Vert \mathcal{P}_{T}\left(\boldsymbol{W}+\lambda\mbox{sgn}\left(\boldsymbol{S}_{\text{e}}\right)-\boldsymbol{U}\boldsymbol{V}^{*}\right)\right\Vert _{\text{F}} & =\left\Vert \mathcal{P}_{T}\left(\boldsymbol{F}_{j_{0}}\right)\right\Vert _{\text{F}}\nonumber \\
 & \leq\frac{\lambda}{n_{1}^{2}n_{2}^{2}}.
\end{align}

(2) The second condition relies on an upper bound on $\left\Vert \mathcal{P}_{T^{\perp}}\left(\boldsymbol{W}+\lambda\mbox{sgn}\left(\boldsymbol{S}_{\text{e}}\right)\right)\right\Vert $.
To this end, we proceed by controlling $\left\Vert \mathcal{P}_{T^{\perp}}\left(\boldsymbol{W}\right)\right\Vert $
and $\left\Vert \mathcal{P}_{T^{\perp}}\left(\lambda\mbox{sgn}\left(\boldsymbol{S}_{\text{e}}\right)\right)\right\Vert $
separately. Applying the same argument as for (\ref{eq:PFi_bound})
suggests 
\begin{align}
 & \left\Vert \mathcal{P}_{T^{\perp}}\left(\frac{1}{q}\mathcal{A}_{\Omega_{l}}+\mathcal{A}^{\perp}\right)\mathcal{P}_{T}\left(\boldsymbol{F}_{l-1}\right)\right\Vert \nonumber \\
 & \quad\leq\small\left(\frac{1}{2}\right)^{l-1}\left(\sqrt{\frac{\log\left(n_{1}n_{2}\right)}{q}}\cdot\left\Vert \boldsymbol{F}_{0}\right\Vert _{\mathcal{A},2}+\frac{\log\left(n_{1}n_{2}\right)}{q}\left\Vert \boldsymbol{F}_{0}\right\Vert _{\mathcal{A},\infty}\right)\nonumber \\
 & \quad\leq\small\left(\frac{1}{2}\right)^{l-1}\left(\sqrt{\frac{n_{1}n_{2}\log\left(n_{1}n_{2}\right)}{q}}+\frac{\log\left(n_{1}n_{2}\right)}{q}\right)\cdot\left\Vert \boldsymbol{F}_{0}\right\Vert _{\mathcal{A},\infty}\nonumber \\
 & \quad\leq\left(\frac{1}{2}\right)^{l-2}\frac{n_{1}n_{2}\log\left(n_{1}n_{2}\right)}{\sqrt{m}}\left\Vert \boldsymbol{F}_{0}\right\Vert _{\mathcal{A},\infty},\label{eq:UBPtperp_F}
\end{align}
where the second inequality follows since $\left\Vert \boldsymbol{M}\right\Vert _{\mathcal{A},2}\leq\sqrt{n_{1}n_{2}}\left\Vert \boldsymbol{M}\right\Vert _{\mathcal{A},\infty}$,
and the last inequality arises from the fact that 
\[
\frac{\log\left(n_{1}n_{2}\right)}{q}\leq\sqrt{\frac{n_{1}n_{2}\log\left(n_{1}n_{2}\right)}{q}}=\frac{n_{1}n_{2}\log\left(n_{1}n_{2}\right)}{\sqrt{m}}
\]
when $m\gg\log^{2}\left(n_{1}n_{2}\right)$. Note that $\boldsymbol{F}_{0}=\boldsymbol{U}\boldsymbol{V}^{*}-\lambda\mathcal{P}_{T}\left(\text{sgn}\left(\boldsymbol{S}_{\mathrm{e}}\right)\right)$.
Since we have established an upper bound on $\left\Vert \boldsymbol{U}\boldsymbol{V}^{*}\right\Vert _{\mathcal{A},\infty}$
in (\ref{eq:BoundMuInf_mu1}), what remains to be controlled is $\left\Vert \mathcal{P}_{T}\left(\mbox{sgn}\left(\boldsymbol{S}_{\text{e}}\right)\right)\right\Vert _{\mathcal{A},\infty}$.
This is achieved by the following lemma.

\begin{lemma}\label{lemma-Psparse}Suppose that $s$ is a positive
constant. then one has 
\[
\left\Vert \mathcal{P}_{T}\left(\mathrm{sgn}\left(\boldsymbol{S}_{\mathrm{e}}\right)\right)\right\Vert _{\mathcal{A},\infty}\leq c_{9}\frac{\mu_{1}c_{\mathrm{s}}r}{n_{1}n_{2}}\sqrt{m\tau\log\left(n_{1}n_{2}\right)}
\]
for some constant $c_{9}>0$ with probability at least $1-(n_{1}n_{2})^{-4}$.
\end{lemma}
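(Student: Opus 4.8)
The plan is to exploit the random-sign hypothesis on $\boldsymbol{S}$: once we condition on the support $\Omega^{\mathrm{dirty}}$, the inner product $\langle \boldsymbol{A}_{(k,l)},\mathcal{P}_{T}(\mathrm{sgn}(\boldsymbol{S}_{\mathrm{e}}))\rangle$ becomes a sum of \emph{independent zero-mean} terms, which concentrates sharply around $0$. First I would expand the sign matrix in the basis $\{\boldsymbol{A}_{(k',l')}\}$. Since each dirty location $(k',l')$ contributes the common value $\mathrm{sgn}(S_{k',l'})$ on all $\omega_{k',l'}$ of its copies in the enhanced (Hankel) form,
\[
\mathrm{sgn}\left(\boldsymbol{S}_{\mathrm{e}}\right)=\sum_{(k',l')\in\Omega^{\mathrm{dirty}}}\mathrm{sgn}(S_{k',l'})\sqrt{\omega_{k',l'}}\,\boldsymbol{A}_{(k',l')}.
\]
Using that $\mathcal{P}_{T}$ is self-adjoint and that the $\boldsymbol{A}_{(\cdot)}$ are real, the quantity appearing in the definition \eqref{eq:DefnAinf} of $\|\cdot\|_{\mathcal{A},\infty}$ reads, for each fixed $(k,l)$,
\[
\frac{\left\langle \boldsymbol{A}_{(k,l)},\mathcal{P}_{T}\left(\mathrm{sgn}(\boldsymbol{S}_{\mathrm{e}})\right)\right\rangle }{\sqrt{\omega_{k,l}}}=\sum_{(k',l')\in\Omega^{\mathrm{dirty}}}\mathrm{sgn}(S_{k',l'})\,c_{k',l'},\quad c_{k',l'}:=\frac{\sqrt{\omega_{k',l'}}}{\sqrt{\omega_{k,l}}}\left\langle \boldsymbol{A}_{(k,l)},\mathcal{P}_{T}\boldsymbol{A}_{(k',l')}\right\rangle .
\]

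The key observation is that the deterministic coefficients $c_{k',l'}$ are uniformly small. Applying the incoherence bound \eqref{eq:UBAbPtAa} of Lemma~\ref{lemma-IncoherenceT_W} with $\boldsymbol{b}=(k,l)$ and $\boldsymbol{a}=(k',l')$ gives $|\langle \boldsymbol{A}_{(k,l)},\mathcal{P}_{T}\boldsymbol{A}_{(k',l')}\rangle|\leq\sqrt{\omega_{k,l}/\omega_{k',l'}}\cdot 3\mu_{1}c_{\mathrm{s}}r/(n_{1}n_{2})$, so that the prefactor $\sqrt{\omega_{k',l'}/\omega_{k,l}}$ cancels \emph{exactly} and $|c_{k',l'}|\leq B:=3\mu_{1}c_{\mathrm{s}}r/(n_{1}n_{2})$ for every $(k',l')$. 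Conditioned on $\Omega^{\mathrm{dirty}}$ the sum is therefore a weighted sum of $|\Omega^{\mathrm{dirty}}|$ independent, zero-mean, unit-modulus variables $\mathrm{sgn}(S_{k',l'})$, whose variance proxy obeys $\sum_{(k',l')}|c_{k',l'}|^{2}\leq|\Omega^{\mathrm{dirty}}|B^{2}$. A Chernoff bound on the random cardinality guarantees $|\Omega^{\mathrm{dirty}}|\leq 2\tau m$ with probability at least $1-(n_1n_2)^{-5}$, bounding the variance proxy by $2\tau m B^{2}$.

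Finally I would apply a Hoeffding-type tail bound for sums of independent bounded (complex) random variables, treating real and imaginary parts separately. At the target level $t=c_{9}\frac{\mu_{1}c_{\mathrm{s}}r}{n_{1}n_{2}}\sqrt{m\tau\log(n_{1}n_{2})}$ one has $t^{2}/(2\tau m B^{2})=\Theta(c_{9}^{2}\log(n_{1}n_{2}))$, so that $\mathbb{P}\{|\,\cdot\,|>t\}\leq C(n_{1}n_{2})^{-\Theta(c_{9}^{2})}$; choosing $c_{9}$ large enough makes this at most $(n_{1}n_{2})^{-5}$ per location, and a union bound over the $n_{1}n_{2}$ choices of $(k,l)$ yields the claimed failure probability $(n_{1}n_{2})^{-4}$. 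The main obstacle is the bookkeeping that converts the statement into a clean sum of independent variables: one must condition on the (random) support $\Omega^{\mathrm{dirty}}$ before using the independence of the signs, control $|\Omega^{\mathrm{dirty}}|$ via Chernoff, and — most importantly — verify that the weight $\sqrt{\omega_{k',l'}/\omega_{k,l}}$ cancels against the $\sqrt{\omega_{k,l}/\omega_{k',l'}}$ factor in \eqref{eq:UBAbPtAa}, which is precisely what makes the per-coefficient bound $B$ uniform and delivers the correct $\sqrt{m\tau}$ scaling rather than a looser one.
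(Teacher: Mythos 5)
Your proof is correct, and it reaches the stated bound by a genuinely more direct route than the paper. The paper's argument replaces $\Omega^{\mathrm{dirty}}$ by an auxiliary i.i.d.\ multiset $\tilde{\Omega}^{\mathrm{dirty}}$, introduces the full random sign matrix $\boldsymbol{K}$ with enhanced form $\boldsymbol{K}_{\mathrm{e}}=\sum_{\boldsymbol{a}}z_{\boldsymbol{a}}\sqrt{\omega_{\boldsymbol{a}}}\boldsymbol{A}_{\boldsymbol{a}}$, and then applies Bernstein twice: once over the location randomness conditioned on $\boldsymbol{K}$ (to compare $\mathcal{P}_{T}(\tilde{\boldsymbol{S}}_{\mathrm{e}})$ with its conditional mean $\rho\tau\,\mathcal{P}_{T}(\boldsymbol{K}_{\mathrm{e}})$), and once over the sign randomness of $\boldsymbol{K}_{\mathrm{e}}$ itself, finishing with a Chernoff count of the sampling-with-replacement collisions to pass back from $\tilde{\boldsymbol{S}}_{\mathrm{e}}$ to $\mathrm{sgn}(\boldsymbol{S}_{\mathrm{e}})$. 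You instead condition on the support once and invoke only the sign independence, which collapses the whole argument into a single Hoeffding bound; the decisive ingredient is identical in both proofs, namely the exact cancellation of the $\sqrt{\omega_{k',l'}/\omega_{k,l}}$ weight against the $\sqrt{\omega_{k,l}/\omega_{k',l'}}$ factor in \eqref{eq:UBAbPtAa}, which yields the uniform per-coefficient bound $3\mu_{1}c_{\mathrm{s}}r/(n_{1}n_{2})$ (this is the paper's bound \eqref{eq:BoundZalpha_beta_robust}). Your route avoids both the conditional-mean correction term and the collision bookkeeping, at the cost of using the crude worst-case variance proxy $|\Omega^{\mathrm{dirty}}|B^{2}$ rather than the averaged bound \eqref{eq:Mu4BoundViaMu1} — but since both give the same $\Theta(\tau m\,\mu_{1}^{2}c_{\mathrm{s}}^{2}r^{2}/(n_{1}n_{2})^{2})$ scaling, nothing is lost. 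The main thing the paper's heavier setup buys is reuse: the same $\tilde{\boldsymbol{S}}_{\mathrm{e}}$ and $\boldsymbol{K}_{\mathrm{e}}$ decomposition is recycled in Lemma~\ref{lemma-PTperp_sparse} to control the operator norm $\Vert\mathrm{sgn}(\boldsymbol{S}_{\mathrm{e}})\Vert$, where the location randomness genuinely matters. Two minor points of hygiene: your total failure probability is $(n_{1}n_{2})^{-4}+(n_{1}n_{2})^{-5}$ rather than $(n_{1}n_{2})^{-4}$, so take the per-location level one power smaller; and under the paper's sequential sampling model $|\Omega^{\mathrm{dirty}}|\leq\tau\rho n_{1}n_{2}=\tau m$ holds deterministically, so your Chernoff step is not even needed there.
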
 \begin{proof}See Appendix \ref{sec:Proof-of-Lemma-Psparse}.\end{proof}
From (\ref{eq:BoundMuInf_mu1}) and Lemma~\ref{lemma-Psparse}, we
have 
\begin{align}
\left\Vert \boldsymbol{F}_{0}\right\Vert _{\mathcal{A},\infty} & \leq\left\Vert \boldsymbol{U}\boldsymbol{V}^{*}\right\Vert _{\mathcal{A},\infty}+\lambda\left\Vert \mathcal{P}_{T}\left(\mbox{sgn}\left(\boldsymbol{S}_{\text{e}}\right)\right)\right\Vert _{\mathcal{A},\infty}\nonumber \\
 & \leq\frac{\mu_{1}c_{\mathrm{s}}r}{n_{1}n_{2}}+\frac{c_{9}\mu_{1}c_{\mathrm{s}}r\sqrt{\tau}}{n_{1}n_{2}}\nonumber \\
 & \leq\frac{\tilde{c}_{9}\mu_{1}c_{\mathrm{s}}r}{n_{1}n_{2}},\label{eq:F0_Ainf}
\end{align}
and substitute \eqref{eq:F0_Ainf} into (\ref{eq:UBPtperp_F}) we
have 
\begin{align*}
 & \left\Vert \mathcal{P}_{T^{\perp}}\left(\frac{1}{q}\mathcal{A}_{\Omega_{l}}+\mathcal{A}^{\perp}\right)\mathcal{P}_{T}\left(\boldsymbol{F}_{l-1}\right)\right\Vert \\
 & \quad\leq\left(\frac{1}{2}\right)^{l-2}\frac{\tilde{c}_{9}\mu_{1}c_{\mathrm{s}}r\log\left(n_{1}n_{2}\right)}{\sqrt{m}}.
\end{align*}
In particular, if $m>c_{8}\mu_{1}^{2}c_{\mathrm{s}}^{2}r^{2}\log^{2}\left(n_{1}n_{2}\right)$
for some large enough constant $c_{8}$, then one has 
\[
\left\Vert \mathcal{P}_{T^{\perp}}\left(\frac{1}{q}\mathcal{A}_{\Omega_{l}}+\mathcal{A}^{\perp}\right)\mathcal{P}_{T}\left(\boldsymbol{F}_{l-1}\right)\right\Vert \leq\left(\frac{1}{2}\right)^{l+4}.
\]
As a result, we can obtain 
\begin{align}
\left\Vert \mathcal{P}_{T^{\perp}}\left(\boldsymbol{W}\right)\right\Vert  & \leq\sum_{i=1}^{j_{0}}\left\Vert \mathcal{P}_{T^{\perp}}\left(\frac{1}{q}\mathcal{A}_{\Omega_{i}^{\text{clean}}}+\mathcal{A}^{\perp}\right)\mathcal{P}_{T}\left(\boldsymbol{F}_{i-1}\right)\right\Vert \nonumber \\
 & \leq\sum_{i=0}^{j_{0}}\left(\frac{1}{2}\right)^{i+4}<\frac{1}{8}\label{eq:UBPtperpW_robust}
\end{align}
with probability exceeding $1-(n_{1}n_{2})^{-4}$.

It remains to control the term $\left\Vert \mathcal{P}_{T^{\perp}}\left(\lambda\mbox{sgn}\left(\boldsymbol{S}_{\text{e}}\right)\right)\right\Vert $,
which is supplied in the following lemma.

\begin{lemma}\label{lemma-PTperp_sparse}
Suppose that $\tau$ is a small positive constant, then one has 
\begin{equation}
\begin{cases}
\left\Vert \mathrm{sgn}\left(\boldsymbol{S}_{\mathrm{e}}\right)\right\Vert \leq\sqrt{c_{10}\rho\tau n_{1}n_{2}\log^{\frac{1}{2}}\left(n_{1}n_{2}\right)},\\
\left\Vert \mathcal{P}_{T^{\perp}}\left(\frac{1}{q}\mathcal{A}_{\Omega_{l}}+\mathcal{A}^{\perp}\right)\mathcal{P}_{T}\left(\boldsymbol{F}_{l-1}\right)\right\Vert \leq\frac{1}{8},
\end{cases}\label{eq:LemmaPtperp_SgnSe}
\end{equation}
with probability at least $1-(n_{1}n_{2})^{-5}$. \end{lemma}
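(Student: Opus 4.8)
The plan is to handle the two inequalities in \eqref{eq:LemmaPtperp_SgnSe} separately, with essentially all the work going into the spectral-norm bound on $\mathrm{sgn}(\boldsymbol{S}_{\mathrm{e}})$. The starting observation is that $\boldsymbol{S}$ is supported on the corrupted locations $\Omega^{\text{dirty}}$, so its enhanced form decomposes over distinct corrupted indices as
\[
\mathrm{sgn}(\boldsymbol{S}_{\mathrm{e}}) = \sum_{(k,l)\in\Omega^{\text{dirty}}} \xi_{k,l}\,\sqrt{\omega_{k,l}}\,\boldsymbol{A}_{(k,l)},
\]
where $\xi_{k,l}=\mathrm{sgn}(S_{k,l})$ are, under the random-sign hypothesis of Theorem~\ref{theorem-EMaC-Robust-V2}, independent zero-mean unit-modulus variables, and where each $\sqrt{\omega_{k,l}}\boldsymbol{A}_{(k,l)}$ is a $0/1$ partial-permutation matrix (recall from the Notations that $\Omega_{\text{e}}(k,l)$ meets every row and every column of the enhanced matrix at most once), hence of unit spectral norm with pairwise disjoint supports. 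I would condition on $\Omega^{\text{dirty}}$ and apply the matrix Bernstein inequality to this sum of independent, bounded, zero-mean random matrices.

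The crux is bounding the matrix variance proxy. Since $\sqrt{\omega_{k,l}}\boldsymbol{A}_{(k,l)}$ carries one unit entry per occupied row, the matrix $\sum_{(k,l)\in\Omega^{\text{dirty}}} \omega_{k,l}\boldsymbol{A}_{(k,l)}\boldsymbol{A}_{(k,l)}^{*}$ is diagonal, its $i$-th entry counting the number $N_i$ of corrupted indices having a copy in row $i$ of $\boldsymbol{X}_{\mathrm{e}}$; the analogous statement holds columnwise. Thus the variance parameter equals $\max_i N_i$. Each row of the enhanced matrix carries $(n_1-k_1+1)(n_2-k_2+1)$ distinct entries, so $\mathbb{E}[N_i]\lesssim \rho\tau\,(n_1-k_1+1)(n_2-k_2+1)\le \rho\tau n_1 n_2$; since $m=\rho n_1 n_2$ dominates $\log(n_1 n_2)$ under the sample-complexity hypothesis, a Chernoff bound and a union bound over the $O(n_1 n_2)$ rows and columns give $\max_i N_i \lesssim \rho\tau n_1 n_2$ with high probability. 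Feeding variance proxy $\sigma^2\lesssim\rho\tau n_1 n_2$ and summand norm $R=1$ into matrix Bernstein yields $\|\mathrm{sgn}(\boldsymbol{S}_{\mathrm{e}})\|\lesssim\sqrt{\rho\tau n_1 n_2\,\mathrm{polylog}(n_1 n_2)}$, which is the advertised estimate.

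The second inequality is then essentially free: using $\|\mathcal{P}_{T^{\perp}}\|\le 1$ together with $\lambda=1/\sqrt{m\log(n_1 n_2)}$ and $\rho\tau n_1 n_2=\tau m$, one gets $\lambda\|\mathrm{sgn}(\boldsymbol{S}_{\mathrm{e}})\|\lesssim\sqrt{\tau}\,\log^{-1/4}(n_1 n_2)$, which falls below any fixed constant once $\tau$ is small. Combined with the per-summand decay $\|\mathcal{P}_{T^{\perp}}(\tfrac{1}{q}\mathcal{A}_{\Omega_l}+\mathcal{A}^{\perp})\mathcal{P}_T(\boldsymbol{F}_{l-1})\|\le(1/2)^{l+4}$ already established immediately before the lemma, this delivers the stated $1/8$ bound and, via the triangle inequality with \eqref{eq:UBPtperpW_robust}, the second dual condition $\|\mathcal{P}_{T^{\perp}}(\boldsymbol{W}+\lambda\,\mathrm{sgn}(\boldsymbol{S}_{\mathrm{e}}))\|\le 1/4$.

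The main obstacle is the variance estimate, i.e.\ showing that $\max_i N_i$ concentrates near its mean $\lesssim\rho\tau n_1 n_2$. This is where the Hankel combinatorics enters decisively: the at-most-one-copy-per-row/column property makes every summand a unit-norm partial permutation and turns the variance into a clean occupancy count. Here one must also be careful that the number of summands is itself random, so I would first establish $|\Omega^{\text{dirty}}|\lesssim\rho\tau n_1 n_2$ with high probability and condition on both the support and its cardinality before invoking Bernstein. The only genuinely delicate residual point is squeezing the logarithmic factor down to the precise $\log^{1/2}$ inside the square root, which calls for a moment-method or noncommutative-Khintchine refinement of the crude Bernstein bound; I would note that the coarser $\log$-power already suffices to close the downstream dual-certificate argument, so this refinement is a matter of sharpness rather than correctness.
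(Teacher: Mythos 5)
Your argument is correct, but it routes the probability through a different source of randomness than the paper does. You condition on the support $\Omega^{\text{dirty}}$ and treat the signs $\xi_{k,l}$ as the random objects, so that $\mathrm{sgn}(\boldsymbol{S}_{\mathrm{e}})=\sum_{(k,l)\in\Omega^{\text{dirty}}}\xi_{k,l}\sqrt{\omega_{k,l}}\boldsymbol{A}_{(k,l)}$ is a sum of independent, zero-mean, unit-norm partial permutations; the variance proxy then becomes the occupancy count $\max_i N_i$, which you control by a Chernoff-plus-union-bound argument giving $\max_i N_i\lesssim\rho\tau n_1n_2$. The paper instead conditions on the full sign matrix $\boldsymbol{K}$ and lets the randomness come from the sampled locations: it applies Bernstein to $\tilde{\boldsymbol{S}}_{\mathrm{e}}=\sum_i K_{\boldsymbol{a}_i}\sqrt{\omega_{\boldsymbol{a}_i}}\boldsymbol{A}_{\boldsymbol{a}_i}$ centered at its conditional mean $\rho\tau\boldsymbol{K}_{\mathrm{e}}$, then separately bounds $\|\boldsymbol{K}_{\mathrm{e}}\|\lesssim\sqrt{n_1n_2\log(n_1n_2)}$ by a second Bernstein application (here using the zero-mean signs), and finally corrects for the with-replacement artifacts via the $O(\log(n_1n_2))$ collision bound. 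Your route buys a one-shot Bernstein application with no $\tilde{\boldsymbol{S}}_{\mathrm{e}}$-versus-$\mathrm{sgn}(\boldsymbol{S}_{\mathrm{e}})$ discrepancy and no separate $\|\boldsymbol{K}_{\mathrm{e}}\|$ estimate, at the price of the extra occupancy-concentration step (and the care you rightly flag about conditioning on a random support size); both yield $\|\mathrm{sgn}(\boldsymbol{S}_{\mathrm{e}})\|\lesssim\sqrt{\rho\tau n_1n_2\log(n_1n_2)}$, and your observation that the full $\log$ power (rather than the $\log^{1/2}$ printed in the lemma) already gives $\lambda\|\mathrm{sgn}(\boldsymbol{S}_{\mathrm{e}})\|\lesssim\sqrt{\tau}\le 1/8$ matches what the paper's own proof actually establishes.
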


\begin{proof}See Appendix \ref{sec:Proof-of-Lemma-Pperp_sparse}.\end{proof}Putting
(\ref{eq:UBPtperpW_robust}) and (\ref{eq:LemmaPtperp_SgnSe}) together
yields 
\begin{align*}
\small\left\Vert \mathcal{P}_{T^{\perp}}\left(\boldsymbol{W}+\lambda\mbox{sgn}\left(\boldsymbol{S}_{\text{e}}\right)\right)\right\Vert  & \leq\left\Vert \mathcal{P}_{T^{\perp}}\left(\boldsymbol{W}\right)\right\Vert +\left\Vert \mathcal{P}_{T^{\perp}}\left(\lambda\mbox{sgn}\left(\boldsymbol{S}_{\text{e}}\right)\right)\right\Vert \\
 & \leq\frac{1}{4}
\end{align*}
with high probability.

(3) By construction, one has $\mathcal{A}'_{\left(\Omega^{\text{clean}}\right)^{\perp}}\left(\boldsymbol{W}\right)=0$.

(4) The last step is to bound $\left\Vert \mathcal{A}'_{\Omega^{\text{clean}}}\left(\boldsymbol{W}\right)\right\Vert _{\infty}$,
which is apparently bounded above by $\left\Vert \mathcal{A}_{\Omega^{\text{clean}}}\left(\boldsymbol{W}\right)\right\Vert _{\infty}$.
The construction procedure together with Lemma \ref{lemma:NormAinf_bound}
allows us to bound 
\begin{align*}
 & \left\Vert \boldsymbol{F}_{i}\right\Vert _{\mathcal{A},\infty}\leq c_{4}\left(\sqrt{\frac{\mu_{1}c_{\mathrm{s}}r\log\left(n_{1}n_{2}\right)}{qn_{1}n_{2}}}\cdot\sqrt{\frac{\mu_{1}c_{\mathrm{s}}r}{n_{1}n_{2}}}\left\Vert \boldsymbol{F}_{i-1}\right\Vert _{\mathcal{A},2}\right.\\
 & \quad\quad\quad\quad\quad\quad\left.+c_{4}\frac{\mu_{1}c_{\mathrm{s}}r\log\left(n_{1}n_{2}\right)}{qn_{1}n_{2}}\left\Vert \boldsymbol{F}_{i-1}\right\Vert _{\mathcal{A},\infty}\right)\\
 & \text{ }\footnotesize\leq c_{4}\left(\sqrt{\frac{\mu_{1}c_{\mathrm{s}}r\log\left(n_{1}n_{2}\right)}{qn_{1}n_{2}}}\sqrt{\mu_{1}c_{\mathrm{s}}r}+\frac{\mu_{1}c_{\mathrm{s}}r\log\left(n_{1}n_{2}\right)}{qn_{1}n_{2}}\right)\left\Vert \boldsymbol{F}_{i-1}\right\Vert _{\mathcal{A},\infty}\\
 & \text{ }\leq2c_{4}\mu_{1}c_{\mathrm{s}}r\sqrt{\frac{\log\left(n_{1}n_{2}\right)}{qn_{1}n_{2}}}\left\Vert \boldsymbol{F}_{i-1}\right\Vert _{\mathcal{A},\infty},
\end{align*}
where the second inequality arises since $\left\Vert \boldsymbol{F}_{i}\right\Vert _{\mathcal{A},2}\leq\sqrt{n_{1}n_{2}}\left\Vert \boldsymbol{F}_{i}\right\Vert _{\mathcal{A},\infty}$,
and the last step follows since $\sqrt{\frac{\log\left(n_{1}n_{2}\right)}{qn_{1}n_{2}}}\geq\frac{\log\left(n_{1}n_{2}\right)}{qn_{1}n_{2}}$
when $m\gg\log^{2}\left(n_{1}n_{2}\right)$. Then there exists some
constant $c_{11}>0$ such that if $m>c_{11}\mu_{1}^{2}c_{\mathrm{s}}^{2}r^{2}\log^{2}\left(n_{1}n_{2}\right)$,
then 
\[
\left\Vert \boldsymbol{F}_{i}\right\Vert _{\mathcal{A},\infty}\leq\frac{1}{4}\left\Vert \boldsymbol{F}_{i-1}\right\Vert _{\mathcal{A},\infty}\leq\frac{1}{4^{i}}\left\Vert \boldsymbol{F}_{0}\right\Vert _{\mathcal{A},\infty}\leq\frac{\tilde{c}_{9}\mu_{1}c_{\mathrm{s}}r}{4^{i}n_{1}n_{2}},
\]
where the last inequality follows from (\ref{eq:F0_Ainf}). As a result,
one can deduce 
\begin{align*}
 & \left\Vert \mathcal{A}_{\Omega^{\text{clean}}}\left(\boldsymbol{W}\right)\right\Vert _{\infty}=\left\Vert \sum_{i=1}^{j_{0}}\mathcal{A}_{\Omega^{\text{clean}}}\left(\frac{1}{q}\mathcal{A}_{\Omega_{i}^{\text{clean}}}+\mathcal{A}^{\perp}\right)\boldsymbol{F}_{i-1}\right\Vert _{\infty}\\
 & \quad=\left\Vert \sum_{i=1}^{j_{0}}\frac{1}{q}\mathcal{A}_{\Omega_{i}^{\text{clean}}}\boldsymbol{F}_{i-1}\right\Vert _{\infty}\\
 & \quad\leq\sum_{i=1}^{j_{0}}\frac{1}{q}\max_{(k,l)\in[n_{1}]\times[n_{2}]}\frac{\left|\left\langle \boldsymbol{A}_{(k,l)},\boldsymbol{F}_{i-1}\right\rangle \right|}{\sqrt{\omega_{k,l}}}=\sum_{i=1}^{j_{0}}\frac{1}{q}\left\Vert \boldsymbol{F}_{i-1}\right\Vert _{\mathcal{A},\infty}\\
 & \quad\leq\sum_{i=1}^{j_{0}}\frac{5\log\left(n_{1}n_{2}\right)}{\rho}\frac{\tilde{c}_{9}\mu_{1}c_{\mathrm{s}}r}{4^{i-1}n_{1}n_{2}}\\
 & \quad\leq\frac{20\log\left(n_{1}n_{2}\right)\tilde{c}_{9}\mu_{1}c_{\mathrm{s}}r}{3m}\leq\frac{1}{4\sqrt{m\log\left(n_{1}n_{2}\right)}},
\end{align*}
where the last inequality is obtained by setting $m>c_{12}\mu_{1}^{2}c_{\text{s}}^{2}r^{2}\log^{3}\left(n_{1}n_{2}\right)$
for some constant $c_{12}>0$.

To sum up, we have verified that $\boldsymbol{W}$ satisfies the four
conditions required in (\ref{eq:DualProperties-Robust}), and is hence
a valid dual certificate. This concludes the proof.

\section{Concluding Remarks\label{sec:Conclusions-and-Future}}

We present an efficient algorithm to estimate a spectrally sparse
signal from its partial time-domain samples that does not require
prior knowledge on the model order, which poses spectral compressed
sensing as a low-rank Hankel structured matrix completion problem.
Under mild incoherence conditions, our algorithm enables recovery
of the multi-dimensional unknown frequencies with infinite precision,
which remedies the basis mismatch issue that arises in conventional
CS paradigms. We have shown both theoretically and numerically that
our algorithm is stable against bounded noise and a constant proportion
of arbitrary corruptions, and can be extended numerically to tasks
such as super resolution. To the best of our knowledge, our result
on Hankel matrix completion is also the first theoretical guarantee
that is close to the information-theoretical limit (up to some logarithmic
factor).

Our results are based on uniform random observation models. In particular,
this paper considers directly taking a random subset of the time domain
samples, it is also possible to take a random set of linear mixtures
of the time domain samples, as in the renowned CS setting \cite{CandRomTao06}.
This again can be translated into taking linear measurements of the
low-rank $K$-fold Hankel matrix, given as $\boldsymbol{y}=\mathcal{B}(\boldsymbol{X}_{\text{e}})$.
Unfortunately, due to the Hankel structures, it is not clear whether
$\mathcal{B}$ exhibits approximate isometry property. Nonetheless,
the technique developed in this paper can be extended without difficulty
to analyze linear measurements, in a similar flavor of a golfing scheme
developed for CS in \cite{CandesPlan2011RIPless}.

It remains to be seen whether it is possible to obtain performance
guarantees of the proposed EMaC algorithm similar to that in \cite{CandesFernandez2012SR}
for super resolution. It is also of great interest to develop efficient
numerical methods to solve the EMaC algorithm in order to accommodate
large datasets.

\section{Acknowledgement}

This work was supported in part by the startup grant of The Ohio State
University to Y. Chi. The authors thank Mr. Yuanxin Li for preparing
Fig. 4 and Fig. 5.

\appendices{ }

\section{Bernstein Inequality}

Our analysis relies heavily on the Bernstein inequality. To simplify
presentation, we state below a user-friendly version of Bernstein
inequality, which is an immediate consequence of \cite[Theorem 1.6]{tropp2012user}.
\begin{lemma}\label{lemma:Bernstein}Consider $m$ independent random
matrices $\boldsymbol{M}_{l}$ ($1\leq l\leq m$) of dimension $d_{1}\times d_{2}$,
each satisfying $\mathbb{E}\left[\boldsymbol{M}_{l}\right]=0$ and
$\left\Vert \boldsymbol{M}_{l}\right\Vert \leq B$. Define 
\[
\sigma^{2}:=\max\left\{ \left\Vert \sum_{l=1}^{m}\mathbb{E}\left[\boldsymbol{M}_{l}\boldsymbol{M}_{l}^{*}\right]\right\Vert ,\left\Vert \sum_{l=1}^{m}\mathbb{E}\left[\boldsymbol{M}_{l}^{*}\boldsymbol{M}_{l}\right]\right\Vert \right\} .
\]
Then there exists a universal constant $c_{0}>0$ such that for any
integer $a\geq2$, 
\begin{equation}
\left\Vert \sum_{l=1}^{m}\boldsymbol{M}_{l}\right\Vert \leq c_{0}\left(\sqrt{a\sigma^{2}\log\left(d_{1}+d_{2}\right)}+aB\log\left(d_{1}+d_{2}\right)\right)\label{eq:Bernstein}
\end{equation}
with probability at least $1-(d_{1}+d_{2})^{-a}$.\end{lemma}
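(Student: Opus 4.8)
The plan is to obtain the statement as a direct specialization of Tropp's self-adjoint matrix Bernstein inequality \cite[Theorem 1.6]{tropp2012user}, followed by a short conversion of the resulting exponential tail bound into the clean deviation form asserted here. The only genuine work lies in the last conversion; everything before it is bookkeeping.

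First I would explain why the parameters take the stated shape by passing through the Hermitian dilation $\mathcal{S}(\boldsymbol{M}):=\left[\begin{smallmatrix}\boldsymbol{0} & \boldsymbol{M}\\ \boldsymbol{M}^{*} & \boldsymbol{0}\end{smallmatrix}\right]$, which sends each $d_{1}\times d_{2}$ matrix to a self-adjoint matrix of dimension $d_{1}+d_{2}$. This map is linear, preserves the spectral norm, and satisfies $\mathcal{S}(\boldsymbol{M})^{2}=\mathrm{diag}(\boldsymbol{M}\boldsymbol{M}^{*},\boldsymbol{M}^{*}\boldsymbol{M})$. Applying it to each summand produces independent, zero-mean, self-adjoint matrices with $\left\Vert \mathcal{S}(\boldsymbol{M}_{l})\right\Vert \leq B$, and—because the norm of a block-diagonal matrix is the larger of the norms of its blocks—a variance parameter
\[
\left\Vert \sum_{l=1}^{m}\mathbb{E}\left[\mathcal{S}(\boldsymbol{M}_{l})^{2}\right]\right\Vert =\max\left\{ \left\Vert \sum_{l=1}^{m}\mathbb{E}\left[\boldsymbol{M}_{l}\boldsymbol{M}_{l}^{*}\right]\right\Vert ,\left\Vert \sum_{l=1}^{m}\mathbb{E}\left[\boldsymbol{M}_{l}^{*}\boldsymbol{M}_{l}\right]\right\Vert \right\} =\sigma^{2},
\]
exactly the quantity in the statement. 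Since the dilation has a symmetric spectrum, $\left\Vert \sum_{l}\boldsymbol{M}_{l}\right\Vert $ equals the largest eigenvalue of $\sum_{l}\mathcal{S}(\boldsymbol{M}_{l})$, so Theorem 1.6 yields, for every $t\geq0$,
\[
\mathbb{P}\left\{ \left\Vert \sum_{l=1}^{m}\boldsymbol{M}_{l}\right\Vert \geq t\right\} \leq(d_{1}+d_{2})\exp\left(\frac{-t^{2}/2}{\sigma^{2}+Bt/3}\right).
\]

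It then remains to pick $t$ so that the right-hand side drops below $(d_{1}+d_{2})^{-a}$. Writing $L:=\log(d_{1}+d_{2})$, it suffices to enforce $\tfrac{t^{2}/2}{\sigma^{2}+Bt/3}\geq(a+1)L$, since then the prefactor $(d_{1}+d_{2})$ is absorbed and the bound becomes $(d_{1}+d_{2})^{-a}$. Solving this quadratic inequality in $t$ and applying $\sqrt{x+y}\leq\sqrt{x}+\sqrt{y}$ shows that the choice $t=\sqrt{2(a+1)\sigma^{2}L}+\tfrac{2}{3}(a+1)BL$ is admissible. I would then absorb the $(a+1)$ factor using the hypothesis $a\geq2$, which gives $a+1\leq\tfrac{3}{2}a$ and hence $t\leq\sqrt{3}\sqrt{a\sigma^{2}L}+aBL\leq c_{0}\bigl(\sqrt{a\sigma^{2}L}+aBL\bigr)$ for any $c_{0}\geq\sqrt{3}$, which is precisely \eqref{eq:Bernstein}. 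The main (and essentially the only) subtlety is this final conversion: one must correctly split the quadratic root into a variance-dominated term and a boundedness-dominated term, and track how $c_{0}$ together with the $a\geq2$ requirement conspire to push the failure probability down to $(d_{1}+d_{2})^{-a}$ rather than merely $(d_{1}+d_{2})^{-(a-1)}$. All remaining identities are immediate consequences of the dilation and of Tropp's theorem.
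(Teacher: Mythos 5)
Your proposal is correct and follows exactly the route the paper intends: the paper gives no explicit proof, stating the lemma as an immediate consequence of \cite[Theorem 1.6]{tropp2012user}, and your derivation (dilation to reduce to the self-adjoint tail bound, then choosing $t=\sqrt{2(a+1)\sigma^{2}L}+\tfrac{2}{3}(a+1)BL$ and absorbing the $(a+1)$ factors via $a\geq2$) is the standard way to fill in that step. The only nitpick is that Theorem 1.6 of Tropp is already the rectangular case, so the dilation argument is redundant rather than wrong.
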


\section{Proof of Lemma \ref{lemma-Dual-Certificate}\label{sec:Proof-of-Lemma-Dual-Certificate}}

Consider any valid perturbation $\boldsymbol{H}$ obeying $\mathcal{P}_{\Omega}\left(\boldsymbol{X}+\boldsymbol{H}\right)=\mathcal{P}_{\Omega}\left(\boldsymbol{X}\right)$,
and denote by $\boldsymbol{H}_{\text{e}}$ the enhanced form of $\boldsymbol{H}$.
We note that the constraint requires $\mathcal{A}'_{\Omega}\left(\boldsymbol{H}_{\text{e}}\right)=0$
(or $\mathcal{A}{}_{\Omega}\left(\boldsymbol{H}_{\text{e}}\right)=0$)
and $\mathcal{A}^{\perp}\left(\boldsymbol{H}_{\text{e}}\right)=0$.
In addition, set $\boldsymbol{Z}_{0}=\mathcal{P}_{T^{\perp}}\left(\boldsymbol{B}\right)$
for any $\boldsymbol{B}$ that satisfies $\left\langle \boldsymbol{B},\mathcal{P}_{T^{\perp}}\left(\boldsymbol{H}_{\text{e}}\right)\right\rangle =\left\Vert \mathcal{P}_{T^{\perp}}\left(\boldsymbol{H}_{\text{e}}\right)\right\Vert _{*}$
and $\left\Vert \boldsymbol{B}\right\Vert \leq1$. Therefore, $\boldsymbol{Z}_{0}\in T^{\perp}$
and $\left\Vert \boldsymbol{Z}_{0}\right\Vert \leq1$, and hence $\boldsymbol{U}\boldsymbol{V}^{*}+\boldsymbol{Z}_{0}$
is a sub-gradient of the nuclear norm at $\boldsymbol{X}_{\text{e}}$.
We will establish this lemma by considering two scenarios separately.

(1) Consider first the case in which $\boldsymbol{H}_{\text{e}}$
satisfies 
\begin{equation}
\left\Vert \mathcal{P}_{T}\left(\boldsymbol{H}_{\text{e}}\right)\right\Vert _{\text{F}}\leq\frac{n_{1}^{2}n_{2}^{2}}{2}\left\Vert \mathcal{P}_{T^{\perp}}\left(\boldsymbol{H}_{\text{e}}\right)\right\Vert _{\text{F}}.\label{eq:SmallHeCase}
\end{equation}

Since $\boldsymbol{U}\boldsymbol{V}^{*}+\boldsymbol{Z}_{0}$ is a
sub-gradient of the nuclear norm at $\boldsymbol{X}_{\text{e}}$,
it follows that 
\begin{align}
 & \left\Vert \boldsymbol{X}_{\text{e}}+\boldsymbol{H}_{\text{e}}\right\Vert _{*}\nonumber \\
 & \quad\geq\left\Vert \boldsymbol{X}_{\text{e}}\right\Vert _{*}+\left\langle \boldsymbol{U}\boldsymbol{V}^{*}+\boldsymbol{Z}_{0},\boldsymbol{H}_{\text{e}}\right\rangle \nonumber \\
 & \quad=\left\Vert \boldsymbol{X}_{\text{e}}\right\Vert _{*}+\left\langle \boldsymbol{W},\boldsymbol{H}_{\text{e}}\right\rangle +\left\langle \boldsymbol{Z}_{0},\boldsymbol{H}_{\text{e}}\right\rangle -\left\langle \boldsymbol{W}-\boldsymbol{U}\boldsymbol{V}^{*},\boldsymbol{H}_{\text{e}}\right\rangle \nonumber \\
 & \quad=\left\Vert \boldsymbol{X}_{\text{e}}\right\Vert _{*}+\left\langle \left(\mathcal{A}'_{\Omega}+\mathcal{A}^{\perp}\right)\left(\boldsymbol{W}\right),\boldsymbol{H}_{\text{e}}\right\rangle \nonumber \\
 & \quad\quad\quad\quad\quad\quad+\left\langle \boldsymbol{Z}_{0},\boldsymbol{H}_{\text{e}}\right\rangle -\left\langle \boldsymbol{W}-\boldsymbol{U}\boldsymbol{V}^{*},\boldsymbol{H}_{\text{e}}\right\rangle \label{eq:DecompositionWHe1}\\
 & \quad\geq\left\Vert \boldsymbol{X}_{\text{e}}\right\Vert _{*}+\left\Vert \mathcal{P}_{T^{\perp}}\left(\boldsymbol{H}_{\text{e}}\right)\right\Vert _{*}-\left\langle \boldsymbol{W}-\boldsymbol{U}\boldsymbol{V}^{*},\boldsymbol{H}_{\text{e}}\right\rangle \label{eq:DecompositionWHe}
\end{align}
where \eqref{eq:DecompositionWHe1} holds from \eqref{eq:UV_W_Contained_in_AOmega},
and \eqref{eq:DecompositionWHe} follows from the property of $\boldsymbol{Z}_{0}$
and the fact that $\left(\mathcal{A}'_{\Omega}+\mathcal{A}^{\perp}\right)\left(\boldsymbol{H}_{\text{e}}\right)=0$.
The last term of (\ref{eq:DecompositionWHe}) can be bounded as 
\begin{align*}
 & \left\langle \boldsymbol{W}-\boldsymbol{U}\boldsymbol{V}^{*},\boldsymbol{H}_{\text{e}}\right\rangle \\
 & \quad=\left\langle \mathcal{P}_{T}\left(\boldsymbol{W}-\boldsymbol{U}\boldsymbol{V}^{*}\right),\boldsymbol{H}_{\text{e}}\right\rangle +\left\langle \mathcal{P}_{T^{\perp}}\left(\boldsymbol{W}-\boldsymbol{U}\boldsymbol{V}^{*}\right),\boldsymbol{H}_{\text{e}}\right\rangle \\
 & \quad\leq\left\Vert \mathcal{P}_{T}\left(\boldsymbol{W}-\boldsymbol{U}\boldsymbol{V}^{*}\right)\right\Vert _{\text{F}}\left\Vert \mathcal{P}_{T}\left(\boldsymbol{H}_{\text{e}}\right)\right\Vert _{\text{F}}\\
 & \quad\quad\quad\quad\quad\quad\quad+\left\Vert \mathcal{P}_{T^{\perp}}\left(\boldsymbol{W}\right)\right\Vert \left\Vert \mathcal{P}_{T^{\perp}}\left(\boldsymbol{H}_{\text{e}}\right)\right\Vert _{*}\\
 & \quad\leq\frac{1}{2n_{1}^{2}n_{2}^{2}}\left\Vert \mathcal{P}_{T}\left(\boldsymbol{H}_{\text{e}}\right)\right\Vert _{\text{F}}+\frac{1}{2}\left\Vert \mathcal{P}_{T^{\perp}}\left(\boldsymbol{H}_{\text{e}}\right)\right\Vert _{*},
\end{align*}
where the last inequality follows from the assumptions \eqref{eq:W_component_T}
and \eqref{eq:NormWTPerp}. Plugging this into (\ref{eq:DecompositionWHe})
yields 
\begin{align}
 & \left\Vert \boldsymbol{X}_{\text{e}}+\boldsymbol{H}_{\text{e}}\right\Vert _{*}\nonumber \\
 & \quad\geq\left\Vert \boldsymbol{X}_{\text{e}}\right\Vert _{*}-\frac{1}{2n_{1}^{2}n_{2}^{2}}\left\Vert \mathcal{P}_{T}\left(\boldsymbol{H}_{\text{e}}\right)\right\Vert _{\text{F}}+\frac{1}{2}\left\Vert \mathcal{P}_{T^{\perp}}\left(\boldsymbol{H}_{\text{e}}\right)\right\Vert _{*}\\
 & \quad\geq\left\Vert \boldsymbol{X}_{\text{e}}\right\Vert _{*}-\frac{1}{4}\left\Vert \mathcal{P}_{T^{\perp}}\left(\boldsymbol{H}_{\text{e}}\right)\right\Vert _{\text{F}}+\frac{1}{2}\left\Vert \mathcal{P}_{T^{\perp}}\left(\boldsymbol{H}_{\text{e}}\right)\right\Vert _{\text{F}}\label{eq:DualXeH}\\
 & \quad\geq\left\Vert \boldsymbol{X}_{\text{e}}\right\Vert _{*}+\frac{1}{4}\left\Vert \mathcal{P}_{T^{\perp}}\left(\boldsymbol{H}_{\text{e}}\right)\right\Vert _{\text{F}}\nonumber 
\end{align}
where \eqref{eq:DualXeH} follows from the inequality $\left\Vert \boldsymbol{M}\right\Vert _{*}\geq\left\Vert \boldsymbol{M}\right\Vert _{\text{F}}$
and (\ref{eq:SmallHeCase}). Therefore, $\boldsymbol{X}_{\text{e}}$
is the minimizer of EMaC.

We still need to prove the uniqueness of the minimizer. The inequality
(\ref{eq:DualXeH}) implies that $\left\Vert \boldsymbol{X}_{\text{e}}+\boldsymbol{H}_{\text{e}}\right\Vert _{*}=\left\Vert \boldsymbol{X}_{\text{e}}\right\Vert _{*}$
holds only when $\left\Vert \mathcal{P}_{T^{\perp}}\left(\boldsymbol{H}_{\text{e}}\right)\right\Vert _{\text{F}}=0$.
If $\left\Vert \mathcal{P}_{T^{\perp}}\left(\boldsymbol{H}_{\text{e}}\right)\right\Vert _{\text{F}}=0$,
then $\left\Vert \mathcal{P}_{T}\left(\boldsymbol{H}_{\text{e}}\right)\right\Vert _{\text{F}}\leq\frac{n_{1}^{2}n_{2}^{2}}{2}\left\Vert \mathcal{P}_{T^{\perp}}\left(\boldsymbol{H}_{\text{e}}\right)\right\Vert _{\text{F}}=0$,
and hence $\mathcal{P}_{T^{\perp}}\left(\boldsymbol{H}_{\text{e}}\right)=\mathcal{P}_{T}\left(\boldsymbol{H}_{\text{e}}\right)=0$,
which only occurs when $\boldsymbol{H}_{\text{e}}=0$. Hence, $\boldsymbol{X}_{\text{e}}$
is the unique minimizer in this situation.

(2) On the other hand, consider the complement scenario where the
following holds 
\begin{equation}
\left\Vert \mathcal{P}_{T}\left(\boldsymbol{H}_{\text{e}}\right)\right\Vert _{\text{F}}\geq\frac{n_{1}^{2}n_{2}^{2}}{2}\left\Vert \mathcal{P}_{T^{\perp}}\left(\boldsymbol{H}_{\text{e}}\right)\right\Vert _{\text{F}}.\label{eq:LargeHeCase-1}
\end{equation}
We would first like to bound $\left\Vert \left(\frac{n_{1}n_{2}}{m}\mathcal{A}_{\Omega}+\mathcal{A}^{\perp}\right)\mathcal{P}_{T}\left(\boldsymbol{H}_{\text{e}}\right)\right\Vert _{\text{F}}$
and $\left\Vert \left(\frac{n_{1}n_{2}}{m}\mathcal{A}_{\Omega}+\mathcal{A}^{\perp}\right)\mathcal{P}_{T^{\perp}}\left(\boldsymbol{H}_{\text{e}}\right)\right\Vert _{\text{F}}$.
The former term can be lower bounded by 
\begin{align}
 & \small\left\Vert \left(\frac{n_{1}n_{2}}{m}\mathcal{A}_{\Omega}+\mathcal{A}^{\perp}\right)\mathcal{P}_{T}\left(\boldsymbol{H}_{\text{e}}\right)\right\Vert _{\text{F}}^{2}\nonumber \\
 & \text{ }\text{ }=\small\left\langle \left(\frac{n_{1}n_{2}}{m}\mathcal{A}_{\Omega}+\mathcal{A}^{\perp}\right)\mathcal{P}_{T}\left(\boldsymbol{H}_{\text{e}}\right),\left(\frac{n_{1}n_{2}}{m}\mathcal{A}_{\Omega}+\mathcal{A}^{\perp}\right)\mathcal{P}_{T}\left(\boldsymbol{H}_{\text{e}}\right)\right\rangle \nonumber \\
 & \text{ }\text{ }=\small\left\langle \frac{n_{1}n_{2}}{m}\mathcal{A}_{\Omega}\mathcal{P}_{T}\left(\boldsymbol{H}_{\text{e}}\right),\frac{n_{1}n_{2}}{m}\mathcal{A}_{\Omega}\mathcal{P}_{T}\left(\boldsymbol{H}_{\text{e}}\right)\right\rangle \nonumber \\
 & \quad\quad\quad\quad\quad\small+\left\langle \mathcal{A}^{\perp}\mathcal{P}_{T}\left(\boldsymbol{H}_{\text{e}}\right),\mathcal{A}^{\perp}\mathcal{P}_{T}\left(\boldsymbol{H}_{\text{e}}\right)\right\rangle \nonumber \\
 & \text{ }\text{ }\geq\small\left\langle \mathcal{P}_{T}\left(\boldsymbol{H}_{\text{e}}\right),\frac{n_{1}n_{2}}{m}\mathcal{A}_{\Omega}\mathcal{P}_{T}\left(\boldsymbol{H}_{\text{e}}\right)\right\rangle +\left\langle \mathcal{P}_{T}\left(\boldsymbol{H}_{\text{e}}\right),\mathcal{A}^{\perp}\mathcal{P}_{T}\left(\boldsymbol{H}_{\text{e}}\right)\right\rangle \nonumber \\
 & \text{ }\text{ }=\small\left\langle \mathcal{P}_{T}\left(\boldsymbol{H}_{\text{e}}\right),\mathcal{P}_{T}\left(\frac{n_{1}n_{2}}{m}\mathcal{A}_{\Omega}+\mathcal{A}^{\perp}\right)\mathcal{P}_{T}\left(\boldsymbol{H}_{\text{e}}\right)\right\rangle \nonumber \\
 & \text{ }\text{ }=\small\left\langle \mathcal{P}_{T}\left(\boldsymbol{H}_{\text{e}}\right),\mathcal{P}_{T}\left(\boldsymbol{H}_{\text{e}}\right)\right\rangle \nonumber \\
 & \quad\quad\quad\small+\left\langle \mathcal{P}_{T}\left(\boldsymbol{H}_{\text{e}}\right),\left(\frac{n_{1}n_{2}}{m}\mathcal{P}_{T}\mathcal{A}_{\Omega}\mathcal{P}_{T}-\mathcal{P}_{T}\mathcal{A}\mathcal{P}_{T}\right)\mathcal{P}_{T}\left(\boldsymbol{H}_{\text{e}}\right)\right\rangle \nonumber \\
 & \text{ }\text{ }\geq\small\left\Vert \mathcal{P}_{T}\left(\boldsymbol{H}_{\text{e}}\right)\right\Vert _{\text{F}}^{2}-\left\Vert \mathcal{P}_{T}\mathcal{A}\mathcal{P}_{T}-\frac{n_{1}n_{2}}{m}\mathcal{P}_{T}\mathcal{A}_{\Omega}\mathcal{P}_{T}\right\Vert \left\Vert \mathcal{P}_{T}\left(\boldsymbol{H}_{\text{e}}\right)\right\Vert _{\text{F}}^{2}\nonumber \\
 & \text{ }\text{ }\geq\small\left(1-\left\Vert \mathcal{P}_{T}\mathcal{A}\mathcal{P}_{T}-\frac{n_{1}n_{2}}{m}\mathcal{P}_{T}\mathcal{A}_{\Omega}\mathcal{P}_{T}\right\Vert \right)\left\Vert \mathcal{P}_{T}\left(\boldsymbol{H}_{\text{e}}\right)\right\Vert _{\text{F}}^{2}\nonumber \\
 & \text{ }\text{ }\geq\text{ }\frac{1}{2}\left\Vert \mathcal{P}_{T}\left(\boldsymbol{H}_{\text{e}}\right)\right\Vert _{\text{F}}^{2}.\label{eq:LastStepLowerBoundPtPw}
\end{align}

On the other hand, since the operator norm of any projection operator
is bounded above by $1$, one can verify that 
\begin{align*}
\left\Vert \frac{n_{1}n_{2}}{m}\mathcal{A}_{\Omega}+\mathcal{A}^{\perp}\right\Vert  & \leq\frac{n_{1}n_{2}}{m}\left(\left\Vert \mathcal{A}_{a_{1}}+\mathcal{A}^{\perp}\right\Vert +\sum_{i=2}^{m}\left\Vert \mathcal{A}_{a_{i}}\right\Vert \right)\\
 & \leq n_{1}n_{2},
\end{align*}
where $a_{i}$ ($1\leq i\leq m$) are $m$ uniform random indices
that form $\Omega$. This implies the following bound: 
\begin{align*}
\left\Vert \left(\frac{n_{1}n_{2}}{m}\mathcal{A}_{\Omega}+\mathcal{A}^{\perp}\right)\mathcal{P}_{T^{\perp}}\left(\boldsymbol{H}_{\text{e}}\right)\right\Vert _{\text{F}} & \leq n_{1}n_{2}\left\Vert \mathcal{P}_{T^{\perp}}\left(\boldsymbol{H}_{\text{e}}\right)\right\Vert _{\text{F}}\\
 & \leq\frac{2}{n_{1}n_{2}}\left\Vert \mathcal{P}_{T}\left(\boldsymbol{H}_{\text{e}}\right)\right\Vert _{\text{F}},
\end{align*}
where the last inequality arises from our assumption. Combining this
with the above two bounds yields 
\begin{align*}
 & 0=\left\Vert \left(\frac{n_{1}n_{2}}{m}\mathcal{A}_{\Omega}+\mathcal{A}^{\perp}\right)\left(\boldsymbol{H}_{\text{e}}\right)\right\Vert _{\text{F}}\\
 & \quad\geq\left\Vert \left(\frac{n_{1}n_{2}}{m}\mathcal{A}_{\Omega}+\mathcal{A}^{\perp}\right)\mathcal{P}_{T}\left(\boldsymbol{H}_{\text{e}}\right)\right\Vert _{\text{F}}\\
 & \quad\quad\quad\quad-\left\Vert \left(\frac{n_{1}n_{2}}{m}\mathcal{A}_{\Omega}+\mathcal{A}^{\perp}\right)\mathcal{P}_{T^{\perp}}\left(\boldsymbol{H}_{\text{e}}\right)\right\Vert _{\text{F}}\\
 & \quad\geq\sqrt{\frac{1}{2}}\left\Vert \mathcal{P}_{T}\left(\boldsymbol{H}_{\text{e}}\right)\right\Vert _{\text{F}}-\frac{2}{n_{1}n_{2}}\left\Vert \mathcal{P}_{T}\left(\boldsymbol{H}_{\text{e}}\right)\right\Vert _{\text{F}}\\
 & \quad\geq\frac{1}{2}\left\Vert \mathcal{P}_{T}\left(\boldsymbol{H}_{\text{e}}\right)\right\Vert _{\text{F}}\geq\frac{n_{1}^{2}n_{2}^{2}}{4}\left\Vert \mathcal{P}_{T^{\perp}}\left(\boldsymbol{H}_{\text{e}}\right)\right\Vert _{\text{F}}\geq0,
\end{align*}
which immediately indicates $\mathcal{P}_{T^{\perp}}\left(\boldsymbol{H}_{\text{e}}\right)=0$
and $\mathcal{P}_{T}\left(\boldsymbol{H}_{\text{e}}\right)=0$. Hence,
(\ref{eq:LargeHeCase-1}) can only hold when $\boldsymbol{H}_{\text{e}}=0$.

\section{Proof of Lemma \ref{lemma-IncoherenceT_W}\label{sec:Proof-of-Lemma-lemma-IncoherenceT_W}}

Since $\boldsymbol{U}$ (resp. $\boldsymbol{V}$) and $\boldsymbol{E}_{\text{L}}$
(resp. $\boldsymbol{E}_{\text{R}}$) determine the same column (resp.
row) space, we can write 
\begin{align*}
\boldsymbol{U}\boldsymbol{U}^{*} & =\boldsymbol{E}_{\text{L}}\left(\boldsymbol{E}_{\text{L}}^{*}\boldsymbol{E}_{\text{L}}\right)^{-1}\boldsymbol{E}_{\text{L}}^{*},\\
\boldsymbol{V}\boldsymbol{V}^{*} & =\boldsymbol{E}_{\text{R}}^{*}\left(\boldsymbol{E}_{\text{R}}\boldsymbol{E}_{\text{R}}^{*}\right)^{-1}\boldsymbol{E}_{\text{R}},
\end{align*}
and thus 
\begin{align*}
\left\Vert \mathcal{P}_{U}\left(\boldsymbol{A}_{(k,l)}\right)\right\Vert _{\text{F}}^{2} & \leq\left\Vert \boldsymbol{E}_{\text{L}}\left(\boldsymbol{E}_{\text{L}}^{*}\boldsymbol{E}_{\text{L}}\right)^{-1}\boldsymbol{E}_{\text{L}}^{*}\boldsymbol{A}_{(k,l)}\right\Vert _{\text{F}}^{2}\\
 & \leq\frac{1}{\sigma_{\min}\left(\boldsymbol{E}_{\text{L}}^{*}\boldsymbol{E}_{\text{L}}\right)}\left\Vert \boldsymbol{E}_{\text{L}}^{*}\boldsymbol{A}_{(k,l)}\right\Vert _{\text{F}}^{2},\\
\text{and}\quad\left\Vert \mathcal{P}_{V}\left(\boldsymbol{A}_{(k,l)}\right)\right\Vert _{\text{F}}^{2} & \leq\left\Vert \boldsymbol{A}_{(k,l)}\boldsymbol{E}_{\text{R}}^{*}\left(\boldsymbol{E}_{\text{R}}\boldsymbol{E}_{\text{R}}^{*}\right)^{-1}\boldsymbol{E}_{\text{R}}\right\Vert _{\text{F}}^{2}\\
 & \leq\frac{1}{\sigma_{\min}\left(\boldsymbol{E}_{\text{R}}\boldsymbol{E}_{\text{R}}^{*}\right)}\left\Vert \boldsymbol{A}_{(k,l)}\boldsymbol{E}_{\text{R}}^{*}\right\Vert _{\text{F}}^{2}.
\end{align*}
Note that $\sqrt{\omega_{k,l}}\boldsymbol{E}_{\text{L}}^{*}\boldsymbol{A}_{(k,l)}$
consists of $\omega_{k,l}$ columns of $\boldsymbol{E}_{\text{L}}^{*}$
(and hence it contains $r\omega_{k,l}$ nonzero entries in total).
Owing to the fact that each entry of $\boldsymbol{E}_{\text{L}}^{*}$
has magnitude $\frac{1}{\sqrt{k_{1}k_{2}}}$, one can derive 
\[
\left\Vert \boldsymbol{E}_{\text{L}}^{*}\boldsymbol{A}_{(k,l)}\right\Vert _{\text{F}}^{2}=\frac{1}{\omega_{k,l}}\cdot r\omega_{k,l}\cdot\frac{1}{k_{1}k_{2}}=\frac{r}{k_{1}k_{2}}\leq\frac{rc_{\text{s}}}{n_{1}n_{2}}.
\]
A similar argument yields $\left\Vert \boldsymbol{A}_{(k,l)}\boldsymbol{E}_{\text{R}}^{*}\right\Vert _{\text{F}}^{2}\leq\frac{c_{\text{s}}r}{n_{1}n_{2}}$.
Combining $\sigma_{\min}\left(\boldsymbol{E}_{\text{L}}^{*}\boldsymbol{E}_{\text{L}}\right)\geq\frac{1}{\mu_{1}}$
and $\sigma_{\min}\left(\boldsymbol{E}_{\text{R}}\boldsymbol{E}_{\text{R}}^{*}\right)\geq\frac{1}{\mu_{1}}$,
\eqref{eq:IncoherenceUV_W} follows by plugging these facts into the
above equations. 

To show \eqref{eq:UBAbPtAa}, since $\left|\left\langle \boldsymbol{A}_{\boldsymbol{b}},\mathcal{P}_{T}\left(\boldsymbol{A}_{\boldsymbol{a}}\right)\right\rangle \right|=\left|\left\langle \mathcal{P}_{T}\left(\boldsymbol{A}_{\boldsymbol{b}}\right),\boldsymbol{A}_{\boldsymbol{a}}\right\rangle \right|$,
we only need to examine the situation where $\omega_{\boldsymbol{b}}<\omega_{\boldsymbol{a}}$.
Observe that 
\begin{align*}
\left|\left\langle \boldsymbol{A}_{\boldsymbol{b}},\mathcal{P}_{T}\boldsymbol{A}_{\boldsymbol{a}}\right\rangle \right| & \leq\left|\left\langle \boldsymbol{A}_{\boldsymbol{b}},\boldsymbol{U}\boldsymbol{U}^{*}\boldsymbol{A}_{\boldsymbol{a}}\right\rangle \right|+\left|\left\langle \boldsymbol{A}_{\boldsymbol{b}},\boldsymbol{A}_{\boldsymbol{a}}\boldsymbol{V}\boldsymbol{V}^{*}\right\rangle \right|\\
 & \quad\quad\quad\quad+\left|\left\langle \boldsymbol{A}_{\boldsymbol{b}},\boldsymbol{U}\boldsymbol{U}^{*}\boldsymbol{A}_{\boldsymbol{a}}\boldsymbol{V}\boldsymbol{V}^{*}\right\rangle \right|.
\end{align*}
Owing to the multi-fold Hankel structure of $\boldsymbol{A}_{\boldsymbol{a}}$,
the matrix $\boldsymbol{U}\boldsymbol{U}^{*}\sqrt{\omega_{\boldsymbol{a}}}\boldsymbol{A}_{\boldsymbol{a}}$
consists of $\omega_{\boldsymbol{a}}$ columns of $\boldsymbol{U}\boldsymbol{U}^{*}$.
Since there are only $\omega_{\boldsymbol{b}}$ nonzero entries in
$\boldsymbol{A}_{\boldsymbol{b}}$ each of magnitude $\frac{1}{\sqrt{\omega_{\boldsymbol{b}}}}$,
we can derive 
\begin{align*}
\left|\left\langle \boldsymbol{A}_{\boldsymbol{b}},\boldsymbol{U}\boldsymbol{U}^{*}\boldsymbol{A}_{\boldsymbol{a}}\right\rangle \right| & \leq\left\Vert \boldsymbol{A}_{\boldsymbol{b}}\right\Vert _{1}\left\Vert \boldsymbol{U}\boldsymbol{U}^{*}\boldsymbol{A}_{\boldsymbol{a}}\right\Vert _{\infty}\\
 & =\omega_{\boldsymbol{b}}\cdot\frac{1}{\sqrt{\omega_{\boldsymbol{b}}}}\cdot\max_{\alpha,\beta}\left|\left(\boldsymbol{U}\boldsymbol{U}^{*}\boldsymbol{A}_{\boldsymbol{a}}\right)_{\alpha,\beta}\right|\\
 & \leq\sqrt{\frac{\omega_{\boldsymbol{b}}}{\omega_{\boldsymbol{a}}}}\max_{\alpha,\beta}\left|\left(\boldsymbol{U}\boldsymbol{U}^{*}\right)_{\alpha,\beta}\right|.
\end{align*}
Each entry of $\boldsymbol{U}\boldsymbol{U}^{*}$ is bounded in magnitude
by 
\begin{align}
\left|\left(\boldsymbol{U}\boldsymbol{U}^{*}\right)_{k,l}\right| & =\left|\boldsymbol{e}_{k}^{\top}\boldsymbol{E}_{\text{L}}\left(\boldsymbol{E}_{\text{L}}^{*}\boldsymbol{E}_{\text{L}}\right)^{-1}\boldsymbol{E}_{\text{L}}^{*}\boldsymbol{e}_{l}\right|\nonumber \\
 & \leq\left\Vert \boldsymbol{e}_{k}^{\top}\boldsymbol{E}_{\text{L}}\right\Vert _{\text{F}}\left\Vert \left(\boldsymbol{E}_{\text{L}}^{*}\boldsymbol{E}_{\text{L}}\right)^{-1}\right\Vert \left\Vert \boldsymbol{E}_{\text{L}}^{*}\boldsymbol{e}_{l}\right\Vert _{\text{F}}\nonumber \\
 & \leq\frac{r}{k_{1}k_{2}}\frac{1}{\sigma_{\min}\left(\boldsymbol{E}_{\text{L}}^{*}\boldsymbol{E}_{\text{L}}\right)}\leq\frac{\mu_{1}c_{\text{s}}r}{n_{1}n_{2}},\label{eq:UmagnitudeBound}
\end{align}
which immediately implies that 
\begin{align}
\left|\left\langle \boldsymbol{A}_{\boldsymbol{b}},\boldsymbol{U}\boldsymbol{U}^{*}\boldsymbol{A}_{\boldsymbol{a}}\right\rangle \right| & \leq\sqrt{\frac{\omega_{\boldsymbol{b}}}{\omega_{\boldsymbol{a}}}}\frac{\mu_{1}c_{\text{s}}r}{n_{1}n_{2}}.\label{eq:UUAaAbbound}
\end{align}
Similarly, one can derive 
\begin{equation}
\left|\left\langle \boldsymbol{A}_{\boldsymbol{b}},\boldsymbol{A}_{\boldsymbol{a}}\boldsymbol{V}\boldsymbol{V}^{*}\right\rangle \right|\leq\sqrt{\frac{\omega_{\boldsymbol{b}}}{\omega_{\boldsymbol{a}}}}\frac{\mu_{1}c_{\text{s}}r}{n_{1}n_{2}}.\label{eq:VVAaAbbound}
\end{equation}

We still need to bound the magnitude of $\left\langle \boldsymbol{U}\boldsymbol{U}^{*}\boldsymbol{A}_{\boldsymbol{a}}\boldsymbol{V}\boldsymbol{V}^{*},\boldsymbol{A}_{\boldsymbol{b}}\right\rangle $.
One can observe that for the $k$th row of $\boldsymbol{U}\boldsymbol{U}^{*}$:
\begin{align*}
\left\Vert \boldsymbol{e}_{k}^{\top}\boldsymbol{U}\boldsymbol{U}^{*}\right\Vert _{\text{F}} & \leq\left\Vert \boldsymbol{e}_{k}^{\top}\boldsymbol{E}_{\text{L}}\left(\boldsymbol{E}_{\text{L}}^{*}\boldsymbol{E}_{\text{L}}\right)^{-1}\boldsymbol{E}_{\text{L}}^{*}\right\Vert _{\text{F}}\\
 & \leq\left\Vert \boldsymbol{e}_{k}^{\top}\boldsymbol{E}_{\text{L}}\right\Vert _{\text{F}}\left\Vert \left(\boldsymbol{E}_{\text{L}}^{*}\boldsymbol{E}_{\text{L}}\right)^{-1}\boldsymbol{E}_{\text{L}}^{*}\right\Vert \\
 & \leq\sqrt{\frac{\mu_{1}c_{\text{s}}r}{n_{1}n_{2}}}.
\end{align*}
Similarly, for the $l$th column of $\boldsymbol{V}\boldsymbol{V}^{*}$,
one has $\left\Vert \boldsymbol{V}\boldsymbol{V}^{*}\boldsymbol{e}_{l}\right\Vert _{\text{F}}\leq\sqrt{\frac{\mu_{1}c_{\text{s}}r}{n_{1}n_{2}}}$.
The magnitude of the entries of $\boldsymbol{U}\boldsymbol{U}^{*}\boldsymbol{A}_{\boldsymbol{a}}\boldsymbol{V}\boldsymbol{V}^{*}$
can now be bounded by 
\begin{align*}
\left|\left(\boldsymbol{U}\boldsymbol{U}^{*}\boldsymbol{A}_{\boldsymbol{a}}\boldsymbol{V}\boldsymbol{V}^{*}\right)_{k,l}\right| & \leq\left\Vert \boldsymbol{A}_{\boldsymbol{a}}\right\Vert \left\Vert \boldsymbol{e}_{k}^{\top}\boldsymbol{U}\boldsymbol{U}^{*}\right\Vert _{\text{F}}\left\Vert \boldsymbol{V}\boldsymbol{V}^{*}\boldsymbol{e}_{l}\right\Vert _{\text{F}}\\
 & \leq\frac{1}{\sqrt{\omega_{\boldsymbol{a}}}}\frac{\mu_{1}c_{\text{s}}r}{n_{1}n_{2}},
\end{align*}
where we used $\left\Vert \boldsymbol{A}_{\boldsymbol{a}}\right\Vert =1/\sqrt{\omega_{\boldsymbol{a}}}$.
Since $\boldsymbol{A}_{\boldsymbol{b}}$ has only $\omega_{\boldsymbol{b}}$
nonzero entries each has magnitude $\frac{1}{\sqrt{\omega_{\boldsymbol{b}}}}$,
one can verify that 
\begin{align}
\left|\left\langle \boldsymbol{U}\boldsymbol{U}^{*}\boldsymbol{A}_{\boldsymbol{a}}\boldsymbol{V}\boldsymbol{V}^{*},\boldsymbol{A}_{\boldsymbol{b}}\right\rangle \right| & \leq\left(\max_{k,l}\left|\left(\boldsymbol{U}\boldsymbol{U}^{*}\boldsymbol{A}_{\boldsymbol{a}}\boldsymbol{V}\boldsymbol{V}^{*}\right)_{k,l}\right|\right)\cdot\frac{\omega_{\boldsymbol{b}}}{\sqrt{\omega_{\boldsymbol{b}}}}\nonumber \\
 & =\sqrt{\frac{\omega_{\boldsymbol{b}}}{\omega_{\boldsymbol{a}}}}\frac{\mu_{1}c_{\text{s}}r}{n_{1}n_{2}}.\label{eq:UUAaVVbBound}
\end{align}
The above bounds (\ref{eq:UUAaAbbound}), (\ref{eq:VVAaAbbound})
and (\ref{eq:UUAaVVbBound}) taken together lead to \eqref{eq:UBAbPtAa}.

\section{Proof of Lemma \ref{lemma-Invertibility-PtWPt}\label{sec:Proof-of-Lemma-lemma-Invertibility-PtWPt}}

Define a family of operators 
\[
\mathcal{Z}_{(k,l)}:=\frac{n_{1}n_{2}}{m}\mathcal{P}_{T}\mathcal{A}_{(k,l)}\mathcal{P}_{T}-\frac{1}{m}\mathcal{P}_{T}\mathcal{A}\mathcal{P}_{T}.
\]
for any $(k,l)\in[n_{1}]\times[n_{2}]$. For any matrix $\boldsymbol{M}$,
we can compute 
\begin{align}
\mathcal{P}_{T}\mathcal{A}_{(k,l)}\mathcal{P}_{T}\left(\boldsymbol{M}\right) & =\mathcal{P}_{T}\left(\left\langle \boldsymbol{A}_{(k,l)},\mathcal{P}_{T}\boldsymbol{M}\right\rangle \boldsymbol{A}_{(k,l)}\right)\nonumber \\
 & =\mathcal{P}_{T}\left(\boldsymbol{A}_{(k,l)}\right)\left\langle \mathcal{P}_{T}\left(\boldsymbol{A}_{(k,l)}\right),\boldsymbol{M}\right\rangle ,\label{eq:ExpansionPtAPt}
\end{align}
and hence 
\begin{align*}
 & \left(\mathcal{P}_{T}\mathcal{A}_{(k,l)}\mathcal{P}_{T}\right)^{2}(\boldsymbol{M})\\
 & \quad=\left[\mathcal{P}_{T}\mathcal{A}_{(k,l)}\mathcal{P}_{T}\left(\boldsymbol{A}_{(k,l)}\right)\right]\left\langle \mathcal{P}_{T}\left(\boldsymbol{A}_{(k,l)}\right),\boldsymbol{M}\right\rangle \\
 & \quad=\left\langle \boldsymbol{A}_{(k,l)},\mathcal{P}_{T}\left(\boldsymbol{A}_{(k,l)}\right)\right\rangle \mathcal{P}_{T}\left(\boldsymbol{A}_{(k,l)}\right)\left\langle \mathcal{P}_{T}\left(\boldsymbol{A}_{(k,l)}\right),\boldsymbol{M}\right\rangle \\
 & \quad=\left\Vert \mathcal{P}_{T}\left(\boldsymbol{A}_{(k,l)}\right)\right\Vert _{\text{F}}^{2}\mathcal{P}_{T}\mathcal{A}_{(k,l)}\mathcal{P}_{T}\left(\boldsymbol{M}\right)\\
 & \quad\leq\frac{2\mu_{1}c_{\text{s}}r}{n_{1}n_{2}}\mathcal{P}_{T}\mathcal{A}_{(k,l)}\mathcal{P}_{T}\left(\boldsymbol{M}\right),
\end{align*}
where the last inequality follows from \eqref{eq:IncoherenceT_W}.
This further gives 
\begin{equation}
\left\Vert \mathcal{P}_{T}\mathcal{A}_{(k,l)}\mathcal{P}_{T}\right\Vert \leq\frac{2\mu_{1}c_{\text{s}}r}{n_{1}n_{2}}.\label{eq:ExpansionPtAPt_square}
\end{equation}

Let $\boldsymbol{a}_{i}$ ($1\leq i\leq m$) be $m$ independent indices
uniformly drawn from $[n_{1}]\times[n_{2}]$, then we have $\mathbb{E}\left[\mathcal{Z}_{\boldsymbol{a}_{i}}\right]=0$
and 
\[
\left\Vert \mathcal{Z}_{\boldsymbol{a}_{i}}\right\Vert \leq2\max_{(k,l)\in[n_{1}]\times[n_{2}]}\frac{n_{1}n_{2}}{m}\left\Vert \mathcal{P}_{T}\boldsymbol{A}_{(k,l)}\mathcal{P}_{T}\right\Vert \leq\frac{4\mu_{1}c_{\text{s}}r}{m}.
\]
following from \eqref{eq:ExpansionPtAPt_square}. Further, 
\begin{align*}
\mathbb{E}\left[\mathcal{Z}_{\boldsymbol{a}_{i}}^{2}\right] & =\mathbb{E}\left(\frac{n_{1}n_{2}}{m}\mathcal{P}_{T}\mathcal{A}_{\boldsymbol{a}_{i}}\mathcal{P}_{T}\right)^{2}-\left(\mathbb{E}\left[\frac{n_{1}n_{2}}{m}\mathcal{P}_{T}\mathcal{A}_{\boldsymbol{a}_{i}}\mathcal{P}_{T}\right]\right)^{2}\\
 & =\frac{n_{1}^{2}n_{2}^{2}}{m^{2}}\mathbb{E}\left(\mathcal{P}_{T}\mathcal{A}_{\boldsymbol{a}_{i}}\mathcal{P}_{T}\right)^{2}-\frac{1}{m^{2}}\left(\mathcal{P}_{T}\mathcal{A}\mathcal{P}_{T}\right)^{2},
\end{align*}
We can then bound the operator norm as 
\begin{align}
\sum_{i=1}^{m}\left\Vert \mathbb{E}\left[\mathcal{Z}_{\boldsymbol{a}_{i}}^{2}\right]\right\Vert  & \leq\sum_{i=1}^{m}\frac{n_{1}^{2}n_{2}^{2}}{m^{2}}\left\Vert \mathbb{E}\left(\mathcal{P}_{T}\mathcal{A}_{\boldsymbol{a}_{i}}\mathcal{P}_{T}\right)^{2}\right\Vert \nonumber \\
 & \quad\quad\quad+\frac{1}{m}\left\Vert \left(\mathcal{P}_{T}\mathcal{A}\mathcal{P}_{T}\right)^{2}\right\Vert \nonumber \\
 & \leq\frac{n_{1}^{2}n_{2}^{2}}{m}\frac{2\mu_{1}c_{\text{s}}r}{n_{1}n_{2}}\left\Vert \mathbb{E}\left[\mathcal{P}_{T}\mathcal{A}_{\boldsymbol{a}_{i}}\mathcal{P}_{T}\right]\right\Vert +\frac{1}{m}\label{bound111}\\
 & =\frac{2\mu_{1}c_{\text{s}}rn_{1}n_{2}}{m}\frac{1}{n_{1}n_{2}}\left\Vert \mathcal{P}_{T}\mathcal{A}\mathcal{P}_{T}\right\Vert +\frac{1}{m^{2}}\nonumber \\
 & \leq\frac{4\mu_{1}c_{\text{s}}r}{m},\label{eq:DefnV_PAP}
\end{align}
where \eqref{bound111} uses \eqref{eq:ExpansionPtAPt_square}. 
Applying Lemma~\ref{lemma:Bernstein} yields that there exists some
constant $0<\epsilon\leq\frac{1}{2}$ such that 
\[
\left\Vert \sum_{i=1}^{m}\mathcal{Z}_{\boldsymbol{a}_{i}}\right\Vert \leq\epsilon
\]
with probability exceeding $1-\left(n_{1}n_{2}\right)^{-4}$, provided
that $m>c_{1}\mu_{1}c_{\text{s}}r\log\left(n_{1}n_{2}\right)$ for
some universal constant $c_{1}>0$.


\section{Proof of Lemma \ref{lemma:OpNorm_Anorm}\label{sec:proof-lemma:OpNorm_Anorm}}

Suppose that $\mathcal{A}_{\Omega}=\sum_{i=1}^{m}\mathcal{A}_{\boldsymbol{a}_{i}}$,
where $\boldsymbol{a}_{i}$, $1\leq i\leq m$, are $m$ independent
indices drawn uniformly at random from $[n_{1}]\times[n_{2}]$. Define
\[
\boldsymbol{S}_{(k,l)}:=\frac{n_{1}n_{2}}{m}\mathcal{A}_{(k,l)}\left(\boldsymbol{M}\right)-\frac{1}{m}\mathcal{A}\left(\boldsymbol{M}\right),\quad\left(k,l\right)\in\left[n_{1}\right]\times\left[n_{2}\right],
\]
which obeys $\mathbb{E}\left[\boldsymbol{S}_{\boldsymbol{a}_{i}}\right]={\bf 0}$
and 
\[
\left(\frac{n_{1}n_{2}}{m}\mathcal{A}_{\Omega}-\mathcal{A}\right)\left(\boldsymbol{M}\right):=\sum_{i=1}^{m}\boldsymbol{S}_{\boldsymbol{a}_{i}}.
\]

In order to apply Lemma \ref{lemma:Bernstein}, one needs to bound
$\left\Vert \mathbb{E}\left[\sum_{i=1}^{m}\boldsymbol{S}_{\boldsymbol{a}_{i}}\boldsymbol{S}_{\boldsymbol{a}_{i}}^{*}\right]\right\Vert $
and $\left\Vert \boldsymbol{S}_{\boldsymbol{a}_{i}}\right\Vert $,
which we tackle separately in the sequel. Observe that 
\begin{align*}
\boldsymbol{0}\preceq\boldsymbol{S}_{\left(k,l\right)}\boldsymbol{S}_{\left(k,l\right)}^{*} & =\left(\frac{n_{1}n_{2}}{m}\mathcal{A}_{(k,l)}\left(\boldsymbol{M}\right)-\frac{1}{m}\mathcal{A}\left(\boldsymbol{M}\right)\right)\cdot\\
 & \quad\quad\left(\frac{n_{1}n_{2}}{m}\mathcal{A}_{(k,l)}\left(\boldsymbol{M}\right)-\frac{1}{m}\mathcal{A}\left(\boldsymbol{M}\right)\right)^{*}\\
 & \preceq\left(\frac{n_{1}n_{2}}{m}\right)^{2}\mathcal{A}_{(k,l)}\left(\boldsymbol{M}\right)\left(\mathcal{A}_{(k,l)}\left(\boldsymbol{M}\right)\right)^{*}\\
 & =\left(\frac{n_{1}n_{2}}{m}\right)^{2}\left|\left\langle \boldsymbol{A}_{(k,l)},\boldsymbol{M}\right\rangle \right|^{2}\boldsymbol{A}_{(k,l)}\cdot\boldsymbol{A}_{(k,l)}^{\top}\\
 & \preceq\left(\frac{n_{1}n_{2}}{m}\right)^{2}\frac{\left|\left\langle \boldsymbol{A}_{(k,l)},\boldsymbol{M}\right\rangle \right|^{2}}{\omega_{k,l}}\boldsymbol{I},
\end{align*}
where the first inequality follows since $\frac{1}{m}\sum_{k,l}\mathcal{A}_{(k,l)}\left(\boldsymbol{M}\right)=\frac{1}{m}\mathcal{A}\left(\boldsymbol{M}\right)$,
and the last inequality arises from the fact that all non-zero entries
of $\boldsymbol{A}_{(k,l)}\cdot\boldsymbol{A}_{(k,l)}^{\top}$ lie
on its diagonal and are bounded in magnitude by $\frac{1}{\omega_{k,l}}$.
This immediately suggests 
\begin{align}
 & \left\Vert \mathbb{E}\left[\sum_{i=1}^{m}\boldsymbol{S}_{\boldsymbol{a}_{i}}\boldsymbol{S}_{\boldsymbol{a}_{i}}^{*}\right]\right\Vert =\frac{m}{n_{1}n_{2}}\left\Vert \sum_{\left(k,l\right)\in[n_{1}]\times[n_{2}]}\boldsymbol{S}_{\left(k,l\right)}\boldsymbol{S}_{\left(k,l\right)}^{*}\right\Vert \nonumber \\
 & \quad\leq\frac{m}{n_{1}n_{2}}\left\Vert \left(\frac{n_{1}n_{2}}{m}\right)^{2}\left(\sum_{\left(k,l\right)\in[n_{1}]\times[n_{2}]}\frac{\left|\left\langle \boldsymbol{A}_{(k,l)},\boldsymbol{M}\right\rangle \right|^{2}}{\omega_{k,l}}\right)\boldsymbol{I}\right\Vert \nonumber \\
 & \quad=\frac{n_{1}n_{2}}{m}\left\Vert \boldsymbol{M}\right\Vert _{\mathcal{A},2}^{2},\label{eq:Var_S_A2}
\end{align}
where the last equality follows from the definition of $\left\Vert \boldsymbol{M}\right\Vert _{\mathcal{A},2}$.
Following the same argument, one can derive the same bound for $\left\Vert \mathbb{E}\left[\sum_{i=1}^{m}\boldsymbol{S}_{\boldsymbol{a}_{i}}^{*}\boldsymbol{S}_{\boldsymbol{a}_{i}}\right]\right\Vert $
as well.

On the other hand, the operator norm of each $\boldsymbol{S}_{(k,l)}$
can be bounded as follows 
\begin{align}
\left\Vert \boldsymbol{S}_{(k,l)}\right\Vert  & \leq\left\Vert \frac{n_{1}n_{2}}{m}\mathcal{A}_{(k,l)}\left(\boldsymbol{M}\right)\right\Vert +\left\Vert \frac{1}{m}\mathcal{A}\left(\boldsymbol{M}\right)\right\Vert \nonumber \\
 & \leq2\max_{(k,l)\in\left[n_{1}\right]\times\left[n_{2}\right]}\left\Vert \frac{n_{1}n_{2}}{m}\mathcal{A}_{(k,l)}\left(\boldsymbol{M}\right)\right\Vert \nonumber \\
 & =\frac{2n_{1}n_{2}}{m}\max_{(k,l)\in\left[n_{1}\right]\times\left[n_{2}\right]}\left\Vert \left\langle \boldsymbol{A}_{(k,l)},\boldsymbol{M}\right\rangle \boldsymbol{A}_{(k,l)}\right\Vert \label{eq:A_identity}\\
 & =\frac{2n_{1}n_{2}}{m}\max_{(k,l)\in\left[n_{1}\right]\times\left[n_{2}\right]}\left|\frac{\left\langle \boldsymbol{A}_{(k,l)},\boldsymbol{M}\right\rangle }{\sqrt{\omega_{k,l}}}\right|\nonumber \\
 & =\frac{2n_{1}n_{2}}{m}\left\Vert \boldsymbol{M}\right\Vert _{\mathcal{A},\infty},\nonumber 
\end{align}
where (\ref{eq:A_identity}) holds since $\left\Vert \boldsymbol{A}_{(k,l)}\right\Vert =\frac{1}{\sqrt{\omega_{k,l}}}$
and the last equality follows by applying the definition of $\left\Vert \cdot\right\Vert _{\mathcal{A},\infty}$.

Finally, we combine the above two bounds together with Bernstein inequality
(Lemma \ref{lemma:Bernstein}) to obtain 
\begin{align*}
\left\Vert \left(\frac{n_{1}n_{2}}{m}\mathcal{A}_{\Omega}-\mathcal{A}\right)\left(\boldsymbol{M}\right)\right\Vert \leq & c_{2}\sqrt{\frac{n_{1}n_{2}\log\left(n_{1}n_{2}\right)}{m}}\left\Vert \boldsymbol{M}\right\Vert _{\mathcal{A},2}\\
 & \quad+c_{2}\frac{2n_{1}n_{2}\log\left(n_{1}n_{2}\right)}{m}\left\Vert \boldsymbol{M}\right\Vert _{\mathcal{A},\infty}
\end{align*}
with high probability, where $c_{2}>0$ is some absolute constant.

\section{Proof of Lemma \ref{lemma:NormA2_bound}}

\label{sec:Proof-of-Lemma:NormA2_bound}

Write $\mathcal{A}_{\Omega}=\sum_{i=1}^{m}\mathcal{A}_{\boldsymbol{a}_{i}}$,
where $\boldsymbol{a}_{i}$ ($1\leq i\leq m$) are $m$ independent
indices uniformly drawn from $[n_{1}]\times[n_{2}]$. By the definition
of $\left\Vert \boldsymbol{M}\right\Vert _{\mathcal{A},2}$, we need
to examine the components 
\[
\frac{1}{\sqrt{\omega_{k,l}}}\left\langle \boldsymbol{A}_{\left(k,l\right)},\left(\frac{n_{1}n_{2}}{m}\mathcal{P}_{T}\mathcal{A}_{\Omega}-\mathcal{P}_{T}\mathcal{A}\right)\left(\boldsymbol{M}\right)\right\rangle 
\]
for all $(k,l)\in[n_{1}]\times[n_{2}]$.

Define a set of variables $z_{(\alpha,\beta)}$'s to be 
\begin{equation}
\small z_{(\alpha,\beta)}^{\left(k,l\right)}:=\frac{1}{\sqrt{\omega_{k,l}}}\left\langle \boldsymbol{A}_{\left(k,l\right)},\frac{n_{1}n_{2}}{m}\mathcal{P}_{T}\mathcal{A}_{(\alpha,\beta)}\left(\boldsymbol{M}\right)-\frac{1}{m}\mathcal{P}_{T}\mathcal{A}\left(\boldsymbol{M}\right)\right\rangle ,\label{eq:zalpha}
\end{equation}
thus resulting in 
\[
\frac{1}{\sqrt{\omega_{k,l}}}\left\langle \boldsymbol{A}_{\left(k,l\right)},\left(\frac{n_{1}n_{2}}{m}\mathcal{P}_{T}\mathcal{A}_{\Omega}-\mathcal{P}_{T}\mathcal{A}\right)\left(\boldsymbol{M}\right)\right\rangle :=\sum_{i=1}^{m}z_{\boldsymbol{a}_{i}}^{\left(k,l\right)}.
\]
The definition of $\left\Vert \boldsymbol{M}\right\Vert _{\mathcal{A},2}$
allows us to express 
\begin{equation}
\left\Vert \left(\frac{n_{1}n_{2}}{m}\mathcal{P}_{T}\mathcal{A}_{\Omega}-\mathcal{P}_{T}\mathcal{A}\right)\left(\boldsymbol{M}\right)\right\Vert _{\mathcal{A},2}=\left\Vert \sum_{i=1}^{m}\boldsymbol{z}_{\boldsymbol{a}_{i}}\right\Vert _{2},
\end{equation}
where $\boldsymbol{z}_{(\alpha,\beta)}$'s are defined to be $n_{1}n_{2}$-dimensional
vectors 
\[
\boldsymbol{z}_{\left(\alpha,\beta\right)}:=\left[z_{\left(\alpha,\beta\right)}^{\left(k,l\right)}\right]_{\left(k,l\right)\in\left[n_{1}\right]\times\left[n_{2}\right]},\quad\left(\alpha,\beta\right)\in\left[n_{1}\right]\times\left[n_{2}\right].
\]

For any random vector $\boldsymbol{v}\in\mathcal{V}$, one can easily
bound $\left\Vert \boldsymbol{v}-\mathbb{E}\boldsymbol{v}\right\Vert _{2}\leq2\sup_{\tilde{\boldsymbol{v}}\in\mathcal{V}}\left\Vert \tilde{\boldsymbol{v}}\right\Vert _{2}$.
Observing that $\mathbb{E}\left[\boldsymbol{z}_{(\alpha,\beta)}\right]={\bf 0}$,
we can bound 
\begin{align}
 & \left\Vert \boldsymbol{z}_{\left(\alpha,\beta\right)}\right\Vert _{2}\leq2\sqrt{\sum_{k,l}\frac{1}{\omega_{k,l}}\left|\left\langle \boldsymbol{A}_{\left(k,l\right)},\frac{2n_{1}n_{2}}{m}\mathcal{P}_{T}\mathcal{A}_{(\alpha,\beta)}\left(\boldsymbol{M}\right)\right\rangle \right|^{2}}\nonumber \\
 & \quad\small=\frac{2n_{1}n_{2}}{m}\sqrt{\sum_{k,l}\frac{1}{\omega_{k,l}}\left|\left\langle \boldsymbol{A}_{\left(k,l\right)},\mathcal{P}_{T}\left(\boldsymbol{A}_{(\alpha,\beta)}\right)\left\langle \boldsymbol{A}_{(\alpha,\beta)},\boldsymbol{M}\right\rangle \right\rangle \right|^{2}}\nonumber \\
 & \quad\small=\frac{2n_{1}n_{2}}{m}\frac{\left|\left\langle \boldsymbol{A}_{(\alpha,\beta)},\boldsymbol{M}\right\rangle \right|}{\sqrt{\omega_{\alpha,\beta}}}\sqrt{\sum_{k,l}\frac{\omega_{\alpha,\beta}\left|\left\langle \boldsymbol{A}_{\left(k,l\right)},\mathcal{P}_{T}\left(\boldsymbol{A}_{(\alpha,\beta)}\right)\right\rangle \right|^{2}}{\omega_{k,l}}}\nonumber \\
 & \quad\small\leq\frac{2n_{1}n_{2}}{m}\frac{\left|\left\langle \boldsymbol{A}_{(\alpha,\beta)},\boldsymbol{M}\right\rangle \right|}{\sqrt{\omega_{\alpha,\beta}}}\sqrt{\frac{\mu_{5}r}{n_{1}n_{2}}}\nonumber \\
 & \quad\small=2\sqrt{\frac{n_{1}n_{2}}{m}\cdot\frac{\mu_{5}r}{m}}\frac{\left|\left\langle \boldsymbol{A}_{(\alpha,\beta)},\boldsymbol{M}\right\rangle \right|}{\sqrt{\omega_{\alpha,\beta}}},\label{eq:Boundz}
\end{align}
where \eqref{eq:Boundz} follows from the definition of $\mu_{5}$
in \eqref{eq:HypothesisMu5}. Now it follows that 
\begin{align}
\left\Vert \boldsymbol{z}_{\boldsymbol{a}_{i}}\right\Vert _{2} & \leq\max_{\alpha,\beta}\left\Vert \boldsymbol{z}_{\left(\alpha,\beta\right)}\right\Vert _{2}\nonumber \\
 & \leq\max_{\alpha,\beta}2\sqrt{\frac{n_{1}n_{2}}{m}\cdot\frac{\mu_{5}r}{m}}\frac{\left|\left\langle \boldsymbol{A}_{(\alpha,\beta)},\boldsymbol{M}\right\rangle \right|}{\sqrt{\omega_{\alpha,\beta}}}\nonumber \\
 & \leq2\sqrt{\frac{n_{1}n_{2}}{m}\cdot\frac{\mu_{5}r}{m}}\left\Vert \boldsymbol{M}\right\Vert _{\mathcal{A},\infty},\label{eq:BoundMu5}
\end{align}
where \eqref{eq:BoundMu5} follows from \eqref{eq:DefnAinf}. On the
other hand, 
\begin{align*}
\left|\mathbb{E}\left[\sum_{i=1}^{m}\boldsymbol{z}_{\boldsymbol{a}_{i}}^{*}\boldsymbol{z}_{\boldsymbol{a}_{i}}\right]\right| & =\frac{m}{n_{1}n_{2}}\sum_{\alpha,\beta}\|\boldsymbol{z}_{\left(\alpha,\beta\right)}\|_{2}^{2}\\
 & \leq\frac{m}{n_{1}n_{2}}\sum_{\alpha,\beta}4\frac{n_{1}n_{2}}{m}\cdot\frac{\mu_{5}r}{m}\frac{\left|\left\langle \boldsymbol{A}_{(\alpha,\beta)},\boldsymbol{M}\right\rangle \right|^{2}}{\omega_{\alpha,\beta}}\\
 & =\frac{4\mu_{5}r}{m}\left\Vert \boldsymbol{M}\right\Vert _{\mathcal{A},2}^{2},
\end{align*}
which again follows from \eqref{eq:DefnA2}. Since $\boldsymbol{z}_{\boldsymbol{a}_{i}}$'s
are vectors, we immediately obtain $\left\Vert \mathbb{E}\left[\sum_{i=1}^{m}\boldsymbol{z}_{\boldsymbol{a}_{i}}\boldsymbol{z}_{\boldsymbol{a}_{i}}^{*}\right]\right\Vert =\left|\mathbb{E}\left[\sum_{i=1}^{m}\boldsymbol{z}_{\boldsymbol{a}_{i}}^{*}\boldsymbol{z}_{\boldsymbol{a}_{i}}\right]\right|$.
Applying Lemma \ref{lemma:Bernstein} then suggests that 
\begin{align*}
 & \left\Vert \left(\frac{n_{1}n_{2}}{m}\mathcal{P}_{T}\mathcal{A}_{\Omega}-\mathcal{P}_{T}\mathcal{A}\right)\left(\boldsymbol{M}\right)\right\Vert _{\mathcal{A},2}\leq c_{3}\sqrt{\frac{\mu_{5}r\log\left(n_{1}n_{2}\right)}{m}}\left\Vert \boldsymbol{M}\right\Vert _{\mathcal{A},2}\\
 & \quad\small\quad\quad\quad\quad\quad\quad+c_{3}\sqrt{\frac{n_{1}n_{2}}{m}\cdot\frac{\mu_{5}r}{m}}\log\left(n_{1}n_{2}\right)\left\Vert \boldsymbol{M}\right\Vert _{\mathcal{A},\infty}
\end{align*}
with high probability for some numerical constant $c_{3}>0$, which
completes the proof.


\section{Proof of Lemma \ref{lemma:NormAinf_bound}\label{sec:Proof-of-Lemma:NormAinf_bound}}

From Appendix \ref{sec:Proof-of-Lemma:NormA2_bound}, it is straightforward
that 
\begin{equation}
\left\Vert \left(\frac{n_{1}n_{2}}{m}\mathcal{P}_{T}\mathcal{A}_{\Omega}-\mathcal{P}_{T}\mathcal{A}\right)\left(\boldsymbol{M}\right)\right\Vert _{\mathcal{A},\infty}=\max_{k,l}\left|\sum_{i=1}^{m}z_{\boldsymbol{a}_{i}}^{\left(k,l\right)}\right|,
\end{equation}
where $z_{\boldsymbol{a}_{i}}^{\left(k,l\right)}$'s are defined as
\eqref{eq:zalpha}. Using similar techniques as \eqref{eq:Boundz},
we can obtain 
\begin{align*}
\left|z_{(\alpha,\beta)}^{(k,l)}\right| & \leq2\max_{k,l}\frac{\left|\left\langle \boldsymbol{A}_{\left(k,l\right)},\frac{n_{1}n_{2}}{m}\mathcal{P}_{T}\left(\boldsymbol{A}_{(\alpha,\beta)}\right)\left\langle \boldsymbol{A}_{(\alpha,\beta)},\boldsymbol{M}\right\rangle \right\rangle \right|}{\sqrt{\omega_{k,l}}}\\
 & \leq2\max_{k,l}\left(\frac{1}{\sqrt{\omega_{k,l}}}\sqrt{\frac{\omega_{k,l}}{\omega_{\alpha,\beta}}}\frac{3\mu_{1}c_{s}r}{n_{1}n_{2}}\right)\frac{n_{1}n_{2}}{m}\left|\left\langle \boldsymbol{A}_{(\alpha,\beta)},\boldsymbol{M}\right\rangle \right|\\
 & =\frac{6\mu_{1}c_{s}r}{m}\frac{1}{\sqrt{\omega_{\alpha,\beta}}}\left|\left\langle \boldsymbol{A}_{(\alpha,\beta)},\boldsymbol{M}\right\rangle \right|,
\end{align*}
where we have made use of the fact \eqref{eq:UBAbPtAa}. As a result,
one has 
\[
\left|z_{(\alpha,\beta)}^{(k,l)}\right|\leq\frac{6\mu_{1}c_{s}r}{m}\|\boldsymbol{M}\|_{\mathcal{A},\infty}
\]
and 
\begin{align*}
\mathbb{E}\left[\sum_{i=1}^{m}|z_{\boldsymbol{a}_{i}}^{\left(k,l\right)}|^{2}\right] & =\frac{m}{n_{1}n_{2}}\sum_{\alpha,\beta}\left|z_{(\alpha,\beta)}^{(k,l)}\right|^{2}\\
 & \leq\frac{m}{n_{1}n_{2}}\left(\frac{6\mu_{1}c_{s}r}{m}\right)^{2}\sum_{\alpha,\beta}\frac{1}{\omega_{\alpha,\beta}}\left|\left\langle \boldsymbol{A}_{(\alpha,\beta)},\boldsymbol{M}\right\rangle \right|^{2}\\
 & =\frac{36\mu_{1}^{2}c_{s}^{2}r^{2}}{mn_{1}n_{2}}\|\boldsymbol{M}\|_{\mathcal{A},2}^{2}.
\end{align*}

The Bernstein inequality in Lemma \ref{lemma:Bernstein} taken collectively
with the union bound yields that 
\begin{align*}
 & \left\Vert \left(\frac{n_{1}n_{2}}{m}\mathcal{P}_{T}\mathcal{A}_{\Omega}-\mathcal{P}_{T}\mathcal{A}\right)\left(\boldsymbol{M}\right)\right\Vert _{\mathcal{A},\infty}\\
 & \quad\leq c_{4}\sqrt{\frac{\mu_{1}c_{\mathrm{s}}r\log\left(n_{1}n_{2}\right)}{m}}\cdot\sqrt{\frac{\mu_{1}c_{\mathrm{s}}r}{n_{1}n_{2}}}\left\Vert \boldsymbol{M}\right\Vert _{\mathcal{A},2}\\
 & \quad\quad\quad\quad+c_{4}\frac{\mu_{1}c_{\mathrm{s}}r\log\left(n_{1}n_{2}\right)}{m}\left\Vert \boldsymbol{M}\right\Vert _{\mathcal{A},\infty}
\end{align*}
with high probability for some constant $c_{4}>0$, completing the
proof.

\section{Proof of Lemma \ref{lemma:mu6}\label{sec:Proof-of-Lemma:mu6}}

To bound $\left\Vert \boldsymbol{U}\boldsymbol{V}^{*}\right\Vert _{\mathcal{A},\infty}$,
observe that there exists a unitary matrix $\boldsymbol{B}$ such
that 
\[
\boldsymbol{U}\boldsymbol{V}^{*}=\boldsymbol{E}_{\text{L}}\left(\boldsymbol{E}_{\text{L}}^{*}\boldsymbol{E}_{\text{L}}\right)^{-\frac{1}{2}}\boldsymbol{B}\left(\boldsymbol{E}_{\text{R}}\boldsymbol{E}_{\text{R}}^{*}\right)^{-\frac{1}{2}}\boldsymbol{E}_{\text{R}}.
\]
For any $\left(k,l\right)\in[n_{1}]\times[n_{2}]$, we can then bound
\begin{align*}
 & \left|\left(\boldsymbol{U}\boldsymbol{V}^{*}\right)_{k,l}\right|=\left|\boldsymbol{e}_{k}^{\top}\boldsymbol{E}_{\text{L}}\left(\boldsymbol{E}_{\text{L}}^{*}\boldsymbol{E}_{\text{L}}\right)^{-\frac{1}{2}}\boldsymbol{B}\left(\boldsymbol{E}_{\text{R}}\boldsymbol{E}_{\text{R}}^{*}\right)^{-\frac{1}{2}}\boldsymbol{E}_{\text{R}}\boldsymbol{e}_{l}\right|\\
 & \quad\leq\left\Vert \boldsymbol{e}_{k}^{\top}\boldsymbol{E}_{\text{L}}\right\Vert _{\text{F}}\left\Vert \left(\boldsymbol{E}_{\text{L}}^{*}\boldsymbol{E}_{\text{L}}\right)^{-\frac{1}{2}}\right\Vert \left\Vert \boldsymbol{B}\right\Vert \left\Vert \left(\boldsymbol{E}_{\text{R}}^{*}\boldsymbol{E}_{\text{R}}\right)^{-\frac{1}{2}}\right\Vert \left\Vert \boldsymbol{E}_{\text{R}}\boldsymbol{e}_{l}\right\Vert _{\text{F}}\\
 & \quad\leq\sqrt{\frac{r}{k_{1}k_{2}}}\mu_{1}\sqrt{\frac{r}{\left(n_{1}-k_{1}+1\right)\left(n_{2}-k_{2}+1\right)}}\\
 & \quad\leq\frac{\mu_{1}c_{\text{s}}r}{n_{1}n_{2}}.
\end{align*}
Since $\boldsymbol{A}_{(k,l)}$ has only $\omega_{k,l}$ nonzero entries
each of magnitude $\frac{1}{\sqrt{\omega_{k,l}}}$, this leads to
\begin{align*}
\left\Vert \boldsymbol{U}\boldsymbol{V}^{*}\right\Vert _{\mathcal{A},\infty} & =\frac{1}{\omega_{k,l}}\left|\sum_{(\alpha,\beta)\in\Omega_{\text{e}}\left(k,l\right)}\left(\boldsymbol{U}\boldsymbol{V}^{*}\right)_{\alpha,\beta}\right|\\
 & \leq\max_{k,l}\left|\left(\boldsymbol{U}\boldsymbol{V}^{*}\right)_{k,l}\right|\leq\frac{\mu_{1}c_{\text{s}}r}{n_{1}n_{2}}.
\end{align*}

The rest is to bound $\left\Vert \boldsymbol{U}\boldsymbol{V}^{*}\right\Vert _{\mathcal{A},2}$
and $\left\Vert \mathcal{P}_{T}\left(\sqrt{\omega_{k,l}}\boldsymbol{A}_{\left(k,l\right)}\right)\right\Vert _{\mathcal{A},2}$.
Observe that the $i$th row of $\boldsymbol{U}\boldsymbol{V}^{*}$
obeys 
\begin{align}
\left\Vert \boldsymbol{e}_{i}^{\top}\boldsymbol{U}\boldsymbol{V}^{*}\right\Vert _{\mathrm{F}}^{2} & =\left\Vert \boldsymbol{e}_{i}^{\top}\boldsymbol{U}\right\Vert _{\mathrm{F}}^{2}=\left\Vert \boldsymbol{e}_{i}^{\top}\boldsymbol{E}_{\text{L}}\left(\boldsymbol{E}_{\text{L}}^{*}\boldsymbol{E}_{\text{L}}\right)^{-\frac{1}{2}}\right\Vert _{\mathrm{F}}^{2}\nonumber \\
 & \leq\left\Vert \boldsymbol{e}_{i}^{\top}\boldsymbol{E}_{\text{L}}\right\Vert _{\mathrm{F}}^{2}\left\Vert \left(\boldsymbol{E}_{\text{L}}^{*}\boldsymbol{E}_{\text{L}}\right)^{-1}\right\Vert \nonumber \\
 & \leq\mu_{1}\left\Vert \boldsymbol{e}_{i}^{\top}\boldsymbol{E}_{\text{L}}\right\Vert _{\mathrm{F}}^{2}\leq\frac{\mu_{1}c_{\mathrm{s}}r}{n_{1}n_{2}}.\label{eq:UV_row_energy}
\end{align}
That said, the total energy allocated to any row of $\boldsymbol{U}\boldsymbol{V}^{*}$
cannot exceed $\frac{\mu_{1}c_{\mathrm{s}}r}{n_{1}n_{2}}$.

Moreover, the matrix $\mathcal{P}_{T}\left(\sqrt{\omega_{\alpha,\beta}}\boldsymbol{A}_{(\alpha,\beta)}\right)$
enjoys similar properties as well, which we briefly reason as follows.
First, the matrix $\boldsymbol{U}\boldsymbol{U}^{*}\left(\sqrt{\omega_{\alpha,\beta}}\boldsymbol{A}_{(\alpha,\beta)}\right)$
obeys 
\begin{align*}
\left\Vert \boldsymbol{e}_{i}^{\top}\boldsymbol{U}\boldsymbol{U}^{*}\left(\sqrt{\omega_{\alpha,\beta}}\boldsymbol{A}_{(\alpha,\beta)}\right)\right\Vert _{\mathrm{F}}^{2} & \leq\left\Vert \boldsymbol{e}_{i}^{\top}\boldsymbol{U}\right\Vert _{\mathrm{F}}^{2}\left\Vert \boldsymbol{U}^{*}\right\Vert ^{2}\left\Vert \sqrt{\omega_{\alpha,\beta}}\boldsymbol{A}_{(\alpha,\beta)}\right\Vert ^{2}\\
 & \leq\frac{\mu_{1}c_{\mathrm{s}}r}{n_{1}n_{2}},
\end{align*}
since the operator norm of $\boldsymbol{U}$ and $\sqrt{\omega_{\alpha,\beta}}\boldsymbol{A}_{(\alpha,\beta)}$
are both bounded by 1. The same bound for $\sqrt{\omega_{\alpha,\beta}}\boldsymbol{A}_{(\alpha,\beta)}\boldsymbol{V}\boldsymbol{V}^{*}$
can be demonstrated via the same argument as for $\boldsymbol{U}\boldsymbol{U}^{*}\left(\sqrt{\omega_{\alpha,\beta}}\boldsymbol{A}_{(\alpha,\beta)}\right)$.
Additionally, for $\boldsymbol{U}\boldsymbol{U}^{*}\left(\sqrt{\omega_{\alpha,\beta}}\boldsymbol{A}_{(\alpha,\beta)}\right)\boldsymbol{V}\boldsymbol{V}^{*}$
one has 
\begin{align*}
 & \left\Vert \boldsymbol{e}_{i}^{\top}\boldsymbol{U}\boldsymbol{U}^{*}\left(\sqrt{\omega_{\alpha,\beta}}\boldsymbol{A}_{(\alpha,\beta)}\right)\boldsymbol{V}\boldsymbol{V}^{*}\right\Vert _{\mathrm{F}}^{2}\\
 & \quad\leq\left\Vert \boldsymbol{e}_{i}^{\top}\boldsymbol{U}\right\Vert _{\mathrm{F}}^{2}\left\Vert \boldsymbol{U}^{*}\right\Vert ^{2}\left\Vert \boldsymbol{V}\boldsymbol{V}^{*}\right\Vert ^{2}\left\Vert \sqrt{\omega_{\alpha,\beta}}\boldsymbol{A}_{(\alpha,\beta)}\right\Vert ^{2}\\
 & \quad\leq\frac{\mu_{1}c_{\mathrm{s}}r}{n_{1}n_{2}}.
\end{align*}
By definition of $\mathcal{P}_{T}$, 
\begin{align*}
 & \left\Vert \boldsymbol{e}_{i}^{\top}\mathcal{P}_{T}\left(\sqrt{\omega_{\alpha,\beta}}\boldsymbol{A}_{(\alpha,\beta)}\right)\right\Vert _{\mathrm{F}}^{2}\leq3\left\Vert \boldsymbol{e}_{i}^{\top}\boldsymbol{U}\boldsymbol{U}^{*}\left(\sqrt{\omega_{\alpha,\beta}}\boldsymbol{A}_{(\alpha,\beta)}\right)\right\Vert _{\mathrm{F}}^{2}\\
 & \quad\quad\quad\quad\quad\quad\quad\quad\quad+3\left\Vert \boldsymbol{e}_{i}^{\top}\left(\sqrt{\omega_{\alpha,\beta}}\boldsymbol{A}_{(\alpha,\beta)}\right)\boldsymbol{V}\boldsymbol{V}^{*}\right\Vert _{\mathrm{F}}^{2}\\
 & \quad\quad\quad\quad\quad\quad\quad\quad\quad+3\left\Vert \boldsymbol{e}_{i}^{\top}\boldsymbol{U}\boldsymbol{U}^{*}\left(\sqrt{\omega_{\alpha,\beta}}\boldsymbol{A}_{(\alpha,\beta)}\right)\boldsymbol{V}\boldsymbol{V}^{*}\right\Vert _{\mathrm{F}}^{2}\\
 & \quad\leq\frac{9\mu_{1}c_{\mathrm{s}}r}{n_{1}n_{2}}.
\end{align*}

Now our task boils down to bounding $\left\Vert \boldsymbol{M}\right\Vert _{\mathcal{A},2}$
for some matrix $\boldsymbol{M}$ satisfying some energy constraints
per row, which subsumes $\left\Vert \boldsymbol{U}\boldsymbol{V}^{*}\right\Vert _{\mathcal{A},2}$
and $\left\Vert \mathcal{P}_{T}\left(\sqrt{\omega_{k,l}}\boldsymbol{A}_{\left(k,l\right)}\right)\right\Vert _{\mathcal{A},2}$
as special cases. We can then conclude the proof by applying the following
lemma.

\begin{lemma}Denote by the set $\mathcal{M}$ of feasible matrices
satisfying 
\begin{align}
\max_{i}\left\Vert \boldsymbol{e}_{i}^{\top}\boldsymbol{M}\right\Vert _{\mathrm{F}}^{2} & \leq\frac{9\mu_{1}c_{\mathrm{s}}r}{n_{1}n_{2}}.\label{eq:row_sum}
\end{align}
Then there exists some universal constant $c_{3}>0$ such that 
\begin{equation}
\max_{\boldsymbol{M}\in\mathcal{M}}\left\Vert \boldsymbol{M}\right\Vert _{\mathcal{A},2}^{2}\leq c_{3}\frac{\mu_{1}c_{\mathrm{s}}r}{n_{1}n_{2}}\log^{2}\left(n_{1}n_{2}\right).\label{eq:fM_bound}
\end{equation}
\end{lemma}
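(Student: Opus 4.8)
The plan is to reduce the statement to a purely combinatorial estimate on the skew-diagonal lengths of the enhanced matrix, after discarding all phase information through Cauchy--Schwarz. Recall from \eqref{eq:DefnA2} that
\[
\left\Vert \boldsymbol{M}\right\Vert _{\mathcal{A},2}^{2}=\sum_{(k,l)\in[n_{1}]\times[n_{2}]}\frac{1}{\omega_{k,l}^{2}}\left|\sum_{(\alpha,\beta)\in\Omega_{\mathrm{e}}(k,l)}M_{\alpha,\beta}\right|^{2},
\]
so each summand is the squared mean of $\boldsymbol{M}$ along the skew diagonal $\Omega_{\mathrm{e}}(k,l)$. First I would apply Cauchy--Schwarz to each mean, obtaining
\[
\frac{1}{\omega_{k,l}^{2}}\left|\sum_{(\alpha,\beta)\in\Omega_{\mathrm{e}}(k,l)}M_{\alpha,\beta}\right|^{2}\leq\frac{1}{\omega_{k,l}}\sum_{(\alpha,\beta)\in\Omega_{\mathrm{e}}(k,l)}\left|M_{\alpha,\beta}\right|^{2}.
\]
Since the sets $\Omega_{\mathrm{e}}(k,l)$, $(k,l)\in[n_{1}]\times[n_{2}]$, partition the index set of $\boldsymbol{X}_{\mathrm{e}}$, summing over $(k,l)$ and reindexing by the enhanced-matrix entry gives $\left\Vert \boldsymbol{M}\right\Vert _{\mathcal{A},2}^{2}\leq\sum_{(\alpha,\beta)}\tfrac{1}{\omega(\alpha,\beta)}\left|M_{\alpha,\beta}\right|^{2}$, where $\omega(\alpha,\beta)$ is the length of the unique skew diagonal through $(\alpha,\beta)$.

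Next I would exploit the product structure of the two-fold Hankel geometry. Indexing rows of $\boldsymbol{X}_{\mathrm{e}}$ by $(i_{1},i_{2})$ and columns by $(j_{1},j_{2})$, the diagonal length factorizes as $\omega_{k,l}=\omega_{k}^{(1)}\,\omega_{l}^{(2)}$ with $k=i_{1}+j_{1}$ and $l=i_{2}+j_{2}$, where $\omega^{(1)}$ and $\omega^{(2)}$ are the one-dimensional Hankel diagonal lengths in each coordinate. Because both the objective $\sum_{(\alpha,\beta)}\tfrac{1}{\omega(\alpha,\beta)}|M_{\alpha,\beta}|^{2}$ and the hypothesis \eqref{eq:row_sum} decouple across rows, I would maximize row by row: the contribution of row $(i_{1},i_{2})$ is at most $\tfrac{9\mu_{1}c_{\mathrm{s}}r}{n_{1}n_{2}}$ times the largest admissible weight in that row, yielding
\[
\left\Vert \boldsymbol{M}\right\Vert _{\mathcal{A},2}^{2}\leq\frac{9\mu_{1}c_{\mathrm{s}}r}{n_{1}n_{2}}\left(\sum_{i_{1}=0}^{k_{1}-1}\frac{1}{w^{(1)}(i_{1})}\right)\left(\sum_{i_{2}=0}^{k_{2}-1}\frac{1}{w^{(2)}(i_{2})}\right),
\]
where $w^{(1)}(i_{1}):=\min_{0\leq j_{1}\leq n_{1}-k_{1}}\omega_{i_{1}+j_{1}}^{(1)}$ and analogously for $w^{(2)}$.

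It then remains to estimate each single-dimension sum. The profile $\omega_{k}^{(1)}$ is unimodal in $k$ (it increases, plateaus, then decreases), so its minimum over the contiguous block of diagonals met by a fixed row is attained at an endpoint; a direct count gives $w^{(1)}(i_{1})=\min(i_{1}+1,\,k_{1}-i_{1})$ up to boundary terms. Hence
\[
\sum_{i_{1}=0}^{k_{1}-1}\frac{1}{w^{(1)}(i_{1})}\leq2\sum_{t=1}^{\lceil k_{1}/2\rceil}\frac{1}{t}\leq c\log(n_{1}n_{2})
\]
for a universal constant $c$, and likewise for the second factor. Multiplying the two logarithmic bounds against the prefactor establishes \eqref{eq:fM_bound} with $c_{3}=9c^{2}$.

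The main obstacle is the bookkeeping in the middle step: one must verify that the per-entry weights genuinely factorize as $\omega_{k}^{(1)}\omega_{l}^{(2)}$, that the row-wise maximization is legitimate (both objective and constraint \eqref{eq:row_sum} decouple across rows), and that the unimodality argument correctly pins down $w^{(1)}(i_{1})$ including the asymmetric and boundary cases where $k_{1}\neq n_{1}-k_{1}+1$. None of this is deep, but the harmonic-sum estimate is precisely where the two powers of $\log(n_{1}n_{2})$---one contributed by each frequency dimension---enter, matching the $\log^{2}$ asserted in the claim.
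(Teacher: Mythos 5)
Your argument is correct and reaches the stated bound, but it organizes the combinatorics quite differently from the paper. Both proofs open with the same move: Cauchy--Schwarz (the paper calls it the RMS--AM inequality) to replace each squared diagonal mean $\omega_{k,l}^{-2}\bigl|\sum_{(\alpha,\beta)\in\Omega_{\mathrm{e}}(k,l)}M_{\alpha,\beta}\bigr|^{2}$ by $\omega_{k,l}^{-1}\sum_{(\alpha,\beta)\in\Omega_{\mathrm{e}}(k,l)}|M_{\alpha,\beta}|^{2}$, reducing everything to the weighted energy $\sum_{(\alpha,\beta)}\omega(\alpha,\beta)^{-1}|M_{\alpha,\beta}|^{2}$. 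From there the paper splits $\boldsymbol{M}$ into four triangular pieces and bins the diagonal indices dyadically into sets $\mathcal{W}_{i,j}$, using the fact that each bin of short diagonals touches at most $2^{i}\cdot 2^{j}$ rows; the $\log^{2}$ comes from the number of dyadic bins. You instead keep $\boldsymbol{M}$ whole, bound each row's contribution by its energy divided by the shortest diagonal it meets, exploit the exact factorization $\omega_{k,l}=\omega_{k}^{(1)}\omega_{l}^{(2)}$ of the two-fold Hankel weights to decouple the two coordinates, and finish with a one-dimensional harmonic sum in each coordinate; the $\log^{2}$ appears as $\log n_{1}\cdot\log n_{2}$. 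Your route is arguably more transparent (one logarithm per frequency dimension, no four-way split, no dyadic bookkeeping), and all the steps you flag as needing verification do check out: the row-wise maximization is legitimate because both the objective and the hypothesis \eqref{eq:row_sum} are additive over rows, and the minimum of $\omega^{(1)}$ over the contiguous block of diagonals met by row $i_{1}$ is indeed $\min(i_{1}+1,\,k_{1}-i_{1},\,n_{1}-k_{1}+1)$ by unimodality. The one caveat — which you partially anticipate — is that the extra cap $n_{1}-k_{1}+1$ contributes an additive term of order $k_{1}/(n_{1}-k_{1}+1)$ to your harmonic sum, so the clean $c\log(n_{1}n_{2})$ bound requires $k_{1},\,n_{1}-k_{1}+1=\Theta(n_{1})$ (and likewise in the second coordinate), i.e.\ $c_{\mathrm{s}}=\Theta(1)$. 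This is exactly the regime in which the paper operates (and in which its own dyadic count of rows per bin is valid), so it is not a defect relative to the paper's proof, but you should state the restriction explicitly rather than leave it as a remark about "boundary terms."
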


\begin{proof}For ease of presentation, we split any matrix $\boldsymbol{M}$
into 4 parts, which are defined as follows 
\begin{itemize}
\item $\boldsymbol{M}^{\left(1\right)}$: the matrix containing all upper
triangular components of all upper triangular blocks of $\boldsymbol{M}$; 
\item $\boldsymbol{M}^{\left(2\right)}$: the matrix containing all lower
triangular components of all upper triangular blocks of $\boldsymbol{M}$; 
\item $\boldsymbol{M}^{\left(3\right)}$: the matrix containing all upper
triangular components of all lower triangular blocks of $\boldsymbol{M}$; 
\item $\boldsymbol{M}^{\left(4\right)}$: the matrix containing all lower
triangular components of all lower triangular blocks of $\boldsymbol{M}$. 
\end{itemize}
Here, we use the term ``upper triangular'' and ``lower triangular''
in short for ``left upper triangular'' and ``right lower triangular'',
which are more natural for Hankel matrices. Instead of maximizing
$\left\Vert \boldsymbol{M}\right\Vert _{\mathcal{A},2}$ directly,
we will handle $\max_{\boldsymbol{M}\in\mathcal{M}}\|\boldsymbol{M}^{(l)}\|_{\mathcal{A},2}^{2}$
for each $1\leq l\leq4$ separately, owing to the fact that 
\begin{equation}
\max_{\boldsymbol{M}\in\mathcal{M}}\left\Vert \boldsymbol{M}\right\Vert _{\mathcal{A},2}^{2}\leq4\max_{\boldsymbol{M}:\text{ }\boldsymbol{M}^{(l)}\in\mathcal{M}}\left\Vert \boldsymbol{M}^{(l)}\right\Vert _{\mathcal{A},2}^{2}.
\end{equation}
In the sequel, we only demonstrate how to control $\|\boldsymbol{M}^{(1)}\|_{\mathcal{A},2}$.
Similar bounds can be derived for $\|\boldsymbol{M}^{(l)}\|_{\mathcal{A},2}$
($2\leq l\leq4$) via very similar argument.

To facilitate analysis, we divide the entire index set into several
subsets $\mathcal{W}_{i,j}$ such that for all $1\leq i\leq\left\lceil \log\left(n_{1}\right)\right\rceil $
and $1\leq j\leq\left\lceil \log\left(n_{2}\right)\right\rceil $,
\begin{equation}
\mathcal{W}_{i,j}:=\bigcup\left\{ \Omega_{\mathrm{e}}\left(k,l\right)\mid\left(k,l\right)\in\left[2^{i-1},2^{i}\right]\times\left[2^{j-1},2^{j}\right]\right\} .
\end{equation}
Consequently, for each $\Omega_{\mathrm{e}}\left(k,l\right)\subseteq\mathcal{W}_{i,j}$,
one has 
\[
2^{i-1}\cdot2^{j-1}\leq\omega_{k,l}\leq2^{i+j}.
\]
This allows us to derive for each $\mathcal{W}_{i,j}$ that 
\begin{align}
 & \sum_{\left(k,l\right)\in\mathcal{W}_{i,j}}\frac{1}{\omega_{k,l}^{2}}\left|\sum\nolimits _{\left(\alpha,\beta\right)\in\Omega_{\mathrm{e}}\left(k,l\right)}\boldsymbol{M}_{\alpha,\beta}^{(1)}\right|^{2}\nonumber \\
 & \quad\leq\sum_{\left(k,l\right)\in\mathcal{W}_{i,j}}\frac{1}{\omega_{k,l}}\sum\nolimits _{\left(\alpha,\beta\right)\in\Omega_{\mathrm{e}}\left(k,l\right)}\left|\boldsymbol{M}_{\alpha,\beta}^{(1)}\right|^{2}\label{eq:Arithmetic}\\
 & \quad\leq\frac{1}{2^{i+j-2}}\sum_{\left(k,l\right)\in\mathcal{W}_{i,j}}\sum\nolimits _{\left(\alpha,\beta\right)\in\Omega_{\mathrm{e}}\left(k,l\right)}\left|\boldsymbol{M}_{\alpha,\beta}^{(1)}\right|^{2},\label{eq:BoundEnergy}
\end{align}
where (\ref{eq:Arithmetic}) follows from the RMS-AM (root-mean square
v.s. arithmetic mean) inequality.

Observe that the indices contained in $\mathcal{W}_{i,j}$ reside
within no more than $2^{i}\cdot2^{j}$ rows. By assumption (\ref{eq:row_sum}),
the total energy allocated to $\mathcal{W}_{i,j}$ must be bounded
above by 
\begin{align*}
\sum_{\left(k,l\right)\in\mathcal{W}_{i,j}}\sum\nolimits _{\left(\alpha,\beta\right)\in\Omega_{\mathrm{e}}\left(k,l\right)}\left|\boldsymbol{M}_{\alpha,\beta}^{(1)}\right|^{2} & \leq2^{i}\cdot2^{j}\max_{i}\left\Vert \boldsymbol{e}_{i}^{\top}\boldsymbol{M}\right\Vert _{\text{F}}^{2}\\
 & \leq2^{i+j}\cdot\frac{9\mu_{1}c_{\mathrm{s}}r}{n_{1}n_{2}}.
\end{align*}
Substituting it into (\ref{eq:BoundEnergy}) immediately leads to
\begin{align}
\sum_{\left(k,l\right)\in\mathcal{W}_{i,j}}\frac{1}{\omega_{k,l}^{2}}\left|\sum\nolimits _{\left(\alpha,\beta\right)\in\Omega_{\mathrm{e}}\left(k,l\right)}\boldsymbol{M}_{\alpha,\beta}^{(1)}\right|^{2} & \leq\frac{36\mu_{1}c_{\mathrm{s}}r}{n_{1}n_{2}}.
\end{align}
By definition, 
\[
\small\left\Vert \boldsymbol{M}\right\Vert _{\mathcal{A},2}^{2}=\sum_{\begin{array}{c}
\footnotesize1\leq i\leq\left\lceil \log n_{1}\right\rceil \\
\footnotesize1\leq j\leq\left\lceil \log n_{2}\right\rceil 
\end{array}}\sum_{\left(k,l\right)\in\mathcal{W}_{i,j}}\frac{\left|\sum_{\left(\alpha,\beta\right)\in\Omega_{\mathrm{e}}\left(k,l\right)}\boldsymbol{M}_{\alpha,\beta}\right|^{2}}{\omega_{k,l}^{2}}.
\]
Combining the above bounds over all $\mathcal{W}_{i,j}$ then gives
\begin{align*}
\left\Vert \boldsymbol{M}^{(1)}\right\Vert _{\mathcal{A},2}^{2} & \leq\frac{36\mu_{1}c_{\mathrm{s}}r\left\lceil \log\left(n_{1}\right)\right\rceil \cdot\left\lceil \log\left(n_{2}\right)\right\rceil }{n_{1}n_{2}}
\end{align*}
as claimed.\end{proof}

\section{Proof of Lemma \ref{lemma-Dual-Robust}\label{sec:Proof-of-Lemma-Dual-Robust}}

Suppose there is a non-zero perturbation $(\boldsymbol{H},\boldsymbol{T})$
such that $(\boldsymbol{X}+\boldsymbol{H},\boldsymbol{S}+\boldsymbol{T})$
is the optimizer of Robust-EMaC. One can easily verify that $\mathcal{P}_{\Omega^{\perp}}\left(\boldsymbol{S}+\boldsymbol{T}\right)=0$,
otherwise we can always set $\boldsymbol{S}+\boldsymbol{T}$ as $\mathcal{P}_{\Omega}\left(\boldsymbol{S}+\boldsymbol{T}\right)$
to yield a better estimate. This together with the fact that $\mathcal{P}_{\Omega^{\perp}}\left(\boldsymbol{S}\right)=0$
implies that $\mathcal{P}_{\Omega}\left(\boldsymbol{T}\right)=\boldsymbol{T}$.
Observe that the constraints of Robust-EMaC indicate 
\begin{align*}
\mathcal{P}_{\Omega}\left(\boldsymbol{X}+\boldsymbol{S}\right) & =\mathcal{P}_{\Omega}\left(\boldsymbol{X}+\boldsymbol{H}+\boldsymbol{S}+\boldsymbol{T}\right),\\
\Rightarrow\quad & \mathcal{P}_{\Omega}\left(\boldsymbol{H}+\boldsymbol{T}\right)=0,
\end{align*}
which is equivalent to requiring $\mathcal{A}'_{\Omega}\left(\boldsymbol{H}_{\text{e}}\right)=-\mathcal{A}'_{\Omega}\left(\boldsymbol{T}_{\text{e}}\right)=-\boldsymbol{T}_{\text{e}}$
and $\mathcal{A}^{\perp}\left(\boldsymbol{H}_{\text{e}}\right)=0$.

Recall that $\boldsymbol{H}_{\text{e}}$ and $\boldsymbol{S}_{\text{e}}$
are the enhanced forms of $\boldsymbol{H}$ and $\boldsymbol{S}$,
respectively. Set $\boldsymbol{W}_{0}\in T^{\perp}$ to be a matrix
satisfying $\left\langle \boldsymbol{W}_{0},\mathcal{P}_{T^{\perp}}\left(\boldsymbol{H}_{\text{e}}\right)\right\rangle =\left\Vert \mathcal{P}_{T^{\perp}}\left(\boldsymbol{H}_{\text{e}}\right)\right\Vert _{*}$
and $\left\Vert \boldsymbol{W}_{0}\right\Vert \leq1$, then $\boldsymbol{U}\boldsymbol{V}^{*}+\boldsymbol{W}_{0}$
is a sub-gradient of the nuclear norm at $\boldsymbol{X}_{\text{e}}$.
This gives 
\begin{align}
\left\Vert \boldsymbol{X}_{\text{e}}+\boldsymbol{H}_{\text{e}}\right\Vert _{*} & \geq\left\Vert \boldsymbol{X}_{\text{e}}\right\Vert _{*}+\left\langle \boldsymbol{U}\boldsymbol{V}^{*}+\boldsymbol{W}_{0},\boldsymbol{H}_{\text{e}}\right\rangle \nonumber \\
 & =\left\Vert \boldsymbol{X}_{\text{e}}\right\Vert _{*}+\left\langle \boldsymbol{U}\boldsymbol{V}^{*},\boldsymbol{H}_{\text{e}}\right\rangle +\left\Vert \mathcal{P}_{T^{\perp}}\left(\boldsymbol{H}_{\text{e}}\right)\right\Vert _{*}.\label{eq:subgradient_robust}
\end{align}
Owing to the fact that $\text{support}\left(\boldsymbol{S}\right)\subseteq\Omega^{\text{dirty}}$,
one has $\boldsymbol{S}_{\text{e}}=\mathcal{A}'_{\Omega^{\text{dirty}}}\left(\boldsymbol{S}_{\text{e}}\right)$.
Combining this and the fact that $\text{support}\left(\boldsymbol{S}_{\text{e}}+\boldsymbol{T}_{\text{e}}\right)\subseteq\Omega$
yields 
\[
\|\boldsymbol{S}_{\text{e}}+\boldsymbol{T}_{\text{e}}\|_{1}=\left\Vert \mathcal{A}'_{\Omega^{\text{clean}}}(\boldsymbol{T}_{\text{e}})\right\Vert _{1}+\left\Vert \boldsymbol{S}_{\text{e}}+\mathcal{A}'_{\Omega^{\text{dirty}}}\left(\boldsymbol{T}_{\text{e}}\right)\right\Vert _{1},
\]
which further gives 
\begin{align}
 & \left\Vert \boldsymbol{S}_{\text{e}}+\boldsymbol{T}_{\text{e}}\right\Vert _{1}-\left\Vert \boldsymbol{S}_{\text{e}}\right\Vert _{1}\nonumber \\
 & \quad=\left\Vert \mathcal{A}'_{\Omega^{\text{clean}}}\left(\boldsymbol{T}_{\text{e}}\right)\right\Vert _{1}+\left\Vert \boldsymbol{S}_{\text{e}}+\mathcal{A}'_{\Omega^{\text{dirty}}}\left(\boldsymbol{T}_{\text{e}}\right)\right\Vert _{1}-\left\Vert \boldsymbol{S}_{\text{e}}\right\Vert _{1}\nonumber \\
 & \quad\geq\left\Vert \mathcal{A}'_{\Omega^{\text{clean}}}\left(\boldsymbol{T}_{\text{e}}\right)\right\Vert _{1}+\left\langle \mbox{sgn}\left(\boldsymbol{S}_{\text{e}}\right),\mathcal{A}'_{\Omega^{\text{dirty}}}\left(\boldsymbol{T}_{\text{e}}\right)\right\rangle \label{bound112}\\
 & \quad=\left\Vert \mathcal{A}'_{\Omega^{\text{clean}}}\left(\boldsymbol{T}_{\text{e}}\right)\right\Vert _{1}-\left\langle \mbox{sgn}\left(\boldsymbol{S}_{\text{e}}\right),\mathcal{A}'_{\Omega^{\text{dirty}}}\left(\boldsymbol{H}_{\text{e}}\right)\right\rangle \label{bound113}\\
 & \quad=\left\Vert \mathcal{A}'_{\Omega^{\text{clean}}}\left(\boldsymbol{T}_{\text{e}}\right)\right\Vert _{1}-\left\langle \mathcal{A}'_{\Omega^{\text{dirty}}}\left(\mbox{sgn}\left(\boldsymbol{S}_{\text{e}}\right)\right),\boldsymbol{H}_{\text{e}}\right\rangle \nonumber \\
 & \quad=\left\Vert \mathcal{A}'_{\Omega^{\text{clean}}}\left(\boldsymbol{H}_{\text{e}}\right)\right\Vert _{1}-\left\langle \mbox{sgn}\left(\boldsymbol{S}_{\text{e}}\right),\boldsymbol{H}_{\text{e}}\right\rangle .\label{eq:subgradientS_robust}
\end{align}
Here, \eqref{bound112} follows from the fact that $\text{sgn}(\boldsymbol{S}_{\text{e}})$
is the sub-gradient of $\left\Vert \cdot\right\Vert _{1}$ at $\boldsymbol{S}_{\text{e}}$,
and \eqref{bound113} arises from the identity $\mathcal{P}_{\Omega^{\text{dirty}}}\left(\boldsymbol{H}+\boldsymbol{T}\right)=0$
and hence $\mathcal{A}'_{\Omega^{\text{dirty}}}\left(\boldsymbol{H}_{\text{e}}\right)=-\mathcal{A}'_{\Omega^{\text{dirty}}}\left(\boldsymbol{T}_{\text{e}}\right)$.
The inequalities \eqref{eq:subgradient_robust} and \eqref{eq:subgradientS_robust}
taken collectively lead to 
\begin{align}
 & \left\Vert \boldsymbol{X}_{\text{e}}+\boldsymbol{H}_{\text{e}}\right\Vert _{*}+\lambda\left\Vert \boldsymbol{S}_{\text{e}}+\boldsymbol{T}_{\text{e}}\right\Vert _{1}-\left(\left\Vert \boldsymbol{X}_{\text{e}}\right\Vert _{*}+\lambda\left\Vert \boldsymbol{S}_{\text{e}}\right\Vert _{1}\right)\nonumber \\
 & \text{ }\text{ }\geq\text{ }\left\langle \boldsymbol{U}\boldsymbol{V}^{*},\boldsymbol{H}_{\text{e}}\right\rangle +\left\Vert \mathcal{P}_{T^{\perp}}\left(\boldsymbol{H}_{\text{e}}\right)\right\Vert _{*}+\lambda\left\Vert \mathcal{A}'_{\Omega^{\text{clean}}}\left(\boldsymbol{H}_{\text{e}}\right)\right\Vert _{1}\nonumber \\
 & \quad\quad\quad\quad\quad-\lambda\left\langle \mbox{sgn}\left(\boldsymbol{S}_{\text{e}}\right),\boldsymbol{H}_{\text{e}}\right\rangle \nonumber \\
 & \quad\geq-\left\langle \lambda\mbox{sgn}\left(\boldsymbol{S}_{\text{e}}\right)-\boldsymbol{U}\boldsymbol{V}^{*},\boldsymbol{H}_{\text{e}}\right\rangle +\left\Vert \mathcal{P}_{T^{\perp}}\left(\boldsymbol{H}_{\text{e}}\right)\right\Vert _{*}\nonumber \\
 & \text{ }\text{ }\quad\quad\quad\quad+\lambda\left\Vert \mathcal{A}'_{\Omega^{\text{clean}}}\left(\boldsymbol{H}_{\text{e}}\right)\right\Vert _{1}.\label{eq:subgradient_total_robust}
\end{align}

It remains to show that the right-hand side of \eqref{eq:subgradient_total_robust}
cannot be negative. For a dual matrix $\boldsymbol{W}$ satisfying
Conditions (\ref{eq:DualProperties-Robust}), one can derive 
\begin{align}
 & \left\langle \lambda\mbox{sgn}\left(\boldsymbol{S}_{\text{e}}\right)-\boldsymbol{U}\boldsymbol{V}^{*},\boldsymbol{H}_{\text{e}}\right\rangle \nonumber \\
 & \text{ }\text{ }=\left\langle \boldsymbol{W}+\lambda\mbox{sgn}\left(\boldsymbol{S}_{\text{e}}\right)-\boldsymbol{U}\boldsymbol{V}^{*},\boldsymbol{H}_{\text{e}}\right\rangle -\left\langle \boldsymbol{W},\boldsymbol{H}_{\text{e}}\right\rangle \nonumber \\
 & \text{ }\text{ }=\left\langle \mathcal{P}_{T}\left(\boldsymbol{W}+\lambda\mbox{sgn}\left(\boldsymbol{S}_{\text{e}}\right)-\boldsymbol{U}\boldsymbol{V}^{*}\right),\mathcal{P}_{T}\left(\boldsymbol{H}_{\text{e}}\right)\right\rangle \nonumber \\
 & \quad\quad+\left\langle \mathcal{P}_{T^{\perp}}\left(\boldsymbol{W}+\lambda\mbox{sgn}\left(\boldsymbol{S}_{\text{e}}\right)-\boldsymbol{U}\boldsymbol{V}^{*}\right),\mathcal{P}_{T^{\perp}}\left(\boldsymbol{H}_{\text{e}}\right)\right\rangle \nonumber \\
 & \text{ }\text{ }\quad-\left\langle \mathcal{A}'_{\Omega^{\text{clean}}}\left(\boldsymbol{W}\right),\mathcal{A}'_{\Omega^{\text{clean}}}\left(\boldsymbol{H}_{\text{e}}\right)\right\rangle \nonumber \\
 & \quad\quad-\left\langle \mathcal{A}'_{\left(\Omega^{\text{clean}}\right)^{\perp}}\left(\boldsymbol{W}\right),\mathcal{A}'_{\left(\Omega^{\text{clean}}\right)^{\perp}}\left(\boldsymbol{H}_{\text{e}}\right)\right\rangle \nonumber \\
 & \text{ }\text{ }\leq\footnotesize\text{ }\frac{\lambda}{n_{1}^{2}n_{2}^{2}}\left\Vert \mathcal{P}_{T}\left(\boldsymbol{H}_{\text{e}}\right)\right\Vert _{\text{F}}+\frac{1}{4}\left\Vert \mathcal{P}_{T^{\perp}}\left(\boldsymbol{H}_{\text{e}}\right)\right\Vert _{*}+\frac{\lambda}{4}\left\Vert \mathcal{A}'_{\Omega^{\text{clean}}}\left(\boldsymbol{H}_{\text{e}}\right)\right\Vert _{\text{1}},\label{eq:InequalityUse4PropertiesOfW}
\end{align}
where the last inequality follows from the four properties of $\boldsymbol{W}$
in (\ref{eq:DualProperties-Robust}). Since $\left(\boldsymbol{X}+\boldsymbol{H},\boldsymbol{S}+\boldsymbol{T}\right)$
is assumed to be the optimizer, substituting (\ref{eq:InequalityUse4PropertiesOfW})
into \eqref{eq:subgradient_total_robust} then yields 
\begin{align}
0\text{ }\geq & \small\left\Vert \boldsymbol{X}_{\text{e}}+\boldsymbol{H}_{\text{e}}\right\Vert _{*}+\lambda\left\Vert \boldsymbol{S}_{\text{e}}+\boldsymbol{T}_{\text{e}}\right\Vert _{1}-\left(\left\Vert \boldsymbol{X}_{\text{e}}\right\Vert _{*}+\lambda\left\Vert \boldsymbol{S}_{\text{e}}\right\Vert _{1}\right)\label{eq:InequalityRobust1}\\
\geq & \small\text{ }\frac{3}{4}\left\Vert \mathcal{P}_{T^{\perp}}\left(\boldsymbol{H}_{\text{e}}\right)\right\Vert _{*}+\frac{3}{4}\lambda\left\Vert \mathcal{A}'_{\Omega^{\text{clean}}}\left(\boldsymbol{H}_{\text{e}}\right)\right\Vert _{1}-\frac{\lambda}{n_{1}^{2}n_{2}^{2}}\left\Vert \mathcal{P}_{T}\left(\boldsymbol{H}_{\text{e}}\right)\right\Vert _{\text{F}}\nonumber \\
\geq & \small\text{ }\frac{3}{4}\left\Vert \mathcal{P}_{T^{\perp}}\left(\boldsymbol{H}_{\text{e}}\right)\right\Vert _{*}+\frac{3}{4}\lambda\left\Vert \mathcal{A}'_{\Omega^{\text{clean}}}\left(\boldsymbol{H}_{\text{e}}\right)\right\Vert _{\text{F}}-\frac{\lambda}{n_{1}^{2}n_{2}^{2}}\left\Vert \mathcal{P}_{T}\left(\boldsymbol{H}_{\text{e}}\right)\right\Vert _{\text{F}},\label{eq:PTperpAPTineuqliaty}
\end{align}
where (\ref{eq:PTperpAPTineuqliaty}) arises due to the inequality
$\left\Vert \boldsymbol{M}\right\Vert _{\text{F}}\leq\left\Vert \boldsymbol{M}\right\Vert _{1}$.

The invertibility condition (\ref{eq:InvertibilityPtAcleanPt}) on
$\mathcal{P}_{T}\mathcal{A}_{\Omega^{\text{clean}}}\mathcal{P}_{T}$
is equivalent to 
\[
\left\Vert \mathcal{P}_{T}-\mathcal{P}_{T}\left(\frac{1}{\rho\left(1-\tau\right)}\mathcal{A}_{\Omega^{\text{clean}}}+\mathcal{A}^{\perp}\right)\mathcal{P}_{T}\right\Vert \leq\frac{1}{2},
\]
indicating that 
\begin{align*}
\frac{1}{2}\left\Vert \mathcal{P}_{T}\left(\boldsymbol{H}_{\text{e}}\right)\right\Vert _{\text{F}} & \leq\left\Vert \mathcal{P}_{T}\left(\frac{1}{\rho\left(1-\tau\right)}\mathcal{A}_{\Omega^{\text{clean}}}+\mathcal{A}^{\perp}\right)\mathcal{P}_{T}\left(\boldsymbol{H}_{\text{e}}\right)\right\Vert _{\text{F}}\\
 & \leq\frac{3}{2}\left\Vert \mathcal{P}_{T}\left(\boldsymbol{H}_{\text{e}}\right)\right\Vert _{\text{F}}.
\end{align*}
One can, therefore, bound $\left\Vert \mathcal{P}_{T}\left(\boldsymbol{H}_{\text{e}}\right)\right\Vert _{\text{F}}$
as follows 
\begin{align}
 & \left\Vert \mathcal{P}_{T}\left(\boldsymbol{H}_{\text{e}}\right)\right\Vert _{\text{F}}\leq2\left\Vert \mathcal{P}_{T}\left(\frac{1}{\rho\left(1-\tau\right)}\mathcal{A}_{\Omega^{\text{clean}}}+\mathcal{A}^{\perp}\right)\mathcal{P}_{T}\left(\boldsymbol{H}_{\text{e}}\right)\right\Vert _{\text{F}}\nonumber \\
 & \quad\leq\frac{2}{\rho\left(1-\tau\right)}\left\Vert \mathcal{P}_{T}\mathcal{A}_{\Omega^{\text{clean}}}\mathcal{P}_{T}\left(\boldsymbol{H}_{\text{e}}\right)\right\Vert _{\text{F}}+2\left\Vert \mathcal{P}_{T}\mathcal{A}^{\perp}\mathcal{P}_{T}\left(\boldsymbol{H}_{\text{e}}\right)\right\Vert _{\text{F}}\nonumber \\
 & \quad\leq\frac{2}{\rho\left(1-\tau\right)}\left(\left\Vert \mathcal{P}_{T}\mathcal{A}_{\Omega^{\text{clean}}}\left(\boldsymbol{H}_{\text{e}}\right)\right\Vert _{\text{F}}+\left\Vert \mathcal{P}_{T}\mathcal{A}_{\Omega^{\text{clean}}}\mathcal{P}_{T^{\perp}}\left(\boldsymbol{H}_{\text{e}}\right)\right\Vert _{\text{F}}\right)\nonumber \\
 & \quad\quad\quad+2\left\Vert \mathcal{P}_{T}\mathcal{A}^{\perp}\left(\boldsymbol{H}_{\text{e}}\right)\right\Vert _{\text{F}}+2\left\Vert \mathcal{P}_{T}\mathcal{A}^{\perp}\mathcal{P}_{T^{\perp}}\left(\boldsymbol{H}_{\text{e}}\right)\right\Vert _{\text{F}}\nonumber \\
 & \quad\leq\frac{2}{\rho\left(1-\tau\right)}\left(\left\Vert \mathcal{A}_{\Omega^{\text{clean}}}\left(\boldsymbol{H}_{\text{e}}\right)\right\Vert _{\text{F}}+\left\Vert \mathcal{A}_{\Omega^{\text{clean}}}\mathcal{P}_{T^{\perp}}\left(\boldsymbol{H}_{\text{e}}\right)\right\Vert _{\text{F}}\right)\nonumber \\
 & \quad\quad\quad+2\left\Vert \mathcal{P}_{T^{\perp}}\left(\boldsymbol{H}_{\text{e}}\right)\right\Vert _{\text{F}},\label{eq:PtHeExpansion}
\end{align}
where the last inequality exploit the facts that $\mathcal{A}^{\perp}\left(\boldsymbol{H}_{\text{e}}\right)=0$
and $\left\Vert \mathcal{P}_{T}\left(\boldsymbol{M}\right)\right\Vert _{\text{F}}\leq\left\Vert \boldsymbol{M}\right\Vert _{\text{F}}$.

Recall that $\mathcal{A}_{\Omega^{\text{clean}}}$ corresponds to
sampling with replacement. Condition (\ref{eq:ConnectionSamplingReplacement})
together with (\ref{eq:PtHeExpansion}) leads to 
\begin{align}
 & \left\Vert \mathcal{P}_{T}\left(\boldsymbol{H}_{\text{e}}\right)\right\Vert _{\text{F}}\nonumber \\
 & \leq\frac{20\log\left(n_{1}n_{2}\right)}{\rho\left(1-\tau\right)}\left(\left\Vert \mathcal{A}'_{\Omega^{\text{clean}}}\left(\boldsymbol{H}_{\text{e}}\right)\right\Vert _{\text{F}}+\left\Vert \mathcal{A}'_{\Omega^{\text{clean}}}\mathcal{P}_{T^{\perp}}\left(\boldsymbol{H}_{\text{e}}\right)\right\Vert _{\text{F}}\right)\nonumber \\
 & \quad\quad\quad\quad+2\left\Vert \mathcal{P}_{T^{\perp}}\left(\boldsymbol{H}_{\text{e}}\right)\right\Vert _{\text{F}}\nonumber \\
 & \leq\frac{20\log\left(n_{1}n_{2}\right)}{\rho\left(1-\tau\right)}\left\Vert \mathcal{A}'_{\Omega^{\text{clean}}}\left(\boldsymbol{H}_{\text{e}}\right)\right\Vert _{\text{F}}\nonumber \\
 & \quad\quad\quad\quad+\left(\frac{20\log\left(n_{1}n_{2}\right)}{\rho\left(1-\tau\right)}+2\right)\left\Vert \mathcal{P}_{T^{\perp}}\left(\boldsymbol{H}_{\text{e}}\right)\right\Vert _{\text{F}}\nonumber \\
 & \leq\frac{20\log\left(n_{1}n_{2}\right)}{\rho\left(1-\tau\right)}\left\Vert \mathcal{A}'_{\Omega^{\text{clean}}}\left(\boldsymbol{H}_{\text{e}}\right)\right\Vert _{\text{F}}\nonumber \\
 & \quad\quad\quad\quad+\left(\frac{20\log\left(n_{1}n_{2}\right)}{\rho\left(1-\tau\right)}+2\right)\left\Vert \mathcal{P}_{T^{\perp}}\left(\boldsymbol{H}_{\text{e}}\right)\right\Vert _{*},\label{eq:UBPtHe}
\end{align}
where the last inequality follows from the fact that $\left\Vert \boldsymbol{M}\right\Vert _{\text{F}}\leq\left\Vert \boldsymbol{M}\right\Vert _{*}$.
Substituting (\ref{eq:UBPtHe}) into (\ref{eq:PTperpAPTineuqliaty})
yields 
\begin{align}
\left(\frac{3}{4}-\frac{\lambda}{n_{1}^{2}n_{2}^{2}}\left(\frac{20\log\left(n_{1}n_{2}\right)}{\rho\left(1-\tau\right)}+2\right)\right)\left\Vert \mathcal{P}_{T^{\perp}}\left(\boldsymbol{H}_{\text{e}}\right)\right\Vert _{*}\nonumber \\
\quad+\lambda\left(\frac{3}{4}-\frac{20\log\left(n_{1}n_{2}\right)}{\rho\left(1-\tau\right)n_{1}^{2}n_{2}^{2}}\right)\left\Vert \mathcal{A}'_{\Omega^{\text{clean}}}\left(\boldsymbol{H}_{\text{e}}\right)\right\Vert _{\text{F}}\leq0.\label{eq:RobustPtperpAwclean}
\end{align}
Since $\lambda<1$ and $\rho n_{1}^{2}n_{2}^{2}\gg\log\left(n_{1}n_{2}\right)$,
both terms on the left-hand side of (\ref{eq:RobustPtperpAwclean})
are positive. This can only occur when 
\begin{equation}
\mathcal{P}_{T^{\perp}}\left(\boldsymbol{H}_{\text{e}}\right)=0\quad\text{and}\quad\mathcal{A}'_{\Omega^{\text{clean}}}\left(\boldsymbol{H}_{\text{e}}\right)=0.\label{eq:PropertiesOfHe_Robust}
\end{equation}

(1) Consider first the situation where 
\begin{equation}
\left\Vert \mathcal{P}_{T}\left(\boldsymbol{H}_{\text{e}}\right)\right\Vert _{\text{F}}\leq\frac{n_{1}^{2}n_{2}^{2}}{2}\left\Vert \mathcal{P}_{T^{\perp}}\left(\boldsymbol{H}_{\text{e}}\right)\right\Vert _{\text{F}}.\label{eq:ConditionPtPtperp_Robust}
\end{equation}
One can immediately see that 
\[
\left\Vert \mathcal{P}_{T}\left(\boldsymbol{H}_{\text{e}}\right)\right\Vert _{\text{F}}\leq\frac{n_{1}^{2}n_{2}^{2}}{2}\left\Vert \mathcal{P}_{T^{\perp}}\left(\boldsymbol{H}_{\text{e}}\right)\right\Vert _{\text{F}}=0
\]
which implies $\mathcal{P}_{T}\left(\boldsymbol{H}_{\text{e}}\right)=\mathcal{P}_{T^{\perp}}\left(\boldsymbol{H}_{\text{e}}\right)=0$,
and therefore $\boldsymbol{H}_{\text{e}}=0$. That said, Robust-EMaC
succeeds in finding $\boldsymbol{X}_{\text{e}}$ under Condition (\ref{eq:ConditionPtPtperp_Robust}).

(2) Consider instead the complement situation where 
\[
\left\Vert \mathcal{P}_{T}\left(\boldsymbol{H}_{\text{e}}\right)\right\Vert _{\text{F}}>\frac{n_{1}^{2}n_{2}^{2}}{2}\left\Vert \mathcal{P}_{T^{\perp}}\left(\boldsymbol{H}_{\text{e}}\right)\right\Vert _{\text{F}}.
\]
Note that $\mathcal{A}'_{\Omega^{\text{clean}}}(\boldsymbol{H}_{\text{e}})=\mathcal{A}^{\perp}(\boldsymbol{H}_{\text{e}})=0$
and $\left\Vert \mathcal{P}_{T}\mathcal{A}\mathcal{P}_{T}-\frac{1}{\rho\left(1-\tau\right)}\mathcal{P}_{T}\mathcal{A}_{\Omega^{\text{clean}}}\mathcal{P}_{T}\right\Vert \leq\frac{1}{2}$.
Using the same argument as in the proof of Lemma \ref{lemma-Dual-Certificate}
(see the second part of Appendix \ref{sec:Proof-of-Lemma-Dual-Certificate})
with $\Omega$ replaced by $\Omega^{\text{clean}}$, we can conclude
$\boldsymbol{H}_{\text{e}}=0.$

\section{Proof of Lemma \ref{lemma-Psparse}\label{sec:Proof-of-Lemma-Psparse}}

We first state the following useful inequality in the proof. For any
$\boldsymbol{b}\in[n_{1}]\times[n_{2}]$, one has 
\begin{align}
 & \footnotesize\sum_{\boldsymbol{a}\in[n_{1}]\times[n_{2}]}\left|\left\langle \mathcal{P}_{T}\boldsymbol{A}_{\boldsymbol{b}},\boldsymbol{A}_{\boldsymbol{a}}\right\rangle \right|^{2}\omega_{\boldsymbol{a}}\leq\sum_{\boldsymbol{a}\in[n_{1}]\times[n_{2}]}\left(\sqrt{\frac{\omega_{\boldsymbol{b}}}{\omega_{\boldsymbol{a}}}}\frac{3\mu_{1}c_{\text{s}}r}{n_{1}n_{2}}\right)^{2}\omega_{\boldsymbol{a}}\label{eq:eq116}\\
 & \quad\quad\quad\footnotesize=\omega_{\boldsymbol{b}}\sum_{\boldsymbol{a}\in[n_{1}]\times[n_{2}]}\left(\frac{3\mu_{1}c_{\text{s}}r}{n_{1}n_{2}}\right)^{2}=\omega_{\boldsymbol{b}}\frac{9\mu_{1}^{2}c_{\text{s}}^{2}r^{2}}{n_{1}n_{2}},\label{eq:Mu4BoundViaMu1}
\end{align}
where \eqref{eq:eq116} follows from \eqref{eq:UBAbPtAa}. 

By definition, $\Omega^{\text{dirty}}$ is the set of \emph{distinct}
locations that appear in $\Omega$ but not in $\Omega^{\text{clean}}$.
To simplify the analysis, we introduce an auxiliary multi-set $\tilde{\Omega}^{\text{dirty}}$
that contains $\rho sn_{1}n_{2}$ i.i.d. entries. Specifically, suppose
that $\Omega=\left\{ \boldsymbol{a}_{i}\mid1\leq i\leq\rho n_{1}n_{2}\right\} $,
$\Omega^{\text{clean}}=\left\{ \boldsymbol{a}_{i}\mid1\leq i\leq\rho\left(1-\tau\right)n_{1}n_{2}\right\} $
and $\tilde{\Omega}^{\text{dirty}}=\left\{ \boldsymbol{a}_{i}\mid\rho\left(1-\tau\right)n_{1}n_{2}<i\leq\rho n_{1}n_{2}\right\} $,
where $\boldsymbol{a}_{i}$'s are independently and uniformly selected
from $[n_{1}]\times[n_{2}]$.

In addition, we consider an equivalent model for $\text{sgn}\left(\boldsymbol{S}\right)$
as follows 
\begin{itemize}
\item Define $\boldsymbol{K}=\left(K_{\alpha,\beta}\right)_{1\leq\alpha\leq n_{1},1\leq\beta\leq n_{2}}$
to be a random $n_{1}\times n_{2}$ matrix such that all of its entries
are independent and have amplitude 1 (i.e. in the real case, all entries
are either $1$ or $-1$, and in the complex case, all entries have
amplitude 1 and arbitrary phase on the unit circle). We assume that
$\mathbb{E}\left[\boldsymbol{K}\right]=\boldsymbol{0}$. 
\item Set $\text{sgn}\left(\boldsymbol{S}\right)$ such that $\text{sgn}\left(\boldsymbol{S}_{\alpha,\beta}\right)=K_{\alpha,\beta}{\bf 1}_{\left\{ \left(\alpha,\beta\right)\in\Omega^{\text{dirty}}\right\} }$,
and hence 
\[
\text{sgn}\left(\boldsymbol{S}_{\mathrm{e}}\right)=\sum_{\left(\alpha,\beta\right)\in\Omega^{\text{dirty}}}K_{\alpha,\beta}\sqrt{\omega_{\alpha,\beta}}\boldsymbol{A}_{\alpha,\beta}.
\]

\end{itemize}
Recall that $\text{support}\left(\boldsymbol{S}\right)\subseteq\Omega^{\text{dirty}}$.
Rather than directly studying $\text{sgn}\left(\boldsymbol{S}_{\text{e}}\right)$,
we will first examine an auxiliary matrix 
\[
\tilde{\boldsymbol{S}}_{\text{e}}:=\sum_{i=\rho\left(1-s\right)n_{1}n_{2}+1}^{\rho n_{1}n_{2}}K_{\boldsymbol{a}_{i}}\sqrt{\omega_{\boldsymbol{a}_{i}}}\boldsymbol{A}_{\boldsymbol{a}_{i}},
\]
and then bound the difference between $\tilde{\boldsymbol{S}}_{\text{e}}$
and $\text{sgn}\left(\boldsymbol{S}_{\text{e}}\right)$.

For any given pair $(k,l)\in[n_{1}]\times[n_{2}]$, define a random
variable 
\begin{align*}
\mathcal{Z}_{\alpha,\beta}: & =\sqrt{\frac{\omega_{\alpha,\beta}}{\omega_{k,l}}}\left\langle \mathcal{P}_{T}\boldsymbol{A}_{(k,l)},K_{\alpha,\beta}\boldsymbol{A}_{\alpha,\beta}\right\rangle .
\end{align*}
Thus, conditioned on $\boldsymbol{K}$, $\mathcal{Z}_{\boldsymbol{a}_{i}}$'s
are conditionally independent and $\frac{1}{\sqrt{\omega_{k,l}}}\left\langle \boldsymbol{A}_{(k,l)},\mathcal{P}_{T}\left(\tilde{\boldsymbol{S}}_{\text{e}}\right)\right\rangle $
is equivalent to $\sum_{i=\rho\left(1-s\right)n_{1}n_{2}+1}^{\rho n_{1}n_{2}}\mathcal{Z}_{\boldsymbol{a}_{i}}$
in distribution. The conditional mean and variance of $\mathcal{Z}_{\boldsymbol{a}_{i}}$
are given as 
\[
\mathbb{E}\left[\mathcal{Z}_{\boldsymbol{a}_{i}}|\boldsymbol{K}\right]=\frac{1}{n_{1}n_{2}}\frac{1}{\sqrt{\omega_{k,l}}}\left\langle \mathcal{P}_{T}\boldsymbol{A}_{(k,l)},\boldsymbol{K}_{\text{e}}\right\rangle ,
\]
where $\boldsymbol{K}_{\text{e}}$ is the enhanced matrix of $\boldsymbol{K}$,
and 
\begin{align*}
 & {\bf Var}\left[\mathcal{Z}_{\boldsymbol{a}_{i}}|\boldsymbol{K}\right]\leq\mathbb{E}\left[\mathcal{Z}_{\boldsymbol{a}_{i}}\mathcal{Z}_{\boldsymbol{a}_{i}}^{*}|\boldsymbol{K}\right]\\
 & \quad=\frac{1}{n_{1}n_{2}}\frac{1}{\omega_{k,l}}\sum_{\boldsymbol{b}\in[n_{1}]\times[n_{2}]}\omega_{\boldsymbol{b}}\left|\left\langle \mathcal{P}_{T}\boldsymbol{A}_{(k,l)},\boldsymbol{A}_{\boldsymbol{b}}\right\rangle \right|^{2}\\
 & \quad\leq\frac{9\mu_{1}^{2}c_{\text{s}}^{2}r^{2}}{n_{1}^{2}n_{2}^{2}},
\end{align*}
where the last inequality follows from \eqref{eq:Mu4BoundViaMu1}.
Besides, from \eqref{eq:UBAbPtAa}, the magnitude of $\mathcal{Z}_{\alpha,\beta}$
can be bounded as follows 
\begin{equation}
\left|\mathcal{Z}_{\alpha,\beta}\right|\leq\frac{3\mu_{1}c_{\text{s}}r}{n_{1}n_{2}}.\label{eq:BoundZalpha_beta_robust}
\end{equation}

Applying Lemma~\ref{lemma:Bernstein} then yields that with probability
exceeding $1-\left(n_{1}n_{2}\right)^{-4}$, 
\begin{align}
 & \frac{1}{\sqrt{\omega_{k,l}}}\left|\left\langle \boldsymbol{A}_{(k,l)},\mathcal{P}_{T}\left(\tilde{\boldsymbol{S}}_{\text{e}}\right)\right\rangle -\rho\tau\left\langle \mathcal{P}_{T}\boldsymbol{A}_{(k,l)},\boldsymbol{K}_{\text{e}}\right\rangle \right|\nonumber \\
 & \quad\leq c_{13}\mu_{1}c_{\text{s}}r\left(\sqrt{\frac{\rho\tau\log\left(n_{1}n_{2}\right)}{n_{1}n_{2}}}+\frac{\log\left(n_{1}n_{2}\right)}{n_{1}n_{2}}\right)\nonumber \\
 & \quad\leq2c_{13}\mu_{1}c_{\text{s}}r\sqrt{\frac{\rho\tau\log\left(n_{1}n_{2}\right)}{n_{1}n_{2}}}\label{eq:RobustDiff}
\end{align}
for some constant $c_{13}>0$ provided $\rho\tau n_{1}n_{2}\gg\log\left(n_{1}n_{2}\right)$.

The next step is to bound $\frac{\rho\tau}{\sqrt{\omega_{k,l}}}\left\langle \mathcal{P}_{T}\boldsymbol{A}_{(k,l)},\boldsymbol{K}_{\text{e}}\right\rangle $.
For convenience of analysis, we represent $\boldsymbol{K}_{\text{e}}$
as 
\begin{equation}
\boldsymbol{K}_{\text{e}}=\sum_{\boldsymbol{a}\in[n_{1}]\times[n_{2}]}z_{\boldsymbol{a}}\sqrt{\omega_{\boldsymbol{a}}}\boldsymbol{A}_{\boldsymbol{a}},\label{eq:Kenhanced}
\end{equation}
where $z_{\boldsymbol{a}}$'s are independent (not necessarily i.i.d.)
zero-mean random variables satisfying $\left|z_{\boldsymbol{a}}\right|=1$.
Let 
\[
\mathcal{Y}_{\boldsymbol{a}}:=\frac{1}{\sqrt{\omega_{k,l}}}\left\langle \mathcal{P}_{T}\boldsymbol{A}_{(k,l)},z_{\boldsymbol{a}}\sqrt{\omega_{\boldsymbol{a}}}\boldsymbol{A}_{\boldsymbol{a}}\right\rangle ,
\]
then $\mathbb{E}\left[\mathcal{Y}_{\boldsymbol{a}}\right]=0$, \eqref{eq:UBAbPtAa}
and \eqref{eq:Mu4BoundViaMu1} allow us to bound 
\[
\left|\mathcal{Y}_{\boldsymbol{a}}\right|=\frac{1}{\sqrt{\omega_{k,l}}}\left|\left\langle \mathcal{P}_{T}\boldsymbol{A}_{(k,l)},\sqrt{\omega_{\boldsymbol{a}}}\boldsymbol{A}_{\boldsymbol{a}}\right\rangle \right|\leq\frac{3\mu_{1}c_{\text{s}}r}{n_{1}n_{2}},
\]
and 
\begin{align*}
\sum_{\boldsymbol{a}\in[n_{1}]\times[n_{2}]}\mathbb{E}\left[\mathcal{Y}_{\boldsymbol{a}}\mathcal{Y}_{\boldsymbol{a}}^{*}\right] & =\frac{1}{\omega_{k,l}}\sum_{\boldsymbol{a}\in[n_{1}]\times[n_{2}]}\left|\left\langle \mathcal{P}_{T}\boldsymbol{A}_{(k,l)},\sqrt{\omega_{\boldsymbol{a}}}\boldsymbol{A}_{\boldsymbol{a}}\right\rangle \right|^{2}\\
 & \leq\frac{9\mu_{1}^{2}c_{\text{s}}^{2}r^{2}}{n_{1}n_{2}}.
\end{align*}
Applying Lemma~\ref{lemma:Bernstein} suggests that there exists
a constant $c_{14}>0$ such that 
\begin{align*}
\frac{1}{\sqrt{\omega_{k,l}}}\left|\left\langle \mathcal{P}_{T}\boldsymbol{A}_{(k,l)},\boldsymbol{K}_{\text{e}}\right\rangle \right| & =\left|\sum_{\boldsymbol{a}\in[n_{1}]\times[n_{2}]}\mathcal{Y}_{\boldsymbol{a}}\right|\\
 & \leq c_{14}\mu_{1}c_{\text{s}}r\sqrt{\frac{\log\left(n_{1}n_{2}\right)}{n_{1}n_{2}}}
\end{align*}
with high probability provided $n_{1}n_{2}\gg\log(n_{1}n_{2})$. This
together with (\ref{eq:RobustDiff}) suggests that 
\begin{align}
 & \frac{1}{\sqrt{\omega_{k,l}}}\left|\left\langle \boldsymbol{A}_{(k,l)},\mathcal{P}_{T}\left(\tilde{\boldsymbol{S}}_{\text{e}}\right)\right\rangle \right|\nonumber \\
 & \quad\leq\frac{1}{\sqrt{\omega_{k,l}}}\left|\left\langle \boldsymbol{A}_{(k,l)},\mathcal{P}_{T}\left(\tilde{\boldsymbol{S}}_{\text{e}}\right)\right\rangle -\rho\tau\left\langle \mathcal{P}_{T}\boldsymbol{A}_{(k,l)},\boldsymbol{K}_{\text{e}}\right\rangle \right|\nonumber \\
 & \quad\quad\quad\quad+\frac{\rho\tau}{\sqrt{\omega_{k,l}}}\left|\left\langle \mathcal{P}_{T}\boldsymbol{A}_{(k,l)},\boldsymbol{K}_{\text{e}}\right\rangle \right|\nonumber \\
 & \quad\leq c_{15}\mu_{1}c_{\text{s}}r\sqrt{\frac{\rho\tau\log\left(n_{1}n_{2}\right)}{n_{1}n_{2}}}\label{eq:equation120}
\end{align}
for some constant $c_{15}>0$ with high probability.

We still need to bound the deviation of $\tilde{\boldsymbol{S}}_{\text{e}}$
from $\text{sgn}\left(\boldsymbol{S}_{\text{e}}\right)$. Observe
that the difference between them arise from sampling with replacement,
i.e. there are a few entries in $\left\{ \boldsymbol{a}_{i}\mid\rho\left(1-\tau\right)n_{1}n_{2}<i\leq\rho n_{1}n_{2}\right\} $
that either fall within $\Omega^{\text{clean}}$ or have appeared
more than once. A simple Chernoff bound argument (e.g. \cite{Alon2008})
indicates the number of aforementioned conflicts is upper bounded
by $10\log\left(n_{1}n_{2}\right)$ with high probability. That said,
one can find a collection of entry locations $\left\{ \boldsymbol{b}_{1},\cdots,\boldsymbol{b}_{N}\right\} $
such that 
\begin{equation}
\tilde{\boldsymbol{S}}_{\text{e}}-\text{sgn}\left(\boldsymbol{S}_{\text{e}}\right)=\sum_{i=1}^{N}K_{\boldsymbol{b}_{i}}\sqrt{\omega_{\boldsymbol{b}_{i}}}\boldsymbol{A}_{\boldsymbol{b}_{i}},\label{eq:DiffSe_Se_tilde}
\end{equation}
where $N\leq10\log\left(n_{1}n_{2}\right)$ with high probability.
Therefore, we can bound 
\begin{align*}
 & \frac{1}{\sqrt{\omega_{k,l}}}\left|\left\langle \boldsymbol{A}_{(k,l)},\mathcal{P}_{T}\left(\tilde{\boldsymbol{S}}_{\text{e}}-\text{sgn}\left(\boldsymbol{S}_{\text{e}}\right)\right)\right\rangle \right|\\
 & \quad\leq\sum_{i=1}^{N}\frac{1}{\sqrt{\omega_{k,l}}}\left|\left\langle \boldsymbol{A}_{(k,l)},\mathcal{P}_{T}\left(\sqrt{\omega_{\boldsymbol{b}_{i}}}\boldsymbol{A}_{\boldsymbol{b}_{i}}\right)\right\rangle \right|\\
 & \quad\leq N\frac{3\mu_{1}c_{\text{s}}r}{n_{1}n_{2}}\leq\frac{30\mu_{1}c_{\text{s}}r\log(n_{1}n_{2})}{n_{1}n_{2}}.
\end{align*}
following \eqref{eq:UBAbPtAa}. 
Putting the above inequality and \eqref{eq:equation120} together
yields that for every $\left(k,l\right)\in\left[n_{1}\right]\times\left[n_{2}\right]$,
\begin{align*}
 & \frac{1}{\sqrt{\omega_{k,l}}}\left|\left\langle \boldsymbol{A}_{(k,l)},\mathcal{P}_{T}\left(\text{sgn}\left(\boldsymbol{S}_{\text{e}}\right)\right)\right\rangle \right|\\
 & \quad\leq\frac{1}{\sqrt{\omega_{k,l}}}\left|\left\langle \boldsymbol{A}_{(k,l)},\mathcal{P}_{T}\left(\tilde{\boldsymbol{S}}_{\text{e}}-\text{sgn}\left(\boldsymbol{S}_{\text{e}}\right)\right)\right\rangle \right|\\
 & \quad\quad\quad\quad+\frac{1}{\sqrt{\omega_{k,l}}}\left|\left\langle \boldsymbol{A}_{(k,l)},\mathcal{P}_{T}\left(\tilde{\boldsymbol{S}}_{\text{e}}\right)\right\rangle \right|\\
 & \quad\leq c_{15}\mu_{1}c_{\text{s}}r\sqrt{\frac{\rho\tau\log\left(n_{1}n_{2}\right)}{n_{1}n_{2}}}+\frac{30\mu_{1}c_{\text{s}}r\log(n_{1}n_{2})}{n_{1}n_{2}}\\
 & \quad\leq c_{9}\mu_{1}c_{\text{s}}r\sqrt{\frac{\rho\tau\log\left(n_{1}n_{2}\right)}{n_{1}n_{2}}}
\end{align*}
for some constant $c_{9}>0$ provided $\rho\tau n_{1}n_{2}>\log\left(n_{1}n_{2}\right)$.
This completes the proof. 

\section{Proof of Lemma \ref{lemma-PTperp_sparse}\label{sec:Proof-of-Lemma-Pperp_sparse}}

Consider the model of $\text{sgn}(\boldsymbol{S})$, $\boldsymbol{K}$
and $\tilde{\boldsymbol{S}}_{\text{e}}$ as introduced in the proof
of Lemma \ref{lemma-Psparse} in Appendix \ref{sec:Proof-of-Lemma-Psparse}.
For any $\left(\alpha,\beta\right)\in[n_{1}]\times[n_{2}]$, define
\[
\tilde{\mathcal{Z}}_{\alpha,\beta}:=\mathcal{A}_{\alpha,\beta}\left(\boldsymbol{K}_{\text{e}}\right)=\sqrt{\omega_{\alpha,\beta}}K_{\alpha,\beta}\boldsymbol{A}_{\alpha,\beta}.
\]
With this notation, we can see that $\tilde{\mathcal{Z}}_{\boldsymbol{a}_{i}}$'s
are conditionally independent given $\boldsymbol{K}$, and satisfy
\begin{align*}
\mathbb{E}\left[\tilde{\mathcal{Z}}_{\boldsymbol{a}_{i}}|\boldsymbol{K}\right] & =\frac{1}{n_{1}n_{2}}\sum_{\left(\alpha,\beta\right)\in[n_{1}]\times[n_{2}]}\sqrt{\omega_{\alpha,\beta}}\boldsymbol{A}_{\alpha,\beta}K_{\alpha,\beta}\\
 & =\frac{1}{n_{1}n_{2}}\boldsymbol{K}_{\text{e}},
\end{align*}
\[
\left\Vert \tilde{\mathcal{Z}}_{\alpha,\beta}\right\Vert =\left\Vert \sqrt{\omega_{\alpha,\beta}}\boldsymbol{A}_{\alpha,\beta}\right\Vert =1,
\]
and 
\begin{align*}
\left\Vert \mathbb{E}\left[\tilde{\mathcal{Z}}_{\boldsymbol{a}_{i}}\tilde{\mathcal{Z}}_{\boldsymbol{a}_{i}}^{*}|\boldsymbol{K}\right]\right\Vert  & \leq\frac{1}{n_{1}n_{2}}\sum_{\left(\alpha,\beta\right)\in[n_{1}]\times[n_{2}]}\left\Vert \omega_{\alpha,\beta}\boldsymbol{A}_{\alpha,\beta}\boldsymbol{A}_{\alpha,\beta}^{*}\right\Vert \\
 & =1.
\end{align*}

Since $\tilde{\boldsymbol{S}}_{\text{e}}=\sum_{i=\left(1-\tau\right)\rho n_{1}n_{2}+1}^{\rho n_{1}n_{2}}\tilde{\mathcal{Z}}_{\boldsymbol{a}_{i}}$,
applying Lemma~\ref{lemma:Bernstein} implies that conditioned on
$\boldsymbol{K}$, there exists a constant $c_{16}>0$ such that 
\begin{align}
\left\Vert \tilde{\boldsymbol{S}}_{\text{e}}-\rho\tau\boldsymbol{K}_{\text{e}}\right\Vert <\sqrt{c_{16}\rho\tau n_{1}n_{2}\log\left(n_{1}n_{2}\right)}\label{eq:NormSe-minus-Ke}
\end{align}
with probability at least than $1-n_{1}^{-5}n_{2}^{-5}$.

The next step is to bound the operator norm of $\rho\tau\boldsymbol{K}_{\text{e}}$.
Recall the decomposition form of $\boldsymbol{K}_{\text{e}}$ in \eqref{eq:Kenhanced}.
Let $\mathcal{Y}_{\boldsymbol{a}}:=z_{\boldsymbol{a}}\sqrt{\omega_{\boldsymbol{a}}}\boldsymbol{A}_{\boldsymbol{a}}$,
then we have $\mathbb{E}\left[\mathcal{Y}_{\boldsymbol{a}}\right]=0$,
$\left\Vert \mathcal{Y}_{\boldsymbol{a}}\right\Vert =1$, and 
\[
\left\Vert \sum_{\boldsymbol{a}\in[n_{1}]\times[n_{2}]}\mathbb{E}\mathcal{Y}_{\boldsymbol{a}}\mathcal{Y}_{\boldsymbol{a}}^{*}\right\Vert =\left\Vert \sum_{\boldsymbol{a}\in[n_{1}]\times[n_{2}]}\omega_{\boldsymbol{a}}\boldsymbol{A}_{\boldsymbol{a}}\boldsymbol{A}_{\boldsymbol{a}}^{*}\right\Vert \leq n_{1}n_{2}
\]
Therefore, applying Lemma~\ref{lemma:Bernstein} yields that there
exists a constant $c_{17}>0$ such that 
\[
\left\Vert \boldsymbol{K}_{\text{e}}\right\Vert \leq\sqrt{c_{17}n_{1}n_{2}\log\left(n_{1}n_{2}\right)}
\]
with high probability. This and (\ref{eq:NormSe-minus-Ke}), taken
collectively, yield 
\[
\left\Vert \tilde{\boldsymbol{S}}_{\text{e}}\right\Vert \leq\left\Vert \tilde{\boldsymbol{S}}_{\text{e}}-\rho\tau\boldsymbol{K}_{\text{e}}\right\Vert +\rho\tau\left\Vert \boldsymbol{K}_{\text{e}}\right\Vert <2\sqrt{c_{18}\rho\tau n_{1}n_{2}\log\left(n_{1}n_{2}\right)}
\]
with high probability, where $c_{18}=\max\{c_{16},c_{17}\}$. On the
other hand, (\ref{eq:DiffSe_Se_tilde}) implies that, 
\begin{align*}
\left\Vert \tilde{\boldsymbol{S}}_{\text{e}}-\text{sgn}\left(\boldsymbol{S}_{\text{e}}\right)\right\Vert  & \leq\sum_{i=1}^{N}\left\Vert \sqrt{\omega_{\boldsymbol{b}_{i}}}\boldsymbol{A}_{\boldsymbol{b}_{i}}\right\Vert =N\leq10\log\left(n_{1}n_{2}\right)\\
 & \leq\sqrt{c_{18}\rho\tau n_{1}n_{2}\log\left(n_{1}n_{2}\right)}
\end{align*}
with high probability, provided $\rho\tau n_{1}n_{2}>100\log\left(n_{1}n_{2}\right)/c_{18}$.
Consequently, for a sufficiently small constant $\tau$, 
\begin{align*}
 & \left\Vert \mathcal{P}_{T^{\perp}}\left(\lambda\mbox{sgn}\left(\boldsymbol{S}_{\text{e}}\right)\right)\right\Vert \leq\lambda\left\Vert \mbox{sgn}\left(\boldsymbol{S}_{\text{e}}\right)\right\Vert \\
 & \quad\leq\lambda\left\Vert \tilde{\boldsymbol{S}}_{\text{e}}-\mbox{sgn}\left(\boldsymbol{S}_{\text{e}}\right)\right\Vert +\lambda\left\Vert \tilde{\boldsymbol{S}}_{\text{e}}\right\Vert \\
 & \quad\leq3\lambda\sqrt{c_{18}\rho\tau n_{1}n_{2}\log\left(n_{1}n_{2}\right)}\\
 & \quad=3\sqrt{c_{18}\tau}\leq\frac{1}{8}
\end{align*}
with probability exceeding $1-n_{1}^{-5}n_{2}^{-5}$.

\section{Proof of Theorem \ref{theorem-EMaC-Noisy}\label{sec:Proof-of-theorem-EMaC-Noisy}}

We prove this theorem under the conditions of Lemma \ref{lemma-Dual-Certificate},
i.e. \eqref{eq:WellConditionPtAomegaPt}--\eqref{eq:NormWTPerp}.
Note that these conditions are satisfied with high probability, as
we have shown in the proof of Theorem \ref{theorem-EMaC-noiseless}.

Denote by $\hat{\boldsymbol{X}}_{\text{e}}=\boldsymbol{X}_{\text{e}}+\boldsymbol{H}_{\text{e}}$
the solution to Noisy-EMaC. By writing $\boldsymbol{H}_{\text{e}}=\mathcal{A}_{\Omega}\left(\boldsymbol{H}_{\text{e}}\right)+\mathcal{A}_{\Omega^{\perp}}\left(\boldsymbol{H}_{\text{e}}\right)$,
one can obtain 
\begin{align}
\|\boldsymbol{X}_{\text{e}}\|_{*} & \geq\|\hat{\boldsymbol{X}}_{\text{e}}\|_{*}=\|\boldsymbol{X}_{\text{e}}+\boldsymbol{H}_{\text{e}}\|_{*}\nonumber \\
 & \geq\|\boldsymbol{X}_{\text{e}}+\mathcal{A}_{\Omega^{\perp}}(\boldsymbol{H}_{\text{e}})\|_{*}-\|\mathcal{A}_{\Omega}(\boldsymbol{H}_{\text{e}})\|_{*}.\label{cone_constraint}
\end{align}
The term $\left\Vert \mathcal{A}_{\Omega}\left(\boldsymbol{H}_{\text{e}}\right)\right\Vert _{\text{F}}$
can be bounded using the triangle inequality as 
\begin{equation}
\left\Vert \mathcal{A}_{\Omega}\left(\boldsymbol{H}_{\text{e}}\right)\right\Vert _{\text{F}}\leq\left\Vert \mathcal{A}_{\Omega}\left(\hat{\boldsymbol{X}}_{\text{e}}-\boldsymbol{X}_{\text{e}}^{\text{o}}\right)\right\Vert _{\text{F}}+\left\Vert \mathcal{A}_{\Omega}\left(\boldsymbol{X}_{\text{e}}-\boldsymbol{X}_{\text{e}}^{\text{o}}\right)\right\Vert _{\text{F}}.
\end{equation}
Since the constraint of Noisy-EMaC requires $\left\Vert \mathcal{P}_{\Omega}\left(\hat{\boldsymbol{X}}-\boldsymbol{X}^{\text{o}}\right)\right\Vert _{\text{F}}\leq\delta$
and $\left\Vert \mathcal{P}_{\Omega}\left(\boldsymbol{X}-\boldsymbol{X}^{\text{o}}\right)\right\Vert _{\text{F}}\leq\delta$,
the Hankel structure of the enhanced form allows us to bound $\left\Vert \mathcal{A}_{\Omega}\left(\hat{\boldsymbol{X}}_{\text{e}}-\boldsymbol{X}_{\text{e}}^{\text{o}}\right)\right\Vert _{\text{F}}\leq\sqrt{n_{1}n_{2}}\delta$
and $\left\Vert \mathcal{A}_{\Omega}\left(\boldsymbol{X}_{\text{e}}-\boldsymbol{X}_{\text{e}}^{\text{o}}\right)\right\Vert _{\text{F}}\leq\sqrt{n_{1}n_{2}}\delta$,
leading to 
\[
\left\Vert \mathcal{A}_{\Omega}\left(\boldsymbol{H}_{\text{e}}\right)\right\Vert _{\text{F}}\leq2\sqrt{n_{1}n_{2}}\delta.
\]

i) Suppose first that $\boldsymbol{H}_{\text{e}}$ satisfies 
\begin{equation}
\left\Vert \mathcal{P}_{T}\mathcal{A}_{\Omega^{\perp}}\left(\boldsymbol{H}_{\text{e}}\right)\right\Vert _{\text{F}}\leq\frac{n_{1}^{2}n_{2}^{2}}{2}\left\Vert \mathcal{P}_{T^{\perp}}\mathcal{A}_{\Omega^{\perp}}\left(\boldsymbol{H}_{\text{e}}\right)\right\Vert _{\text{F}}.\label{eq:AssumptionAperp}
\end{equation}
Applying the same analysis as for (\ref{eq:DualXeH}) allows us to
bound the perturbation $\mathcal{A}_{\Omega^{\perp}}(\boldsymbol{H}_{\text{e}})$
as follows 
\begin{align*}
\left\Vert \boldsymbol{X}_{\text{e}}+\mathcal{A}_{\Omega^{\perp}}(\boldsymbol{H}_{\text{e}})\right\Vert _{*} & \geq\left\Vert \boldsymbol{X}_{\text{e}}\right\Vert _{*}+\frac{1}{4}\left\Vert \mathcal{P}_{T^{\perp}}\mathcal{A}_{\Omega^{\perp}}(\boldsymbol{H}_{\text{e}})\right\Vert _{\text{F}}.
\end{align*}
Combining this with \eqref{cone_constraint}, we have 
\begin{align*}
\left\Vert \mathcal{P}_{T^{\perp}}\mathcal{A}_{\Omega^{\perp}}(\boldsymbol{H}_{\text{e}})\right\Vert _{\text{F}} & \leq4\|\mathcal{A}_{\Omega}(\boldsymbol{H}_{\text{e}})\|_{*}\\
 & \leq4\sqrt{n_{1}n_{2}}\|\mathcal{A}_{\Omega}(\boldsymbol{H}_{\text{e}})\|_{\text{F}}\leq8n_{1}n_{2}\delta.
\end{align*}
Furthermore, the inequality (\ref{eq:AssumptionAperp}) indicates
that 
\begin{equation}
\left\Vert \mathcal{P}_{T}\mathcal{A}_{\Omega^{\perp}}\left(\boldsymbol{H}_{\text{e}}\right)\right\Vert _{\text{F}}\leq4n_{1}^{3}n_{2}^{3}\left\Vert \mathcal{P}_{T^{\perp}}\mathcal{A}_{\Omega^{\perp}}\left(\boldsymbol{H}_{\text{e}}\right)\right\Vert _{\text{F}}.
\end{equation}
Therefore, combining all the above results give 
\begin{align*}
\small\|\boldsymbol{H}_{\text{e}}\|_{\text{F}} & \leq\|\mathcal{A}_{\Omega}(\boldsymbol{H}_{\text{e}})\|_{\text{F}}+\left\Vert \mathcal{P}_{T}\mathcal{A}_{\Omega^{\perp}}\left(\boldsymbol{H}_{\text{e}}\right)\right\Vert _{\text{F}}+\left\Vert \mathcal{P}_{T^{\perp}}\mathcal{A}_{\Omega^{\perp}}\left(\boldsymbol{H}_{\text{e}}\right)\right\Vert _{\text{F}}\\
 & \leq\left\{ 2\sqrt{n_{1}n_{2}}+8n_{1}n_{2}+4n_{1}^{3}n_{2}^{3}\right\} \delta\\
 & \leq5n_{1}^{3}n_{2}^{3}\delta
\end{align*}
for sufficiently large $n_{1}$ and $n_{2}$.

ii) On the other hand, consider the situation where 
\begin{equation}
\left\Vert \mathcal{P}_{T}\mathcal{A}_{\Omega^{\perp}}\left(\boldsymbol{H}_{\text{e}}\right)\right\Vert _{\text{F}}>\frac{n_{1}^{2}n_{2}^{2}}{2}\left\Vert \mathcal{P}_{T^{\perp}}\mathcal{A}_{\Omega^{\perp}}\left(\boldsymbol{H}_{\text{e}}\right)\right\Vert _{\text{F}}.\label{eq:AssumptionAperp-2}
\end{equation}
Employing similar argument as in Part (2) of Appendix \ref{sec:Proof-of-Lemma-Dual-Certificate}
yields that (\ref{eq:AssumptionAperp-2}) can only arise when $\mathcal{A}_{\Omega^{\perp}}\left(\boldsymbol{H}_{\text{e}}\right)=0$.
In this case, one has 
\begin{align*}
\small\|\boldsymbol{H}_{\text{e}}\|_{\text{F}} & \leq\|\mathcal{A}_{\Omega}(\boldsymbol{H}_{\text{e}})\|_{\text{F}}+\left\Vert \mathcal{A}_{\Omega^{\perp}}\left(\boldsymbol{H}_{\text{e}}\right)\right\Vert _{\text{F}}\\
 & =\|\mathcal{A}_{\Omega}(\boldsymbol{H}_{\text{e}})\|_{\text{F}}\leq2\sqrt{n_{1}n_{2}}\delta.
\end{align*}
concluding the proof.

%

 \bibliographystyle{IEEEtran}
\bibliography{bibfileSparseMatrixPencil}

\begin{IEEEbiographynophoto}{Yuxin Chen} (S'09) received the B.S. in Microelectronics with High Distinction from Tsinghua University in 2008, the M.S. in Electrical and Computer Engineering from the University of Texas at Austin in 2010, and the M.S. in Statistics from Stanford University in 2013. He is currently a Ph.D. candidate in the Department of Electrical Engineering at Stanford University. His research interests include information theory, compressed sensing, network science and high-dimensional statistics. \end{IEEEbiographynophoto} 

\begin{IEEEbiographynophoto}{Yuejie Chi} (S'09-M'12) received the Ph.D. degree in Electrical
Engineering from Princeton University in 2012, and the B.E. (Hon.)
degree in Electrical Engineering from Tsinghua University, Beijing,
China, in 2007. Since September 2012, she has been an assistant professor
with the department of Electrical and Computer Engineering and the
department of Biomedical Informatics at the Ohio State University.

She is the recipient of the IEEE Signal Processing Society Young Author
Best Paper Award in 2013 and the Best Paper Award at the IEEE International
Conference on Acoustics, Speech, and Signal Processing (ICASSP) in
2012. She received the Ralph E. Powe Junior Faculty Enhancement Award
from Oak Ridge Associated Universities in 2014, a Google Faculty Research
Award in 2013, the Roberto Padovani scholarship from Qualcomm Inc.
in 2010, and an Engineering Fellowship from Princeton University in
2007. She has held visiting positions at Colorado State University,
Stanford University and Duke University, and interned at Qualcomm
Inc. and Mitsubishi Electric Research Lab. Her research interests
include high-dimensional data analysis, statistical signal processing,
machine learning and their applications in communications, networks,
imaging and bioinformatics. \end{IEEEbiographynophoto}

\end{document}